\DeclarePairedDelimiter\floor{\lfloor}{\rfloor}
\newcommand{\one}{\ensuremath{\mathds{1}}}
\newcommand{\wireSetA}{\ensuremath{\mathcal{X}}}
\newcommand{\wireSetB}{\ensuremath{\mathcal{Y}}}
\newcommand{\wireSetC}{\ensuremath{\mathcal{Z}}}
\newcommand{\wireSetD}{\ensuremath{\mathcal{W}}}
\newsavebox\MBox
\tikzset{every path/.style={draw=black!80, line width=0.6pt}}
\tikzstyle{every picture}=[baseline=-0.25em]
\tikzstyle{none}=[inner sep=0mm]
\tikzstyle{box}=[fill=white, draw=black, shape=rectangle]
\tikzstyle{zxnode}=[shape=circle, minimum width=.25cm, inner sep=0.5pt, font=\footnotesize, draw=black,thick]
\tikzstyle{gn}=[zxnode ,fill=green, draw=green!10!black]
\tikzstyle{rn}=[zxnode ,fill=red, draw=red!10!black]
\tikzstyle{H box}=[rectangle,fill=yellow, draw=yellow!10!black,thick,xscale=1,yscale=1,font=\footnotesize,inner sep=1.2pt,minimum width=0.15cm,minimum height=0.15cm]
\tikzstyle{ug}=[regular polygon, regular polygon sides=3, fill=red,draw=black,inner sep = 0pt,minimum width=0.8em]
\tikzstyle{black dot}=[inner sep=0.4mm,minimum width=0pt,minimum height=0pt,fill=black,draw=black,shape=circle]
\tikzstyle{dot}=[black dot]
\tikzstyle{white dot}=[dot,fill=white, inner sep=-1pt, font=\footnotesize]
\tikzstyle{arrow}=[decoration={markings,mark=at position 1 with
\tikzstyle{glabel}=[rounded corners=0.2em,fill=green!30,inner sep=0.1em,font=\scriptsize, anchor=west, xshift=-0.3em, yshift=0,opacity=1]
\tikzstyle{rlabel}=[rounded corners=0.2em,fill=red!30,inner sep=0.1em,font=\scriptsize, anchor=west, xshift=-0.3em, yshift=0,opacity=1]
\tikzstyle{box}=[rectangle, draw=black, fill=white, inner sep=1pt, font=\scriptsize]
\tikzstyle{box-no-outline}=[rectangle, draw=white, fill=white, inner sep=2pt, line width=0.1pt]
\tikzstyle{circle-no-outline}=[circle, draw=white, fill=white, inner sep=0pt, line width=0.1pt]
\tikzstyle{squigglearrow}=[->, line join=round, decorate, decoration={zigzag, segment length=4, amplitude=0.8, post=lineto, post length=2pt}]
\tikzstyle{divide}=[regular polygon, regular polygon sides=3, draw=black, fill=gray!50, inner sep=1.6pt, rounded corners=0.8mm]
\tikzstyle{very thick}=[-, line width=1pt]
\tikzstyle{boxedge}=[draw=gray!50]
\tikzstyle{pbs}=[diamond, draw=black, inner sep=-0.5pt, fill=white]
\tikzstyle{ribbon}=[thick, rounded corners=0.4pt,fill={rgb,255: red,157; green,246; blue,255}, fill opacity=0.7]
\tikzset{tensor/.style={inner sep=2.5pt, draw=.!70!white, fill=.!70!white, circle, path picture={
			\draw[white]
			(path picture bounding box.south east) -- (path picture bounding box.north west) (path picture bounding box.south west) -- (path picture bounding box.north east);
}}}
\tikzset{plus/.style={inner sep=2.5pt, draw, circle, path picture={
  \draw[black]
(path picture bounding box.east) -- (path picture bounding box.west) (path picture bounding box.south) -- (path picture bounding box.north);
}}}
\tikzset{triangle/.style = {fill=white, regular polygon, regular polygon sides=3 }}
\tikzset{border rotated/.style = {shape border rotate=180}}
\tikzset{border leftrotated/.style = {shape border rotate=90}}
\tikzset{border rightrotated/.style = {shape border rotate=270}}
\tikzstyle{maclane}=[inner sep=0.4mm,minimum width=0pt,minimum height=0pt,fill=white!50!blue,draw=white!50!blue,shape=circle]
\tikzstyle{spider}=[inner sep=0.8mm,minimum width=0pt,minimum height=0pt,fill=white!50!blue,draw=black, line width=0.6mm,shape=circle]
\tikzstyle{contraction}=[circle,draw,font={\scriptsize \color{gray} c}, inner sep= 1pt]
\tikzstyle{up-contraction}=[triangle,draw,inner sep= 1.5pt]
\tikzstyle{down-contraction}=[triangle, border rotated,draw, inner sep= 1.5pt]
\tikzstyle{left-contraction}=[triangle, border leftrotated,draw, inner sep= 1.5pt]
\tikzstyle{right-contraction}=[triangle, border rightrotated,draw, inner sep= 1.5pt]
\tikzstyle{small-contraction}=[circle,draw,font={\tiny  \color{gray} c}, inner sep= 0.2pt]
\tikzstyle{vacuum}=[rounded rectangle, draw, fill=gray!50, rounded rectangle west arc=none, rotate=180]
\tikzstyle{every loop}=[]
\colorlet{tapeBg}{red!30}
\colorlet{tapeBorder}{red!60}
\tikzstyle{tapeFill} = [fill=tapeBg]
\tikzstyle{tapeNoFill} = [draw=tapeBorder, line width=0.5pt]
\tikzstyle{tape} = [fill=tapeBg, draw=tapeBorder, line width=0.5pt]
\newcommand{\tikzfig}[1]{
\InputIfFileExists{./figures/#1.tikz}{}{{\color{red}\colorbox{pink}{missing file : #1}}}
}
	\let
	\renewcommand{\tikzfig}[1]{
		\tikzsetnextfilename{#1}
		
\InputIfFileExists{./figures/#1.tikz}{}{{\color{red}\colorbox{pink}{missing file : #1}}}
}
\newcommand{\interp}[1] {\left\llbracket #1 \right\rrbracket}
\newcommand{\tone}{\ensuremath{\mathds{1}}}
\newcommand{\oone}{\ensuremath{\mathds{1}}}
\DeclareMathAlphabet{\mymathbb}{U}{BOONDOX-ds}{m}{n}
\newcommand{\tzero}{\ensuremath{\mymathbb{0}}}
\newcommand{\ozero}{\ensuremath{\mymathbb{0}}}
\newcommand{\bfup}[1]{\textup{\textbf{#1}}}
\newcommand{\Langage}{Tensor-Plus Calculus}
\newcommand{\Cat}[1][R]{\textbf{\textup{TP}}^{#1}}
\newcommand{\FCat}[1][R]{\textbf{\textup{FTP}}^{#1}}
\newcommand{\sCat}[1][R]{\textbf{\textup{singleTP}}^{#1}}
\newcommand{\sFCat}[1][R]{\textbf{\textup{singleFTP}}^{#1}}
\newcommand{\isoML}{\approx_{\lambda\rho\alpha}}
\newcommand{\mirror}[1]{#1^{\mathbb{T}}}
\newcommand{\N}{\mathcal{N}}
\newcommand{\Morphism}[1]{\floor{#1}}
\newcommand{\IsoPlusUn}[1]{\bfup{Fun}(#1)}
\newcommand{\injection}[2]{\scalebox{1}{$%
	\begin{matrix}%
		\textcolor{gray}{%
		\begin{matrix}%
			1\\\vdots\\#1\\\vdots\\#2\\
		\end{matrix}}&%
		\hspace{-0.4cm}%
		\begin{pmatrix}%
			\vdots\\\zero\\\id\\\zero\\\vdots\\
		\end{pmatrix}
\end{matrix}$}}
\newcommand{\projection}[2]{\scalebox{1}{$%
		\begin{matrix}%
		\textcolor{gray}{%
			\begin{matrix}%
			1&&\dots&#1&\dots&&#2\\
			\end{matrix}}\\%
		\begin{pmatrix}%
		\dots&\zero&\id&\zero&\dots\\
		\end{pmatrix}\\\\
		\end{matrix}$}}
\newcommand{\id}{\bfup{id}}
\newcommand{\zero}{\bfup{zero}}
\newcommand{\dist}{\bfup{dist}}
\newcommand{\emptysquare}{\ensuremath{\varnothing}}
\DeclareMathOperator{\tensor}{
\text{\scalerel*{%
		\iftoggle{extern}{\tikzsetnextfilename{tensor}}{}
		\begin{tikzpicture}
		\node[inner sep=2pt, draw=.!70!white, circle, fill=.!70!white, path picture={
			\draw[white]
			(path picture bounding box.south east) -- (path picture bounding box.north west) (path picture bounding box.south west) -- (path picture bounding box.north east);
		}] (c) at (0,0){};
		\end{tikzpicture}}{\oplus}}}
\DeclareMathOperator{\blackpar}{
	\text{\scalerel*{%
			\iftoggle{extern}{\tikzsetnextfilename{blackpar}}{}
			\begin{tikzpicture}
			\node[inner sep=0pt, draw=.!70!white, circle, fill=.!70!white] (c) at (0,0){\textcolor{white}{\rotatebox{180}{\textbf{\&}}}};
			\end{tikzpicture}}{\oplus}}}
\DeclareMathOperator{\whitewith}{
	\text{\scalerel*{%
			\iftoggle{extern}{\tikzsetnextfilename{whitewith}}{}
			\begin{tikzpicture}
			\node[inner sep=0pt, draw=., circle] (c) at (0,0){\&};
			\end{tikzpicture}}{\oplus}}}
\DeclareMathOperator*{\bigtensor}{
	\text{\scalerel*{%
			\iftoggle{extern}{\tikzsetnextfilename{bigtensor}}{}
			\begin{tikzpicture}
			\node[inner sep=2pt, draw=.!70!white, circle, fill=.!70!white, path picture={
				\draw[white]
				(path picture bounding box.south east) -- (path picture bounding box.north west) (path picture bounding box.south west) -- (path picture bounding box.north east);
			}] (c) at (0,0){};
			\end{tikzpicture}}{\bigoplus}}}
\DeclareMathOperator{\contraction}{
\scalerel*{%
		\iftoggle{extern}{\tikzsetnextfilename{contraction}}{}
		\begin{tikzpicture}
		\node[down-contraction] (c) at (0,0){};
		\end{tikzpicture}}{\oplus}}
\DeclareMathOperator{\maclane}{
	\scalerel*{%
		\iftoggle{extern}{\tikzsetnextfilename{maclane}}{}
		\begin{tikzpicture}
		\node[maclane] (c) at (0,0){};
		\end{tikzpicture}}{\oplus}}
\DeclareMathOperator{\spider}{
		\scalerel*{%
			\iftoggle{extern}{\tikzsetnextfilename{spider}}{}
			\begin{tikzpicture}
				\node[spider] (c) at (0,0){};
		\end{tikzpicture}}{\oplus}}
\newcommand{\iso}{\operatorname{iso}}
\newcommand{\Iso}{\operatorname{Iso}}
\newcommand{\Objet}[2]{\raisebox{-5pt}{\scriptsize$#1$}\underline{#2}\raisebox{-5pt}{\scriptsize$#1$}}
\newcommand{\TextEquiv}[1]{\stackrel{\text{#1}}{\equiv}}
\newcommand{\eqTM}{\textup{($\tensor\maclane$)}}
\newcommand{\eqPM}{\textup{($\oplus\maclane$)}}
\newcommand{\eqCM}{\textup{($\contraction\!\maclane$)}}\newcommand{\eqMC}{\eqCM}
\newcommand{\eqNM}{\textup{($0\maclane$)}}\newcommand{\eqMN}{\eqNM}
\newcommand{\eqMM}{\textup{($\maclane\maclane$)}}
\newcommand{\eqM}{\textup{($\maclane$)}}
\newcommand{\eqSM}{\textup{($R\!\maclane$)}}
\newcommand{\eqAlphaT}{\textup{($\alpha\tensor$)}}
\newcommand{\eqAlphaP}{\textup{($\alpha\oplus$)}}
\newcommand{\eqAlphaC}{\textup{($\alpha\!\contraction$)}}
\newcommand{\eqSigmaC}{\textup{($\sigma\!\contraction$)}}
\newcommand{\eqT}{\textup{($\tensor$)}}
\newcommand{\eqP}{\textup{($\oplus$)}}
\newcommand{\eqN}{\textup{($0$)}}
\newcommand{\eqLambdaRhoT}{\textup{($\lambda\rho\tensor$)}}
\newcommand{\eqLambdaP}{\textup{($\lambda\oplus$)}}
\newcommand{\eqRhoP}{\textup{($\rho\oplus$)}}
\newcommand{\eqLambdaC}{\textup{[$\lambda\!\contraction$]}}
\newcommand{\eqRhoC}{\textup{($\rho\!\contraction$)}}
\newcommand{\eqTN}{\textup{($\tensor 0$)}}\newcommand{\eqNT}{\eqTN}
\newcommand{\eqPN}{\textup{($\oplus 0$)}}\newcommand{\eqNP}{\eqPN}
\newcommand{\eqCN}{\textup{($\contraction\! 0$)}}\newcommand{\eqNC}{\eqCN}
\newcommand{\eqNN}{\textup{($00$)}}
\newcommand{\eqTC}{\textup{($\tensor\!\contraction$)}}
\newcommand{\eqPC}{\textup{($\oplus\!\contraction$)}}
\newcommand{\eqCC}{\textup{($\contraction\!\contraction$)}}
\newcommand{\eqTCleft}{\textup{($\contraction\!\tensor$)}}
\newcommand{\eqPPCleft}{\textup{($X\!\oplus\!\contraction$)}}
\newcommand{\eqPPNleft}{\textup{[$X\!\oplus 0$]}}\newcommand{\eqNPPleft}{\eqPPNleft}
\newcommand{\eqPPNright}{\eqPPNleft}
\newcommand{\eqTNleft}{\textup{[$0\tensor$]}}
\newcommand{\eqTS}{\textup{($\tensor\! R$)}}\newcommand{\eqST}{\eqTS}
\newcommand{\eqPS}{\textup{[$\oplus\! R$]}}\newcommand{\eqSP}{\eqPS}
\newcommand{\eqCS}{\textup{($\contraction\! R$)}}\newcommand{\eqSC}{\eqCS}
\newcommand{\eqNS}{\textup{($0 R$)}}\newcommand{\eqSN}{\eqNS}
\newcommand{\eqSS}{\textup{[$R_\times$]}}
\newcommand{\eqS}{\textup{($R_1$)}}
\newcommand{\eqSzero}{\textup{($R_0$)}}
\newcommand{\eqSsum}{\textup{($R_+$)}}
\newcommand{\eqTT}{\textup{($N\!\tensor$)}}
\newcommand{\eqPP}{\textup{($X\!\oplus$)}}
\newcommand{\eqPtoC}{\textup{($\oplus\!\to\!\!\contraction$)}}
\newcommand{\eqmix}{\textup{(mix)}}
\newcommand{\eqbot}{\textup{($\bot$)}}
\newcommand{\eqbotC}{\textup{[$\bot\!\contraction$]}}
\newcommand{\eqLambdaT}{\textup{[$\lambda\tensor$]}}
\newcommand{\eqRhoT}{\textup{($\rho\tensor$)}}
\newcommand{\eqUN}{\textup{($10$)}}
\newcommand{\eqUS}{\textup{(Can)}}
\newcommand{\eqUsum}{\textup{[$1_+$]}}
\newcommand{\eqUpar}{\textup{[$1\parallel$]}}
\newcommand{\eqTTsimple}{\textup{[$X\!\tensor$]}}
\newcommand{\eqSigmaT}{\textup{[$\sigma\tensor$]}}
\newcommand{\eqTCright}{\eqTCleft}
\newcommand{\eqCtoP}{\textup{[$\contraction\!\!\to\!\oplus$]}}
\newcommand{\eqPPtoCC}{\textup{[$X\!\oplus\!\to\!X\!\contraction$]}}
\newcommand{\eqPPtoC}{\textup{[$X\!\oplus\!\to\!\contraction$]}}
\newcommand{\eqCCtoC}{\textup{[$X\!\contraction\!\to\!\!\contraction$]}}
\newcommand{\eqCCtoT}{\textup{[$\contraction\!\contraction\!\to\!\tensor$]}}
\newcommand{\eqPCtoT}{\textup{[$\oplus\!\contraction\!\to\!\tensor$]}}
\newcommand{\eqSigmaP}{\textup{[$\sigma\oplus$]}}
\newcommand{\eqPPCright}{\eqPPCleft}
\newcommand{\eqPPPleft}{\textup{[$X\!\oplus\oplus$]}}
\newcommand{\eqPPSleft}{\textup{[$R\!\oplus\oplus$]}}
\newcommand{\eqPCreverse}{\eqPC}
\newcommand{\eqSpider}{\textup{[$\spider\spider$]}}
\newcommand{\eqSpiderC}{\textup{[$\spider\!\contraction$]}}
\newcommand{\eqSpiderN}{\textup{[$\spider 0$]}}
\newcommand{\eqLambdaRhoSpider}{\textup{[$\lambda\rho\spider$]}}
\newcommand{\eqSpiderS}{\textup{[$\spider R$]}}
\DeclareSymbolFont{largesymbolstix}{LS2}{stixex}{m} {n}
\DeclareMathDelimiter{\lParen}{\mathopen}{largesymbolstix}{"DE}{largesymbolstix}{"02}
\DeclareMathDelimiter{\rParen}{\mathclose}{largesymbolstix}{"DF}{largesymbolstix}{"03}
\newtheorem{theorem}{Theorem}[section]
\newtheorem{definition}[theorem]{Definition}
\newtheorem{lemma}[theorem]{Lemma}
\newtheorem{proposition}[theorem]{Proposition}
\newtheorem{example}[theorem]{Example}
\newtheorem{corollary}[theorem]{Corollary}
\def\tikzfig#1{\ensuremath{\vcenter{\hbox{\includegraphics{tikzfigs/#1.pdf}}}}}
\begin{document}

\title{The \Langage{}}

\author{
 \IEEEauthorblockN{
    Kostia Chardonnet\IEEEauthorrefmark{2},
    Marc de Visme\IEEEauthorrefmark{1},
    Benoît Valiron\IEEEauthorrefmark{1}\IEEEauthorrefmark{3} and
    Renaud Vilmart\IEEEauthorrefmark{1}
  }
  \IEEEauthorblockA{\IEEEauthorrefmark{1}
    Université Paris-Saclay, CNRS, ENS Paris-Saclay, Inria, Laboratoire Méthodes Formelles, 91190, Gif-sur-Yvette,
    France 
  }
  \IEEEauthorblockA{\IEEEauthorrefmark{2}
    Université de Lorraine, CNRS, Inria, LORIA, F-54000 Nancy, France
  }
  \IEEEauthorblockA{\IEEEauthorrefmark{3} CentraleSupélec, France
  }
}
\maketitle

\begin{abstract}
  We propose a graphical language that accommodates two monoidal
  structures: a multiplicative one for pairing and an additional one
  for branching. In this colored PROP, whether wires in parallel are
  linked through the multiplicative structure or the additive
  structure is implicit and determined contextually rather than
  explicitly through tapes, world annotations, or other techniques, as
  is usually the case in the literature. The diagrams are used as
  parameter elements of a commutative semiring, whose choice is
  determined by the kind of computation we want to model, such as
  non-deterministic, probabilistic, or quantum.

  Given such a semiring, we provide a categorical semantics of
  diagrams and show the language as universal for it. We also provide
  an equational theory to identify diagrams that share the same
  semantics and show that the theory is sound and complete and
  captures semantical equivalence.

  In categorical terms, we design an internal language for
  semiadditive categories (C,+,0) with a symmetric monoidal structure
  (C,x,1) distributive over it, and such that the homset C(1,1) is
  isomorphic to a given commutative semiring, e.g., the semiring of
  non-negative real numbers for the probabilistic case.
\end{abstract}

\section{Introduction}

From a computational perspective, the low-level control flow of a
program is inherently linked to the information it acts upon. Besides
composition, two program manipulations can be considered native:
juxtaposition and branching.

The juxtaposition of programs can happen when they operate on distinct
areas of the memory: a function $f$ acting on $A$ can be executed in
parallel to a function $g$ acting on $B$ as long as they do not
interact. Such operations can run asynchronously---they can be
juxtaposed. The categorical interpretation of such an operation is a
monoidal structure: the joint system of $A$ and $B$ is written
$A\tensor B$,%
\footnote{The combinator $\tensor$ is usually written
    $\otimes$. We chose to modify its representation to make the
    distinction with $\oplus$ clearer, which is especially beneficial
    for the upcoming generators of the graphical language.}
and the combined action of $f$ and $g$ is $f\tensor g$. The operation is
\emph{multiplicative}: $A\tensor B$ represents pairs of elements of $A$ and $B$.

These multiplicative monoidal structures can be equipped with a
natural graphical interpretation (a PROP): types are represented as
wires and actions as boxes. The system's action $f\tensor g$ on the
type $A\tensor B$ can be represented graphically as shown in
\Cref{fig:typ-graph-fig:mult}.  For this multiplicative construction,
in the interpretation discussed above, the wire $A\tensor B$ is
intuitively understood as a bundle of two wires of type $A$ and $B$:
the two boxes $\tensor$ represent the split and merge of these two
bundled wires.

\begin{figure}
  \begin{subfigure}{.3\columnwidth}
  \centering$\xymatrix@C=1.3ex@R=2ex{
    &\ar[d]^{A\tensor B}&\\
    & \tensor\ar@{-}@/_2ex/[dl]_{A}\ar@{-}@/^2ex/[dr]^B &
    \\
    *+[F]{f}\ar@/_2ex/[dr]_{A'} & & *+[F]{g}\ar@/^2ex/[dl]^{B'}
    \\
    & \tensor\ar[d]^{A'\tensor B'}\\
    &&
  }$
  \caption{Multiplicative}\label{fig:typ-graph-fig:mult}
\end{subfigure}
\hfill
  \begin{subfigure}{.3\columnwidth}
    \centering$  \xymatrix@C=1.3ex@R=2ex{
    &\ar[d]^{A\oplus B}&\\
    & \oplus\ar@{-}@/_2ex/[dl]_{A}\ar@{-}@/^2ex/[dr]^B &
    \\
    *+[F]{f}\ar@/_2ex/[dr]_{A'} & & *+[F]{g}\ar@/^2ex/[dl]^{B'}
    \\
    & \oplus\ar[d]^{A'\oplus B'}\\
    &&
  }$
  \caption{Additive}\label{fig:typ-graph-fig:add}
  \end{subfigure}
\hfill
  \begin{subfigure}{.3\columnwidth}
  \centering$\xymatrix@C=1.3ex@R=2.3ex{
    &\ar[d]^{A\tensor B}&\\
    & \tensor\ar@{-}@/_1.2ex/[dl]_{A}\ar@{-}@/^1.2ex/[dr]^B &
    \\
    *{}\ar@/_1.2ex/[dr]_{A} & & *{}\ar@/^1.2ex/[dl]^{B}
    \\
    & \oplus \ar[d]^{A\oplus B}\\
    &&
  }$
    \caption{Incompatibility}\label{fig:typ-graph-fig:incompat}
  \end{subfigure}
  \caption{typical graphical structure}
\end{figure}
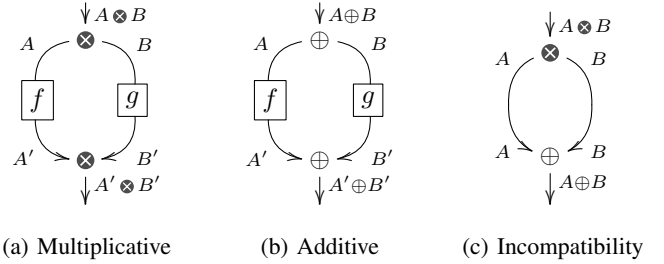

The other arguably native operation is branching: the possibility to
choose a course of action based on the state of the memory. This
choice operation becomes the if-then-else for Boolean values and the
case distinction for more general data structures. From a categorical
standpoint, the choice is typically modeled using a coproduct or a
biproduct $\oplus$. A Boolean value lives in $1\oplus 1$: it is either
the left injection (say, False) or the right injection (say, True). In
general, the type $A\oplus B$ stands for either something of type $A$
or something of type $B$. The additive monoidal structure is
graphically similar to the multiplicative one, shown in
\Cref{fig:typ-graph-fig:add}. The main difference lies in the
interpretation of the wire of type $A\oplus B$: something of type $A$
or type $B$ will get in, and this piece of data will be routed to the
correct wire $A$ or $B$ upon reaching the first split-node
$\oplus$. The merge-node $\oplus$ will then secure the result again in
the wire $A'\oplus B'$.

These multiplicative and additive structures are pervasive in all
models of computation. They serve as a baseline for the
multiplicative, additive fragment of linear logic~\cite{linearlogic,
  proofnets}. Dozens of categories feature these structures:
relational models~\cite{lamarche1992quantitative, laird2013weighted,
pagani2014applying, finiteness},
coherent spaces~\cite{ehrhard2014probabilistic}, vectorial models such
as modules or vector spaces~\cite{Asada-ll, diazcaroMalherbe}, etc.

Despite this ubiquity, it is notoriously difficult to define a
graphical language that handles both a multiplicative, monoidal
structure and an additive structure, whether coproduct or
biproduct. The main difficulty lies in keeping track of how wires are
related: if the language is purely multiplicative, all wires are
tensored; if the language is purely additive, all wires are in
sum. However, if the language features both tensor and sum, three
wires of type $A$, $B$, and $C$ are ambiguous: they could, for
instance, be seen as $A\tensor(B\oplus C)$, but also as the
(incompatible) $(A\tensor B)\oplus C$ or even
$A \oplus (B \tensor C)$. In the literature, this has been approached
with the addition of external information on the graph representing
the computation: worlds~\cite{manyworlds},
sheets~\cite{comfort2020sheet} or
tapes~\cite{Bonchi2022deconstructing}. Considering our three wires of
types $A$, $B$, and $C$, additional information is added to know how
to relate them: for instance, whether $A$ and $B$ should first be
tensored before summing $C$.

Multiplicative and additive structures canonically support algebraic
effects. Relational models can be weighted with
non-deterministic~\cite{relNonDet} or probabilistic
effects~\cite{laird2013weighted}, and finite-dimensional vector spaces
is a model for (pure) quantum
computation~\cite{pagani2014applying}. In this realm, the ``choice''
operation becomes effectful, and wires carry a weight: a probability,
a complex number representing a quantum coefficient, etc. Such
effectful computations take an additional toll on the design of a
graphical language: one has to handle the multiplicative and additive
aspects and the actions of the algebraic effects.

This paper is devoted to studying such a setting: We propose a
graphical language unifying multiplicative and additive actions,
supporting probabilistic, non-deterministic, and more exotic effects
such as purely quantum effects. Contrary to the above-mentioned
worlds, sheets, or tapes, the handling of wires does not require any
additional structure.

\subsection{Strategy followed in the paper.}
The main difficulty consists in giving a meaning to the juxtaposition of
two wires: it cannot simply be a tensor or a sum; it needs to be able
to be both and in a homogeneous manner.

Our proposal, therefore, defines a special monoidal structure
capturing these two possibilities at once. Formally, we set ourselves
in the context of a symmetric monoidal category (representing the
multiplicative tensor) with an additive structure: the additive
structure gives a biproduct, and when enriched this framework is
expressive enough to represent the algebraic effects we care
about. Wire juxtaposition is then represented with a sum of either a
tensor or a sum: $(A\tensor B)\oplus(A\oplus B)$. If this is defined
formally in \Cref{sec:cat_sem}, let us simply mention here that this
binary operation gives a well-defined monoidal structure and a PROP:
we provide in the paper a graphical language for which this category
is the target model.

Moreover, because these two wires might be used inconsistently, for
instance, the diagram shown in \Cref{fig:typ-graph-fig:incompat}: we
need to be able to represent a notion of ``error'' state. We
capitalize on the additive structure (and the enrichment) and
represent it with the ``zero'' map inherited from the additive
structure of the category.

\subsection{Contributions and plan of the paper.}
Formally, the contributions of the paper are as follows.
\begin{itemize}
\item A graphical language: the \Langage{}, unifying multiplicative
  and additive structures and capturing algebraic effects:
  non-determinism, probability, vectorial (and quantum). The language
  is described using the notion of PROP, and comes in two variants:
  $\Cat$ and $\FCat$, with $R$ a given commutative semiring.
\item An interpretation based on symmetric, monoidal, semiadditive
  categories, with a universality result for both
  (\Cref{thm:universality});
\item An equational theory allowing us to rewrite diagrams, proven
  sound and complete (\Cref{prop:soundness_fun} and
  \Cref{thm:complete_fun}). We write $\Cat_{\equiv}$ and
  $\FCat_{\equiv}$ for the categories $\Cat$ and $\FCat$ quotiented by
  the corresponding equivalence (rewriting)
  relation. From the completeness result, we show that
  the \Langage{} can be regarded as an internal language for the
  corresponding category (\Cref{thm:internal_language}).
\end{itemize}
The plan of the paper is as follows. We present the language and
several examples in \Cref{sec:many-worlds}. We then define the
categorical semantics of our language and show its universality in
\Cref{sec:cat_sem}. \Cref{sec:equations} is devoted to presenting a
sound and complete equational theory. We then discuss extensions of
these results.

\section{The \Langage{}}
\label{sec:many-worlds}

The aim of the paper is to define a graphical language that allows
both for pairing and branching of data. In the latter case, the two
branches may not be used at the same time (we use either one branch or
the other, or none), while in the pairing case, the two pieces of data
are either used together, or are both not used. We want all these
possible interactions between wires to remain implicit, and not resort
to explicit additional cues, like sheets, tapes, annotations, etc.
Moreover, we want to be able to account for algebraic effects such as
non-deterministic, probabilistic, and quantum effects (or, more
generally, vectorial effects).

The language we propose, called the \Langage{}
is parameterized by a commutative semiring $(R, +, 0,
\times, 1)$ to account for the kind of algebraic effects we target.
It can be instantiated by the complex numbers
$(\mathbb{C}, +, 0, \times, 1)$ to represent pure quantum
computations, the non-negative real numbers $(\mathbb{R}_{\geq 0}, +,
0,\times, 1)$ for probabilistic computations, or the Boolean values
$(\{0,1\}, \vee, 0,
\wedge, 1)$ for non-deterministic computations.
We come back to these semirings in \Cref{sec:examples}.

We define the \Langage{} within the paradigm of colored
PROP \cite{Carette21,HackneyR15}: A morphism is a diagram composed of
nodes, called \emph{generators}, linked to each other through
\emph{colored wires}---where each \emph{color} corresponds to a
datatype such as ``bit''---that are allowed to cross each other. This
graphical representation allows the rewriting equations provided in
\Cref{fig:PROP}, where $f$ and $g$ stand for any diagrams with any
numbers of inputs and outputs, and the wires could be of any colors.

The \Langage{} is equipped with a denotational semantics
  (\Cref{sec:cat_sem}) and an equational theory
  (\Cref{sec:equations}). In this section, we only focus on the syntax and
  the intuition through a series of examples.

\begin{figure*}
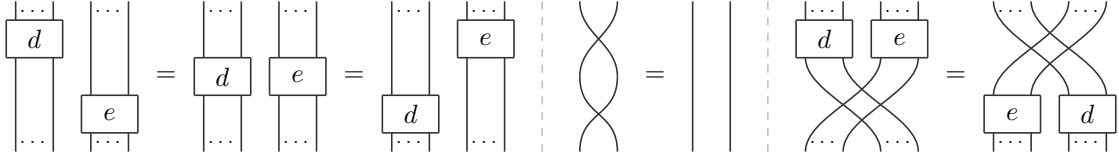

	\centering
	\tikzfig{worldless_eq/PROP}
	\caption{Equations of a colored PROP. These are supposed true for any colors on the wires.}
	\label{fig:PROP}
\end{figure*}

\subsection{The Wires: the Objects of our Category}
\label{sec:objects_category_tensor_plus}

In a colored PROP, the set of objects is freely generated by a set of
\emph{colors} and a strict\footnote{So $\parallel$ is strictly
associative and $\varnothing$ is strictly its neutral element.}
monoidal product $\parallel$ with $\varnothing$ as neutral element. In other
words, an object corresponds to a collection of wires in parallel,
written $A_1 \parallel \dots \parallel A_n$, each wire being ``typed''
by a color $A_i$. We use calligraphic letters $\wireSetA,
\wireSetB, \dots$ to denote the objects of our colored PROP, and
non-calligraphic letters $A, B, \dots$ for colors. The goal is for the colors to represent datatypes such as ``bit'', so the
colors will themselves be generated by the syntax
\[
  A, B\quad ::=\quad \tzero \mid \tone \mid (A\oplus B) \mid (A\tensor B),
\]
where $\oplus$ is used to build sum types such as ``bit =
$\tone \oplus \tone$'', and $\tensor$ is used to build pairs. The symbol $\tzero$ represents the unit of the sum $\oplus$ while $\tone$ is the unit of the tensor $\tensor$.
For example,
\[
  ((A \oplus B) \oplus \tzero) \parallel \tone \parallel (A \tensor \tzero)
\]
is an
object composed of three colors: $(A \oplus B) \oplus \tzero$, $\tone$ and
$A \tensor \tzero$. We can then represent them as three wires in parallel:

\begin{center}
	\tikzfig{lang/ex_three_wires}
\end{center}

\noindent
Note that while the monoidal product
$\parallel$ is strictly associative, we do not
consider $\oplus$ and $\tensor$ to be associative as they are purely
syntactical. We instead have an equivalence relation $\isoML$ on
colors, generated by the rules:
\begin{align*}
  A \oplus \tzero &{}\isoML A,
  &
    A \tensor \tone &{}\isoML A,
  \\
  \tzero \oplus A &{}\isoML A,
  &
    \tone \tensor A &{}\isoML A,
\end{align*}
for the units,
\begin{align*}
  (A \oplus B) \oplus C &{}\isoML A \oplus (B \oplus C),
  \\
  (A \tensor B) \tensor C &{}\isoML A \tensor{} (B \tensor C),
\end{align*}
for the associativity of the binary operators, and
\begin{align*}
  A\oplus C &{}\isoML B \oplus C,
  &
    A \tensor C &{}\isoML B \tensor C,
  \\
  C \oplus A &{}\isoML C \oplus B,
  &
    C \tensor A &{}\isoML C \tensor B,
\end{align*}
whenever $A \isoML B$, for the congruence rules.

The choice of the notation $\parallel$ for wires in parallel is
uncommon: the notation $\otimes$ is often preferred in the literature.
We use it to put an emphasis on the fact that contrary to languages
such as proof-nets~\cite{proofnets}, Boolean (or quantum)
circuits, or the ZX-calculus~\cite{zxorigin}, wires that are in
parallel are not necessarily ``in tensor with one another''. In fact,
$A \parallel B$ can be understood semantically as ``either $A \tensor
B$ or $A \oplus B$''. This intepretation is formalized in the
categorical semantics in~\Cref{sec:cat-framework}, and it will
be the underlying meaning of the equation \eqmix{} of
\Cref{fig:eq_tensor_plus}.

\subsection{The Diagrams: the Morphisms of our Category}

The generators of our language are described in \Cref{fig:generator}.
On the top line, one can respectively find:
the Identity $\id_A : A \to A$,
the Swap
$\sigma_{A,B} : A \parallel B \to B \parallel A$,
the Tensor
$\tensor_{A,B} : A \parallel B \to (A \tensor B)$,
the Plus
$\oplus_{A,B} : A \parallel B \to (A \oplus B)$,
the Contraction
$\contraction_A: A \parallel A \to A$,
the Null $\zero_A : \varnothing
\to A$ (represented with a triangle pointing down),
the Unit $\tensor_{\tone} : \varnothing \to \tone$.
On the bottom line, one can read
the Adapter
$\maclane_{A,A'} : A \to A'$ (whenever $A \isoML A'$),
the Scalar
$[s]_A : A \to A$ for $s$ ranging over the commutative semiring $R$,
and the up-down mirrored version of all of the generators\footnote{The Identity, the Swap, the
  Adapter and the Scalar are their own mirrored version.}.
We
write $\mirror{f}$ for the up-down mirrored version of $f$. Diagrams are read
top-to-bottom: the top-most wires are the \textit{input} wires and the
bottom-most wires are the \textit{output} wires.

\begin{figure}[!ht]
	\begin{tabular}{ccccccc}
		&&&&&& Not Functional\\
		$\tikzfig{lang/id}$ & $\tikzfig{lang/swap} $&
		$\tikzfig{lang/tensor-PN}$  & \quad$ \tikzfig{lang/plus}$ &
		$\tikzfig{lang/contraction}$ & $\tikzfig{lang/null} $& $\tikzfig{lang/unit}$\\[1cm]
		$\tikzfig{lang/adapt}$& $\tikzfig{lang/scal}$ &
		$\tikzfig{lang/tensor-PN-inv}$ &\quad$ \tikzfig{lang/plus-inv}$ &
		$\tikzfig{lang/contraction-inv}$ & $\tikzfig{lang/null-inv} $& $\tikzfig{lang/unit-inv}$
	\end{tabular}
	\caption{Generators of our Language ($A \isoML A'$, $s \in R$)}
	\label{fig:generator}
\end{figure}

These generators encompass at once:
  \begin{itemize}
  \item the PROP structure of the \Langage{} (with the Identity and the Swap);
  \item the multiplicative, tensor structure (with the Tensor, the Unit and their duals);
  \item the additive, biproduct structure (with the Plus, the Contraction and their duals);
  \item the scalars (with the Scalar generator).
  \end{itemize}
  The associativity and unit of the multiplicative and additive structures is given by the Adapter
  (representing the equivalence $\isoML$).

Equationally, the language as it stands does not contain more than
  the equational theory of PROP.  In particular, with respect to the Sum,
  the Contraction corresponds to the codiagonal and its dual to the
  diagonal morphism: this is not yet enforced. In \Cref{sec:cat_sem}
  we present a categorical semantics formally describing this
  intuition, and \Cref{sec:equations} equips the \Langage{} with an
  equivalence relation $\equiv$ capturing the structure of the
  denotational semantics.

Diagrams are obtained by composing the generators (\Cref{fig:generator})
in parallel (written $\parallel$), or
sequentially (written $\circ$), as follows:
\[ e\circ d:=\tikzfig{lang/compo-seq}\qquad\qquad d\parallel
e:=\tikzfig{lang/compo-ten} \]
Sequential composition requires the
color (and number) of wires to match.  We write $\Cat$~for the
PROP category of such diagrams (with the equality given by the equational theory of PROP), and
$\Cat(\wireSetA,\wireSetB)$ for the set of diagrams that are
morphisms from $\wireSetA$ to $\wireSetB$.

The goal we announced in the introduction was to create an internal language for semiadditive categories with some additional properties. As such, one would expect $\varnothing$ to be a terminal and initial object of our category, and might question the existence of the Unit as a non-trivial morphism from $\varnothing$ to $\tone$.
The Unit and its mirrored version are considered ``non-functional''.
Computationally, they allow the process to start generating non-zero
outputs without having received any input. More generally, while
useful for practical examples and when representing ``values'' (true, false) instead of ``functions'' (negation, etc), they come with some significant
technical complexities, and they muddy the categorical properties of our language. As such, we define the sub-language of
\textbf{Functional} \Langage{} as the same but without the Unit and
its mirrored version. We write $\FCat$ for the corresponding category.

\subsection{Examples}
\label{sec:examples}

To illustrate the expressive power of our diagrams, we give one example for each of the commutative semirings that were announced at the beginning of the section.
\begin{example}\label{ex:OR}\rm
We consider $R$ to be the boolean ring, and look at the type $\tone \oplus \tone$. There are four different values over this type: True, False, $\bot$ (or Failure), $\top$ (or the Nondeterministic superposition of True and False). One can represent them as
\[\tikzfig{examples/nondet_bit_value}\]

We can define several versions of the logical OR: $(\tone\oplus\tone)\parallel(\tone\oplus\tone)\to\tone\oplus\tone$, respectively the strict OR, the lazy OR, and the parallel OR:
\[\tikzfig{examples/strict-or}\qquad\quad
\tikzfig{examples/lazy-or}\qquad\quad
\tikzfig{examples/parallel-or}\]
All these versions of the OR produce the same output when they receive well-defined Boolean inputs. Let us give some intuition to better understand those pictures:
\begin{itemize}
	\item The left branch of a Plus corresponds to False, while its right branch corresponds to True.
	\item The Contractions should be seen as rail-switches where data must go left or right and cannot do both.
	\item The Tensor should be thought of as grouping data together.
	\item The Adapter can be ignored, its effect is purely administrative.
\end{itemize}
In all three diagrams, the right branches of the input Pluses are connected in many different ways to the right branch of the output Plus, and never to the left branch. As such, as soon as one input is True, the output cannot be False. The other way around, the left branch of the output Plus is only connected to the left branches of the input Pluses, so for the output to be False, both inputs must be False.

These three versions of the OR differ in the way they deal with failing inputs -- represented by the Null $\zero_{\tone \oplus \tone}$.
The strict OR expects a well-defined Boolean value for both its inputs and will return Null if any of its two input is Null, while the lazy OR can ``short-circuit'' the computation if the first input turns out to be True -- in that case, we don't care about the second input, and directly output True even if that second output would be Null. The parallel OR symmetrizes the lazy OR, and produces True whenever one of the two inputs is True.

These behaviors are completely captured by the upcoming equational theory, and examples of how it can be used to verify said behaviors are provided in \Cref{sec:equations}.
\end{example}

\begin{example}\label{ex:proba_mat}\rm
	When considering $R = \mathbb{R}_{\geq 0}$, we can encode some basic probabilistic primitives in it and
	show how they operate. The most basic data we can represent is a
	probabilistic bit (pbit), seen as a vector $\begin{pmatrix} p \\ q
	\end{pmatrix}$ where $p$ is the probability of False and $q$
	the probability of True -- we do not enforce $q=1-p$ here, although it is required for this to be an actual pbit.

	In $\Cat$, the Boolean values are
	represented by the type $\one\oplus\one$. The above pbit is then represented by the following diagram:
	\[
	\begin{pmatrix} p \\ q \end{pmatrix} \rightsquigarrow~ \tikzfig{examples/proba_bit} \in \Cat[\mathbb{R}_{\geq 0}](\emptysquare,\tone \oplus \tone) \]
	 A program
	manipulating pbits can be understood as a non-negative-valued (usually stochastic) matrix. For
	example, the matrix $\begin{pmatrix} 1 & 1/2 \\
		0 & 1/2 \end{pmatrix}$ corresponds to the program ``if
	$x$ then coin() else False''. This operation can be represented
	as follows:
	\[
	\begin{pmatrix}
	1 & 1/2 \\0 & 1/2
	\end{pmatrix} \rightsquigarrow~ \tikzfig{examples/mat_coin} \in \Cat[\mathbb{R}_{\geq 0}](\tone \oplus \tone,\tone \oplus \tone)\]

	In that figure, the
	top-most $\oplus$ allows us to ``open a vector'' (the input) to
	recover its corresponding scalars, the contractions allow us to
	duplicate and sum scalars. Finally, the bottom-most $\oplus$ will
	build a new vector from two scalars. For example, one can look at the composition of the above matrix and vector. Using the upcoming equational theory (\Cref{sec:equations}), we can rewrite the composite diagram into a simpler diagram (done in \Cref{ex:revisiting-proba_mat}):
	\[\tikzfig{examples/mat_on_bit_1} \qquad \equiv \qquad \tikzfig{examples/mat_on_bit_4}\]
	and this resulting diagram indeed corresponds to the result of the matrix product $\begin{pmatrix} p + \frac{q}{2} \\ q / 2 \end{pmatrix}$.
\end{example}

\begin{example}\label{ex:quantum_mat}\rm
	We consider $R = \mathbb{C}$. The situation here is very similar to the probabilistic case, as our diagrams represent matrices. The main difference is that instead of expecting probabilistic bits to satisfy $p+q = 1$, we expect the quantum bits to satisfy $|p|^2 + |q|^2 = 1$, where $p,q$ are now complex numbers. The presence of negative numbers means that interference is now a possibility: multiple executions of the programs can cancel each other. For example, applying the Hadamard unitary $\begin{pmatrix} 1/\sqrt{2} & 1/\sqrt{2} \\ 1/\sqrt{2} & -1/\sqrt{2} \end{pmatrix}$ to $\begin{pmatrix} 1 / \sqrt{2} \\ 1 / \sqrt{2} \end{pmatrix}$ returns False, because the two executions outputting True cancel each other. This interference will also be visible in the upcoming equational theory, where two opposite scalars will cancel each others with Equation~\eqSsum.
\end{example}

\subsection{The Compact Closure}
\begin{figure*}
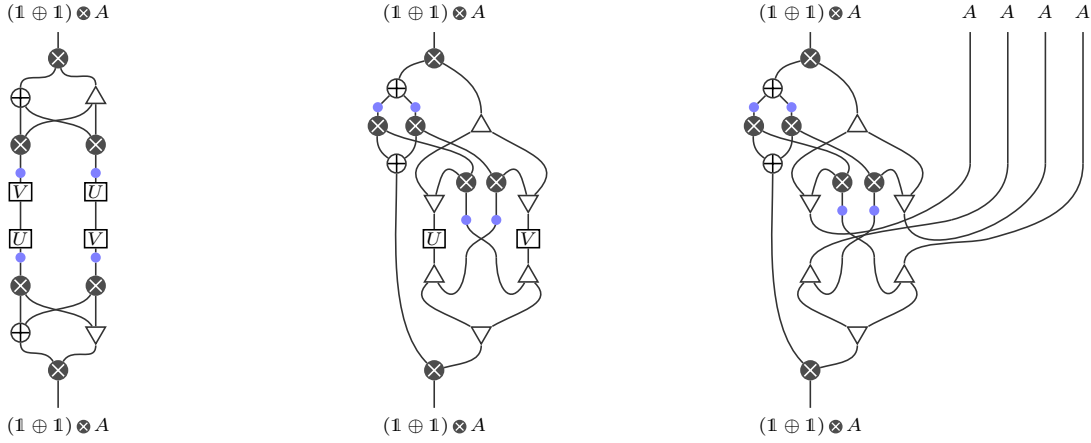

	\[\tikzfig{examples/switch}\]
	\caption{The diagrams for $\texttt{switch}_{\texttt{U},\texttt{V}}$ with duplicates, $\texttt{switch}_{\texttt{U},\texttt{V}}$ with single instances, and $\texttt{switch}$ respectively.}
	\label{fig:quantum_switch}
\end{figure*}
	\label{rem:cupcap}
	While we do not have a Cup and a Cap as generators in order to
	``bend'' wires, we can define them inductively as follows:
	\begin{center}
		\begin{tabular}{cc}
			$\tikzfig{cupcap/tens} $ & $	\tikzfig{cupcap/plus}$ \\&\\
			$\tikzfig{cupcap/one}$ & $\tikzfig{cupcap/zero}$ \\
		\end{tabular}
	\end{center}
	Since they rely on Unit, they are not part of $\FCat$. As proved
	in \Cref{prop:compact_close}, the equational theory ensures that they satisfy the
	snake equations. In other words, $\Cat_{\equiv}$ is compact
	closed.
	Those bent wires can be used to represent relatively advanced control flows and higher order programs, as shown in the example below.

\begin{example}\label{ex:switch}\rm
	We take inspiration from the literature of quantum computation and study the ``(quantum) switch'' \cite{Chiribella_2013,PhysRevLett.113.250402,
		Taddei_2021}. Note that despite being quantum-inspired, this example works for any commutative semiring $R$. It comes from a physical protocol allowing physicists to apply two operators $U$ and $V$ in an indefinite causal order, that is either $U\circ V$ or $V \circ U$, while relying on only one physical instance of $U$ and one physical instance of $V$ \cite{Abbott_2020}.

	As such, representing it in our language is an interesting case study of both ``higher order'' protocols, as $U$ and $V$ are inputs of the programs, and of advanced control flow, as it is indefinite whether $U$ is executed first or second. We temporarily set aside the ``higher order'' part, and focus on the latter. That is, we start by implementing $\texttt{switch}_{\texttt{U},\texttt{V}}:(\tone\oplus\tone)\tensor
	A \to (\tone\oplus\tone)\tensor A$, assuming two given diagrams for
	$\texttt{U},\texttt{V}:A\to A$. The expected operational behavior is the following:\begin{align*}
	\texttt{switch}_{\texttt{U},\texttt{V}} \texttt{ b a = if b }&\texttt{then (b, VUa)}\\ &\texttt{else (b, UVa)}
	\end{align*}
	In $\Cat{}$, we can easily represent the $\texttt{switch}_{\texttt{U},\texttt{V}}$ using two instances of $U$ and $V$, it is the left side of \Cref{fig:quantum_switch}. The diagram is a simple branch-out depending on the value of $b$, where we either apply $U \circ V$ or $V \circ U$. Finding a graphical representation using only on instance of $U$ and $V$ is harder -- and in fact impossible in some graphical languages \cite{Chiribella_2013} -- but still doable in our language, see the middle of \Cref{fig:quantum_switch}.

	From this diagram with only a single instance of $U$ and $V$, we can easily extract those out of the diagram by bending wires. This yields the diagram at the right of \Cref{fig:quantum_switch}, which represents
	\[ \texttt{switch} : (\tone \oplus \tone) \tensor A \parallel (A \parallel A) \parallel (A \parallel A) \to (\tone \oplus \tone) \tensor A \]
	where the first $(A \parallel A)$ expects $U$ as an input and the second $(A \parallel A)$ expects $V$.
\end{example}

Now that the expressive power of the graphical language is well illustrated, let's provide diagrams with a formal meaning.

\section{Categorical Semantics}
\label{sec:cat_sem}

We provide a semantics based on category theory, although this semantics could also be stated in terms of matrices instead\footnote{We provide such a matrix-based semantics in \Cref{fig:mat_sem_fun,fig:mat_sem}.}.
For example, in the case where the scalars of $R$ are complex numbers, this is simply a semantics into matrices with complex coefficients, so toward $(\bfup{FdHilb},\oplus,\{0\},\otimes,\mathbb{C})$ where
$\bfup{FdHilb}$ is the category of finite dimensional Hilbert spaces,
$\oplus$ is the direct sum (also called Cartesian product), $\{0\}$ is
the trivial Hilbert space, $\otimes$ is the tensor product (also
called Kronecker product), and $\mathbb{C}$ is the field of complex
numbers. The categorical framework used here can be deduced from three structures:
\begin{itemize}
	\item a semiadditive\footnote{Semiadditive categories are also called ``categories with finite biproducts''.} category $(\bfup{H},\oplus,\ozero)$
	\item a symmetric monoidal structure $(\bfup{H},\tensor,\oone)$ that is distributive over the semiadditive structure.
	\item a semiring isomorphism $\bfup{scal} : R \to \bfup{H}(\oone,\oone)$.
\end{itemize}
As those notions are relatively standard, in the core of the paper, we simply present the parts that are of interest for the categorical semantics. We still provide the full definitions in the appendix (\Cref{app:category}), together with proofs of the properties we rely on.

\subsection{The Categorical Framework}\label{sec:cat-framework}

Let $\bfup{H}$ be a category verifying the above three items. $\bfup{H}$ is enriched over $R$-semimodules, in other words given two morphisms $f,g : H \to K$ and two scalars $s,t \in R$ we can build the weighted sum $s \cdot f \bfup{~+~} t \cdot g$ by relying on the isomorphism $\bfup{scal}$. In particular, whenever $R = \mathbb{R}_{\geq 0}$, this can be used to represent probabilistic distribution morphisms.
We write $\zero : H \to K$ for the null morphism, which is the unit of that sum.

Additionally, $\bfup{H}$ comes with three different symmetric monoidal structures, $\tensor$ representing the "pairing" or data, $\oplus$ representing the "superposition" of multiple potential outcomes, and $|$ informally representing  "either $\tensor$ or $\oplus$" and more formally defined as:
\[ H \mid K = (H \tensor K) \oplus (H \oplus K) \]
Those operations are bifunctorial. So for $f : H \to K$ and $f' : H' \to K'$, we have:
	\[ \begin{array}{ccrcl} f \tensor f' &:& H \tensor H' &\to& K \tensor K' \\ f \oplus f' &:& H \oplus H' &\to& K \oplus  K'  \\ f \mid f' &:& H \mid H' &\to& K \mid K' \end{array}  \]
Those operations are, up to isomorphism, associative and respectively have $\oone$, $\ozero$ and $\ozero$ as units. We denote by $m^{\tensor}$ all the isomorphisms obtained from the associators and unitors of $\tensor$ by composing them with $\circ$, $\tensor$, $\oplus$ and $|$. While this notation is ambiguous, Mac Lane's coherence theorem (\Cref{thm:maclane_monoidal} and \cite{Kelly64,Joyal93}) ensures that this ambiguity is never problematic. We define $m^{\oplus}$ and $m^|$ similarly.

Those operations are also symmetric, meaning that we have
\[ \begin{array}{ccrcl} \sigma^{\tensor} &:& H \tensor K &\to& K \tensor H \\ \sigma^{\oplus} &:& H \oplus K &\to& K \oplus H\\ \sigma^{|} &:& H \mid K &\to& K \mid H \end{array}  \]

Lastly, $\oplus$ is a biproduct, meaning that we have injections $\iota$, projections $\pi$, and diagonals $\Delta=\iota_\ell\bfup{+}\iota_r$ and codiagonals $\nabla = \pi_\ell\bfup{+}\pi_r$:
\[\begin{array}{rcrclcccrcl}
	\pi_\ell &:& H \oplus K &\to& H &&
	\iota_\ell &:& H &\to& H \oplus K \\
	\pi_r &:& H \oplus K &\to& K &&
	\iota_r &:& K &\to& H \oplus K \\
	\nabla &:& H \oplus H &\to& H &&
	\Delta &:& H &\to& H \oplus H \\
	\end{array}\]

\subsection{Additional Structures for Non Functional Morphisms}

By virtue of $(\bfup{H},\oplus,\ozero)$ being a semiadditive category, $\ozero$ is an initial object. Said otherwise, for any object $H$, the only morphism $\ozero \to H$ is $\zero$. This actually matches the functional fragment of our language, as the only generator with no input is the Null. However, in the full language, the Unit is a generator with no input that is distinct from the Null. In order to represent that unit, we need to go slightly beyond $\bfup{H}$.

\begin{figure*}
	\[ \interp{\tikzfig{lang/id}} := \id : \interp{A} \to \interp{A} \quad \interp{\tikzfig{lang/swap}} := \sigma^| : \interp{A}\mid\interp{B} \to \interp{B}\mid\interp{A} \quad \interp{\tikzfig{lang/scal}} := s \cdot \id : \interp{A} \to \interp{A} \]
	\[ \interp{\tikzfig{lang/plus}} := \pi_r : \interp{A}\mid\interp{B} \to \interp{A}\oplus\interp{B} \quad  \interp{\tikzfig{lang/plus-inv}} := \iota_r :  \interp{A}\oplus\interp{B} \to \interp{A}\mid\interp{B}\]
	\[ \interp{\tikzfig{lang/tensor-PN}} := \pi_\ell : \interp{A}\mid\interp{B} \to \interp{A}\tensor\interp{B} \quad  \interp{\tikzfig{lang/tensor-PN-inv}} := \iota_\ell :  \interp{A}\tensor\interp{B} \to \interp{A}\mid\interp{B}\]
	\[ \interp{\tikzfig{lang/contraction}} := \nabla \circ \pi_r : \interp{A}\mid\interp{A} \to \interp{A} \quad  \interp{\tikzfig{lang/contraction-inv}} := \iota_r \circ \Delta :  \interp{A} \to \interp{A}\mid\interp{A}\]
	\[ \interp{\tikzfig{lang/null}} := \zero : \ozero \to \interp{A} \quad  \interp{\tikzfig{lang/null-inv}} := \zero :  \interp{A} \to \ozero \quad \interp{\tikzfig{lang/adapt}} := \interp{A \isoML A'} :  \interp{A} \to \interp{A'}\]
	\[ \interp{e \circ d} := \interp{e} \circ \interp{d} \qquad \interp{d \parallel e} := m^| \circ (\interp{d} \mid \interp{e}) \circ m^| \]
	\caption{Categorical Semantics for the Functional \Langage}
	\label{fig:sem_fun}
\end{figure*}

\begin{figure*}
	\[ \interp{d}^{\oplus \oone} := \interp{d} \oplus \id_{\oone} \text{ whenever } d \in \FCat \]
	\[ \interp{\tikzfig{lang/unit}}^{\oplus \oone}  := \Delta \circ m^\oplus \in \bfup{H}^{\oplus \oone}(\ozero,\oone) \quad  \interp{\tikzfig{lang/unit-inv}}^{\oplus \oone}  :=  {m^\oplus}\circ \nabla \in \bfup{H}^{\oplus \oone}(\oone,\ozero)\]
	\[ \interp{e \circ d}^{\oplus \oone}  := \interp{e}^{\oplus \oone}  \circ \interp{d}^{\oplus \oone}  \qquad \interp{d \parallel e}^{\oplus \oone}  := (m^| \oplus \id_{\oone}) \circ \bfup{expand} \circ (\interp{d}^{\oplus \oone}  \tensor \interp{e}^{\oplus \oone}) \circ \bfup{expand}^{-1} \circ (m^| \oplus \id_{\oone}) \]
	\caption{Categorical Semantics for the \Langage}
	\label{fig:sem}
\end{figure*}

\begin{definition}
	We define $\bfup{H}^{\oplus \oone}$ as the category with the same objects as $\bfup{H}$ and for morphisms $f \in \bfup{H}^{\oplus \oone}(H,K)$, the morphisms of $\bfup{H}(H\oplus \oone,K \oplus \oone)$ of the form
	\[ f = g \bfup{~+~} (\iota_r \circ \pi_r)\]
	for some $g \in \bfup{H}(H\oplus \oone,K \oplus \oone)$. In matrix terms, it is a matrix of $\bfup{H}(H\oplus \oone,K \oplus \oone)$ where the bottom-right coefficient is of the form $c \bfup{+} \id$ for $c \in \bfup{H}(\oone,\oone)$. The identity and composition are the same as in $\bfup{H}$.
\end{definition}
The restriction to morphisms of the shape $g \bfup{+} (\iota_r \circ \pi_r)$ is irrelevant if we have access to negative numbers, that is if our semiring $R$ is actually a ring. However, in absence of negative numbers, \Langage{} has no diagram that actually behaves like the $\zero$ morphism\footnote{For example, the empty diagram has for semantics that identity of $\bfup{H}(1,1)$, so to obtain $\zero$ one would need to be "less" than empty.}, meaning that we need to exclude the $\zero$ morphism from $\bfup{H}^{\oplus \oone}$ in order to get universality.

In order to utilize this category, we rely on the following natural isomorphims:
\[\bfup{expand} :  (H\oplus \oone) \tensor{} (K \oplus \oone) \to (H\mid K) \oplus \oone\]
which follow from the distributivity of $\tensor$ over $\oplus$.

\subsection{The Semantics}

We assume that we have $(\bfup{H},\oplus,\ozero,\tensor,\oone,\bfup{scal})$ as
described above. For any color $A$ of $\Cat$ (or $\FCat$), we can give its semantics
$\interp{A}$ in $\bfup{H}$ as follows:
\[ \interp{A\oplus B} := \interp{A} \oplus \interp{B} \quad
\interp{\tzero} := \ozero \]
\[ \interp{A\tensor B} := \interp{A} \tensor \interp{B} \quad
\interp{\tone} := \oone \] We can then extend this semantics to objects as
follows:
\[ \begin{matrix} \interp{\wireSetA \parallel B} :=
\interp{\wireSetA} \mid \interp{B}\qquad\qquad\quad \\=
(\interp{\wireSetA}  \tensor \interp{B}) \oplus
(\interp{\wireSetA}  \oplus \interp{B}) \end{matrix} \qquad\qquad
\interp{\varnothing} := \ozero \] In order to give a semantics to every
diagram, we will start by defining a semantics for the functional
fragment $\interp{\_}: d \in \FCat(\wireSetA,\wireSetB) \mapsto
\interp{d} \in \bfup{H}(\interp{\wireSetA},\interp{\wireSetB})$
and then we generalize it to the whole calculus at the cost of a
slightly different target category $\interp{\_}^{\oplus \oone}: d \in
\Cat(\wireSetA,\wireSetB) \mapsto \interp{d}^{\oplus \oone}
\in\bfup{H}^{\oplus \oone}(\interp{\wireSetA},\interp{\wireSetB})$.

The semantics for the functional fragment is given in
\Cref{fig:sem_fun}, where $m^|$ denote the unique
morphisms given by Mac Lane's coherence theorem\footnote{See \cite{Kelly64,Joyal93}, or \Cref{thm:maclane_monoidal} from the appendices, using the fact that $|$ is a monoidal product as proven in \Cref{prop:parallel_is_monoidal} in the appendices.} on
$(\bfup{H},\mid,\ozero)$ that apply the sequence of associators for $|$
required for the definition to typecheck, and where $\interp{A \isoML
A'}$ is the unique morphism given by our generalization of Mac Lane's
coherence theorem to categories with two monoidal structures (see
\Cref{def:maclane_blue} in the appendices) that applies the sequence of associators for $\tensor$ and $\oplus$, unitors for $\tensor$ and $\oplus$ and their inverses required
to obtain a morphism from $\interp{A}$ to $\interp{A'}$.
In the semantics, remember that $A \mid B$ stands for $(A\tensor B) \oplus (A\oplus B)$, therefore the morphisms $\pi_r, \pi_l, \iota_l, \iota_r$ act on this type and, for example, we can have $\pi_r : A \mid B \to A \oplus B$ and $\pi_\ell : A \mid B \to A \tensor B$. This semantics is universal:
\begin{theorem}[Universality]\label{thm:universality}
	For every object $\wireSetA,\wireSetB$, for all $f \in \bfup{H}(\interp{\wireSetA},\interp{\wireSetB})$, there exists a diagram $d \in \FCat(\wireSetA,\wireSetB)$ such that $\interp{d} = f$.
\end{theorem}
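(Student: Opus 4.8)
The plan is to reduce the statement to the universality of the calculus of $R$-matrices. Since $\tensor$ distributes over the biproduct $\oplus$ and $\bfup{H}(\oone,\oone)\cong R$, every interpreted object is a ``polynomial in $\oone$'': writing $T_d$ for the color $\tone\oplus(\tone\oplus\cdots)$ ($d$ right-nested copies of $\tone$, with $T_0:=\tzero$), one has $\interp{T_d}=\oone^{\oplus d}$, and every $\interp{\wireSetA}$ is isomorphic in $\bfup{H}$ to $\oone^{\oplus n}$ for $n:=\dim\wireSetA$, so that $\bfup{H}(\interp{\wireSetA},\interp{\wireSetB})\cong R^{\,\dim\wireSetB\times\dim\wireSetA}$ by the universal property of the biproduct. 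I would therefore split the proof into three parts: (i) produce diagrams $\phi_{\wireSetA}:\wireSetA\to T_{\dim\wireSetA}$ and $\psi_{\wireSetA}:T_{\dim\wireSetA}\to\wireSetA$ realizing a fixed pair of mutually inverse isomorphisms $\interp{\wireSetA}\cong\oone^{\oplus\dim\wireSetA}$; (ii) realize every matrix, i.e.\ every morphism $\oone^{\oplus p}\to\oone^{\oplus q}$, by a diagram $T_p\to T_q$; and (iii) given $f$, realize the matrix $M:=\interp{\phi_{\wireSetB}}\circ f\circ\interp{\psi_{\wireSetA}}$ by a diagram $d_M$ and set $d:=\psi_{\wireSetB}\circ d_M\circ\phi_{\wireSetA}$, for which $\interp{d}=\interp{\psi_{\wireSetB}}\circ M\circ\interp{\phi_{\wireSetA}}=f$ since $\phi$ and $\psi$ are semantic inverses.

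For (i) I would induct on the structure of colors (and on the number of parallel wires for objects), the crucial tool being that $\mid$ is the biproduct of $\tensor$ and $\oplus$. Concretely, for diagrams $d,e$ one has $\pi_\ell\circ(\interp{d}\mid\interp{e})\circ\iota_\ell=\interp{d}\tensor\interp{e}$ and $\pi_r\circ(\interp{d}\mid\interp{e})\circ\iota_r=\interp{d}\oplus\interp{e}$, because $\pi_\ell\iota_\ell=\id$ and $\pi_r\iota_r=\id$ on the respective summands. This lets me convert the diagrammatic parallel composition (whose semantics is $m^{|}\circ(\cdot\mid\cdot)\circ m^{|}$) into the clean $\tensor$ or $\oplus$ demanded by the recursive clauses $\interp{A\oplus B}=\interp{A}\oplus\interp{B}$ and $\interp{A\tensor B}=\interp{A}\tensor\interp{B}$: the $\oplus$-case uses the mirrored Plus ($\iota_r$) followed by the Plus ($\pi_r$), the $\tensor$-case the mirrored Tensor ($\iota_\ell$) followed by the Tensor ($\pi_\ell$). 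The unit and associativity reindexings ($T_p\oplus T_q\cong T_{p+q}$ and $\oone\tensor\oone\cong\oone$) are supplied by the Adapter, which realizes exactly the $\isoML$-isomorphisms. I would build $\phi_{\wireSetA}$ and $\psi_{\wireSetA}$ simultaneously, each realizing one direction of the chosen inverse pair; note that mirroring a diagram does not invert it in general (the Plus realizes $\pi_r$ while its mirror realizes $\iota_r$, and $\iota_r\pi_r\neq\id$), so the two directions must be produced in parallel and checked to compose to identities in $\bfup{H}$, which holds by construction.

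The only genuinely structural step is the tensor case of (i), which needs the distributivity isomorphism $\oone^{\oplus p}\tensor\oone^{\oplus q}\cong\oone^{\oplus pq}$; I would obtain it by iterating the binary distributor $\delta:A\tensor(B\oplus C)\to(A\tensor B)\oplus(A\tensor C)$ together with $\oone\tensor\oone\cong\oone$. The point is that $\delta$, although not a generator, is realizable: writing it via the biproduct as $\iota_\ell^{\oplus}\circ(\id_A\tensor\pi_\ell^{\oplus})\ \bfup{+}\ \iota_r^{\oplus}\circ(\id_A\tensor\pi_r^{\oplus})$, each summand is assembled from realizable primitives. A pure-$\oplus$ projection $\pi_\ell^{\oplus}:\interp{B\oplus C}\to\interp{B}$ is realized by opening with the mirrored Plus and discarding the second wire with the mirrored Null ($\zero$); a tensoring-with-identity $\id_A\tensor(\cdot)$ is realized by the mirrored-Tensor/Tensor sandwich using the identity above; and the enrichment sum $\bfup{+}$ is realized by duplicating the input with the mirrored Contraction (the diagonal $\Delta$), running the two summands in parallel, and recombining with the Contraction (the codiagonal $\nabla$). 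These same primitives, now with the Scalar generator inserted, realize an arbitrary matrix $M=\sum_{i,j}\iota_i^{\oplus}\circ[m_{ij}]\circ\pi_j^{\oplus}$, which settles (ii); in particular the zero entries are produced by the scalar $[0]$, so no non-functional generator is needed, which is exactly why $\FCat$ suffices against $\bfup{H}$.

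I expect the main obstacle to be not any single construction but the pervasive bookkeeping of the ``either $\tensor$ or $\oplus$'' wire semantics: every parallel composition passes through $m^{|}$ into the biproduct $(\cdot\tensor\cdot)\oplus(\cdot\oplus\cdot)$, and one must track which summand each composite lands in so that the section–retraction identities above actually apply. Making the induction for (i) go through — choosing the normal-form isomorphisms coherently across the $\oplus$-, $\tensor$- and $\parallel$-clauses so that $\phi_{\wireSetA}$ and $\psi_{\wireSetA}$ really compose to identities — is where the bulk of the (routine but lengthy) verification lives, and it relies throughout on Mac Lane coherence to suppress the associators and unitors $m^{|}$, $m^{\tensor}$, $m^{\oplus}$.
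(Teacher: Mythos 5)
Your proposal is correct and takes essentially the same route as the paper's proof (Appendix~G): there, $\Iso_{\wireSetA} := \oplus_{\tone,\dots,\tone} \circ \iso_{\wireSetA}$ plays the role of your $\phi_{\wireSetA}$ (with its up-down mirror as $\psi_{\wireSetA}$), arbitrary matrices are realized by the contraction-and-scalar diagrams $[M]$ of \Cref{def:mat}, fullness of $\Morphism{-}$ (\Cref{prop:matrix_is_fff}) supplies the matrix $M$ from $f$, and $d$ is obtained by conjugation exactly as in your step~(iii). The only real difference is one of bookkeeping: the paper certifies the inverse pair and the matrix semantics through the equational theory plus soundness (\Cref{lem:Iso_is_iso}, \Cref{prop:sem_block_matrix}), whereas you verify the same identities directly in $\bfup{H}$, which is equally valid and self-contained at the semantic level.
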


Then, the semantics for the whole calculus is given in \Cref{fig:sem}. For readers preferring matrix notation instead of categorical notations, we provide an equivalent semantics using matrices in the appendices, \Cref{fig:mat_sem_fun,fig:mat_sem}. The universality result also holds in the general case, with $\bfup{H}^{\oplus \oone}$ instead of $\bfup{H}$, although we remind the reader that the definition of $\bfup{H}^{\oplus \oone}$ explicitly excludes morphisms such as $\zero$ when $R$ has no negative element.

\section{The Equational Theory}
\label{sec:equations}
\subsection{The Functional Case}

\begin{figure*}\centering
	\tikzfig{worldless_eq/eq_tensor_plus}\\[0.1cm]
	and their up-down mirrored versions.
	\caption{Main equations for $\FCat$}
	\label{fig:eq_tensor_plus}
\end{figure*}

\begin{figure*}
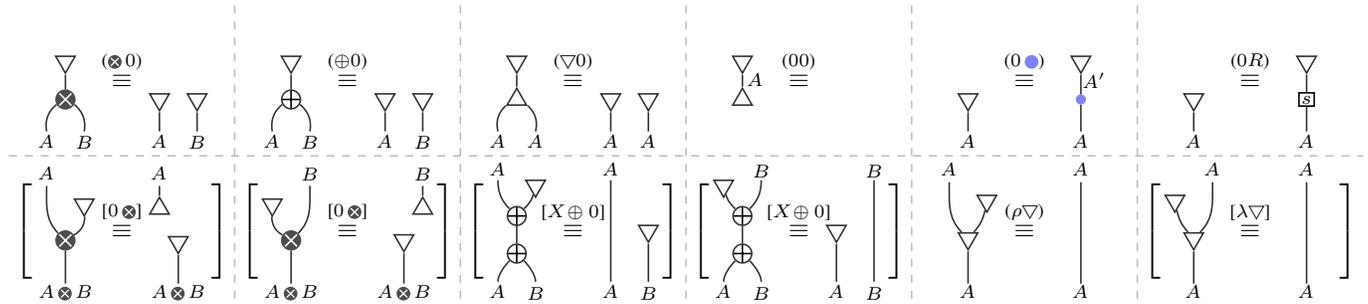
\centering
	\tikzfig{worldless_eq/eq_null_nat}\\[0.1cm]
	and their up-down mirrored versions.
	\caption{Additional equations describing the commutations
	of $\zero$ in $\FCat$}
	\label{fig:eq_null_nat}
\end{figure*}
\begin{figure*}\centering
	\tikzfig{worldless_eq/eq_contraction_nat}\\[0.1cm]
	and their up-down mirrored versions.
	\caption{Additional equations describing the commutations
	of $\contraction$ in $\FCat$}
	\label{fig:eq_contraction_nat}
\end{figure*}

\begin{figure*}\centering
	\tikzfig{worldless_eq/eq_maclane}\\[0.1cm]
	and their up-down mirrored versions.
	\caption{Additional equations representing $\isoML$ in $\FCat$}
	\label{fig:eq_maclane}
\end{figure*}

\begin{figure*}
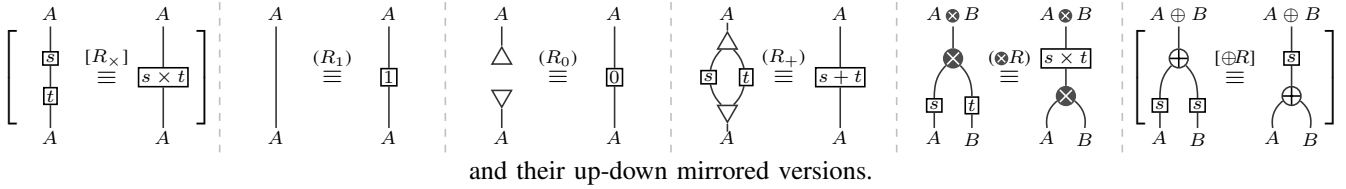
\centering
	\tikzfig{worldless_eq/eq_scalar}\\[0.1cm]
	and their up-down mirrored versions.
	\caption{Additional equations representing the semiring
	$(R,+,0,\times,1)$ in $\FCat$}
	\label{fig:eq_scalar}
\end{figure*}

\begin{figure*}\centering
	\tikzfig{worldless_eq/unit}\\[0.1cm]
	and their up-down mirrored versions.
	\caption{Additional Equations for the full \Langage}
	\label{fig:eq_unit}
\end{figure*}

\begin{figure*}
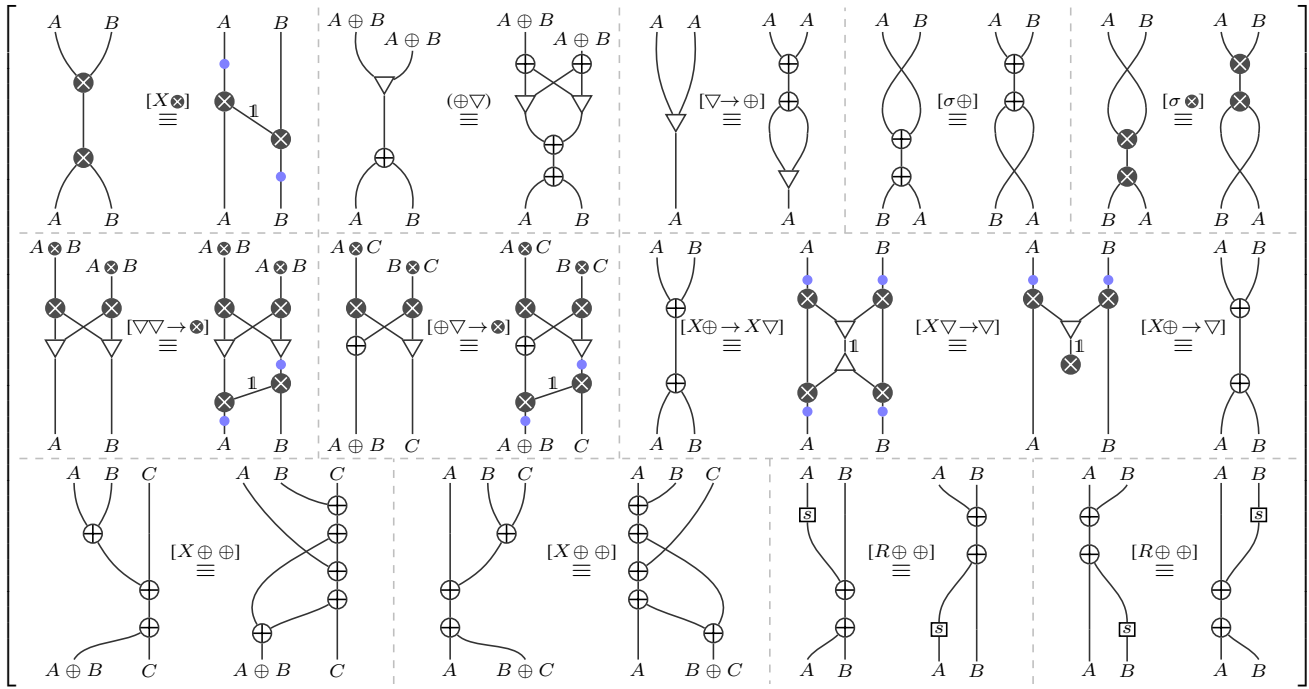
\centering
	\[\left[\tikzfig{induced/induced_eq}\right]\]
	and their up-down mirrored versions.
	\caption{A Collection of Various Other Deducible Equations}
	\label{fig:eq_induced_in_core}
\end{figure*}

The equivalence relation $\equiv$ is generated by the non-bracketed equations in
\Cref{fig:eq_tensor_plus,fig:eq_maclane,fig:eq_scalar,fig:eq_null_nat,fig:eq_contraction_nat},
which include the up-down mirrored version of each equation. The bracketed equations can be deduced from the non-bracketed ones (see \Cref{app:lemma}) but are left for convenience. The name of those deduced equations also use square brackets, unless they are simply the left-right mirror of an existing axioms in which case we keep the same name. The most important equations are
found in \Cref{fig:eq_tensor_plus}:
\begin{itemize}
	\item \eqT{}, \eqP{} and \eqbot{} show that the Tensor and the
	Plus act as projections/injections, where plugging one to its
	mirrored version gives the identity while plugging one to the other gives
	the Null.
	\item \eqN{} shows that the color $\tzero$ is useless as no data can
	travel through wires of that color.
	\item \eqSigmaC{} shows that the Contraction is a (co-)commutative operation.
	\item \eqTT{} means that Tensors have no computational effect on
	the state, other than enforcing that either both inputs are used at the same time, or they are both not used
	\item \eqPP{} means that two Pluses head-to-head have no
	computational effect on the tokens other than enforcing that at
	most one input is used.
	\item  \eqPtoC{} shows that the Contraction and the Plus have very
	similar behaviors.
	\item \eqmix{} means that $A\parallel B$ is ``either $A \tensor B$
	or $A \oplus B$''.
\end{itemize}
Then:
\begin{itemize}
	\item The equations provided in
	\Cref{fig:eq_contraction_nat,fig:eq_null_nat} describe how both
	the Null and the Contraction can commute with most other
	constructors.
	\item The equations in \Cref{fig:eq_maclane} explain how the
	definition of $\isoML$ is expressed in the equational theory.
	\item  The equations in \Cref{fig:eq_scalar} explain how the
	semiring $(R,+,0,\times,1)$ is expressed in the equational
	theory.
\end{itemize}
Lastly, \Cref{fig:eq_induced_in_core} are a collection of various other equations that can be deduced from the others, included for convenience. Only \eqPPtoC{} and \eqCCtoC{} rely on the Unit, so all the others are in $\FCat{}$.

All these equations were obviously chosen so that their application does not change the semantics of the diagrams:

\begin{proposition}[Soundness]\label{prop:soundness_fun} For $d,e \in
	\FCat(\wireSetA,\wireSetB)$, if $d \equiv e$ then $\interp{d}
	= \interp{e}$.
\end{proposition}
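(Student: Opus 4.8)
The plan is to exploit that $\equiv$ is, by construction, the least congruence on $\FCat$ (with respect to the two compositions $\circ$ and $\parallel$) containing the non-bracketed generating equations of \Cref{fig:eq_tensor_plus,fig:eq_maclane,fig:eq_scalar,fig:eq_null_nat,fig:eq_contraction_nat} together with their mirrors, and that $\interp{\cdot}$ is compositional. Concretely, I would first observe that the relation $\mathcal{R}=\{(d,e)\mid \interp{d}=\interp{e}\}$ is itself such a congruence: it is an equivalence relation because equality in $\bfup{H}$ is one, and it is closed under the two compositions because $\interp{e\circ d}=\interp{e}\circ\interp{d}$ and $\interp{d\parallel e}=m^{\mid}\circ(\interp{d}\mid\interp{e})\circ m^{\mid}$ (\Cref{fig:sem_fun}) depend on $d,e$ only through $\interp{d},\interp{e}$, using that $\circ$ and $\mid$ are (bi)functorial on $\bfup{H}$. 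Since $\equiv$ is the smallest congruence containing the axioms, we get $\equiv\ \subseteq\ \mathcal{R}$ as soon as each individual generating equation is mapped by $\interp{\cdot}$ to a genuine identity in $\bfup{H}$; this reduces the whole statement to a finite case check. (Well-definedness of $\interp{\cdot}$ on the underlying PROP, i.e.\ its compatibility with the equations of \Cref{fig:PROP}, is subsumed by $(\bfup{H},\mid,\ozero)$ being a symmetric monoidal category.)

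The core of the argument is therefore a finite case analysis over the axioms, each reducing to a diagram chase in $\bfup{H}$. I would group them by the structure they exercise. The equations \eqT{}, \eqP{} and \eqbot{} unfold, via \Cref{fig:sem_fun}, to the biproduct identities $\pi_i\circ\iota_i=\id$ and $\pi_i\circ\iota_j=\zero$ for $i\neq j$; \eqSigmaC{}, \eqTT{} and \eqPP{} follow from the symmetry and the (co)universal property of the biproduct; \eqPtoC{} and \eqN{} are immediate unfoldings of $\nabla\circ\pi_r$, $\iota_r\circ\Delta$ and of $\interp{\tzero}=\ozero$. The naturality equations of \Cref{fig:eq_null_nat,fig:eq_contraction_nat} are precisely the naturality squares of $\zero$, of the (co)diagonals $\nabla,\Delta$, and of the (co)projections with respect to $\tensor$, $\oplus$ and $\mid$. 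The scalar equations of \Cref{fig:eq_scalar} reduce to $\bfup{scal}$ being a semiring isomorphism together with the $R$-semimodule enrichment of $\bfup{H}$. Finally, the adapter equations of \Cref{fig:eq_maclane}, and all the $m^{\mid},m^{\tensor},m^{\oplus}$ occurring in the semantics, are handled by Mac Lane coherence for the two monoidal structures (\Cref{thm:maclane_monoidal,def:maclane_blue}): since these morphisms are the unique coherence isomorphisms between their (co)domains, any equation built solely from them collapses to a trivial identity.

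To avoid checking the mirrored axioms separately, I would record a single lemma: mirroring is interpreted as the canonical transpose on the $R$-enriched biproduct structure. Writing $\mirror{(-)}$ for the involution determined on generators by $\mirror{\pi_i}=\iota_i$, $\mirror{\iota_i}=\pi_i$, $\mirror{\nabla}=\Delta$, $\mirror{\Delta}=\nabla$, with swaps, scalars and coherence isomorphisms self-transpose, and extended contravariantly by $\mirror{(g\circ f)}=\mirror{f}\circ\mirror{g}$ and $\mirror{(f\mid g)}=\mirror{f}\mid\mirror{g}$, a straightforward induction on diagrams comparing the two columns of \Cref{fig:sem_fun} yields $\interp{\mirror{d}}=\mirror{\interp{d}}$ on the image of $\interp{\cdot}$. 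Because $\mirror{(-)}$ is an identity-on-objects contravariant monoidal functor, $\interp{d}=\interp{e}$ forces $\mirror{\interp{d}}=\mirror{\interp{e}}$, hence $\interp{\mirror{d}}=\interp{\mirror{e}}$; so the mirror of every validated axiom is automatically sound, halving the work.

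I expect the main obstacle to be \eqmix{} together with the parallel-composition clause of the semantics. Both force one to expand $\mid$ into its definition $(\tensor)\oplus(\oplus)$ and to transport morphisms across the distributor $\bfup{expand}$ and the coherence isomorphisms $m^{\mid}$; here the bookkeeping of which coherence map appears where is delicate, and the verification relies essentially on \Cref{prop:parallel_is_monoidal} (that $\mid$ is a monoidal product) and on coherence to guarantee that the various occurrences of $m^{\mid}$ are interchangeable. A secondary point of care is checking that the transpose of the previous paragraph is genuinely compatible with $\mid$, so that the mirror lemma applies to swaps and to composites involving $\tensor$. The remaining equations, once the relevant identity in $\bfup{H}$ has been identified, are routine.
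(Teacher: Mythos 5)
Your overall architecture---reduce soundness to an axiom-by-axiom check, using that $\interp{\cdot}$ is compositional and that $\equiv$ is the least congruence generated by the axioms---is exactly the paper's strategy (\Cref{app:soudness}). The genuine gap is in your mirror lemma. You define $\mirror{(-)}$ on generators, extend it contravariantly, and then infer ``$\interp{d}=\interp{e}$ forces $\mirror{\interp{d}}=\mirror{\interp{e}}$'' on the grounds that $\mirror{(-)}$ is a contravariant functor. But $\bfup{H}$ is only assumed to be a distributive semiadditive symmetric monoidal category with $\bfup{scal}$: it carries no transpose or dagger. Your $\mirror{\interp{d}}$ is defined by induction on the \emph{diagram} $d$, not as an operation on the \emph{morphism} $\interp{d}\in\bfup{H}(\interp{\wireSetA},\interp{\wireSetB})$; a priori two diagrams with equal interpretations could receive unequal ``transposes,'' so the displayed implication is not available and the halving argument collapses as stated. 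It is repairable in two ways: either note, as the paper does, that every hypothesis on $\bfup{H}$ is self-dual, so each axiom's soundness proof dualizes to $\bfup{H}^{\mathrm{op}}$ and thereby covers its mirrored version; or observe that every interpreted object is coherently isomorphic to a finite biproduct of copies of $\oone$, so that morphisms in the relevant homsets are genuinely matrices over the commutative semiring $\bfup{H}(\oone,\oone)\cong R$ (\Cref{lem:coef_by_coef,prop:matrix_is_fff}), for which transposition is well defined and contravariantly functorial. Either repair must be stated; as written, the step is circular.

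On the case analysis itself, you correctly single out \eqmix{} as the main obstacle, but you understate several neighbours. The paper finds \eqTT{}, \eqmix{}, \eqAlphaT{} and \eqAlphaP{} hard enough that it verifies them by brute-force computation in the matrix semantics of \Cref{fig:mat_sem_fun} (for \eqmix{} the matrices reach size $15\times 15$), and \eqPP{}, \eqPtoC{} and all of \Cref{fig:eq_contraction_nat} are \emph{not} ``immediate unfoldings'' or ``precisely naturality squares'': since parallel wires are interpreted through $\mid\,=(\cdot\tensor\cdot)\oplus(\cdot\oplus\cdot)$, the generators denote composites such as $\nabla\circ\pi_r$, and a single naturality square of $\nabla$ or $\Delta$ does not match either side. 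The paper handles these with dedicated decomposition lemmas (\Cref{lem:decomp_para_bin,lem:decomp_para_ter})---precomposing with the injections into $\interp{A\parallel B}$, which is the rigorous form of your appeal to the couniversal property of the biproduct---and then reduces each component to previously proven axioms. With these two repairs (duality or matrix transposition for the mirrors; injection-decomposition or matrix computation for the $\mid$-entangled axioms), your plan goes through and coincides with the paper's proof.
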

The proof consists in checking that each individual equation is sound with respect to the categorical semantics, which is done in details in \Cref{app:soudness}.

Soundness tells us that the equational theory captures part of the semantics associated to the graphical language. We can go one step beyond and show that it captures \emph{exactly} the semantics, i.e.~the converse of soundness:

\begin{theorem}[Completeness]\label{thm:complete_fun} For $d,e \in
	\FCat(\wireSetA,\wireSetB)$, if $\interp{d} = \interp{e}$ then
	$d \equiv e$.
\end{theorem}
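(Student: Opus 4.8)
The plan is to prove completeness by reducing to a \emph{matrix normal form} and leveraging the universality result (\Cref{thm:universality}). The starting observation is that in $\bfup{H}$ every color interprets to a biproduct of copies of $\oone$: since colors are generated from $\tone$ and $\tzero$ by $\tensor$ and $\oplus$, and $\oone\tensor\oone\cong\oone$, distributivity of $\tensor$ over $\oplus$ lets us rewrite each $\interp{A}$ as $\oone^{\oplus n_A}$ for some $n_A\in\nat$ (with $n_{\tzero}=0$). An object $\interp{\wireSetA}$ is then likewise a biproduct of copies of $\oone$, and because $\bfup{H}(\oone,\oone)\cong R$ via $\bfup{scal}$, any $f\in\bfup{H}(\interp{\wireSetA},\interp{\wireSetB})$ is, up to the canonical biproduct isomorphisms, a matrix over $R$. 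Completeness thus amounts to the statement that two diagrams encoding the same matrix are provably equal.

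Concretely, I would fix, for each object, a canonical \emph{decomposition diagram} $D_{\wireSetA}\colon\wireSetA\to\wireSetA^{\flat}$ built from the mirrored Tensor/Plus, Contractions, and Adapters that routes $\wireSetA$ onto its $n_{\wireSetA}$ scalar components (so $\interp{D_{\wireSetA}}$ is the canonical iso $\interp{\wireSetA}\cong\oone^{\oplus n_{\wireSetA}}$), together with its mirror $\mirror{D_{\wireSetB}}$ on the output side. Given a matrix $(a_{ij})$, its normal form is
\[
\mathrm{NF}\bigl((a_{ij})\bigr)\;:=\;\mirror{D_{\wireSetB}}\circ M_{(a_{ij})}\circ D_{\wireSetA},
\]
where $M_{(a_{ij})}$ is the diagram on $n_{\wireSetA}$ inputs and $n_{\wireSetB}$ outputs that, for each input wire $j$, fans it out with mirrored Contractions, applies the Scalar $[a_{ij}]$ on the branch going to output $i$, and merges all branches reaching output $i$ with Contractions. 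By \Cref{prop:soundness_fun} and a direct reading of \Cref{fig:sem_fun}, $\interp{\mathrm{NF}((a_{ij}))}$ is exactly the matrix $(a_{ij})$; and since the entries of a matrix are determined by the morphism, two normal forms with equal semantics are the same diagram. Universality (\Cref{thm:universality}) guarantees that every semantic morphism does arise as the matrix of some diagram, so the normal form is defined for every $f=\interp{d}$.

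The core of the proof is the \textbf{normalization lemma}: every $d\in\FCat(\wireSetA,\wireSetB)$ satisfies $d\equiv\mathrm{NF}(\interp{d})$, proved by structural induction on $d$. Each generator is first shown equal to its own normal form using the equations of \Cref{fig:eq_tensor_plus} --- in particular \eqT{}, \eqP{}, and \eqbot{} to turn Tensor/Plus paired with their mirrors into identities or Nulls, \eqSigmaC{} and \eqPtoC{} to normalize the Contraction, and \eqN{} to eliminate $\tzero$-wires. For sequential composition one shows $\mathrm{NF}(N)\circ\mathrm{NF}(M)\equiv\mathrm{NF}(NM)$, i.e.\ that the theory realizes \emph{matrix multiplication}: the middle $D_{\wireSetB}$--$\mirror{D_{\wireSetB}}$ pair collapses (again by \eqT{}/\eqP{}/\eqbot{}), the scalars along each path multiply via \eqSS{}, and the parallel branches feeding a common output are summed by Contractions using \eqSsum{}, with the commutation equations of \Cref{fig:eq_null_nat,fig:eq_contraction_nat} used to slide Contractions and Nulls into position. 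For parallel composition $d\parallel e$, the definition $\interp{d\parallel e}=m^{|}\circ(\interp{d}\mid\interp{e})\circ m^{|}$ together with \eqmix{} --- which encodes that $\wireSetA\parallel\wireSetB$ is ``either $\tensor$ or $\oplus$'' --- lets me recombine the two normal forms into a single one on the biproduct decomposition of the juxtaposed object.

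I expect the main obstacle to be exactly this normalization lemma, and within it the sequential-composition case: verifying that the equational theory faithfully implements matrix multiplication requires carefully pushing Contractions and Nulls through the diagram with the naturality equations and then using \eqSsum{} and \eqSS{} to collapse the resulting sums of products of scalars, all while tracking the Adapters that witness $\isoML$. The parallel case is subtler than it looks as well, since \eqmix{} must be combined with the bifunctoriality of $\mid$ to reconcile the two possible readings of $\parallel$ against the single matrix attached to the composite; this is where the specific shape of the chosen normal form matters most.
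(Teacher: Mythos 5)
Your plan is, in outline, exactly the paper's proof: a normal form of the shape (mirrored decomposition) $\circ$ (matrix diagram of contractions and scalars) $\circ$ (decomposition) (\Cref{def:iso,def:mat,thm:nf_fun}), established by structural induction with sequential composition realized as matrix multiplication via \eqSS{} and \eqSsum{} (\Cref{prop:mat_product}), and uniqueness obtained by reading the matrix back off the semantics through the faithful functor $\Morphism{-}$ (\Cref{prop:sem_block_matrix,prop:matrix_is_fff,cor:nf_unique_fun}). However, one step of your write-up is genuinely false as stated, and it is precisely the point where the paper needs its heaviest machinery. Your decomposition diagram $D_{\wireSetA}\colon\wireSetA\to\tone\parallel\dots\parallel\tone$ cannot have as semantics the canonical isomorphism $\interp{\wireSetA}\cong\oone^{\oplus n_{\wireSetA}}$, because $\parallel$ is not interpreted as $\oplus$: already $\interp{\tone\parallel\tone}=(\oone\tensor\oone)\oplus(\oone\oplus\oone)$ has dimension $3$, not $2$. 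So $\interp{D_{\wireSetA}}$ is at best a split mono onto the $\oplus$-summand, and consequently, in your sequential-composition case, the middle pair $D_{\wireSetB}\circ\mirror{D_{\wireSetB}}$ does \emph{not} collapse to the identity: by soundness it cannot, since its semantics is the nontrivial idempotent that projects away all the tensor components of the parallel object. The equations \eqT{}, \eqP{}, \eqbot{} alone will not prove $\mathrm{NF}(N)\circ\mathrm{NF}(M)\equiv\mathrm{NF}(NM)$ with your normal form.

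The paper's repair is the ``disjunction of wires'': the normalization diagrams and the matrix gadget carry these idempotents explicitly (they are built into \Cref{def:iso,def:mat}), $\iso_{\wireSetA}$ is only an ``almost''-isomorphism satisfying \Cref{prop:iso_is_iso}, and a genuine isomorphism is recovered only after post-composing with a Plus, $\oplus_{\tone,\dots,\tone}\circ\iso_{\wireSetA}\colon\wireSetA\to\bigoplus_{i=1}^{\dim(\wireSetA)}\tone$ (\Cref{cor:iso_is_iso,lem:Iso_is_iso}). The residual idempotents arising in compositions are then absorbed into the contraction/scalar gadget using \Cref{lem:idempotence_disj,lem:nat_disj}. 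A similar caveat applies to your parallel case: \eqmix{} on two wires is not enough by itself; the paper generalizes it to disjunctions (\Cref{lem:disj_and_conj}) and reduces $d\parallel e$ to the cases $d\parallel\id$ (\Cref{lem:mat_parallel_id}) and a wire permutation (\Cref{lem:iso_perm}). With the disjunction idempotents added to your normal form, your induction goes through exactly as in the paper; without them, the normalization lemma fails at the composition step. (Incidentally, universality is not needed to define $\mathrm{NF}(\interp{d})$: the matrix is extracted from $\interp{d}$ via the biproduct decomposition and $\bfup{scal}^{-1}$.)
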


This is proven through a normal form. We refer to \Cref{thm:nf_fun} in
the appendix for the proof that every morphism can be put in normal
form, and to \Cref{cor:nf_unique_fun} for the proof that if two
morphisms share the same semantics, then they share the same normal
form.

\subsection{The General Case}
\label{sec:general-case}

When defining the equational theory for $\Cat$, we simply take the
equational theory for $\FCat$ and add the few equations of
\Cref{fig:eq_unit}, proven sound in \Cref{app:soudness}. The last equation, \eqUS{}, will be irrelevant as soon as $R$ is cancellative, as it follows that $s = t$ whenever $s+1 = t+1$. We write $\Cat_{\equiv}$ for $\Cat$ quotiented by
$\equiv$.

Quite importantly, we can use $\equiv$ for both equational theories
without ambiguity:
\begin{proposition}
	For $d, e \in \FCat(\wireSetA,\wireSetB)$, $d \equiv e$ for
	the equational theory of $\FCat$ if and only if $d \equiv e$ for
	the equational theory of $\Cat$.
\end{proposition}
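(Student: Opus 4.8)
The statement has two directions, and only one carries content. The plan is to dispatch the forward implication by pure inclusion of axioms, and to obtain the converse semantically, using soundness of the full theory together with completeness of the functional theory. For the forward direction, recall that the equational theory of $\Cat$ is defined as the theory of $\FCat$ augmented with the equations of \Cref{fig:eq_unit}. Hence every generating equation of $\equiv$ on $\FCat$ is also a generating equation of $\equiv$ on $\Cat$, and the congruence-closure rules (sequential and parallel composition) are common to both. Therefore any derivation witnessing $d \equiv e$ in $\FCat$ is verbatim a derivation in $\Cat$, so $d \equiv_{\FCat} e$ implies $d \equiv_{\Cat} e$ immediately; this direction does not even use that $d,e$ are functional.

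For the converse, I would fix a model $\bfup{H}$ for which the completeness theorem \Cref{thm:complete_fun} holds, and suppose $d \equiv_{\Cat} e$ with $d,e \in \FCat(\wireSetA,\wireSetB)$. Soundness of the full calculus --- \Cref{prop:soundness_fun} together with the soundness of the equations of \Cref{fig:eq_unit} established in \Cref{app:soudness} --- gives $\interp{d}^{\oplus\oone} = \interp{e}^{\oplus\oone}$ in $\bfup{H}^{\oplus\oone}$. Since $d$ and $e$ are functional, the first clause of \Cref{fig:sem} yields $\interp{d}^{\oplus\oone} = \interp{d}\oplus\id_{\oone}$ and $\interp{e}^{\oplus\oone} = \interp{e}\oplus\id_{\oone}$, whence $\interp{d}\oplus\id_{\oone} = \interp{e}\oplus\id_{\oone}$. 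It then remains to cancel the $\oplus\,\id_{\oone}$: because $\oplus$ is a biproduct, for any $f : \interp{\wireSetA}\to\interp{\wireSetB}$ one has $f = \pi_\ell\circ(f\oplus\id_{\oone})\circ\iota_\ell$, where $\iota_\ell : \interp{\wireSetA}\to\interp{\wireSetA}\oplus\oone$ and $\pi_\ell : \interp{\wireSetB}\oplus\oone\to\interp{\wireSetB}$ are the biproduct injection and projection, using $(f\oplus\id_{\oone})\circ\iota_\ell = \iota_\ell\circ f$ and $\pi_\ell\circ\iota_\ell = \id$. Applying this to both sides gives $\interp{d} = \interp{e}$ in $\bfup{H}$, and completeness of the functional theory (\Cref{thm:complete_fun}) then yields $d \equiv_{\FCat} e$, as required.

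The only genuine subtlety --- the point to state carefully --- is that a $\Cat$-derivation of $d \equiv e$ between functional endpoints may pass through intermediate diagrams that use the Unit, so there is no obvious purely syntactic way to strip those steps. The semantic route sidesteps this entirely: soundness is applied to the whole $\Cat$-derivation at once, irrespective of the intermediate (possibly non-functional) diagrams, and functional completeness performs the reconstruction. I would take care that soundness and completeness are read with respect to the \emph{same} $\bfup{H}$, which is legitimate since soundness holds for every admissible $\bfup{H}$; and the cancellation of $\oplus\,\id_{\oone}$ is precisely where functionality of $d$ and $e$ is genuinely used, so the argument does not extend (and should not be expected to extend) to non-functional diagrams.
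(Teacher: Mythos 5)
Your proposal is correct and takes essentially the same route as the paper's proof: the forward direction by inclusion of the generating equations, and the converse semantically, via soundness of the full theory, conservativity of the $\interp{-}^{\oplus\oone}$ semantics over $\interp{-}$ on functional diagrams, and completeness of the functional theory (\Cref{thm:complete_fun}). The paper compresses your cancellation step $f = \pi_\ell\circ(f\oplus\id_{\oone})\circ\iota_\ell$ into the single word ``conservative''; your explicit unpacking (and your remark that a $\Cat$-derivation may pass through non-functional intermediate diagrams, which is exactly why the semantic detour is needed) is a faithful elaboration of the same argument, not a different one.
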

\begin{proof}
	The direct implication follows from the fact that we only added
	equations. The indirect implication follows from the fact that the
	semantics of $\Cat$ is conservative with respect to
	the semantics of $\FCat$, hence it will not equate morphisms that
	had distinct semantics as morphisms of $\FCat$. Since the
	equational theory of $\FCat$ is complete, it means the additional
	equations do not equate morphisms that are distinct within
	$\FCat_\equiv$.
\end{proof}

Note that outside of those equations, the Unit generator is
ill-behaved, and in particular assuming a non-trivial $R$ we have the following inequations\footnote{More precisely, the first and last inequations require that $R$ satisfies $1+1 \neq 1$, and the other two require the even weaker property $1 \neq 0$.}:
\begin{align*}
	\tikzfig{worldless_eq/unit_wrong}
\end{align*}
This behavior is actually in line with our semantics, as the
completeness result also extends, as proved in
\Cref{thm:nf,cor:nf_unique} using a slight generalization of the
previous normal form.

Therefore, soundness (\Cref{prop:soundness_fun}) and completeness
(\Cref{thm:complete_fun}) extend in this setting.

\begin{figure*}
	\newcommand{\eqreft}[1]{\overset{\substack{#1}}{\equiv}}
	\[
	\tikzfig{examples/strict-or-id-Null-00}
	\quad\eqreft{\eqPN\\\eqCN}\quad
	\tikzfig{examples/strict-or-id-Null-01}
	\quad\eqreft{\eqTNleft}\quad
	\tikzfig{examples/strict-or-id-Null-02}
	\quad\eqreft{\eqRhoC\\\eqNM}\quad
	\tikzfig{examples/strict-or-id-Null-03}
	\quad\eqreft{\eqPN\\\eqP}\quad
	\tikzfig{examples/strict-or-id-Null-04}
	\]
	\caption{Rewrite of the strict OR}
	\label{fig:ex:strict-or-rw}
\end{figure*}

\begin{figure*}
	\newcommand{\eqreft}[1]{\overset{\substack{#1}}{\equiv}}
	\[
	\tikzfig{examples/lazyORtid-00}\quad
	\eqreft{\eqPPNleft\\\eqCN}\quad
	\tikzfig{examples/lazyORtid-01}
	\quad
	\eqreft{\eqTNleft}\quad
	\tikzfig{examples/lazyORtid-02}
	\quad
	\eqreft{\eqNM\\\eqRhoC\\\eqPN\\\eqP}\quad
	\tikzfig{examples/lazyORtid-03}
	\]
	\label{fig:ex:or-rw}
	\caption{Rewrite of the lazy OR}
\end{figure*}

\begin{figure*}\centering
	\tikzfig{examples/mat_on_bit_1_to_4}
	\caption{Rewrite of the Application of a Probabilistic Matrix on a pbit (general case).}
	\label{fig:ex_mat_on_bit_rewrite}
\end{figure*}

\begin{figure}
	\[
	\tikzfig{examples/mat_on_bit_1} \hfill \to \hfill \tikzfig{examples/mat_on_bit_2} \hfill\to\hfill\tikzfig{examples/mat_on_bit_3}\hfill\to\hfill\tikzfig{examples/mat_on_bit_4}\]
	\caption{Probability matrix applied to a pbit (simple case).}
	\label{fig:ex-mat-in-bit-rewrite-1}
\end{figure}

\subsection{Back to the Examples}

In this section, we revisit earlier examples in order to illustrate
the equational theory.

\begin{example}\rm
  We start by revisiting \Cref{ex:OR}. In fact, the functional part of
  the calculus is enough already to express part of the behavior of
  the ORs, for instance applying a Null to the
  second input of the strict OR. Since the strict OR forces both
  inputs to be evaluated to True or False, and since the Null means
  that the input is not used, the output should not be used either
  here, i.e.~it should be equivalent to throwing away the first input
  and outputting the Null. This is shown in
  \Cref{fig:ex:strict-or-rw}.

  In the context of the general, non-functional \Langage{}, we have
  access to the tensor unit, and we can fully express the behavior of
  the ORs. For instance, the lazy OR allows the second input to not be
  evaluated if the first input is evaluated to True. In that case, it
  should be equivalent to ignoring the second input and outputting
  True, as in \Cref{fig:ex:or-rw}.
\end{example}

\begin{example}\label{ex:revisiting-proba_mat}\rm
	Let us now revisit \Cref{ex:proba_mat} and compute the result of the the application of a matrix to a vector:
	As a first approximation, the two
	Pluses at the top will eliminate each other, connecting the
	left-hand side of each Plus together, and similarly for the
	right-hand side. After that, both the value $q$ and the Unit are
	duplicated through the Contraction. Finally, $p$ and $\frac{q}{2}$
	are summed by the last Contraction. We obtain the final result,
	corresponding to the vector $\begin{pmatrix} p + \frac{q}{2} \\
          \frac{q}{2} \end{pmatrix}$, as shown in \Cref{fig:ex-mat-in-bit-rewrite-1}.

	However, while each of the diagrams of the above figure are equivalent to the others, this ``first approximation'' is only correct because of the presence of the Plus at the very bottom. Without that context, the two head-to-head Pluses would not eliminate each others. In actual practice, instead of eliminating them directly we slide the two head-to-head Pluses downward until they are absorbed by the bottom Plus, as shown in \Cref{fig:ex_mat_on_bit_rewrite}.
\end{example}

\subsection{\Langage{} as an Internal Language}
\label{sec:internal}
In \Cref{sec:cat_sem}, we built a semantics toward a semiadditive category with an
additional symmetric monoidal structure. In this subsection, we claim that $\sFCat$, the restriction of $\FCat$ to morphisms with a single input and a single output, is
an \emph{internal language} for ``semiadditive categories, with a symmetric monoidal structure distributive over it, and such that the homset of automorphisms over the latter's unit are isomorphic to $R$'', which means the following:

\begin{theorem}\label{thm:internal_language_fun}
	Let $\sFCat$ be the full subcategory of $\FCat$ with for objects the colors of $\FCat$.	We can see
	$\sFCat_\equiv$ as a semiadditive category $(\sFCat_\equiv,\oplus,\tzero)$ and a symmetric monoidal category
	$(\sFCat_\equiv,\tensor,\tone)$ such that if we consider $\bfup{H} =
	\sFCat_\equiv$ (with $\bfup{scal}(s) = s$) then the semantics
	$\interp{-} : \sFCat_\equiv \to \bfup{H}$ is the identity.
\end{theorem}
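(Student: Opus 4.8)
The plan is to equip $\sFCat_\equiv$ with the three pieces of structure demanded of a model in \Cref{sec:cat-framework}, and then to check that the semantics obtained by taking $\bfup{H} := \sFCat_\equiv$ collapses to the identity. The objects of $\sFCat_\equiv$ are the colors, and the products on objects are the syntactic $\oplus$ and $\tensor$, with units $\tzero$ and $\tone$. On morphisms I would define the two products by splitting and re-merging single wires with the corresponding generators: for $f : A \to A'$ and $g : B \to B'$ set $f \tensor g := \tensor_{A',B'} \circ (f \parallel g) \circ \mirror{\tensor_{A,B}}$ and $f \oplus g := \oplus_{A',B'} \circ (f \parallel g) \circ \mirror{\oplus_{A,B}}$, both single-wire diagrams. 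The injections, projections, diagonal $\Delta$ and codiagonal $\nabla$ of the biproduct, the associators, unitors and symmetries of both products, the distributor, and all the derived $\mid$-projections (recall $\mid$ is itself built from $\tensor$ and $\oplus$) are defined as the single-wire diagrams, built from the Plus, Contraction, Null, Swap and Adapter generators, whose semantic clauses in \Cref{fig:sem_fun} name the corresponding structural map; finally we set $\bfup{scal}(s) := [s]_{\tone}$. Since the Unit is absent from $\FCat$, nothing here leaves the functional fragment.

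The verification of the axioms is where soundness and completeness do the heavy lifting, and I would avoid deriving each coherence equation by hand in the equational theory. Instead I would fix one concrete faithful model $\bfup{H}_0$ — the matrix semantics of \Cref{fig:mat_sem_fun} — for which $\interp{-}$ is sound (\Cref{prop:soundness_fun}) and, by \Cref{thm:complete_fun}, injective on $\equiv$-classes. By the clauses of \Cref{fig:sem_fun}, each structural diagram defined above is sent by $\interp{-}$ to the genuine structural map of $\bfup{H}_0$ (its honest injection, associator, distributor, and so on). Hence every axiom to be checked — the biproduct equations for $(\oplus,\tzero)$, the pentagon, triangle and hexagon for $(\tensor,\tone)$, naturality and coherence of the distributor, and the semiring laws for $\bfup{scal}$ — is an equation between single-wire structural diagrams that already holds in $\bfup{H}_0$ because there the structures are honest; completeness then transports it back to an $\equiv$-equality in $\sFCat_\equiv$. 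Injectivity of $\bfup{scal}$ follows from soundness into $\bfup{H}_0$, where $s \neq t$ gives $s\cdot\id \neq t\cdot\id$, and surjectivity — every endomorphism of $\tone$ is $\equiv$ to some $[s]_{\tone}$ — follows from the normal form underlying \Cref{thm:complete_fun}.

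It then remains to show that, with $\bfup{H} = \sFCat_\equiv$, the semantics is the identity. On objects this is a one-line induction on colors: the clauses $\interp{A \oplus B} = \interp{A}\oplus\interp{B}$, $\interp{A \tensor B} = \interp{A}\tensor\interp{B}$, $\interp{\tzero}=\tzero$, $\interp{\tone}=\tone$ coincide with the syntactic constructors, so $\interp{A}=A$, while a parallel object $\wireSetA$ is bundled into the single color $\interp{\wireSetA}$ through $\mid$. On morphisms I would prove $\interp{d} \equiv d$ for every single-wire $d$ by structural induction, strengthened so as to compare $\interp{d}$ for an arbitrary $\FCat$-diagram with $d$ conjugated by canonical $\mid$-bundling diagrams (these specialise to the identity on single wires). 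Each generator's clause evaluates, in $\bfup{H}=\sFCat_\equiv$, to exactly the diagram used in the first step to define that structural map, hence to the generator up to $\equiv$; functoriality of $\interp{-}$ handles sequential composition; and for parallel composition the clause $\interp{d \parallel e} = m^| \circ (\interp{d} \mid \interp{e}) \circ m^|$ reduces to the bundled composite once the Mac Lane morphisms $m^|$ are identified with the adapters of \Cref{fig:eq_maclane} — an identification again discharged by completeness against $\bfup{H}_0$.

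The main obstacle is the clash of shapes: the generators and parallel composition are intrinsically multi-wire, whereas $\sFCat$ keeps every object a single color by folding parallel wires through $\mid$. The delicate point is therefore the bookkeeping of the coherence isomorphisms $m^|$ and of the $\mid$-bundling, ensuring they cancel so that $\interp{d}$ lands on the nose on $d$ rather than merely up to a canonical iso; here the uniqueness half of Mac Lane coherence, together with the equations of \Cref{fig:eq_maclane}, is essential. A secondary concern is avoiding circularity: the structures placed on $\sFCat_\equiv$ must be validated through the independent model $\bfup{H}_0$ (or directly from the axioms of the equational theory), never through $\sFCat_\equiv$ itself, which is precisely why routing the axiom-checking through soundness and completeness is the right move.
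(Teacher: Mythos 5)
Your proposal is correct and follows essentially the same route as the paper's proof in \Cref{app:internal}: you define the biproduct, monoidal, and scalar structure on $\sFCat_\equiv$ by the same single-wire split/merge diagrams (cf.\ \Cref{fig:bifunctor_def,fig:pairing_def,fig:smc_def}), discharge all axioms by soundness and completeness against a concrete faithful model such as the matrix semantics, and obtain invertibility of $\bfup{scal}$ from the normal form (\Cref{cor:nf_unique_fun}). Your strengthened induction comparing $\interp{d}$ with $d$ conjugated by the $\mid$-bundling isomorphisms is exactly the paper's invariant $\interp{d}_{\circlearrowleft} = \Psi^{-1}_{\wireSetB} \circ d \circ \Psi_{\wireSetA}$, with the generator cases and the $m^|$ bookkeeping likewise settled by matrix computation plus \Cref{thm:complete_fun}.
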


We provide a detailed proof in the appendix \Cref{app:internal}. For the most part, this is a direct consequence of the completeness result of \Cref{thm:complete_fun}. While the restriction to single input/output might look significant, it is actually minimal as:

\begin{proposition}\label{prop:equivalence_single_fun}
	The category $\sFCat_\equiv$ is equivalent to the category $\FCat_\equiv$.
\end{proposition}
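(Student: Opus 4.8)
The plan is to show that the inclusion functor $I : \sFCat_\equiv \hookrightarrow \FCat_\equiv$ is an equivalence by checking it is fully faithful and essentially surjective. Full faithfulness is free: $\sFCat_\equiv$ is by construction the full subcategory of $\FCat_\equiv$ whose objects are the single colors, so $I$ is a bijection on every hom-set. Consequently all the content lies in essential surjectivity, i.e.\ in proving that every object $\wireSetA$ of $\FCat$ is isomorphic in $\FCat_\equiv$ to a single color.

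To produce the candidate color I would mirror, at the level of colors, the definition of the interpretation of objects. Writing $X \boxplus Y := (X \tensor Y) \oplus (X \oplus Y)$ for the binary operation on colors whose semantics is exactly $\interp{X} \mid \interp{Y}$, I define a color $\lceil \wireSetA \rceil$ by recursion on the $\parallel$-structure of $\wireSetA$, with $\lceil \varnothing \rceil := \tzero$, $\lceil A \rceil := A$ on a single color, and $\lceil \wireSetA \parallel B \rceil := \lceil \wireSetA \rceil \boxplus B$, following exactly the bracketing used in the semantic clause $\interp{\wireSetA \parallel B} := \interp{\wireSetA} \mid \interp{B}$. A straightforward induction, using $\interp{X \boxplus Y} = \interp{X} \mid \interp{Y}$ and $\interp{\tzero} = \ozero = \interp{\varnothing}$, then yields the key identity $\interp{\lceil \wireSetA \rceil} = \interp{\wireSetA}$ as \emph{equal} objects of $\bfup{H}$.

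With this identity the isomorphism is obtained abstractly rather than by diagram surgery. Since $\interp{\wireSetA} = \interp{\lceil \wireSetA \rceil}$, the identity of that object is in particular a morphism in each direction between $\interp{\wireSetA}$ and $\interp{\lceil \wireSetA \rceil}$; applying universality (\Cref{thm:universality}) twice yields diagrams $\phi \in \FCat(\wireSetA, \lceil \wireSetA \rceil)$ and $\psi \in \FCat(\lceil \wireSetA \rceil, \wireSetA)$ with $\interp{\phi} = \id$ and $\interp{\psi} = \id$. Then $\interp{\psi \circ \phi} = \id = \interp{\id_{\wireSetA}}$ and $\interp{\phi \circ \psi} = \id = \interp{\id_{\lceil \wireSetA \rceil}}$, so completeness (\Cref{thm:complete_fun}) forces $\psi \circ \phi \equiv \id_{\wireSetA}$ and $\phi \circ \psi \equiv \id_{\lceil \wireSetA \rceil}$. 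Hence $\phi$ and $\psi$ are mutually inverse in $\FCat_\equiv$, exhibiting $\wireSetA \cong \lceil \wireSetA \rceil$ with $\lceil \wireSetA \rceil$ an object of $\sFCat$; this establishes essential surjectivity and therefore the equivalence.

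I expect the only genuinely delicate point to be the bookkeeping behind the object-level equality $\interp{\lceil \wireSetA \rceil} = \interp{\wireSetA}$: because $\parallel$ is \emph{strictly} associative in $\FCat$ whereas the induced operation $\boxplus$ on colors is associative only up to $\isoML$, the recursion defining $\lceil - \rceil$ must follow the same bracketing convention as the semantic clause for $\parallel$, so that the two interpretations coincide on the nose rather than merely up to isomorphism—which is precisely what the use of universality requires. Everything else reduces to invoking the already-established universality and completeness results, so no further equational-theory computation is needed.
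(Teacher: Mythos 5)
Your proposal is correct and follows essentially the same route as the paper: the paper likewise reduces the statement to essential surjectivity of the (automatically fully faithful) inclusion, and defines the single-color object $\bfup{color}(\wireSetA)$ by exactly your recursion, with $\bfup{color}(\wireSetA \parallel B) := (\bfup{color}(\wireSetA) \tensor B) \oplus (\bfup{color}(\wireSetA) \oplus B)$ following the same bracketing as the semantic clause for $\parallel$. The only minor difference is that the paper exhibits the isomorphism $\Psi_{\wireSetA}$ as an explicit diagram (\Cref{fig:bar_to_parallel}) and notes its semantics is the identity before invoking completeness (\Cref{thm:complete_fun}), whereas you conjure the two mutually inverse diagrams abstractly from universality (\Cref{thm:universality}) --- a legitimate shortcut resting on the same machinery, and the naturality the paper mentions is not needed for the proposition as stated, since fully faithful plus essentially surjective already yields the equivalence.
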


\begin{figure}
	\centering
	\tikzfig{internal/bar_to_parallel}
	\caption{Isomorphism between $A \parallel B$ and $(A\mid B) := (A \tensor B) \oplus (A \oplus B)$.}
	\label{fig:bar_to_parallel}
\end{figure}

This equivalence follows from the existence of a natural isomorphism $\Psi_{A \parallel B}$ defined in \Cref{fig:bar_to_parallel} between the two-colors object $A \parallel B$ and the one-color object $(A \tensor B) \oplus (A \oplus B)$, and is detailed in the appendices \Cref{app:singleIO}. Both results extend to the non-functional case:

\begin{theorem}\label{thm:internal_language}
	Let $\sCat$ be the full subcategory of $\Cat$ with for objects the colors of $\Cat$. We can see $\sCat_\equiv$
	as being equivalent to $(\sFCat_\equiv)^{\oplus \oone}$, such that if
	we consider $\bfup{H} = \sFCat$ (with $\bfup{scal}(s) = s$) then
	the semantics $\interp{-}^{\oplus \oone} : \sCat \to \bfup{H}^{\oplus
	1}$ is the identity.
\end{theorem}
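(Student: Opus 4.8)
The plan is to realise the claimed equivalence \emph{by} the semantics functor itself, mirroring the functional case \Cref{thm:internal_language_fun} but transported through the $(-)^{\oplus\oone}$ construction. Concretely I take $\bfup{H}=\sFCat_\equiv$, equipped with the semiadditive, distributive symmetric monoidal, and scalar structure already exhibited in the proof of \Cref{thm:internal_language_fun} (with $\bfup{scal}(s)=s$), and I consider the functor $\interp{-}^{\oplus\oone}:\sCat\to\bfup{H}^{\oplus\oone}=(\sFCat_\equiv)^{\oplus\oone}$ of \Cref{fig:sem}. The entire statement reduces to showing that this functor descends to the quotient $\sCat_\equiv$ and is there an identity-on-objects, fully faithful functor.

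First, well-definedness: by soundness in the full language (\Cref{prop:soundness_fun} together with the soundness of the Unit equations of \Cref{fig:eq_unit}, verified in \Cref{app:soudness}), $d\equiv e$ entails $\interp{d}^{\oplus\oone}=\interp{e}^{\oplus\oone}$, so $\interp{-}^{\oplus\oone}$ factors through $\sCat_\equiv$; functoriality is immediate from the clause $\interp{e\circ d}^{\oplus\oone}=\interp{e}^{\oplus\oone}\circ\interp{d}^{\oplus\oone}$ in \Cref{fig:sem}. On objects the functor is the identity: the clauses $\interp{A\oplus B}=\interp{A}\oplus\interp{B}$, $\interp{A\tensor B}=\interp{A}\tensor\interp{B}$, $\interp{\tzero}=\ozero$, $\interp{\tone}=\oone$ match the color structure of $\sFCat_\equiv$ on the nose, so $\interp{A}=A$, and $\bfup{H}^{\oplus\oone}$ has by definition the same objects as $\bfup{H}$. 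In particular the functor is bijective on objects.

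Next, full faithfulness. Faithfulness is precisely completeness extended to the full language: \Cref{thm:nf} and \Cref{cor:nf_unique} give that $\interp{d}^{\oplus\oone}=\interp{e}^{\oplus\oone}$ forces $d\equiv e$. For fullness, every morphism of $(\sFCat_\equiv)^{\oplus\oone}$ is by definition of the shape $g\bfup{+}(\iota_r\circ\pi_r)$ with $g\in\sFCat_\equiv(A\oplus\oone,B\oplus\oone)$; using the functional universality \Cref{thm:universality} to realise $g$ by a functional diagram and the Unit clause $\interp{\text{Unit}}^{\oplus\oone}=\Delta\circ m^{\oplus}$ (and its mirror) to produce the distinguished $\iota_r\circ\pi_r$ summand, one assembles a diagram $d\in\sCat$ with $\interp{d}^{\oplus\oone}=g\bfup{+}(\iota_r\circ\pi_r)$. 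An identity-on-objects, fully faithful functor is an isomorphism of categories, hence in particular the asserted equivalence $\sCat_\equiv\simeq(\sFCat_\equiv)^{\oplus\oone}$; and that each morphism is reconstructed by its own semantics is exactly the sense in which $\interp{-}^{\oplus\oone}$ ``is the identity''.

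The main obstacle is the fullness step, and specifically the matching of the distinguished summand built into $\bfup{H}^{\oplus\oone}$. One must check that keeping exactly the morphisms $g\bfup{+}(\iota_r\circ\pi_r)$ — which deliberately excludes $\zero$ when $R$ has no negatives — corresponds precisely to the diagrams expressible with the Unit, so that the image of $\interp{-}^{\oplus\oone}$ is neither too large nor too small; this is where the non-functional normal form of \Cref{thm:nf} carries the weight. A secondary, purely bookkeeping point is that $\sFCat_\equiv$ has only single-input/single-output morphisms, so the two-color objects implicitly occurring in $\interp{d\parallel e}^{\oplus\oone}$ must be routed through the natural isomorphism $\Psi$ of \Cref{fig:bar_to_parallel}, exactly as in \Cref{prop:equivalence_single_fun}; once $\Psi$ is in place this is routine.
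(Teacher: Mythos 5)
Your proposal is correct in substance but takes a genuinely different route from the paper's. The paper does not reuse the semantics functor: it constructs an explicit functor $\IsoPlusUn{-} : \sCat_\equiv \to (\sFCat_\equiv)^{\oplus\oone}$ directly from the normal form of \Cref{thm:nf}, sending a diagram $d$ with normal-form matrix $M$ to the functional diagram of $M_{+1}$ (the matrix with $+1$ added to the canonical bottom-right coefficient), and then verifies by hand that this is well defined (uniqueness of the normal form, \Cref{cor:nf_unique}), functorial (the matrix-product \Cref{prop:mat_product} applied to $M_{+1}$, $N_{+1}$), full (an explicit rewriting that absorbs the $\iota_r \circ \pi_r$ summand into the matrix, using \Cref{lem:nat_disj,lem:idempotence_disj} together with \eqS{} and \eqSsum{}), faithful (injectivity of $M \mapsto M_{+1}$, which is exactly why \Cref{thm:nf} insists on canonical representatives of the $[s]_{+1}$ classes), and essentially surjective (it is surjective on objects). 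You instead realise the equivalence \emph{by} $\interp{-}^{\oplus\oone}$ at $\bfup{H} = \sFCat_\equiv$, deriving well-definedness from soundness, faithfulness from completeness, and fullness from universality; this is a cleaner packaging that makes the ``semantics is the identity'' clause tautological rather than a separate verification, whereas the paper's construction buys an explicit description of the inverse and of which morphisms the Unit can reach. Both routes carry the same logical load: yours presupposes that $\sFCat_\equiv$ is a legitimate instance of the categorical framework, which is precisely \Cref{thm:internal_language_fun}, and both ultimately rest on \Cref{thm:nf} and \Cref{cor:nf_unique}. One caveat on your fullness step: as literally written it does not typecheck, since a functional diagram $d_g$ realising $g \in \sFCat_\equiv(A \oplus \tone, B \oplus \tone)$ has semantics $\interp{d_g}^{\oplus\oone} = \interp{d_g} \oplus \id_{\oone}$ at the doubly padded type, and there is no diagram-level operation producing the sum $g \bfup{~+~} (\iota_r \circ \pi_r)$ by juxtaposing a Unit next to $d_g$; the correct move is simply to cite the whole-calculus universality theorem (\Cref{app:universality}) instantiated at this $\bfup{H}$, whose proof performs the assembly you gesture at via \Cref{prop:sem_block_matrix_unit} and the Unit-containing normal form. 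With that substitution, your argument goes through.
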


\begin{proposition}\label{prop:equivalence_single}
	The category $\sCat_\equiv$ is equivalent to the category $\Cat_\equiv$.
\end{proposition}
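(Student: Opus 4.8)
The plan is to prove this exactly as its functional analogue \Cref{prop:equivalence_single_fun}, by showing that the inclusion $\iota : \sCat_\equiv \hookrightarrow \Cat_\equiv$ of the full subcategory on single-color objects is an equivalence. Since $\sCat$ is by definition the \emph{full} subcategory of $\Cat$ whose objects are the colors, and since the relation $\equiv$ on $\sCat$ is the restriction of the one on $\Cat$, the functor $\iota$ is automatically full and faithful: it is a bijection on each hom-set. All the content is therefore essential surjectivity, i.e. that every object $\wireSetA = A_1 \parallel \dots \parallel A_n$ of $\Cat_\equiv$ is isomorphic, inside $\Cat_\equiv$, to a single color.

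For essential surjectivity I would induct on the number of wires $n$. When $n = 1$ the object is already a color; when $n = 0$ I use the isomorphism $\varnothing \cong \tzero$ witnessed by $\zero_\tzero : \varnothing \to \tzero$ and its mirror $\mirror{\zero_\tzero}$, whose two composites reduce to $\id$ on $\varnothing$ and, via equation \eqN{}, to $\id$ on $\tzero$. For the inductive step, the natural isomorphism $\Psi_{A \parallel B}$ of \Cref{fig:bar_to_parallel} collapses two adjacent colors into the single color $(A \tensor B) \oplus (A \oplus B)$; paralleling it with the identity on the remaining wires gives an isomorphism cutting one wire, and the induction hypothesis then collapses the result to a single color $\langle \wireSetA \rangle$. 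Composing these yields $\Phi_{\wireSetA} : \wireSetA \xrightarrow{\sim} \langle \wireSetA \rangle$, so $\iota$ is essentially surjective and hence an equivalence. If an explicit quasi-inverse is preferred, one takes $F : \Cat_\equiv \to \sCat_\equiv$ with $F(\wireSetA) = \langle \wireSetA \rangle$ and $F(h) = \Phi_{\wireSetB} \circ h \circ \Phi_{\wireSetA}^{-1}$; functoriality is immediate, and the families $\Phi$ furnish the natural isomorphisms $\iota \circ F \cong \id$ and $F \circ \iota \cong \id$ by construction, so that naturality is free once $F$ is defined by conjugation.

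What makes this a genuine \emph{extension} of \Cref{prop:equivalence_single_fun} rather than a repetition is the verification that $\Psi$ is still an isomorphism in $\Cat_\equiv$, i.e. that the two round-trip equalities $\Psi \circ \Psi^{-1} \equiv \id$ and $\Psi^{-1} \circ \Psi \equiv \id$ established in $\FCat_\equiv$ survive in the larger theory. I would obtain this for free from the conservativity property proved just before \Cref{thm:internal_language}: passing from $\FCat_\equiv$ to $\Cat_\equiv$ only \emph{adds} equations, so every equality derivable in $\FCat_\equiv$ remains derivable in $\Cat_\equiv$, and since $\Psi$ does not use the Unit it genuinely lives in both theories. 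The round-trip identities themselves are expected to lean on \eqmix{} (which encodes that $A \parallel B$ is ``either $A \tensor B$ or $A \oplus B$'') together with \eqPtoC{}.

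The main obstacle I anticipate is not this categorical skeleton, which is routine, but the careful handling of the Unit through the $\varnothing \cong \tzero$ collapse. The morphisms of $\Cat$ with empty domain, namely the Unit $\tensor_\tone : \varnothing \to \tone$ and the diagrams built from it, have a source that is \emph{not} an object of $\sCat$, so one must check that conjugating them by $\Phi$ lands them correctly in $\sCat_\equiv$; concretely $F(\tensor_\tone) = \tensor_\tone \circ \mirror{\zero_\tzero} : \tzero \to \tone$, using $\Phi_\varnothing = \zero_\tzero$. Confirming that this bookkeeping is consistent with the additional equations of \Cref{fig:eq_unit} is the one place where the non-functional case demands attention beyond \Cref{prop:equivalence_single_fun}; using the mirror symmetry of the equational theory I would reduce it to a single commutation identity for the Unit against the collapse and discharge it with the equations of \Cref{fig:eq_unit}.
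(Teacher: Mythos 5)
Your proposal is correct and follows essentially the same route as the paper: the appendix (\Cref{app:singleIO}) likewise takes the inclusion to be full and faithful by definition and proves essential surjectivity by iterating the isomorphism $\Psi_{A \parallel B}$ to collapse any object $A_1 \parallel \dots \parallel A_n$ to the single color $\bfup{color}(\wireSetA)$ (with $\bfup{color}(\varnothing) := \tzero$), the conjugation functor $\bfup{color}(d) = \Psi^{-1}_{\wireSetB} \circ d \circ \Psi_{\wireSetA}$ then giving the equivalence, and the extension to $\Cat_\equiv$ holding because the non-functional theory only adds equations so the round-trip identities for $\Psi$ (which contains no Unit) remain derivable. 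Your closing worry about a separate commutation check for the Unit against \Cref{fig:eq_unit} is superfluous rather than wrong: once $\Psi$ is an isomorphism in $\Cat_\equiv$, conjugation is automatically well defined on $\equiv$-classes and functorial, so morphisms like $\tensor_\tone \circ \mirror{\zero_\tzero} : \tzero \to \tone$ need no additional verification.
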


\section{Conclusion}
\label{sec:conclusion}
We introduced a new graphical language unifying additive and
multiplicative structure, gave it categorical semantics and an
equational theory, and proved the main properties one should expect:
universality, soundness, and completeness. We show how this graphical
language is an internal language for semiadditive categories with an
additional monoidal structure that is distributive over it and
parametrized by some algebraic effect.

This language is a first step towards the unification of languages
based on the tensor $\otimes$ and those based on the biproduct
$\oplus$. The language allows us to reason about both systems in
parallel and superpositions of executions, as shown by the encoding of
the Switch in \Cref{ex:switch}.

A natural development of the \Langage{} consists in accommodating
recursive types in the language. We expect significant difficulties
because macros defined by induction on the type -- like the cup/caps
of \Cref{rem:cupcap} and the normal form -- will no longer be
well-defined.

Additional questions also arise when instantiating the language to a
specific semiring. For example, we can represent pure quantum
computation when considering $R = \mathbb{C}$. However, the question
of a complete equational theory for the language on mixed states
remains open (one idea could be to rely on the discard construction
\cite{carette2019completeness}).

Finally, the study of the graphical interaction between additive and
multiplicative connectors has extensive literature within the
community of multiplicative-additive proof nets of linear
logic~\cite{hughes-glabbeek}. As such, a natural development would be
to attempt to split our $\oplus$ and $\tensor$ into two dual
connectors each (respectively Plus $\oplus$ and With $\whitewith$, and
Tensor $\tensor$ and Par $\blackpar$).

%
%\IEEEtriggeratref{16}
%\bibliography{bibli.bib}

% Generated by IEEEtran.bst, version: 1.14 (2015/08/26)

\appendices
\clearpage
\section{Commutative Semirings and Semimodules}

In this appendix, we recall the basic notions about semirings that we need for our paper.

\subsection{Commutative Semirings}
\label{app:semirings}

\begin{definition}
	A monoid $(G,+,0)$ is a set $G$ together with a distinguised element $0$ and a binary operation $+ : G \times G \to G$ such that:
	\[ \begin{array}{rcl}(a+b)+c & = & a+(b+c) \\
	a+0 & = & a \\
	\end{array}\]	
	It is commutative if additionally
	\[ \begin{array}{rcl}
	a+b & = & b+a \\ \end{array}\]
\end{definition}

\begin{definition} \label{def:semiring}
	A semiring $(R,+,0,\times,1)$ is composed of a commutative monoid  $(R,+,0)$ and a monoid $(R,\times,1)$ with the latter distributive over the former:
	\[ \begin{array}{rcl}
	a \times 0 & = & 0\\ a \times (b+c) & = & a \times b + a \times c \\
	\end{array}\]		
	It is commutative whenever $(R,\times,1)$ is commutative.
\end{definition}
This definition can be instantiated as the complex numbers
$(\mathbb{C}, +, 0, \times, 1)$ to represent pure quantum
computations, the non-negative real numbers $(\mathbb{R}_{\geq 0}, +,
0,\times, 1)$ for probabilistic computations, or the booleans
$(\{\text{False},\text{True}\},$ $ \text{OR}, \text{False},
\text{AND}, \text{True})$ for non-deterministic computations.

The first two examples are \textbf{cancellative}, that is whenever $a+c = b+c$ then $a = b$, while the last example does not since False OR True = True OR True but False $\neq$ True.

\begin{definition}\label{def:semiring_homo}
	A semiring homomorphism $f : (R,+,0,\times,1) \to (R',+',0',\times',1')$ is a function $f : R \to R'$ that preserves all the structures:
	\[ \begin{array}{rclcrcl}
	f(a+b) & = & f(a)+'f(b) & & f(0) & = & 0'\\
	f(a\times b) & = & f(a)\times'f(b) & & f(1) & = & 1'\\
	\end{array} \]
	A semiring homomorphism is an isomorphism whenever it is inversible as a function.
\end{definition}

\subsection{Semimodules}
\label{app:semimodules}

Semimodules are the generalisation of vector spaces where instead of a field they are parametrised by a commutative semiring.

\begin{definition}\label{def:semimodule}
	Let $(R,+,0,\times, 1)$ be a commutative semiring. 
	An $R$-semimodule $(M,\cdot,\bfup{+},\zero)$ is a semigroup $(M,\bfup{+},\zero)$ together with a binary operation $\cdot : R \times M \to M$ satisfying:
	\[ \begin{array}{rclcrcl}
	1 \cdot u & = & M & & (a \times b) \cdot u & = & a \cdot (b \cdot u) \\
	0 \cdot u & = & \zero & & (a + b) \cdot u & = & a \cdot u \bfup{~+~} b \cdot u \\
	a \cdot \zero & = & \zero & & a \cdot (u \bfup{~+~} v) & = & a \cdot u \bfup{~+~} a \cdot v \\
	\end{array}\]		
\end{definition}

In this paper, we do not consider a single semimodule, but instead a category $\bfup{H}$ where every homeset $\bfup{H}(H,K)$ is a semimodule, compatible with each others as follows.

\begin{definition}\label{def:semimodule_enriched}
	A category $\bfup{H}$ is enriched over $R$-semimodules if for every object $H,K$, the homeset $\bfup{H}(H,K)$ is a semimodule $(\bfup{H}(H,K),\cdot,\bfup{+},\zero)$, such that:	
	\[ \begin{array}{rcl}
	(a \times b) \cdot (g \circ f) & = & (a \cdot g) \circ (b \cdot f) \\
	\zero \circ f & = & \zero \\
	g \circ \zero & = & \zero \\
	(g_1 \bfup{~+~} g_2) \circ f & = &  (g_1 \circ f) \bfup{~+~} (g_2 \circ f) \\
	g \circ (f_1 \bfup{~+~} f_2) & = &   g \circ f_1 \bfup{~+~} g \circ f_2 \\
	\end{array}\]		
\end{definition}

In the above definition, the notations are ambiguous as each semimodule has its own $\cdot$, $\bfup{+}$ and $\zero$, so we ought to give them distinct names $\cdot_{H,K}$, $\bfup{+}_{H,K}$ and $\zero_{H,K}$. We consider that this ambiguity is worth the improvement in readability.

\clearpage
\section{Category Theory}
\label{app:category}

In this appendix, we provide detailed definition of the categorical notions used in this paper, including proofs of the various results (all of them either already proved in various textbooks or known in folklore) we use.

\subsection{The Monoidal Structure}

At the core of the categories we use in this paper is the notion of monoidal category. In this subsection, we recall some basic notions about them, including Mac Lane's coherence theorem. In a few words:
\begin{itemize}
	\item A monoidal category is a category where we pair objects together, and where this "pairing" is associative and has a neutral element, but where those associativity and neutrality are not strict but only "up to an isomorphism".
	\item Mac Lane's coherence theorem ensures that even though the associativity and neutrality are "up to an isomorphism", those isomorphisms are fully transparent and could be handwaved\footnote{This "handwaving" can actually be formalised as a strictification, where one builds an equivalent category where the associativity and neutrality are strict.} without causing issues.
\end{itemize} 

\begin{definition}
	A monoidal category $(\bfup{H},\tensor,\oone)$ is a category $\bfup{H}$ together with a bifunctor $\tensor$ and the following natural isomorphisms:
	\begin{itemize}
		\item the left-unitor $\lambda^{\tensor} : \oone \tensor H \to H$
		\item the right-unitor $\rho^{\tensor} : H \tensor \oone \to H$
		\item the associator $\alpha^{\tensor} : (H \tensor K) \tensor L \to H \tensor (K \tensor L)$
	\end{itemize}
	satisfying the following coherence laws:
	\begin{center}
		\tikzfig{category-coherence/coherence-MC}
	\end{center}
	It is a symmetric monoidal category if additionally it has a natural isomorphism:
	\begin{itemize}
	\item The swap $\sigma^{\tensor} : H \tensor K \to K \tensor H$
\end{itemize}
satisfying the following coherence laws:
\begin{center}
	\tikzfig{category-coherence/coherence-SMC}
\end{center}
\end{definition}

We define the following shorthand:
\[ \bigtensor_{i=1}^n H_i := \begin{cases} \oone & \text{whenever $n = 0$} \\ H_1 & \text{whenever $n = 1$} \\ \left(\bigtensor_{i=1}^{n-1} H_i\right) \tensor H_n & \text{whenever $n \geq 2$} \end{cases} \]
We write $\Objet{\tensor}{H_1,\dots,H_n}$ for the set of objects of $\bfup{H}$ that are built with $\tensor$ from the objects $H_1, \dots, H_n$ each used exactly once, and in that specific order, with any number of copies of the object $\oone$ inserted at any position. We note that $\bigtensor_{i=1}^n H_i \in \Objet{\tensor}{H_1,\dots,H_n}$.

For any two $A, B \in \Objet{\tensor}{H_1,\dots,H_n}$, we write $m^{\tensor} : A \to B$ for the natural\footnote{The naturality is with respect to each of the $H_i$.} isomorphism built using $\circ$ and $\tensor$ from the morphisms $\id, \alpha^{\tensor}, \lambda^{\tensor}, \rho^{\tensor}$ and their inverses\footnote{Notably, even in a symmetric monoidal category, we disallow the swap $\sigma^{\tensor}$ from appearing in $m^{\tensor}$.}. In particular, $m^{\tensor} : (\bigtensor_{i=1}^n H_i) \tensor (\bigtensor_{j=1}^m H_{n+j}) \to \bigtensor_{i=1}^m H_i$ is well defined. The uniqueness of $m^{\tensor}$ follows from Mac Lane's coherence theorem, although some subtleties are to be noted:

\begin{example}
	We consider the category of sets and functions between sets. It is a monoidal category for the disjoint union:
	\[ S \uplus T = \{ (0,a) ~|~ a \in S\} \cup \{ (1,b) ~|~ b \in T\}\]
	For a non-empty set $S$, we define $\textup{Left}(S)$ and $\textup{Right}(S)$ as the smallest sets such as:
	\[ \textup{Left}(S) = \textup{Left}(S) \uplus S \qquad \textup{Right}(S) = S \uplus \textup{Righ}(S) \]
	Concretely, an element of $\textup{Left}(S)$ is a sequence of any number of $0$, followed by a $1$, and finally an element of $S$. Similarly, an element of $\textup{Right}(S)$ is a sequence of any number of $1$, followed by a $0$, and finally an element of $S$. We look at the associator $\alpha$:
	\[\begin{matrix}  (\textup{Left}(S) \uplus S) \uplus \textup{Right}(S) &\to &\textup{Left}(S) \uplus (S \uplus \textup{Right}(S)) \\ 
	(0,(0,x)) & \mapsto &(0,x) \\
	(0,(1,x)) & \mapsto & (1,(0,x)) \\
	(1,x) & \mapsto & (1,(1,x)) \\
	\end{matrix} \]
	But using the equations satisfied by $\text{Left}(S)$ and $\text{Right}(S)$, we have:
	\[ \begin{matrix} (\textup{Left}(S) \uplus S) \uplus \textup{Right}(S) &=&\textup{Left}(S) \uplus \textup{Right}(S) \\&=&  \textup{Left}(S) \uplus (S \uplus \textup{Right}(S))\end{matrix} \]
	Said otherwise, $\id$ is a valid morphism between those two objects. And since $\id \neq \alpha$, we found two structural morphisms between the same objects that are different, hence breaking the naive understanding of Mac Lane's coherence.
\end{example}

This counterexample is why the actual coherence theorem is about formal morphisms, as we define here.

\begin{definition}
	A $\tensor$-formal object of $\bfup{H}$ is an element of the following syntax:
	\[ A, B, \dots ::= \oone~|~A\tensor B~|~\underline{H} \text{ (for any $H$ object of $\bfup{H}$)} \]
	We write $(\tensor\!\bfup{-FormH},\tensor,\oone)$ for the monoidal category of formal objects together with only the structural morphisms: $\lambda^{\tensor},\lambda^{\tensor-1},\rho^{\tensor},\rho^{\tensor-1},\alpha^{\tensor},\alpha^{\tensor-1}$, and their composition through $\circ$, and $\tensor$. 
	There is an obvious forgetful functor $U : \tensor\!\bfup{-FormH} \to \bfup{H}$. 
\end{definition}

\begin{theorem}[Mac Lane's Coherence Theorem]\label{thm:maclane_monoidal}
	Given two morphisms $f,g \in \tensor\!\!\bfup{-FormH}(A,B)$, we necessarily have $f=g$. 
\end{theorem}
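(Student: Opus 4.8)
The plan is to prove coherence by the classical normal-form method: I will show that the groupoid $\tensor\!\bfup{-FormH}$ is in fact a preorder, i.e.\ there is at most one structural morphism between any two formal objects. Since every generator ($\lambda^{\tensor}$, $\rho^{\tensor}$, $\alpha^{\tensor}$ and their inverses) is an isomorphism, both $f$ and $g$ are isomorphisms, and it suffices to exhibit a canonical representative through which every parallel structural morphism factors.

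First I would associate to each formal object $A$ its \emph{underlying word} $|A|$, the finite sequence of atoms $\underline{H}$ occurring in $A$ read left to right (so $|\oone| = \varepsilon$ and $|A \tensor B| = |A|\,|B|$). Since each generator leaves this word unchanged, a structural morphism $A \to B$ can exist only when $|A| = |B|$; this is the obvious necessary condition, and the content of the theorem is that it forces uniqueness. I then fix, for each word $w = \underline{H_1}\cdots\underline{H_n}$, the normal form $\bar{w}$ to be the fully left-bracketed tensor $\bigtensor_{i=1}^{n}\underline{H_i}$ (with $\bar{\varepsilon} = \oone$), and build by induction on $A$ a canonical structural isomorphism $\kappa_A : A \to \overline{|A|}$: units are absorbed using $\lambda^{\tensor},\rho^{\tensor}$, and a tensor $A \tensor B$ is normalised by first normalising the factors and then re-bracketing $\bar{A}\tensor\bar{B}$ into $\overline{|A|\,|B|}$ through a composite of associators.

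The crux is the \emph{rectification lemma}: every generator $\theta : A \to B$ satisfies $\kappa_B \circ \theta = \kappa_A$. This is where the coherence axioms are genuinely used — the case $\theta = \alpha^{\tensor}$ reduces, after unfolding $\kappa$ on both sides, to an equality of re-bracketing morphisms that is an iterated instance of the pentagon identity, while the cases $\theta = \lambda^{\tensor},\rho^{\tensor}$ reduce to the triangle identity; the inverse generators follow since $\kappa$ is iso. From the lemma one bootstraps, by induction on the construction of an arbitrary structural morphism $f$, the statement $\kappa_B \circ f = \kappa_A$: the base cases are the generators (the lemma); the $\circ$-case is immediate, since for $f = f_2 \circ f_1 : A \to C \to B$ the induction hypotheses give $\kappa_B \circ f_2 = \kappa_C$ and $\kappa_C \circ f_1 = \kappa_A$; and the $\tensor$-case requires checking that $\kappa_{A_1\tensor A_2}$ agrees with the re-bracketing of $\kappa_{A_1}\tensor\kappa_{A_2}$, again an iterated pentagon computation.

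With this identity in hand the theorem is immediate: for parallel structural morphisms $f,g : A \to B$ we have $\kappa_B \circ f = \kappa_A = \kappa_B \circ g$, and cancelling the isomorphism $\kappa_B$ gives $f = g$. I expect the main obstacle to be the $\tensor$-case of the rectification and bootstrap steps: organising the iterated associator re-bracketings so that the pentagon axiom closes every diagram is the real technical work, whereas the unit bookkeeping (triangle axiom) and the $\circ$-case are routine. An alternative that sidesteps part of this bookkeeping would be to invoke strictification — replacing $\bfup{H}$ by an equivalent strict monoidal category and transporting the statement — but the normal-form argument above is self-contained and keeps the roles of the pentagon and triangle explicit.
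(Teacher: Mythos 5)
Your proposal is correct in outline, but you should know that the paper itself contains no proof of \Cref{thm:maclane_monoidal}: the theorem is stated for the formal category $\tensor\!\bfup{-FormH}$ precisely so it can be imported from the cited literature (Kelly, Joyal--Street), and the appendix declares such results ``already proved in various textbooks or known in folklore.'' The normalization-and-rectification argument you sketch is exactly the classical proof from those sources, so there is no divergence of method to reconcile --- the only substantive contribution the paper makes around this statement is the restriction to \emph{formal} objects, which defuses its own $\textup{Left}(S)/\textup{Right}(S)$ counterexample, and your proof respects that restriction since the word $|A|$, the normalizer $\kappa_A$, and both inductions live entirely on syntax and never exploit accidental equalities of objects in $\bfup{H}$. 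A few points you should make explicit to close the argument. First, the hom-sets of $\tensor\!\bfup{-FormH}$ are formal composites \emph{modulo} the monoidal-category congruence (bifunctoriality of $\tensor$, naturality of $\alpha^{\tensor},\lambda^{\tensor},\rho^{\tensor}$, pentagon, triangle, and the isomorphism equations); your invariant $\kappa_B \circ f = \kappa_A$ is an equation in that quotient, hence stable under the congruence, so induction over a representative expression is legitimate --- but this needs to be said, and identities $\id_A$ must be added as a (trivial) base case. Second, the rectification case $\theta = \alpha^{\tensor}$ tacitly uses naturality of $\alpha^{\tensor}$ to commute it past $(\kappa_A \tensor \kappa_B)\tensor \kappa_C$ before the residual equation between re-bracketings of normal forms can be attacked by induction on the third word via the pentagon; similarly the unit cases use naturality of $\lambda^{\tensor},\rho^{\tensor}$ together with Kelly's derived unit laws (such as $\lambda^{\tensor}_{\oone} = \rho^{\tensor}_{\oone}$ and the behaviour of $\lambda^{\tensor}$ on a tensor), which follow from pentagon and triangle but are not literally instances of the triangle axiom. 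Third, your $\tensor$-case of the bootstrap is actually easier than you suggest: with $\kappa_{A_1 \tensor A_2}$ \emph{defined} as a re-bracketing composed with $\kappa_{A_1}\tensor\kappa_{A_2}$, bifunctoriality plus $|A_i| = |B_i|$ makes the step immediate, the pentagon work having been concentrated in the rectification lemma. Finally, beware that your proposed shortcut via strictification is circular as usually presented, since strictification is standardly a corollary of coherence; Joyal--Street do give an independent proof via the embedding $A \mapsto A \tensor (-)$, but that is a genuinely third route rather than a simplification of yours.
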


As such, when we say that $m^{\tensor}$ is unique, it is only correct because we are implicitly working with formal morphisms, hence forbidding any "using the fact that an object can be written in different equivalent ways" like in the previous counter-example.

\subsection{The Semiadditive Structure}

The second structure central to our paper is the one of categories with (finite) biproducts, also called semiadditive categories.

\begin{definition}
	A semiadditive category $(\bfup{H},\oplus,\ozero)$ is a category $\bfup{H}$ with finite biproducts, that is a bifunctor $\oplus$ and the following natural transformations:
	\begin{itemize}
		\item the left-injection $\iota_\ell : H \to H \oplus K$
		\item the right-injection $\iota_r : K \to H \oplus K$
		\item the initial morphism $\zero : \ozero \to K$
		\item the co-diagonal $\nabla : H \oplus H \to H$
		\item the left-projection $\pi_\ell : H \oplus K \to H $
		\item the right-projection $\pi_r : H \oplus K \to K$
		\item the terminal morphism\footnote{We reuse the name $\zero$ at many places, and this is deliberate as the notation is never ambiguous. For example, the initial morphism $\zero : \ozero \to \ozero$ is the same as the terminal morphism $\zero : \ozero \to \ozero$.} $\zero : H \to \ozero$
		\item the diagonal $\Delta : H \to H \oplus H$
	\end{itemize}
	satisfying the following coherence laws:
	\begin{itemize}
		\item For every $f : H \to \ozero$ or $f : \ozero \to H$, necessarily $f = \zero$.
		\item For every $f : L \to H$, $g : L \to K$, $f' : H \to L'$ and $g' : H \to L'$, then
		\[ h = (f \oplus g) \circ \Delta :  L \to H \oplus K \]
		\[ h' = \nabla \circ  (f' \oplus g') : H \oplus K \to L'\]
		 are the unique morphisms such that:
	\end{itemize}
\begin{center}
	\tikzfig{category-coherence/coherence-SAC}
\end{center}
\end{definition}

The uniqueness of $h$ and $h'$ is a very strong property, which allows us to derive many well-known identities, such as $\pi_\ell \circ \Delta = \id$.

\begin{proposition}
	The semiadditive category $(\bfup{H},\oplus,\ozero)$  is symmetric monoidal, meaning that we can define the natural isomorphisms $\lambda^{\oplus}$, $\rho^{\oplus}$, $\alpha^{\oplus}$, and $\sigma^{\oplus}$. Additionally, it is enriched over commutative monoids $(\bfup{H}(H,K),\bfup{+},\zero)$, where:
	\begin{itemize}
		\item the neutral element $\zero : H \to K$ is the composition of the terminal morphism $\zero : H \to \ozero$ and the initial morphism $\zero : \ozero \to K$.
		\item the sum $f\bfup{+}g$ is simply $\nabla \circ (f \oplus g) \circ \Delta$.
	\end{itemize}
	We also note that $(\bfup{H}(H,H),\bfup{+},\zero,\circ,\id)$ is a semiring.
\end{proposition}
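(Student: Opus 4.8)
The plan is to prove the three assertions in turn, using the uniqueness clause of the biproduct universal property as the single workhorse throughout. First I would record the elementary \emph{matrix identities} that follow at once from the stated coherence laws together with the defining equations $\Delta = (\id\oplus\id)\circ\Delta$ and $\nabla = \nabla\circ(\id\oplus\id)$ obtained by instantiating $h$ and $h'$ at $f=g=\id$: namely $\pi_\ell\circ\iota_\ell = \id$, $\pi_r\circ\iota_r = \id$, $\pi_\ell\circ\iota_r = \zero$, $\pi_r\circ\iota_\ell = \zero$, and $\pi_\ell\circ\Delta = \pi_r\circ\Delta = \nabla\circ\iota_\ell = \nabla\circ\iota_r = \id$. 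I would also note that the zero morphism absorbs composition, $f\circ\zero = \zero$ and $\zero\circ f = \zero$, since it factors through $\ozero$, which is simultaneously initial and terminal.

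For the symmetric monoidal structure, I would observe that the stated universal property makes $(H\oplus K,\pi_\ell,\pi_r)$ a categorical product and $\ozero$ a terminal object, so $\bfup{H}$ is Cartesian. I would then define $\sigma^\oplus$, $\alpha^\oplus$, $\lambda^\oplus$, $\rho^\oplus$ as the unique mediating morphisms into the relevant products (e.g.\ $\sigma^\oplus$ as the map with projections $\pi_r,\pi_\ell$, and $\lambda^\oplus := \pi_r$, $\rho^\oplus := \pi_\ell$). Their invertibility and naturality, together with the pentagon, triangle and hexagon coherence diagrams, then hold automatically: in each case both composites are morphisms into a biproduct that agree after post-composition with every projection, hence are equal by uniqueness. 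This is precisely the standard fact that a category with finite products is symmetric monoidal, so no genuinely new computation is needed.

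For the enrichment I would set $f\bfup{+}g := \nabla\circ(f\oplus g)\circ\Delta$ and check the commutative-monoid axioms via naturality of $\Delta$ and $\nabla$. Commutativity follows from $\sigma^\oplus\circ\Delta = \Delta$ and $\nabla\circ\sigma^\oplus = \nabla$ (both immediate from uniqueness) combined with the naturality square $\sigma^\oplus\circ(f\oplus g) = (g\oplus f)\circ\sigma^\oplus$. Neutrality of $\zero$ comes from the identity $(f\oplus\zero) = \iota_\ell\circ f\circ\pi_\ell$, proved by comparing projections, whence $f\bfup{+}\zero = (\nabla\circ\iota_\ell)\circ f\circ(\pi_\ell\circ\Delta) = f$. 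Bilinearity of composition, i.e.\ $(g_1\bfup{+}g_2)\circ f = (g_1\circ f)\bfup{+}(g_2\circ f)$ and $g\circ(f_1\bfup{+}f_2) = (g\circ f_1)\bfup{+}(g\circ f_2)$, follows by pushing $f$ through $\Delta$ via $\Delta\circ f = (f\oplus f)\circ\Delta$ and pulling $g$ through $\nabla$ via $g\circ\nabla = \nabla\circ(g\oplus g)$, then using bifunctoriality of $\oplus$. Associativity I would obtain by showing both bracketings equal a single ternary expression $\nabla_3\circ(f\oplus g\oplus h)\circ\Delta_3$, using the uniqueness-derived associativity of $\nabla$ and coassociativity of $\Delta$.

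Finally, the semiring claim assembles from the previous parts: $(\bfup{H}(H,H),\bfup{+},\zero)$ is the commutative monoid just built, $(\bfup{H}(H,H),\circ,\id)$ is a monoid by the category axioms, two-sided distributivity is exactly the bilinearity already shown, and $\zero\circ f = f\circ\zero = \zero$ supplies the annihilation law, so all semiring axioms hold. The main obstacle is not any single computation but organising the enrichment argument to avoid circularity: the matrix identities and the neutrality and bilinearity laws must be derived purely from uniqueness of mediating morphisms, without prematurely invoking the very distributivity of composition over $\bfup{+}$ that we are trying to establish. Once the identities of the first paragraph are in place in the right order, the remaining verifications are routine.
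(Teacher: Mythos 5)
Your proposal is correct and takes essentially the same route as the paper: the paper likewise defines $\sigma^{\oplus} = (\pi_r,\pi_\ell)$, $\alpha^{\oplus} = (\pi_\ell \circ \pi_\ell,(\pi_r \circ \pi_\ell,\pi_r))$ and the unitors from the (co)product structure and then appeals to the biproduct universal property for all coherence laws, the commutative-monoid enrichment, and linearity of composition---you have simply spelled out the uniqueness arguments the paper leaves implicit. One cosmetic remark: your unitors $\lambda^{\oplus} := \pi_r$ and $\rho^{\oplus} := \pi_\ell$ match the conventions the paper actually uses later (e.g.\ in \Cref{lem:proj_inj_matrix}), whereas the proposition's own proof writes $\iota_\ell,\iota_r$, which are the inverses of those maps.
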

\begin{proof}
	We simply take
	\begin{itemize}
		\item $\lambda^{\oplus} := \iota_\ell$
		\item $\rho^{\oplus} := \iota_r$
		\item $\alpha^{\oplus} := (\pi_\ell \circ \pi_\ell, (\pi_r \circ \pi_\ell,\pi_r))$
		\item $\sigma^{\oplus} := (\pi_r,\pi_\ell)$
	\end{itemize}
	and the coherence laws follow from the properties of finite biproducts. Similarly, those properties directly lead to $(\bfup{H}(H,K),\bfup{+},\zero)$ being a commutative monoid, and the composition $\circ$ being linear with respect to that monoid.
\end{proof}

Since $(\bfup{H},\oplus,\ozero)$  is symmetric monoidal, we can define $\Objet{\oplus}{H_1,\dots,H_n}$ similarly to $\Objet{\tensor}{H_1,\dots,H_n}$.

\begin{definition}\label{def:matrix}
	For $A \in \Objet{\oplus}{H_1,\dots,H_n}$, and $f_i \in \bfup{H}(H_i,K)$ for every $i$, we write \[\begin{pmatrix}
	f_1 & \dots & f_n\\
	\end{pmatrix} := \nabla \circ  ( \nabla \circ (f_1 \oplus f_2) \dots \oplus f_n) \circ m^\oplus \in \bfup{H}(A,K) \]
	Similarly, for $B \in \Objet{\oplus}{K_1,\dots,K_m}$, and $f_j \in \bfup{H}(H,K_j)$ for every $j$, we write \[\begin{pmatrix}
	f_1 \\ \vdots \\ f_m\\
	\end{pmatrix} := m^{\oplus} \circ (((f_1 \oplus f_2) \circ \Delta \dots  \oplus f_n) \circ \Delta \in \bfup{H}(H,B)\]
	And for $A \in \Objet{\oplus}{H_1,\dots,H_n}$, $B \in \Objet{\oplus}{K_1,\dots,K_m}$, and $f_{i,j} \in \bfup{H}(H_i,K_j)$ for every $i,j$ we write

	\[\begin{pmatrix}
	f_{1,1} & \dots & f_{n,1}\\
	\vdots & & \vdots \\
	f_{1,m} & \dots & f_{n,m}\\
	\end{pmatrix} = \left(f_{i,j} \right)_{\substack{1 \leq i \leq n\\1 \leq j \leq m}}\]
	for the morphism of $\bfup{H}(A,B)$ defined as\[
	\begin{pmatrix}
	\begin{pmatrix}	f_{1,1} & \dots & f_{n,1}\end{pmatrix}\\
	\vdots \\
	\begin{pmatrix}f_{1,m} & \dots & f_{n,m}\end{pmatrix}
	\end{pmatrix} =
	\begin{pmatrix}
	\begin{pmatrix} f_{1,1} \\ \vdots \\ f_{1,m}\end{pmatrix}&
	\dots &
	\begin{pmatrix}f_{n,1} \\ \vdots \\ f_{n,m}\end{pmatrix}
	\end{pmatrix}\]
\end{definition}

	The matrix notation introduced can be composed using the usual product of matrices, that is:
	\[ \left(g_{j,k} \right)_{\substack{1 \leq j \leq m\\1 \leq k \leq p}} \circ \left(f_{i,j} \right)_{\substack{1 \leq i \leq n\\1 \leq j \leq m}} = \left(\sum_{j=1}^m g_{j,k} \circ f_{i,j} \right)_{\substack{1 \leq i \leq n\\1 \leq k \leq p}} \]
	We note that the elements of the matrix are uniquely determined, as  for $1 \leq i_0 \leq n$ and $1 \leq j_0 \leq m$:
	\[ f_{i_0,j_0} =\projection{j_0}{m} \circ \left(f_{i,j} \right)_{\substack{1 \leq i \leq n\\1 \leq j \leq m}} \circ \injection{i_0}{n} \]
	The following lemma is a direct consequence of that fact.
	\begin{lemma}\label{lem:coef_by_coef}		
		For $A \in \Objet{\oplus}{H_1,\dots,H_n}$, $B \in \Objet{\oplus}{K_1,\dots,K_m}$, and $f \in \bfup{H}(A,B)$, then $f$ has a unique matrix form given by
		\[ f =  \left(\projection{j}{m} \circ f \circ \injection{i}{n}  \right)_{\substack{1 \leq i \leq n\\1 \leq j \leq m}} \]
		In particular, for $f,g \in \bfup{H}(A,B)$, the morphisms $f$ and $g$ are equal if and only if for every $1 \leq i \leq n$ and $1 \leq j \leq m$ we have
		\[\projection{j}{m} \circ f \circ \injection{i}{n}\]\[=\]\[ \projection{j}{m} \circ g \circ \injection{i}{n}\]
	\end{lemma}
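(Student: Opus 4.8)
The plan is to reduce the statement to two standard biproduct identities and then invoke the coefficient–extraction formula recalled immediately above the lemma. Writing $A \in \Objet{\oplus}{H_1,\dots,H_n}$ and $B \in \Objet{\oplus}{K_1,\dots,K_m}$, recall that through the coherence isomorphisms $m^\oplus$ each such object is canonically identified with the iterated biproduct $\bigtensor$-analogue $\bigoplus_i H_i$ and $\bigoplus_j K_j$, so that $\injection{i}{n}$ and $\projection{j}{m}$ denote the associated injections and projections. The first step is to establish the orthonormality relation, namely that $\projection{j}{n} \circ \injection{i}{n}$ equals $\id$ when $i=j$ and $\zero$ otherwise, together with the resolution of identity
\[ \sum_{i=1}^n \injection{i}{n} \circ \projection{i}{n} = \id_A, \]
and symmetrically for $B$. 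Both are well-known consequences of the uniqueness clauses in the definition of a finite biproduct (the same clauses that already yield $\pi_\ell \circ \Delta = \id$); the general $n$-ary versions follow by induction on the structure of the object, with the inserted $\ozero$ summands contributing only $\zero$ since every morphism into or out of $\ozero$ is $\zero$.

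The heart of the argument is the joint mono/epi property, which I would prove next and which immediately settles the ``in particular'' clause. Using that $\circ$ is bilinear with respect to $\bfup{+}$ (the enrichment over commutative monoids) together with the two resolutions of identity, any $f \in \bfup{H}(A,B)$ expands as
\[ f = \Bigl(\textstyle\sum_{j} \injection{j}{m} \circ \projection{j}{m}\Bigr) \circ f \circ \Bigl(\textstyle\sum_{i} \injection{i}{n} \circ \projection{i}{n}\Bigr) = \sum_{i,j} \injection{j}{m} \circ \bigl(\projection{j}{m} \circ f \circ \injection{i}{n}\bigr) \circ \projection{i}{n}. \]
Consequently, if $f,g \in \bfup{H}(A,B)$ satisfy $\projection{j}{m} \circ f \circ \injection{i}{n} = \projection{j}{m} \circ g \circ \injection{i}{n}$ for all $i,j$, then the two expansions agree term by term and hence $f = g$.

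For the matrix-form statement I would set $g_{i,j} := \projection{j}{m} \circ f \circ \injection{i}{n}$ and let $M := (g_{i,j})_{i,j}$ be the associated matrix of \Cref{def:matrix}. The extraction formula recalled just above the lemma gives $\projection{j}{m} \circ M \circ \injection{i}{n} = g_{i,j} = \projection{j}{m} \circ f \circ \injection{i}{n}$ for every $i,j$, so $M$ and $f$ have identical sandwiched coefficients; the joint mono/epi property of the previous paragraph then forces $f = M$, which is exactly the claimed identity $f = (\projection{j}{m} \circ f \circ \injection{i}{n})_{i,j}$. Uniqueness of the matrix form is the contrapositive of the extraction formula: two distinct coefficient families are separated by some $\projection{j}{m} \circ (-) \circ \injection{i}{n}$, so they cannot yield the same morphism.

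The only genuinely delicate point is the bookkeeping behind the $n$-ary resolution of identity in the presence of the coherence isomorphisms $m^\oplus$ and the inserted $\ozero$ factors allowed in $\Objet{\oplus}{-}$; once that identity is secured, everything else is a mechanical application of the bilinearity of composition and the already-established extraction formula. I therefore expect this coherence bookkeeping, rather than any conceptual difficulty, to be the main obstacle.
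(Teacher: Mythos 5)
Your proposal is correct and is essentially the paper's argument: the paper dispatches this lemma in one line as a ``direct consequence'' of the coefficient-extraction formula stated just above it, and the content you supply --- the orthonormality relations $\bfup{proj}_j^n \circ \bfup{inj}_i^n = \delta_{ij}$, the $n$-ary resolution of identity $\sum_{i} \bfup{inj}_i^n \circ \bfup{proj}_i^n = \id$, and the bilinear expansion of $f$ --- is exactly the standard biproduct bookkeeping that one-liner leaves implicit. In particular, your explicit handling of the existence half (that $f$ equals the matrix of its own sandwiched coefficients), which genuinely requires the resolution of identity and not merely the extraction formula, fills in the only step the paper glosses over.
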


	\begin{definition}\label{def:matrix_R}
		Assuming a semiring $(R,+,0,\times,1)$, a distinguished object $\oone$ and a semiring isomorphism $\bfup{scal} : R \to \bfup{H}(\oone,\oone)$, then for every $n \times m$ matrix $M = (m_{i,j})_{i,j} $ with coefficients in $R$, we can define $\Morphism{M}$ as the unique morphism of $\bfup{H}(\bigoplus_{i=1}^n \oone,\bigoplus_{j=1}^m \oone)$ equal to
		\[ \Morphism{M} = \left(\bfup{scal}(m_{i,j}) \right)_{\substack{1 \leq i \leq n\\1 \leq j \leq m}} \]
	\end{definition}
	
	\begin{proposition}\label{prop:matrix_is_fff}
		The operation $M \mapsto \Morphism{M}$ is a full and faithful functor, that is linear for $\bfup{+}$.
	\end{proposition}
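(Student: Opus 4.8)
The plan is to realise $M \mapsto \Morphism{M}$ as a functor from the category of $R$-matrices---objects are natural numbers, a morphism $n \to m$ is an $n \times m$ matrix $(m_{i,j})_{1 \le i \le n,\,1 \le j \le m}$ as in \Cref{def:matrix_R}, composition is the matrix product recorded just before \Cref{lem:coef_by_coef}, and the identity on $n$ is the identity matrix---into the full subcategory of $\bfup{H}$ spanned by the objects $\bigoplus_{i=1}^n \oone$, sending $n$ to $\bigoplus_{i=1}^n \oone$. The whole argument is driven by \Cref{lem:coef_by_coef}: two parallel morphisms between biproducts of $\oone$ are equal iff their coefficients $\projection{j}{m} \circ (-) \circ \injection{i}{n} \in \bfup{H}(\oone,\oone)$ agree, and by definition $\Morphism{M}$ is the morphism whose $(i,j)$ coefficient is $\bfup{scal}(m_{i,j})$.

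First I would check functoriality. For identities, the matrix $I_n$ has entries $1$ on the diagonal and $0$ elsewhere, so $\Morphism{I_n}$ has coefficients $\bfup{scal}(1) = \id$ and $\bfup{scal}(0) = \zero$; since $\projection{j}{n} \circ \id \circ \injection{i}{n}$ is $\id$ when $i = j$ and $\zero$ otherwise, \Cref{lem:coef_by_coef} identifies $\Morphism{I_n}$ with $\id_{\bigoplus_{i=1}^n \oone}$. For composition, given $M : n \to m$ and $M' : m \to p$, the matrix-product formula gives the $(i,k)$ coefficient of $\Morphism{M'} \circ \Morphism{M}$ as $\sum_{j} \bfup{scal}(m'_{j,k}) \circ \bfup{scal}(m_{i,j})$. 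Because $\bfup{scal}$ is a semiring isomorphism (\Cref{def:semiring_homo}) and $\circ$, $\bfup{+}$ are exactly the product and sum of the semiring $\bfup{H}(\oone,\oone)$, this equals $\bfup{scal}\!\big(\sum_j m'_{j,k} \times m_{i,j}\big)$, i.e. the $(i,k)$ coefficient of $\Morphism{M' \cdot M}$; \Cref{lem:coef_by_coef} then yields $\Morphism{M'} \circ \Morphism{M} = \Morphism{M' \cdot M}$.

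Faithfulness and fullness are then immediate from the bijectivity of $\bfup{scal}$. If $\Morphism{M} = \Morphism{M'}$, comparing coefficients gives $\bfup{scal}(m_{i,j}) = \bfup{scal}(m'_{i,j})$ for all $i,j$, and injectivity of $\bfup{scal}$ forces $M = M'$. Conversely, any $f \in \bfup{H}(\bigoplus_{i=1}^n \oone, \bigoplus_{j=1}^m \oone)$ equals, by \Cref{lem:coef_by_coef}, the matrix of its coefficients $\projection{j}{m} \circ f \circ \injection{i}{n} \in \bfup{H}(\oone,\oone)$; surjectivity of $\bfup{scal}$ lets me write each coefficient as $\bfup{scal}(m_{i,j})$, whence $f = \Morphism{(m_{i,j})_{i,j}}$. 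Finally, $\bfup{+}$-linearity follows the same pattern: since composition is bilinear for $\bfup{+}$ (the enrichment over commutative monoids), the coefficient operation $\projection{j}{m} \circ (-) \circ \injection{i}{n}$ is additive, so the $(i,j)$ coefficient of $\Morphism{M} \bfup{+} \Morphism{M'}$ is $\bfup{scal}(m_{i,j}) \bfup{+} \bfup{scal}(m'_{i,j}) = \bfup{scal}(m_{i,j} + m'_{i,j})$, matching $\Morphism{M + M'}$.

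There is no deep obstacle here; the statement is a bookkeeping consequence of \Cref{lem:coef_by_coef} together with $\bfup{scal}$ being a semiring isomorphism. The only point demanding genuine care is the direction conventions---verifying that categorical composition $\Morphism{M'} \circ \Morphism{M}$ corresponds to the matrix product $M' \cdot M$ rather than its transpose, and that the translations $\circ \leftrightarrow \times$ and $\bfup{+} \leftrightarrow +$ under $\bfup{scal}$ are inserted in the correct slots when rewriting $\sum_j \bfup{scal}(m'_{j,k}) \circ \bfup{scal}(m_{i,j})$ as $\bfup{scal}$ of an expression in $R$. Commutativity of $R$ makes this last rewriting unambiguous.
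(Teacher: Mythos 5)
Your proof is correct and follows essentially the same route as the paper's own (much terser) argument: functoriality and $\bfup{+}$-linearity from $\bfup{scal}$ being a semiring homomorphism into $(\bfup{H}(\oone,\oone),\bfup{+},\zero,\circ,\id)$, and fullness and faithfulness from $\bfup{scal}$ being an isomorphism combined with the coefficient-by-coefficient criterion of \Cref{lem:coef_by_coef}. Your write-up merely makes explicit the bookkeeping (identity matrices, the matrix-product formula, the $\circ \leftrightarrow \times$ translation) that the paper leaves implicit.
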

	\begin{proof}
		The functoriality and linearity follows from $\bfup{scal}(-)$ being a semiring homomorphism. The fullness and faitfulness follows from $\bfup{scal}(-)$ being an isomorphism, and from \Cref{lem:coef_by_coef}.		
	\end{proof}

\subsection{Adding Distributivity}

In this paper, we consider a category with two structures, one distributive over the other. We note that the use of "distributive category" in this paper is non-standard, as it is usually used whenever $\tensor$ is a cartesian product (ours is only monoidal) and $\oplus$ is a coproduct (ours is even a biproduct). Distributivity can even be studied in the case where $\oplus$ is only a symmetric monoidal product, as done in \cite{laplaza72}.

We now consider $(\bfup{H},\tensor,\oone,\oplus,\ozero)$ such that $(\bfup{H},\tensor,\oone)$ is a symmetric monoidal category and $(\bfup{H},\oplus,\ozero)$ is a semiadditive category.

\begin{definition}
	The category $\bfup{H}$ is said distributive if the following two natural transformations are isomorphisms:
	\begin{itemize}
		\item the left-distributor $\bfup{dist}_\ell : H \tensor (K \oplus L) \to (H \tensor K) \oplus (H \tensor L)$ defined as $(\id \tensor \pi_\ell \oplus \id \tensor \pi_r) \circ \Delta$, with for inverse $\nabla \circ (\id \tensor \iota_\ell \oplus \id \tensor \iota_r)$.
		\item the left-annihilator $\zero : H \tensor \ozero \to \ozero$ defined as the terminal morphism, with for inverse the initial morphism.
	\end{itemize}
\end{definition}
Since $\tensor$ is symmetric, the following natural transformations are also isomorphisms in a distributive category:
\begin{itemize}
	\item the right-distributor $\bfup{dist}_r : (H \oplus K) \tensor L \to (H \tensor L) \oplus (K \tensor L)$ defined as $(\pi_\ell \tensor \id \oplus \pi_r \tensor \id) \circ \Delta$, with for inverse the co-pairing $\nabla \circ (\iota_\ell \tensor \id \oplus \iota_r \tensor \id)$	
	\item the right-annihilator $\zero : \ozero \tensor H \to \ozero$ which is the terminal morphism, with for inverse the initial morphism.
\end{itemize}

\begin{lemma}\label{lem:dist_diag}
	We have the following properties about the distributors:
	\[ \begin{array}{rclcrcl}
	\bfup{dist}_\ell \circ (\id \tensor \Delta) &=& \Delta&&
	\bfup{dist}_r \circ (\Delta \tensor \id) &=& \Delta\\
	(\id \tensor \nabla) \circ \bfup{dist}_\ell^{-1} &=& \nabla&&
	(\nabla \tensor \id) \circ \bfup{dist}_r^{-1} &=& \nabla\\
	\end{array}
	 \] 
\end{lemma}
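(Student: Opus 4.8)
The plan is to verify each of the four equations separately by appealing to the universal property of the biproduct, in the form of \Cref{lem:coef_by_coef}. The first two equations have a biproduct $(H \tensor K) \oplus (H \tensor K)$ as codomain, so it suffices to check that both sides agree after postcomposition with each projection $\pi_\ell$ and $\pi_r$; dually, the last two have a biproduct as domain, so it suffices to precompose with each injection $\iota_\ell$ and $\iota_r$. Throughout I will only use the standard biproduct identities $\pi_\ell \circ \Delta = \pi_r \circ \Delta = \id$ and $\nabla \circ \iota_\ell = \nabla \circ \iota_r = \id$, the naturality squares $\pi_\ell \circ (f \oplus g) = f \circ \pi_\ell$ and $(f \oplus g) \circ \iota_\ell = \iota_\ell \circ f$ together with their right-hand analogues, and the bifunctoriality of $\tensor$.

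For the first equation $\bfup{dist}_\ell \circ (\id \tensor \Delta) = \Delta$, I would unfold $\bfup{dist}_\ell = ((\id \tensor \pi_\ell) \oplus (\id \tensor \pi_r)) \circ \Delta$ and postcompose with $\pi_\ell$: projection naturality collapses the outer $\oplus$ and $\pi_\ell \circ \Delta = \id$ leaves $\pi_\ell \circ \bfup{dist}_\ell = \id \tensor \pi_\ell$. Hence $\pi_\ell \circ \bfup{dist}_\ell \circ (\id \tensor \Delta) = (\id \tensor \pi_\ell) \circ (\id \tensor \Delta) = \id \tensor (\pi_\ell \circ \Delta) = \id$ by functoriality, which matches $\pi_\ell \circ \Delta = \id$ on the right. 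The computation with $\pi_r$ is identical after swapping subscripts. The second equation is handled the same way with the roles of the two tensor factors exchanged, using $\bfup{dist}_r = ((\pi_\ell \tensor \id) \oplus (\pi_r \tensor \id)) \circ \Delta$.

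For the third equation $(\id \tensor \nabla) \circ \bfup{dist}_\ell^{-1} = \nabla$, I would precompose with $\iota_\ell$. Using $\bfup{dist}_\ell^{-1} = \nabla \circ ((\id \tensor \iota_\ell) \oplus (\id \tensor \iota_r))$, injection naturality and $\nabla \circ \iota_\ell = \id$ give $\bfup{dist}_\ell^{-1} \circ \iota_\ell = \id \tensor \iota_\ell$, whence $(\id \tensor \nabla) \circ \bfup{dist}_\ell^{-1} \circ \iota_\ell = \id \tensor (\nabla \circ \iota_\ell) = \id$, matching $\nabla \circ \iota_\ell = \id$. The $\iota_r$ case and the fourth equation (using $\bfup{dist}_r^{-1}$) follow symmetrically.

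The argument involves no genuine obstacle; it is purely a bookkeeping exercise. The one thing to be careful about is keeping track of which copy of $\tensor$ and $\oplus$ each projection, injection, diagonal, and codiagonal acts on, and applying the correct naturality square — once the types are pinned down, each identity reduces to a one-line chain of $\id \tensor (\pi_\ell \circ \Delta)$-style simplifications. Alternatively, one may observe that the two $\nabla$-identities are the formal duals of the two $\Delta$-identities (reverse all arrows and exchange $\pi \leftrightarrow \iota$, $\Delta \leftrightarrow \nabla$, $\bfup{dist} \leftrightarrow \bfup{dist}^{-1}$), so that only the left $\Delta$-identity needs to be verified in full and the remaining three follow by duality and the symmetry of $\tensor$.
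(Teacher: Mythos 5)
Your proposal is correct and follows essentially the same route as the paper: the paper also proves the first identity by postcomposing with $\pi_\ell$ and $\pi_r$, simplifying with $\pi_\ell \circ (f \oplus g) = f \circ \pi_\ell$ and $\pi_\ell \circ \Delta = \id$, and concluding from the uniqueness clause of the biproduct's universal property, before dispatching the remaining three equations as ``proved similarly'' --- precisely the dual and left/right-symmetric bookkeeping you spell out. There is no gap; your explicit appeal to joint monicity of projections and joint epicity of injections (via the coefficient-wise criterion) is just a restatement of the universal property the paper invokes.
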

\begin{proof}
	For the first equation, we write $\textup{lhs}$ for the left-hand-side, and want to prove that $\textup{lhs} = \Delta$. We start by considering $\pi_\ell \circ \textup{lhs}$. Using the fact that $\pi_\ell \circ (f \oplus g) = f \circ \pi_\ell$, and the fact that $\pi_\ell \circ \Delta = \id$, we have:
	\[ \begin{array}{rcl} \pi_\ell \circ \textup{lhs} &=& 
	\pi_\ell \circ \left((\id \tensor \pi_\ell) \oplus (\id \tensor \pi_r)\right) \circ \Delta \circ (\id \tensor \Delta) \\
	&=&(\id \tensor \pi_\ell) \circ \pi_\ell \circ \Delta \circ (\id \tensor \Delta)\\
	&=& (\id \tensor \pi_\ell) \circ (\id \tensor \Delta)\\
	&=& \id \end{array} \]
	Similarly, we obtain $\pi_r \circ  \textup{lhs} = \id$. Using the universal property of the product $\oplus$, we know that there exists a unique morphism $h$ such that $\pi_r \circ h = \id$ and $\pi_\ell \circ h = \id$, and this morphism is exactly $\Delta$. As such, $\textup{lhs} = \Delta$.
	
	The three other cases are proved similarly. 
\end{proof}

\begin{proposition}\label{prop:semimodule}
	Whenever $\bfup{H}$ is distributive, the semiring  $(\bfup{H}(\oone,\oone),\bfup{+},\zero,\circ,\id)$ is commutative, and $\bfup{H}$ is enriched over semimodules\footnote{Semimodules are "vector spaces" but where instead of a field we have a semiring . See \Cref{def:semimodule}.} $(\bfup{H}(H,K),\cdot,\bfup{+},\zero)$ over that commutative semiring, where $x \cdot f$ is defined as $\lambda^{\tensor} \circ (x \tensor f) \circ (\lambda^{\tensor})^{-1}$ for $x \in \bfup{H}(\oone,\oone)$ and $f \in \bfup{H}(H,K)$.
\end{proposition}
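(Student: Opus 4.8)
The plan is to establish the two assertions---commutativity of the semiring $\bfup{H}(\oone,\oone)$ and the enrichment over semimodules---by first isolating a single structural lemma about the scalar action and then deriving everything else from it, together with the commutative-monoid enrichment and the bilinearity of $\circ$ that the semiadditive structure already supplies (as proved for $(\bfup{H},\oplus,\ozero)$ above). Writing $x \cdot f := \lambda^{\tensor} \circ (x \tensor f) \circ (\lambda^{\tensor})^{-1}$ for $x \in \bfup{H}(\oone,\oone)$ and $f : H \to K$, the key lemma I would prove is
\[ x \cdot f = (x \cdot \id_K) \circ f = f \circ (x \cdot \id_H), \]
that is, $H \mapsto x \cdot \id_H$ is a natural transformation of the identity functor. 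Both equalities follow from naturality of $\lambda^{\tensor}$ and bifunctoriality of $\tensor$: one factors $x \tensor f = (x \tensor \id_K) \circ (\id_{\oone} \tensor f) = (\id_{\oone} \tensor f) \circ (x \tensor \id_H)$ and slides the unitor past $\id_{\oone} \tensor f$.

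From this lemma the commutativity claim is immediate. Using the coherence identity $\lambda^{\tensor}_{\oone} = \rho^{\tensor}_{\oone}$ together with naturality of $\rho^{\tensor}$ in its left factor, one computes $x \cdot \id_{\oone} = x$. Specializing the lemma to $f = y \in \bfup{H}(\oone,\oone)$ then gives $x \circ y = (x \cdot \id_{\oone}) \circ y = x \cdot y = y \circ (x \cdot \id_{\oone}) = y \circ x$, so composition on $\bfup{H}(\oone,\oone)$ is commutative; this is the Eckmann--Hilton argument repackaged through the lemma.

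For the enrichment I would check the axioms of \Cref{def:semimodule} and \Cref{def:semimodule_enriched} in turn, each time collapsing the scalar action to a composition via the lemma. Since $(\bfup{H}(H,K),\bfup{+},\zero)$ is a commutative monoid and $\circ$ is bilinear over $\bfup{+}$ by the semiadditive structure, the remaining content is: $\id_{\oone}\cdot\id = \id$ yields $1 \cdot f = f$; the identity $(a \circ b)\cdot\id = (a\cdot\id)\circ(b\cdot\id)$, obtained by conjugating the functoriality relation $(a\circ b)\tensor\id = (a\tensor\id)\circ(b\tensor\id)$ by $\lambda^{\tensor}$, yields $(a\times b)\cdot f = a\cdot(b\cdot f)$; and the compatibility $(a\times b)\cdot(g\circ f) = (a\cdot g)\circ(b\cdot f)$ follows by inserting the naturality of $x\cdot\id_{(-)}$ to commute $g$ past $b \cdot \id$. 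The additive axioms $0\cdot f = \zero$, $(a+b)\cdot f = a\cdot f \bfup{+} b\cdot f$, $a\cdot\zero = \zero$, and $a\cdot(f\bfup{+}g) = a\cdot f\bfup{+} a\cdot g$ then all reduce, via the lemma, to bilinearity of $\circ$, provided one knows that $\tensor$ is additive, i.e.\ $\zero \tensor \id = \zero$ and $(a + b)\tensor\id = (a\tensor\id)\bfup{+}(b\tensor\id)$.

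The main obstacle is precisely this additivity of $\tensor$ with respect to the biproduct-induced sum, and it is exactly where distributivity enters. I would deduce it from \Cref{lem:dist_diag}: writing $a \bfup{+} b = \nabla \circ (a \oplus b) \circ \Delta$, conjugating $(a \oplus b)\tensor\id$ by the natural right-distributor $\bfup{dist}_r$ and using $\bfup{dist}_r \circ (\Delta \tensor \id) = \Delta$ and $(\nabla \tensor \id)\circ \bfup{dist}_r^{-1} = \nabla$ turns $(a \bfup{+} b)\tensor\id$ into $(a\tensor\id)\bfup{+}(b\tensor\id)$; the annihilator isomorphism $\ozero \tensor K \cong \ozero$ gives $\zero\tensor\id = \zero$. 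The only other points needing care are bookkeeping of which tensor slot each naturality square uses: $\lambda^{\tensor}$ is natural in the right factor, whereas the coherence step $\lambda^{\tensor}_{\oone} = \rho^{\tensor}_{\oone}$ and the computation of $x\cdot\id_{\oone}$ rely on naturality of the right-unitor in the left factor.
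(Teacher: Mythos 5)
Your proposal is correct, and it rests on the same essential ingredients as the paper's proof: commutativity of $\bfup{H}(\oone,\oone)$ comes from the interchange law for $\tensor$ conjugated through the unitors (an Eckmann--Hilton argument -- note that both you and the paper implicitly need the coherence fact $\lambda^{\tensor}_{\oone}=\rho^{\tensor}_{\oone}$, which you correctly flag), and the additive axioms come from \Cref{lem:dist_diag} together with naturality of the distributor and the annihilator $\ozero \tensor H \cong \ozero$. The organization differs in a useful way, though. The paper never isolates your key lemma $x \cdot f = (x\cdot\id_K)\circ f = f\circ(x\cdot\id_H)$; instead it proves the full two-sided additivity of $\tensor$ over $\bfup{+}$ -- all four equations $h\tensor(f\bfup{+}g)=(h\tensor f)\bfup{+}(h\tensor g)$, $h\tensor\zero=\zero$, and their mirror images, for \emph{arbitrary} $h$ -- and then verifies the semimodule and enrichment axioms one by one, invoking bifunctoriality of $\tensor$ for $(x\circ y)\cdot(g\circ f)=(x\cdot g)\circ(y\cdot f)$, the factorization of $\zero$ through $\ozero$, and naturality of $\Delta$ and $\nabla$. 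Your centrality lemma buys economy: you need only the scalar-slot instance $(a\bfup{+}b)\tensor\id=(a\tensor\id)\bfup{+}(b\tensor\id)$ (derived exactly as in the paper, via $\bfup{dist}_r\circ(\Delta\tensor\id)=\Delta$ and $(\nabla\tensor\id)\circ\bfup{dist}_r^{-1}=\nabla$ around the naturality square of $\bfup{dist}_r$), after which every remaining axiom, including the enrichment equation, collapses to bilinearity of $\circ$ over $\bfup{+}$, which the semiadditive structure already supplies. What the paper's more direct route buys in exchange is the stronger standalone statement that $\tensor$ is additive in each argument for arbitrary morphisms, a fact it reuses later (e.g.\ $s\cdot(f\oplus g)=(s\cdot f\oplus s\cdot g)$ in the soundness appendix), whereas your lemma only yields this for scalar actions.
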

\begin{proof}		
	For $f,g \in \bfup{H}(\oone,\oone)$, we have:
	\[ \begin{matrix} f \circ g &=& f \circ \lambda^{\tensor} \circ (\lambda^{\tensor})^{-1} \circ g \\ &=& \lambda^{\tensor} \circ (f \tensor \id) \circ (\id \tensor g) \circ(\lambda^{\tensor})^{-1} \\ &=& \lambda^{\tensor}  \circ (\id \tensor g)\circ (f \tensor \id) \circ(\lambda^{\tensor})^{-1} \\ &=& g \circ \lambda^{\tensor} \circ (\lambda^{\tensor})^{-1} \circ f \\&=& g \circ f \end{matrix} \]
	Hence the semiring is commutative.
	
	From the distributivity of $\tensor$ over $\oplus$ we can deduce the distributivity of $\tensor$ over $\bfup{+}$. More precisely, we need to show:	
	\[ \begin{array}{rclcrcl}
	h \tensor (f \bfup{+} g) &=& (h \tensor f) \bfup{+} (h \tensor g)&&
	h \tensor \zero &=& \zero \\
	(f \bfup{+} g) \tensor h &=& (f \tensor h) \bfup{+} (g \tensor h)&&
	\zero \tensor h &=& \zero \\
	\end{array}\]
	First, to show that $h \tensor (f \bfup{+} g) = (h \tensor f) \bfup{+} (h \tensor g)$, we simply use \Cref{lem:dist_diag} and the naturality of $\bfup{dist}_\ell$ to make the following diagram commute:	
	\begin{center}
		\tikzfig{category-coherence/dist-tensor-dot}
	\end{center}
	Then, for $h \tensor \zero = \zero$, we start by a special case:
	\begin{itemize}
		\item For $h \tensor \zero : L \tensor H \to L \tensor \ozero$, since $L \tensor \ozero$ is isomorphic to the terminal element $\ozero$, then $L \tensor \ozero$ is also terminal, hence there exists a unique morphism from $L \tensor H$ to $L \tensor \ozero$, hence $h \tensor \zero = \zero$.
		\item Now, for the general case $h \tensor \zero : L \tensor H \to L \tensor K$, we decompose it using $L \tensor \ozero$ as an intermediary object, so $h \tensor \zero = (h \tensor \zero) \circ (\id \tensor \zero)$. Using the previous case, we obtain $h \tensor \zero = \zero \circ (\id \tensor \zero) = \zero$.
	\end{itemize}
	We proceed similarly to show that $(f \bfup{+} g) \tensor h = (f \tensor h) \bfup{+} (g \tensor h)$ and $\zero \tensor h = \zero$.
	From those and the naturality of unitors, it directly follows that $(\bfup{H}(H,K),\cdot,\bfup{+},\zero)$ is a semimodule, that is:
	\[ \begin{array}{rclcrcl}
	\id \cdot f & = & f & & (x \circ y) \cdot f & = & x \cdot (y \cdot f) \\
	\zero \cdot f & = & \zero & & (x \bfup{~+~} y) \cdot f & = & x \cdot f \bfup{~+~} y \cdot f \\
	x \cdot \zero & = & \zero & & x \cdot (f \bfup{~+~} g) & = & x \cdot f \bfup{~+~} x \cdot g \\
	\end{array}\]		
	It remains to show that it is an enrichment, that is:		
	\[ \begin{array}{rcl}
	(x \circ  y) \cdot (g \circ f) & = & (x \cdot g) \circ (y \cdot f) \\
	\zero \circ f & = & \zero \\
	g \circ \zero & = & \zero \\
	(g_1 \bfup{~+~} g_2) \circ f & = &  (g_1 \circ f) \bfup{~+~} (g_2 \circ f) \\
	g \circ (f_1 \bfup{~+~} f_2) & = &   g \circ f_1 \bfup{~+~} g \circ f_2 \\
	\end{array}\]
	The first equation follows from the bifunctoriality of $\tensor$. The second and third ones follows from the fact that one can see $\zero : H \to K$ as the composition of the terminal morphism $H \to \ozero$ and the initial morphism $\ozero \to K$. The fourth and fifth equations are simply the naturality of $\Delta$ and $\nabla$. 
\end{proof}

As we would like to extend Mac Lane's coherence result to distributive categories, we also need to extend the notion of formal morphism.

\begin{definition}
	A formal object of $\bfup{H}$ is an element of the following syntax:
	\[ A, B, \dots ::= \ozero~|~\oone~|~A\oplus B~|~A\tensor B~|~\underline{H} \text{ (for any $H$ object of $\bfup{H}$)} \]
	We write $(\bfup{FormH},\tensor,\oone,\oplus,\ozero)$ for the distributive category of formal objects together with only the structural morphisms: $\lambda^{\tensor},\lambda^{\tensor-1},\rho^{\tensor},\rho^{\tensor-1},\alpha^{\tensor},\alpha^{\tensor-1},\sigma^{\tensor}, \iota_\ell,\iota_r,\pi_\ell,\pi_r,\zero,\nabla,\Delta$, and their composition through $\circ$, $\tensor$, $\oplus$. 
	There is an obvious forgetful functor $U : \bfup{FormH} \to \bfup{H}$. 
\end{definition}
We will rely on those formal morphisms to show equality of some specific morphisms of $\bfup{H}$. Indeed, if we want to show that $f = g$ in $\bfup{H}$, and notice that $f$ and $g$ are built from structural morphisms, then by looking at their corresponding formal morphism $U(f') = f$ and $U(g') = g$ it is enough to prove that $f' = g'$.

We say that a formal object is in normal form if it is of the form $\bigoplus_{i=1}^n\bigtensor_{j=1}^{m_i} \underline{H_{i,j}}$ for some objects $H_{i,j}$ of $\bfup{H}$. In particular, $\ozero$ and $\oone$ are in normal form\footnote{With $n=0$ for $\ozero$, and $n=1, m_1=0$ for $\oone$.}. For every formal object $A$, we define inductively in \Cref{fig:formal_object_normal_form} its normal form $\mathcal{N}(A)$, and the normalization natural isomorphism $n_A : A \to \N(A)$. In this definition, we prioritize the right-distributor over the left-distributor, meaning that we read pairs using the lexicographic order.
This normalization extends to a full and faithful functor with for $f : A \to B$, $\N(f) : \N(A) \to \N(B)$ defined as $n_B \circ f \circ n_A^{-1}$. Relying on \Cref{def:matrix}, we choose to see $\N(f)$ as a matrix. 

\begin{figure*}
	\[
	\begin{array}{lll}
		A & \N(A) & n_A\\\hline
		\\%Base case
		\ozero & \ozero & \id \\\\
		\oone & \oone & \id \\\\
		\underline{H} & \underline{H} & \id \\\\
		% Plus case
		B \oplus C & \N(C) & \lambda^{\oplus} \circ (n_B \oplus n_C) \\
		\quad\text{with } \N(B) = \ozero && \\\\
		B \oplus C & \N(B) & \rho^{\oplus} \circ (n_B \oplus n_C) \\
		\quad\text{with } \N(B) \neq \ozero &&\\\quad\text{and } \N(C) = \ozero && \\\\	
		B \oplus C & \N(B) \oplus \N(C) & n_B \oplus n_C \\	
		\quad\text{with } \N(B) \neq \ozero &&\\\quad\text{and } \N(C) = \bigtensor_{k=1}^p \underline{L_k} && \\\\	
		B \oplus C & \N(B \oplus C') \oplus \bigtensor_{k=1}^p \underline{L_k} & (n_{B \oplus C'} \oplus \id) \circ \alpha^{\oplus-1} \circ (\id \oplus n_C) \\
		\quad\text{with } \N(B) \neq \ozero &&\\\quad\text{and }  \N(C) = C'\oplus \bigtensor_{k=1}^p \underline{L_k}  && \\\\
		% Tensor case
		B \tensor C & \ozero & \zero \circ (n_B \tensor n_C) \\
		\quad\text{with } \N(B) = \ozero && \\\\	
		B \tensor C & \N(C) & \lambda^{\tensor} \circ (n_B \tensor n_C) \\	
		\quad\text{with } \N(B) = \oone && \\\\	
		B \tensor C & \ozero & \zero \circ (n_B \tensor n_C) \\
		\quad\text{with } \N(B) \notin \{\ozero,\oone\} &&\\\quad\text{and } \N(C) = \ozero && \\\\	
		B \tensor C & \N(B) & \rho^{\tensor} \circ (n_B \tensor n_C) \\	
		\quad\text{with } \N(B)  \notin \{\ozero,\oone\} &&\\\quad\text{and } \N(C) = \oone && \\\\
		B \tensor C & \N(B' \tensor C) \oplus \N(\bigtensor_{j=1}^m \underline{K_j} \tensor C) & m^\oplus \circ  (n_{B' \tensor C} \oplus n_{\bigtensor_{j=1}^m \underline{K_j} \tensor C}) \circ \dist_r \circ (n_B \tensor \id) \\
		\quad\text{with } \N(B) = B'\oplus \bigtensor_{j=1}^m \underline{K_j} &&\\
		\quad\text{and }  \N(C) \notin \{\ozero,\oone\}  && \\\\
		B \tensor C & \N(B \tensor C') \oplus \N(B \tensor \bigtensor_{k=1}^p \underline{L_k}) &  (n_{B \tensor C'} \oplus n_{B \tensor \bigtensor_{k=1}^p \underline{L_k}}) \circ \dist_\ell \circ (\id \tensor n_C) \\
		\quad\text{with } \N(B)  = \bigtensor_{j=1}^m \underline{K_j}&&\\
		\quad\text{and }  \N(C) = C'\oplus \bigtensor_{k=1}^p \underline{L_k}  && \\\\
		B \tensor C & \bigtensor_{i=1}^{m+p} \underline{H_i} & m^{\tensor} \\
		\quad\text{with } \N(B) = \bigtensor_{j=1}^m \underline{K_i} &\quad\text{with } H_j = K_j&\\\quad\text{and }  \N(C) = \bigtensor_{k=1}^p \underline{L_k}  &  \quad\text{and } H_{m+k} = L_k& \\		
	\end{array}
	\]
	\caption{Normalization of formal objects.}
	\label{fig:formal_object_normal_form}
\end{figure*}

\begin{lemma}\label{lem:proj_inj_matrix}
	We consider $A = \bigoplus_{i=1}^n \bigtensor_{i'=1}^{n_i'} \underline{H_{i,i'}}$ and $B = \bigoplus_{j=1}^m \bigtensor_{j'=1}^{m_j'} \underline{K_{j,j'}}$. We write
	\[ \bfup{proj}_i^n = \projection{i}{n}\]
	\[ \bfup{inj}_i^n = \injection{i}{n}\]	
	Those morphisms satsify
	\[  \bfup{proj}_{i}^{n+m} = \N(\bfup{proj}_{i}^{n} \oplus \zero) \quad \bfup{inj}_{i}^{n+m} = \N(\bfup{inj}_{i}^{n} \oplus \zero) \]	
	with $\zero : B \to \ozero$ and $\zero : \ozero \to B$ respectively.	
	\[  \bfup{proj}_{n+j}^{n+m} = \N(\zero \oplus \bfup{proj}_{j}^{m})\quad   \bfup{inj}_{n+j}^{n+m} = \N(\zero \oplus \bfup{inj}_{j}^{m}) \]	
	with $\zero : A \to \ozero$ and $\zero : \ozero \to A$ respectively.	
	\[  \bfup{proj}_{(i-1)\times n + j}^{n\times m} = \N(\bfup{proj}_{i}^{n} \tensor \bfup{proj}_{j}^{m}) \]	
	\[  \bfup{inj}_{(i-1)\times n + j}^{n\times m} = \N(\bfup{inj}_{i}^{n} \tensor \bfup{inj}_{j}^{m}) \]
\end{lemma}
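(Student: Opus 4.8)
My proof would rest entirely on the coefficient-wise uniqueness of \Cref{lem:coef_by_coef}: a morphism between two objects already in normal form is completely determined by its matrix coefficients, i.e.\ by the morphisms $\bfup{proj}_{j}\circ(-)\circ\bfup{inj}_{i}$ obtained by pre-composing with an injection and post-composing with a projection. So for each claimed equality it suffices to show that the two sides have the same coefficients. The plan is to treat one representative of each pair and obtain the rest by symmetry: the two right-hand $\oplus$ identities follow from the two left-hand ones via the swap $A\oplus B\cong B\oplus A$, and each $\bfup{inj}$ identity is the formal dual of the corresponding $\bfup{proj}$ identity, checked by the symmetric computation (post-composing with a projection instead of pre-composing with an injection). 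I would therefore describe in detail only $\bfup{proj}^{n+m}_{i}=\N(\bfup{proj}^{n}_{i}\oplus\zero)$ and $\bfup{proj}^{n\times m}_{(i-1)\times n+j}=\N(\bfup{proj}^{n}_{i}\tensor\bfup{proj}^{m}_{j})$.

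For the $\oplus$ identity I would recall that $\N$ acts on a formal morphism $f:A'\to B'$ by conjugation, $\N(f)=n_{B'}\circ f\circ n_{A'}^{-1}$, and that by \Cref{fig:formal_object_normal_form} the normal form of $A\oplus B$ is the concatenation of the block list of $\N(A)=A$ (its first $n$ blocks) with that of $\N(B)=B$ (its last $m$ blocks), with $n_{A\oplus B}$ the evident reassociation. The right-hand side is then the block map $A\oplus B\to\bigtensor_{i'}\underline{H_{i,i'}}$ extracted from the first summand. To check it coefficient-wise I would pre-compose with $\bfup{inj}^{n+m}_{k}$: when $k\le n$ the factor $\zero:B\to\ozero$ contributes nothing and the surviving part is $\bfup{proj}^{n}_{i}\circ\bfup{inj}^{n}_{k}$, equal to $\id$ iff $k=i$; when $k>n$ the injection lands in $B$ and is annihilated by $\zero$. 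This is exactly the coefficient pattern of $\bfup{proj}^{n+m}_{i}$, so the two morphisms coincide by \Cref{lem:coef_by_coef}.

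For the $\tensor$ identity I would use that the normalization of \Cref{fig:formal_object_normal_form} turns $A\tensor B$ into the sum of the $n\cdot m$ tensor blocks $\bigtensor_{i'}\underline{H_{i,i'}}\tensor\bigtensor_{j'}\underline{K_{j,j'}}$, ordered lexicographically in $(i,j)$ by pushing $\tensor$ through $\oplus$ with the distributors (the right-distributor taking priority), with $n_{A\tensor B}$ the resulting composite of $\bfup{dist}_\ell$, $\bfup{dist}_r$ and $m^{\tensor}$. The formal morphism $\bfup{proj}^{n}_{i}\tensor\bfup{proj}^{m}_{j}$ projects the first factor onto its $i$-th block and the second onto its $j$-th block, so by bifunctoriality of $\tensor$ its coefficient at the block indexed $(i_0,j_0)$ is $(\bfup{proj}^{n}_{i}\circ\bfup{inj}^{n}_{i_0})\tensor(\bfup{proj}^{m}_{j}\circ\bfup{inj}^{m}_{j_0})$, which is $\id$ exactly when $(i_0,j_0)=(i,j)$ and $\zero$ otherwise. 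Conjugating by the normalization isomorphisms relocates this single $\id$ to the position the lexicographic order assigns to $(i,j)$, giving $\bfup{proj}^{n\times m}_{(i-1)\times n+j}$, and \Cref{lem:coef_by_coef} again concludes.

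The main obstacle will be the index bookkeeping in the tensor case: one must check that the block numbering produced by the distributor-driven normalization of \Cref{fig:formal_object_normal_form} matches the index stated in the lemma, and that the accumulated instances of $\bfup{dist}_\ell$, $\bfup{dist}_r$ and $m^{\tensor}$ really compose to this reindexing. Since object-level reasoning is unsafe here — distinct structural morphisms can share the same underlying objects of $\bfup{H}$, as the counterexample to naive coherence earlier in this appendix warns — the robust way to discharge it is to carry the computation out on formal morphisms, where equalities are settled syntactically by coherence (\Cref{thm:maclane_monoidal} for the associativity and unit parts, and Laplaza's coherence for distributivity \cite{laplaza72}) and then transported to $\bfup{H}$ through the forgetful functor $U$. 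Everything else reduces to the routine biproduct identities $\bfup{proj}^{n}_{j}\circ\bfup{inj}^{n}_{i}=\id$ when $i=j$ and $\zero$ otherwise.
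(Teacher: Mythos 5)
Your strategy is right in outline---coefficient-wise comparison via \Cref{lem:coef_by_coef}, with the $\bfup{inj}$ statements obtained by the dual computation---but as written it defers, rather than proves, the lemma's actual content. The whole point of \Cref{lem:proj_inj_matrix} is to establish how the normalization isomorphisms ($m^{\oplus}$, and $n_{A\tensor B}$ built from $\dist_\ell$, $\dist_r$, $m^{\tensor}$) interact with $\bfup{proj}$ and $\bfup{inj}$, and both of your discharge mechanisms presuppose exactly that. First, Mac Lane's theorem (\Cref{thm:maclane_monoidal}) and Laplaza's coherence govern only the monoidal and distributive structural morphisms; the equalities you need involve $\pi_\ell$, $\pi_r$, $\iota_\ell$, $\iota_r$ and $\zero$, which are biproduct data outside the scope of those theorems. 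Worse, in this paper the coherence-style results for the distributive setting (\Cref{prop:maclane_bimonoidal,prop:maclane_distributive}) and the Kronecker-product description of $\N(f\tensor h)$ (\Cref{lem:formal_normalization_monoidal}) are themselves proved \emph{from} \Cref{lem:proj_inj_matrix}, so invoking any of them here is circular. Second, the coefficient extraction itself hides the same circularity: to read off the coefficient of $\N(\bfup{proj}_{i}^{n}\tensor\bfup{proj}_{j}^{m})$ at block $k$ you must commute $\bfup{inj}_{k}^{n\times m}$ past $n_{A\tensor B}^{-1}$, which is precisely the $\bfup{inj}$ half of the tensor clause you are proving; likewise ``the evident reassociation'' in the $\oplus$ case is the claim $\pi_\ell \circ m^{\oplus} = \pi_\ell^m$ (and its $\pi_r$ counterpart $\pi_r \circ m^{\oplus} = ((\pi_r\oplus\id)\dots\oplus\id)$), which does not follow from the routine identities $\bfup{proj}_{j}^{n}\circ\bfup{inj}_{i}^{n}=\id$ or $\zero$.

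What closes the gap---and what the paper's proof consists of---is a direct computation along the inductive definition of $n_A$ in \Cref{fig:formal_object_normal_form}: unfold $\bfup{proj}_i^n$ as $\pi_r\circ\pi_\ell^{(n-i)}$ (or $\pi_\ell^n$), unfold $m^{\oplus}$ and $n_{A\tensor B}^{-1}$ explicitly, and push the projection through one constituent morphism at a time using $\pi_\ell\circ(f\oplus g)=f\circ\pi_\ell$, $\pi_\ell\circ\alpha^{\oplus}=\pi_\ell\circ\pi_\ell$, $\pi_r\circ\alpha^{\oplus-1}=\pi_r\oplus\id$, $(\pi_\ell\tensor\id)\circ\dist_r^{-1}=\pi_\ell$ and $(\id\tensor\pi_\ell)\circ\dist_\ell^{-1}=\pi_\ell$; no coherence theorem is needed or available at this stage. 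Your swap-based reduction of the $\bfup{proj}_{n+j}^{n+m}$ case saves no work either, since relating $n_{B\oplus A}\circ\sigma^{\oplus}$ to $n_{A\oplus B}$ is block bookkeeping of the same kind (and the fact that $\N(\sigma^{\oplus})$ is a permutation matrix is again downstream of this lemma); the paper handles that case directly via the $\pi_r$ identities. One last bookkeeping point: the lexicographic numbering in the tensor case yields $(i-1)\times m+j$, since there are $m$ blocks of $B$ for each block of $A$; this is what the paper's computation produces, the index $(i-1)\times n+j$ in the statement being a typo that your sketch inherited.
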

\begin{proof}
	We will focus on the equations satisfied by $\bfup{proj}$, as the equations satisfied by $\bfup{inj}$ can be derived following exactly the same logic. We assume that $A \neq \ozero$ ($n > 0$) and $B \neq \ozero$ ($m > 0$) as otherwise the equations are trivially true.
	
	We start by unfolding the explicit definition of $\bfup{proj}_i^n$:
	\[\bfup{proj}_i^n = \begin{cases}  \pi_\ell^n &\text{if } i=0 \\ \pi_r \circ \pi_\ell^{(n-i)} & \text{if } i > 1 \end{cases} : A \to \bigtensor_{i'=1}^{n_i'}\underline{H_{i,i'}} \]
	Then, since $\N(A) = A$ and $\N(B) = B$, we can rewrite more explicitly the equations that we want to prove:
	\[  \bfup{proj}_{i}^{n+m} \stackrel{?}{=} \rho^\oplus \circ (\bfup{proj}_{i}^{n} \oplus \zero) \circ m^{\oplus} \]
	\[  \bfup{proj}_{n+j}^{n+m} \stackrel{?}{=} \lambda^\oplus \circ (\zero \oplus \bfup{proj}_{j}^{m}) \circ  m^\oplus \]		
	\[  \bfup{proj}_{(i-1)\times n + j}^{n\times m} \stackrel{?}{=} m^{\tensor} \circ (\bfup{proj}_{i}^{n} \tensor \bfup{proj}_{j}^{m}) \circ n_{A \tensor B}^{-1}  \]
	We recall that $\rho^\oplus$ is simply $\pi_\ell$, and that $\pi_\ell$ satisfies $\pi_\ell \circ (f \oplus g) = f \circ \pi_\ell$. As such, the first equation can be rewritten as:	
	\[  \bfup{proj}_{i}^{n+m} \stackrel{?}{=} \bfup{proj}_{i}^{n} \circ \pi_\ell \circ m^{\oplus} \]
	Then, we look at $m^{\oplus} : \N(A \oplus B) \to A \oplus B$ from that equation:
	\begin{itemize}
		\item If $B = \bigtensor_{j'=1}^{m_1'} \underline{K_{1,j'}}$ then $m=1$ and $m^{\oplus}$ is the identity, hence		
		\[ \begin{array}{lll} \bfup{proj}_{i}^{n} \circ \pi_\ell \circ m^{\oplus} &= & \bfup{proj}_{i}^{n} \circ \pi_\ell \\ &=& \bfup{proj}_{i}^{n+1} \\ &=&  \bfup{proj}_{i}^{n+m} \end{array} \]
		\item Otherwise $m > 1$, then $m^{\oplus}$ is simply a combination of $m-1$ associators, so by iterating $\pi_\ell \circ \alpha = \pi_\ell \circ \pi_\ell$ we obtain $\pi_\ell \circ m^{\oplus} = \pi_\ell^m$, hence		
		\[ \begin{array}{lll} \bfup{proj}_{i}^{n} \circ \pi_\ell \circ m^{\oplus} &= & \bfup{proj}_{i}^{n} \circ \pi_\ell^m \\ &=&  \bfup{proj}_{i}^{n+m} \end{array} \]
	\end{itemize}
Similarly to $\rho^\oplus$, we recall that $\lambda^\oplus$ is simply $\pi_r$, and that $\pi_r$ satisfies $\pi_r \circ (f \oplus g) = g \circ \pi_r$. As such, the first equation can be rewritten as:	
\[  \bfup{proj}_{i}^{n+m} \stackrel{?}{=} \bfup{proj}_{j}^{m} \circ \pi_r \circ m^{\oplus} \]
Then, we look at $m^{\oplus} : \N(A \oplus B) \to A \oplus B$ from that equation it is simply a combination of $m$ inverses of associators, so by iterating $\pi_r \circ \alpha^{\oplus-1} = \pi_r \oplus \id$ we obtain $\pi_r \circ m^{\oplus} = ((\pi_r \oplus \id) \dots \oplus \id)$. Hence
\[\bfup{proj}_{j}^{m} \circ \pi_r \circ m^{\oplus} = \bfup{proj}_{i}^{m} \circ ((\pi_r \oplus \id) \dots \oplus \id) \]\[ =  \begin{cases}  \pi_\ell^m \circ ((\pi_r \oplus \id) \dots \oplus \id) &\text{if } j=0 \\ \pi_r \circ \pi_\ell^{(m-j)} \circ  ((\pi_r \oplus \id) \dots \oplus \id) & \text{if } j > 1 \end{cases} \] \[ \begin{array}{ccc} &=&   \begin{cases}   \pi_r \circ \pi_\ell^m &\text{if } j=0 \\ \pi_r \circ \pi_\ell^{(m-j)} & \text{if } j > 1 \end{cases} \\ &=&  \pi_r \circ \pi_\ell^{(m-j)} \\ &=&  \pi_r \circ \pi_\ell^{(n+m-n-j)}  \\ &=&  \bfup{proj}_{n+j}^{n+m}  \end{array} \]
Now, for the last equation, we take a closer look at $n^{-1}_{A \tensor B}$ and relying on $n_A = \id$ we start to unfold its definition as:

\[ \begin{array}{rl} & n^{-1}_{A \tensor B} \\
= & \dist_r^{-1} \\ \circ& \left(\id \oplus n^{-1}_{\bigtensor_{i'=n}^{n_n'} \underline{H_{n,i'}} \tensor B} \right) \\
\circ & \dots \\ 
\circ & 
\left(
\left(
\dist_r^{-1} \oplus \id
\right) 
\dots \oplus \id
\right) 
\\\circ &
\left(
\left(
\left(
n^{-1}_{\bigtensor_{i'=1}^{n_1'} \underline{H_{1,i'}}\tensor B} \oplus n^{-1}_{\bigtensor_{i'=2}^{n_2'} \underline{H_{2,i'}} \tensor B}
\right) 
\oplus \id
\right) 
\dots \oplus \id
\right)
\\\circ& m^\oplus \end{array}  \]
Then, using $(\pi_\ell \tensor \id) \circ \dist_r^{-1} = \pi_\ell$ and $(\pi_r \tensor \id) \circ \dist_r^{-1} = \pi_r$, we have
\[ (\bfup{proj}_{i}^{n} \tensor \id) \circ n_{A \tensor B}^{-1}= n^{-1}_{A \tensor \bigtensor_{j'=j}^{m_j'} \underline{K_{j,j'}}} \circ \bfup{proj}_{i}^{n} \circ m^{\oplus}\]
We take a closer look at $n^{-1}_{A \tensor \bigtensor_{j'=j}^{m_j'} \underline{K_{j,j'}}}$ and relying on $n_B = \id$ we start to unfold its definition as:
\[ \begin{array}{rl} & n^{-1}_{A \tensor \bigtensor_{j'=j}^{m_j'}} \\
= & \dist_\ell^{-1} \\ \circ& \left(\id \oplus m^{\tensor} \right) \\
\circ & \dots \\ 
\circ & 
\left(
\left(
\dist_\ell^{-1} \oplus \id
\right) 
\dots \oplus \id
\right) 
\\\circ &
\left(
\left(
\left(m^{\tensor} \oplus m^{\tensor} 
\right) 
\oplus \id
\right) 
\dots \oplus \id
\right) \end{array}  \]
Then, using $(\id \tensor \pi_\ell) \circ \dist_\ell^{-1} = \pi_\ell$ and $(\id \tensor \pi_r) \circ \dist_\ell^{-1} = \pi_r$, we have
\[(\id \tensor \bfup{proj}_{j}^{m}) \circ n^{-1}_{\bigtensor_{i'=i}^{n_i'} \underline{H_{i,i'}} \tensor B} = m^{\tensor} \circ \bfup{proj}_{j}^{m} \]
Hence 
\[ m^{\tensor} \circ (\bfup{proj}_{i}^{n} \tensor \bfup{proj}_{j}^{m}) \circ n_{A \tensor B}^{-1} =m^{\tensor} \circ m^{\tensor} \circ \bfup{proj}_{j}^{m} \circ \bfup{proj}_{i}^{n} \circ m^{\oplus} \]
We note that in this case, the two successive $m^{\tensor}$ are inverse of one another. Additionally, $m^{\oplus}$ from that equation it is simply a combination of associators, so by iterating $\pi_\ell \circ \alpha = \pi_\ell \circ \pi_\ell$ we obtain
\[  \bfup{proj}_{j}^{m} \circ \bfup{proj}_{i}^{n} \circ m^{\oplus} = \bfup{proj}_{(i-1) \times m + j}^{n \times m}\]
Hence the expected result:
\[ m^{\tensor} \circ (\bfup{proj}_{i}^{n} \tensor \bfup{proj}_{j}^{m}) \circ n_{A \tensor B}^{-1} = \bfup{proj}_{(i-1) \times m + j}^{n \times m}\]
\end{proof}

\begin{lemma}\label{lem:formal_normalization_monoidal}
	We consider $f : A \to B$, $g : B \to C$ and $h : C \to W$, and write $\N(f)$, $\N(g)$ and $\N(h)$ as:
	\[ \left(f_{i,j} \right)_{\substack{1 \leq i \leq n\\1 \leq j \leq m}} \quad \left(g_{j,k} \right)_{\substack{1 \leq j \leq m\\1 \leq k \leq p}} \quad  \left(h_{k,\ell} \right)_{\substack{1 \leq k \leq p\\1 \leq \ell \leq q}} \]
	The composition corresponds to the usual product of matrices:
	\[ \N(g \circ f) = \left(\sum_{j=1}^m g_{j,k} \circ f_{i,j} \right)_{\substack{1 \leq i \leq n\\1 \leq k \leq p}} \]	
	The biproduct corresponds ot the usual direct sum of matrices:
	\[ \N(f \oplus h) = \begin{pmatrix}
	\left(f_{i,j} \right)_{\substack{1 \leq i \leq n\\1 \leq j \leq m}}
	& \left(\zero \right)_{\substack{1 \leq k \leq p\\1 \leq j \leq m}} \\
	\left(\zero \right)_{\substack{1 \leq i \leq n\\1 \leq \ell \leq q}} &
	\left(h_{k,\ell} \right)_{\substack{1 \leq k \leq p\\1 \leq \ell \leq q}}
	\end{pmatrix}\]
	The tensor corresponds to the usual Kronecker product of matrices: $\N(f \tensor h) =$
	\[ \begin{pmatrix}
	\left(\N(f_{1,1} \tensor h_{k,\ell}) \right)_{\substack{1 \leq k \leq p\\1 \leq \ell \leq q}} & \dots 
	& \left(\N(f_{n,1} \tensor h_{k,\ell}) \right)_{\substack{1 \leq k \leq p\\1 \leq \ell \leq q}}\\
	\vdots & & \vdots \\
	\left(\N(f_{1,m} \tensor h_{k,\ell}) \right)_{\substack{1 \leq k \leq p\\1 \leq \ell \leq q}}& \dots &
	\left(\N(f_{n,m} \tensor h_{k,\ell}) \right)_{\substack{1 \leq k \leq p\\1 \leq \ell \leq q}}
	\end{pmatrix}\]
\end{lemma}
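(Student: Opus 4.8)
The plan is to reduce every one of the three claims to the coefficient-by-coefficient criterion of \Cref{lem:coef_by_coef}: two morphisms between normal-form objects coincide as soon as all their coefficients $\bfup{proj}_j \circ (-) \circ \bfup{inj}_i$ agree. The engine that makes this work is \Cref{lem:proj_inj_matrix}, which rewrites the projections and injections of $A \oplus C$ and of $A \tensor C$ in terms of those of the factors $A$ and $C$, together with the bifunctoriality of $\oplus$ and $\tensor$. Each composite coefficient then collapses to a coefficient of the factor matrices. Throughout I use that $\N$ is a functor, so that $\N(u) \circ \N(v) = \N(u \circ v)$, and that the matrix attached (via \Cref{def:matrix}) to a composite of morphisms between normal-form objects is the product of the corresponding matrices.

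Composition is then almost immediate. Since $\N$ is a functor, $\N(g \circ f) = \N(g) \circ \N(f)$ as morphisms $\N(A) \to \N(C)$; viewing both sides as matrices and applying the matrix-composition rule stated just after \Cref{def:matrix} yields exactly the coefficient $\sum_{j=1}^m g_{j,k} \circ f_{i,j}$ in position $(i,k)$, which is the asserted product.

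For the biproduct I extract the coefficient of $\N(f \oplus h)$ in position $(I,J)$ with $I \in \{1,\dots,n+p\}$ and $J \in \{1,\dots,m+q\}$; the normalization of \Cref{fig:formal_object_normal_form} gives $\N(A \oplus C) = \N(A) \oplus \N(C)$ and $\N(B \oplus W) = \N(B) \oplus \N(W)$, so the block counts are indeed $n+p$ and $m+q$. By \Cref{lem:proj_inj_matrix} the relevant $\bfup{inj}_I$ and $\bfup{proj}_J$ are the normalizations of $\bfup{inj}^n_i \oplus \zero$ or $\zero \oplus \bfup{inj}^p_k$ (and likewise for the projections), depending on which block range the index lies in. Pushing these through $f \oplus h$ with the bifunctoriality of $\oplus$, and using that $\zero$ is absorbing for composition (so $\N(\zero) = \zero$ and every cross term involving a wrong summand dies), leaves $f_{I,J}$ on the two diagonal blocks and $\zero$ on the two off-diagonal blocks, which is precisely the claimed block-diagonal matrix. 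The only extra care is the degenerate case $\N(A) = \ozero$ or $\N(C) = \ozero$ (i.e.\ $n=0$ or $p=0$), where the statement holds trivially.

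The tensor case follows the same pattern and is where the bookkeeping is heaviest. Using the index bijections $(i,k) \mapsto (i-1)p+k$ and $(j,\ell) \mapsto (j-1)q+\ell$ between pairs and the blocks of $\N(A \tensor C)$ (resp.\ $\N(B \tensor W)$), \Cref{lem:proj_inj_matrix} gives $\bfup{inj}_{(i-1)p+k} = \N(\bfup{inj}^n_i \tensor \bfup{inj}^p_k)$ and $\bfup{proj}_{(j-1)q+\ell} = \N(\bfup{proj}^m_j \tensor \bfup{proj}^q_\ell)$. Functoriality of $\N$ followed by the bifunctoriality of $\tensor$ then factors the coefficient as $(\bfup{proj}^m_j \circ f \circ \bfup{inj}^n_i) \tensor (\bfup{proj}^q_\ell \circ h \circ \bfup{inj}^p_k) = f_{i,j} \tensor h_{k,\ell}$, whose normalization $\N(f_{i,j} \tensor h_{k,\ell})$ is exactly the block appearing in the Kronecker product. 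The main obstacle is therefore administrative rather than conceptual: keeping the index bijections consistent with the block ordering produced by \Cref{fig:formal_object_normal_form} (noting in particular that the conventions of \Cref{def:matrix} place the codomain index on rows and the domain index on columns), and discharging the empty-sum edge cases so that the block counts $np$ and $mq$ genuinely match on both sides.
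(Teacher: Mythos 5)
Your proposal is correct and follows essentially the same route as the paper: the composition case by functoriality of $\N$ (the paper unfolds $\N(g)\circ\N(f)=n_C\circ g\circ n_B^{-1}\circ n_B\circ f\circ n_A^{-1}$ and cancels), and the biproduct and tensor cases by the coefficient-by-coefficient analysis combining \Cref{lem:coef_by_coef} with \Cref{lem:proj_inj_matrix}. You merely spell out the bookkeeping (bifunctoriality, absorption of $\zero$, index bijections, degenerate cases) that the paper's one-line proof leaves implicit, so nothing further is needed.
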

\begin{proof}
	For the composition:
	\[ \begin{matrix} \N(g) \circ \N(f) &=& n_C \circ g \circ  n_B^{-1} \circ n_B \circ f \circ n_A^{-1}\\ & =&n_C \circ g  \circ f \circ n_A^{-1} \\&= &\N(g \circ f) \end{matrix}  \]
	For the biproduct and the tensor, we rely on \Cref{lem:proj_inj_matrix} and \Cref{lem:coef_by_coef} to make a coefficient-by-coefficient analysis. 
\end{proof}

\begin{proposition}\label{prop:maclane_bimonoidal}
	Let $f$ be a formal morphism obtained from $\lambda^{\tensor}$, $\lambda^{\tensor-1}$, $\alpha^{\tensor}$, $\alpha^{\tensor-1}$, $\lambda^{\oplus}$, $\lambda^{\oplus-1}$, $\alpha^{\oplus}$, $\alpha^{\oplus-1}$, through composition, $\tensor$, and $\oplus$. Then $\N(f) = \id$. It follows that any two such formal morphisms $f,g : A \to B$ are necessarily equal.
\end{proposition}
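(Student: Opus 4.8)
The plan is to prove $\N(f)=\id$ by structural induction on the way $f$ is assembled from the eight generators using $\circ$, $\tensor$ and $\oplus$, and then to derive the uniqueness clause from the fact that $f\mapsto\N(f)$ is injective on hom-sets. The entire induction is powered by \Cref{lem:formal_normalization_monoidal}, which tells us that $\N$ turns $\circ$ into the product of matrices, $\oplus$ into the block-diagonal direct sum of matrices, and $\tensor$ into the Kronecker product of matrices.

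For the inductive step there is nothing to do beyond linear algebra. If $\N(f)=\id$ and $\N(g)=\id$ with $g\circ f$ defined, then $\N(g\circ f)=\N(g)\circ\N(f)$ is the product of two identity matrices, hence the identity; the block-diagonal sum of two identity matrices is again an identity matrix, so $\N(f\oplus h)=\id$; and $\N(f\tensor h)=\id$ since the Kronecker product of two identity matrices is again an identity matrix. Consequently it suffices to treat the eight generators as base cases.

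The base cases are where the real work lies. For the left-unitors this is short: writing $A=\oone\tensor H$ and using $n_{\oone}=\id$, the relevant clause of \Cref{fig:formal_object_normal_form} gives $n_{\oone\tensor H}=\lambda^{\tensor}\circ(\id\tensor n_H)$, which by naturality of $\lambda^{\tensor}$ equals $n_H\circ\lambda^{\tensor}$; substituting into $\N(\lambda^{\tensor})=n_H\circ\lambda^{\tensor}\circ n_{\oone\tensor H}^{-1}$ collapses it to $\id$, and $\lambda^{\oplus}$ is handled identically using $n_{\ozero}=\id$. For the associators one first checks that the two sides have the same normal form — because $\oplus$-normalization concatenates block lists and $\tensor$-normalization concatenates factor lists, both of which are associative — so that $\id_{\N(A)}$ is indeed the claimed value; one then verifies $\N(\alpha^{\oplus})=\id$ and $\N(\alpha^{\tensor})=\id$ coefficient by coefficient via \Cref{lem:coef_by_coef,lem:proj_inj_matrix}, by a secondary induction on the structure of the sub-objects $H,K,L$ and unfolding $\N$ through \Cref{lem:formal_normalization_monoidal}.

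I expect the $\tensor$-associator to be the main obstacle: when $H$, $K$ or $L$ themselves contain $\oplus$, computing $n_A$ and $n_B$ forces the distributors into play, and the two bracketings $(H\tensor K)\tensor L$ and $H\tensor(K\tensor L)$ distribute in different nested orders. Showing that the resulting matrix is still the identity is exactly the point where distributive coherence is needed; here \Cref{lem:dist_diag}, together with the naturality of the distributors and the fixed lexicographic convention in \Cref{fig:formal_object_normal_form}, guarantees that the two orders of distribution produce the same block list and the identity reindexing. Finally, the uniqueness statement is immediate: since $\N(f)=n_B\circ f\circ n_A^{-1}$ with $n_A,n_B$ isomorphisms, the assignment $f\mapsto\N(f)$ is a bijection on $\bfup{FormH}(A,B)$, so $\N(f)=\id=\N(g)$ forces $f=g$.
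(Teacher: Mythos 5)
Your proposal is correct and takes essentially the same route as the paper's own proof: induction on the structure of $f$, with the inductive step discharged by \Cref{lem:formal_normalization_monoidal} (products, direct sums and Kronecker products of identity matrices are again identities) and the base cases verified coefficient by coefficient via \Cref{lem:proj_inj_matrix,lem:coef_by_coef}. The extra detail you supply — the naturality argument for the unitors, the secondary induction with \Cref{lem:dist_diag} for the $\tensor$-associator, and deriving uniqueness from the injectivity of $f \mapsto \N(f) = n_B \circ f \circ n_A^{-1}$ — simply makes explicit what the paper's terse proof leaves implicit.
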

\begin{proof}
	We proceed by induction on $f$. For the base cases, we rely on \Cref{lem:proj_inj_matrix} to check that every coefficient of $\N(f)$ is $\id$ on the diagonal and $\zero$ otherwise. For the inductive case, we directly use \Cref{lem:formal_normalization_monoidal}.
\end{proof}

As a direct application of this proposition, we can define $\interp{A \isoML B}$ in a unique way:

\begin{definition}\label{def:maclane_blue}
	For $A$, a color of our graphical language, we define $\interp{A}_{\bfup{Form}}$ as the following formal object:
	\[ \begin{array}{cc} \interp{\tzero}_{\bfup{Form}} = \ozero & \interp{A \oplus B}_{\bfup{Form}} =  \interp{A}_{\bfup{Form}}  \oplus  \interp{B}_{\bfup{Form}} \\  \interp{\tone}_{\bfup{Form}} = \oone & \interp{A \tensor B}_{\bfup{Form}} =  \interp{A}_{\bfup{Form}}  \tensor  \interp{B}_{\bfup{Form}}   \end{array} \]
	We have $\interp{A} = U(\interp{A}_{\bfup{Form}})$. 
	If we assume $A  \isoML B$ as defined in \Cref{sec:objects_category_tensor_plus}, then there exists a unique formal morphism $\interp{A \isoML B}_{\bfup{Form}} : \interp{A} \to \interp{B}$ such that $\N(\interp{A \isoML B}_{\bfup{Form}}) = \id$. We then simply take $\interp{A \isoML B} = U(\interp{A \isoML B}_{\bfup{Form}})$.
\end{definition}

\begin{proposition}\label{prop:maclane_distributive}
	Let $f$ be a formal morphism obtained through composition, $\tensor$, and $\oplus$ from:
	\begin{itemize}
		\item $\lambda^{\tensor}$, $\lambda^{\tensor-1}$, $\alpha^{\tensor}$, $\alpha^{\tensor-1}$, $\lambda^{\oplus}$, $\lambda^{\oplus-1}$, $\alpha^{\oplus}$, $\alpha^{\oplus-1}$,
		\item $\sigma^{\oplus}$, $\dist_\ell$, $\dist_\ell^{-1}$, $\dist_r$, $\dist_r^{-1}$, \item $\zero$ whenever it is $A \tensor  \ozero \to \ozero$, $\ozero \tensor A \to \ozero$, $\ozero \to \ozero \tensor A$ or $\ozero \to A \tensor \ozero$.
	\end{itemize}
	Then $\N(f)$ is a permutation matrix, that is exactly one coefficient in every row and every column is $\id$ and the others are $\zero$. It follows that any two such formal morphisms $f,g : A \to B$ are equal if and only if they correspond to the same permutation.
\end{proposition}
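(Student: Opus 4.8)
The plan is to follow the same strategy as the proof of \Cref{prop:maclane_bimonoidal}, proceeding by structural induction on $f$ and computing the matrix $\N(f)$ coefficient by coefficient. Since $\N$ is full and faithful, it suffices to show that $\N(f)$ is a permutation matrix for every such $f$; the final equivalence then drops out, because two formal morphisms $f,g : A \to B$ satisfy $f = g$ if and only if $\N(f) = \N(g)$, and two permutation matrices between the same normal-form objects coincide exactly when they realise the same permutation (each $\id$ entry being forced to be the identity on the matching summand, and the rest being $\zero$).

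For the base cases I would treat the three bullets separately. The generators of the first bullet, $\lambda^{\tensor},\alpha^{\tensor},\lambda^{\oplus},\alpha^{\oplus}$ and their inverses, are already handled by \Cref{prop:maclane_bimonoidal}, which gives $\N = \id$, the identity permutation. For $\sigma^{\oplus}$ and the distributors $\dist_\ell,\dist_r$ and their inverses, I would compute each coefficient $\bfup{proj}_{j} \circ \N(f) \circ \bfup{inj}_{i}$ using \Cref{lem:proj_inj_matrix} and \Cref{lem:coef_by_coef}, reading the normal forms of source and target off \Cref{fig:formal_object_normal_form}; in each case exactly one coefficient per row and per column is $\id$ and the rest are $\zero$ (the block transposition for $\sigma^{\oplus}$, and the reindexing induced by distribution for $\dist_\ell,\dist_r$). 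Finally, whenever $\zero$ is one of the listed annihilators, either the source or the target normalises to $\ozero$, i.e.~to the empty biproduct, so $\N(f)$ is an empty matrix and is vacuously a permutation matrix.

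For the inductive step I would invoke \Cref{lem:formal_normalization_monoidal}, which expresses $\N(g\circ f)$, $\N(f\oplus h)$ and $\N(f\tensor h)$ as the matrix product, the block-diagonal sum, and the Kronecker product of $\N(g),\N(f)$ and $\N(h)$ respectively. It then remains to check that permutation matrices with entries in $\{\id,\zero\}$ are closed under these three operations. The product and direct-sum cases are immediate. For the Kronecker case each block is $\N(f_{i,j}\tensor h_{k,\ell})$; by the induction hypothesis each $f_{i,j}$ and $h_{k,\ell}$ is $\id$ or $\zero$, so $f_{i,j}\tensor h_{k,\ell}$ is $\id$ (when both are identities) or $\zero$ (using $\zero\tensor f = \zero$ and $f\tensor\zero = \zero$ from \Cref{prop:semimodule}), and $\N$ sends these to the $1\times 1$ matrices $\id$ and $\zero$; the resulting array is then exactly the Kronecker product of two permutation matrices, hence again a permutation matrix.

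The main obstacle I anticipate is the base-case bookkeeping for the distributors: unlike the associators and unitors, $\dist_\ell$ and $\dist_r$ genuinely reshuffle the summands, so one must trace the recursive clauses of \Cref{fig:formal_object_normal_form} (in particular the priority given to the right-distributor and the lexicographic reading of pairs) to confirm that the induced reindexing is a bijection and that every matched coefficient is the identity on the correct tensor block. Everything else reduces to the routine closure properties of permutation matrices together with the already-established faithfulness of $\N$, from which the closing \emph{if and only if} follows directly.
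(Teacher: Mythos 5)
Your proposal is correct and follows essentially the same route as the paper: induction on the structure of $f$, with base cases verified coefficient-by-coefficient via \Cref{lem:proj_inj_matrix} (and \Cref{lem:coef_by_coef}), the inductive step handled by \Cref{lem:formal_normalization_monoidal}, and the closing equivalence obtained from the faithfulness of $\N$. Your additional bookkeeping (the empty-matrix treatment of the annihilator cases and the closure of permutation matrices under product, direct sum, and Kronecker product) simply makes explicit what the paper's terse proof leaves implicit.
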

\begin{proof}
	We proceed by induction on $f$. For the base cases, we rely on \Cref{lem:proj_inj_matrix} to check that every coefficient of $\N(f)$ is indeed $\id$ or $\zero$ with exactly one $\id$ per column and row. For the inductive case, we directly use \Cref{lem:formal_normalization_monoidal}.
\end{proof}

\subsection{The Parallel Monoidal Structure}

In the core of the paper, we claim that we can derive a third monoidal product that is the superposition of the other two. We prove that claim in this subsection. 

We now consider $(\bfup{H},\tensor,\oone,\oplus,\ozero)$ to be a distributive category as defined above, and additionally assume that $\tensor$ is symmetric.

\begin{proposition}\label{prop:parallel_is_monoidal}
	We define $A|B = (A \tensor B) \oplus (A \oplus B)$, and similarly $f|g = (f \tensor g) \oplus (f \oplus g)$. It is a bifunctor, and $(\bfup{H},|,\ozero)$ is a symmetric monoidal category.
\end{proposition}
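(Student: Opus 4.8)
The plan is to first check that $\mid$ is a bifunctor, and then to exhibit the four structural natural isomorphisms $\lambda^\mid,\rho^\mid,\alpha^\mid,\sigma^\mid$ together with their coherence laws, reducing every verification to the normalization machinery of \Cref{prop:maclane_distributive} wherever $\sigma^\tensor$ does not intervene, and to the symmetric monoidal coherence of $\tensor$ itself where it does.

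Bifunctoriality is immediate. Since $\id_A\mid\id_B=(\id_A\tensor\id_B)\oplus(\id_A\oplus\id_B)=\id_{A\tensor B}\oplus\id_{A\oplus B}=\id_{A\mid B}$ by functoriality of $\tensor$ and $\oplus$, and since
\[(f'\circ f)\mid(g'\circ g)=\big((f'\tensor g')\circ(f\tensor g)\big)\oplus\big((f'\oplus g')\circ(f\oplus g)\big)=(f'\mid g')\circ(f\mid g),\]
where the first equality is bifunctoriality of $\tensor$ and $\oplus$ and the second is functoriality of $\oplus$ on composites, $\mid$ is a bifunctor.

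Next I would define the structural isomorphisms by expanding the definition, using associativity of $\oplus$ to write $A\mid B\cong(A\tensor B)\oplus A\oplus B$. The unitors come from the annihilator and the $\oplus$-unitors: since $\ozero\tensor H\cong\ozero$ and $\ozero\oplus H\cong H$, I set $\lambda^\mid:\ozero\mid H\to H$, and symmetrically $\rho^\mid:H\mid\ozero\to H$ using $H\tensor\ozero\cong\ozero$ and $H\oplus\ozero\cong H$. For the associator, expanding $(A\mid B)\mid C$ and $A\mid(B\mid C)$ with the distributors yields, respectively, the summands $\{(A\tensor B)\tensor C,\,A\tensor C,\,B\tensor C,\,A\tensor B,\,A,\,B,\,C\}$ and $\{A\tensor(B\tensor C),\,A\tensor B,\,A\tensor C,\,A,\,B\tensor C,\,B,\,C\}$, which agree up to $\alpha^\tensor$ on the triple-tensor summand and a rearrangement of the remaining summands. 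Hence $\alpha^\mid$ is built from $\alpha^\tensor,\alpha^\oplus,\sigma^\oplus$, the distributors, the annihilators and the $\oplus$-unitors, \emph{without} any use of $\sigma^\tensor$. By contrast, $\sigma^\mid:A\mid B\to B\mid A$ acts as $\sigma^\tensor$ on the $(A\tensor B)$ summand and as $\sigma^\oplus$ on the $A\oplus B$ part, so it \emph{does} invoke $\sigma^\tensor$. The pentagon, the triangle, and the naturality of $\alpha^\mid,\lambda^\mid,\rho^\mid$ are then settled directly by \Cref{prop:maclane_distributive}: every morphism in these diagrams is a formal morphism of the kind covered there, so it normalizes to a permutation matrix, and both composites around each diagram induce the same permutation of summands; naturality follows from naturality of the components via \Cref{lem:formal_normalization_monoidal}.

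The main obstacle is the symmetry coherence, namely the two hexagons, the involution $\sigma^\mid\circ\sigma^\mid=\id$, and the naturality of $\sigma^\mid$, since these involve $\sigma^\tensor$, which is deliberately excluded from \Cref{prop:maclane_distributive}. The plan here is a coefficient-by-coefficient (blockwise) analysis using the normal form of \Cref{fig:formal_object_normal_form}: after normalization each morphism in these diagrams is a matrix whose nonzero entries are $\id$ or $\sigma^\tensor$. Restricting each coherence diagram to the block of leaves carrying a genuine tensor (for instance the $A\tensor B\tensor C$ summand in the hexagon, where $\alpha^\mid$ restricts to $\alpha^\tensor$ and $\sigma^\mid$ to $\sigma^\tensor$) reduces it to the corresponding symmetric monoidal coherence law for $(\bfup{H},\tensor,\oone,\sigma^\tensor)$, i.e.\ the hexagon and the identity $\sigma^\tensor\circ\sigma^\tensor=\id$ of \cite{Joyal93}; on every remaining summand the diagram only permutes additive components and is dispatched by the permutation coherence of \Cref{prop:maclane_distributive} together with the symmetry coherence of $\sigma^\oplus$. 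Assembling these per-block identities through \Cref{lem:formal_normalization_monoidal} and \Cref{lem:coef_by_coef} yields the required equalities in $\bfup{H}$, completing the proof that $(\bfup{H},\mid,\ozero)$ is symmetric monoidal.
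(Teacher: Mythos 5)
Your proposal is correct and follows essentially the same route as the paper: the same definitions of the unitors, associator (built without $\sigma^{\tensor}$) and $\sigma^{\mid} = \sigma^{\tensor} \oplus \sigma^{\oplus}$, with the non-symmetric coherence laws dispatched by \Cref{prop:maclaze_distributive}\let\dummy\relax{} — more precisely \Cref{prop:maclane_distributive} — as equalities of permutations, and the symmetry coherence checked by composing normalized matrices via \Cref{lem:formal_normalization_monoidal}. Your only loose phrasing is that the ``remaining summands'' in the hexagon are not purely additive (e.g.\ $A \tensor C$ carries a binary $\sigma^{\tensor}$), but your blockwise matrix computation covers this anyway, exactly as in the paper's proof.
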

\begin{proof}
	We start by proving that is a monoidal category, postponing the symmetric part. We do the proof in $\bfup{FormH}$, as any equality proved between formal morphisms leads to an equality between the corresponding morphisms of $\bfup{H}$. The unitors are defined as follows:
	\[ \begin{array}{ll} \lambda^| : & \ozero | A \\
	& = \\
	& (\ozero \tensor A) \oplus (\ozero \oplus A) \\
	& \downarrow \zero \oplus \lambda^\oplus \\
	& \ozero \oplus A \\
	& \downarrow \lambda^{\oplus} \\
	& A \\
	\end{array} \qquad \begin{array}{ll} \rho^| : & A|\ozero \\
	& = \\
	& (A \tensor \ozero) \oplus (A \oplus \ozero) \\
	& \downarrow \zero \oplus \rho^\oplus \\
	& \ozero \oplus A \\
	& \downarrow \lambda^{\oplus} \\
	& A \\
	\end{array} \]
	Defining the associator properly is more complex. While we provide an explicit definition below (omitting the $\tensor$ for readability), we will instead work with its normalization.
	\[ \begin{array}{ll} \alpha^| : & (A | B)|C \\
	& = \\
	& \left(AB \oplus (A \oplus B)\right)C \oplus \left(\left(AB \oplus (A \oplus B)\right) \oplus C\right) \\
	& \downarrow \dist_r \oplus \id\\
	& \left((AB)C \oplus (A \oplus B)C\right) \oplus \left(\left(AB \oplus (A \oplus B)\right) \oplus C\right) \\
	&\downarrow (\alpha^{\tensor} \oplus \dist_r) \oplus \id \\	
	& \left(A(BC) \oplus (AC \oplus BC)\right) \oplus \left(\left(AB \oplus (A \oplus B)\right) \oplus C\right) \\
	& \downarrow m^{\oplus} \\
	& A(BC) \oplus \left(AC \oplus \left(\left(BC \oplus AB\right) \oplus \left(A \oplus \left(B \oplus C\right)\right)\right)\right) \\
	& \downarrow \id \oplus (\id \oplus (\sigma^\oplus \oplus \id)) \\
	& A(BC) \oplus \left(AC \oplus \left(\left(AB \oplus BC\right) \oplus \left(A \oplus \left(B \oplus C\right)\right)\right)\right) \\
	& \downarrow m^{\oplus} \\
	& \left(A(BC) \oplus (AC \oplus AB)\right) \oplus \left(\left(BC \oplus A\right) \oplus (B \oplus C)\right) \\
	& \downarrow (\id \oplus \sigma^{\oplus}) \oplus (\sigma^\oplus \oplus \id)\\
	& \left(A(BC) \oplus (AB \oplus AC)\right) \oplus \left(\left(A \oplus BC\right) \oplus (B \oplus C)\right) \\
	& \downarrow m^{\oplus} \\
	& \left(A(BC) \oplus (AB \oplus AC)\right) \oplus \left(A \oplus \left(B C \oplus (B \oplus C)\right)\right) \\
	& \downarrow (\id \oplus \dist_\ell^{-1}) \oplus \id \\
	& \left(A(BC) \oplus A(B \oplus C)\right) \oplus \left(A \oplus \left(B C \oplus (B \oplus C)\right)\right) \\
	& \downarrow \dist_\ell^{-1} \oplus \id \\
	& A\left(B C \oplus (B \oplus C)\right) \oplus \left(A \oplus \left(B C \oplus (B \oplus C)\right)\right) \\
	&=\\
	& A| (B|C) \\
	\end{array} \]
	And here is the matrix of $\N(\alpha^|)$, with annotation indicating what each row and column corresponds to, and leaving the cell empty when the morphism would be $\zero$:
	\[
	\begin{blockarray}{cccccccc}
			& \text{\scalebox{0.7}{$(A\tensor B) \tensor C$}} & \text{\scalebox{0.7}{$A\tensor C$}} & \text{\scalebox{0.7}{$B\tensor C$}} & \text{\scalebox{0.7}{$A \tensor B$}} & \text{\scalebox{0.7}{$A$}} & \text{\scalebox{0.7}{$B$}} & \text{\scalebox{0.7}{$C$}} \\
			\begin{block}{c(ccccccc)}
				\text{\scalebox{0.7}{$(A\tensor B)\tensor C$}}  & \id &&&&&&\\
				\text{\scalebox{0.7}{$A\tensor B$}}  &&&& \id &&& \\
				\text{\scalebox{0.7}{$A\tensor C$}}  && \id &&&&& \\
				\text{\scalebox{0.7}{$A$}}  &&&&& \id && \\
				\text{\scalebox{0.7}{$B\tensor C$}}  &&& \id &&&& \\
				\text{\scalebox{0.7}{$B$}}  &&&&&& \id & \\
				\text{\scalebox{0.7}{$C$}}  &&&&&&& \id  \\
			\end{block}
	\end{blockarray} 
	\]
	We note that it is a permutation matrix, which we already knew from \Cref{prop:maclane_distributive}. In fact, we can rely on \Cref{prop:maclane_distributive} to prove all the coherence laws of monoidal categories just by checking that both sides of the equality correspond to the same permutation. And said permutations can easily be computed using \Cref{lem:formal_normalization_monoidal}.	
	We note that all the unitors and associator are composed of natural isomorphisms, and hence are natural isomorphisms.

	We postponed the case of the symmetry, so we come back to it. We simply define 
	\[ \sigma^| = \sigma^{\tensor} \oplus \sigma^{\oplus} \]
	It is again a natural isomorphism as it is composed of natural isomorphisms. The matrix of $\N(\sigma^|)$ is:	
	\[
	\begin{blockarray}{cccc}
	& \text{\scalebox{0.7}{$A\tensor B$}}  & \text{\scalebox{0.7}{$A$}} & \text{\scalebox{0.7}{$B$}}  \\
	\begin{block}{c(ccc)}
	\text{\scalebox{0.7}{$B\tensor A$}}  &\sigma^{\tensor}&& \\
	\text{\scalebox{0.7}{$B$}}  && \id & \\
	\text{\scalebox{0.7}{$A$}}  && \id & \\
	\end{block}
	\end{blockarray} 
	\]
	The coherence laws associated to the symmetry can simply be checked by relying on the matrix notation and \Cref{lem:formal_normalization_monoidal} to compose those matrices.	
\end{proof}

\begin{figure*}
	\[ \interp{\tikzfig{lang/id}} :=
	\begin{blockarray}{cc}
	& \text{\scalebox{0.7}{$\interp{A}$}} \\
	\begin{block}{c(c)}
	\text{\scalebox{0.7}{$\interp{A}$}}  & \id \\
	\end{block}
	\end{blockarray} 
	\qquad \interp{\tikzfig{lang/swap}} :=
	\begin{blockarray}{cccc}
	& \text{\scalebox{0.7}{$\interp{A}\tensor \interp{B}$}} & \text{\scalebox{0.7}{$\interp{A}$}} & \text{\scalebox{0.7}{$\interp{B}$}} \\
	\begin{block}{c(ccc)}
	\text{\scalebox{0.7}{$\interp{B} \tensor \interp{A} $}}  & \sigma^{\tensor} &&\\
	\text{\scalebox{0.7}{$\interp{B}$}}  &&&\id\\
	\text{\scalebox{0.7}{$\interp{A}$}}  &&\id&\\
	\end{block}
	\end{blockarray} 
	\qquad \interp{\tikzfig{lang/scal}} :=
	\begin{blockarray}{cc}
	& \text{\scalebox{0.7}{$\interp{A}$}} \\
	\begin{block}{c(c)}
	\text{\scalebox{0.7}{$\interp{A}$}}  & s \cdot \id \\
	\end{block}
	\end{blockarray} 
	\]\[ \interp{\tikzfig{lang/plus}} :=
	\begin{blockarray}{cccc}
	& \text{\scalebox{0.7}{$\interp{A}\tensor \interp{B}$}} & \text{\scalebox{0.7}{$\interp{A}$}} & \text{\scalebox{0.7}{$\interp{B}$}} \\
	\begin{block}{c(ccc)}
	\text{\scalebox{0.7}{$\interp{A}$}}  &&\id&\\
	\text{\scalebox{0.7}{$\interp{B}$}}  &&&\id\\
	\end{block}
	\end{blockarray} 
	\qquad \interp{\tikzfig{lang/plus-inv}} :=
	\begin{blockarray}{ccc}
	& \text{\scalebox{0.7}{$\interp{A}$}} & \text{\scalebox{0.7}{$\interp{B}$}} \\
	\begin{block}{c(cc)}
	\text{\scalebox{0.7}{$\interp{A} \tensor \interp{B} $}}  &&\\
	\text{\scalebox{0.7}{$\interp{A}$}}  &\id&\\
	\text{\scalebox{0.7}{$\interp{B}$}}  &&\id\\
	\end{block}
	\end{blockarray} 
	\]\[ \interp{\tikzfig{lang/tensor-PN}} := 
	\begin{blockarray}{cccc}
	& \text{\scalebox{0.7}{$\interp{A}\tensor \interp{B}$}} & \text{\scalebox{0.7}{$\interp{A}$}} & \text{\scalebox{0.7}{$\interp{B}$}} \\
	\begin{block}{c(ccc)}
	\text{\scalebox{0.7}{$\interp{A} \tensor \interp{B} $}}  & \id &&\\
	\end{block}
	\end{blockarray} 
	\qquad \interp{\tikzfig{lang/tensor-PN-inv}} := 
	\begin{blockarray}{cc}
	& \text{\scalebox{0.7}{$\interp{A}\tensor \interp{B}$}}  \\
	\begin{block}{c(c)}
	\text{\scalebox{0.7}{$\interp{A} \tensor \interp{B} $}}  & \id\\
	\text{\scalebox{0.7}{$\interp{A}$}}  &\\
	\text{\scalebox{0.7}{$\interp{B}$}}  &\\
	\end{block}
	\end{blockarray} 
	\]\[\interp{\tikzfig{lang/contraction}} := 
	\begin{blockarray}{cccc}
	& \text{\scalebox{0.7}{$\interp{A}\tensor \interp{A}$}} & \text{\scalebox{0.7}{$\interp{A}$}} & \text{\scalebox{0.7}{$\interp{A}$}} \\
	\begin{block}{c(ccc)}
	\text{\scalebox{0.7}{$\interp{A}$}}  &&\id&\id\\
	\end{block}
	\end{blockarray} 
	\qquad \interp{\tikzfig{lang/contraction-inv}} := 
	\begin{blockarray}{cc}
	&  \text{\scalebox{0.7}{$\interp{A}$}} \\
	\begin{block}{c(c)}
	\text{\scalebox{0.7}{$\interp{A} \tensor \interp{A} $}}  &\\
	\text{\scalebox{0.7}{$\interp{A}$}}  &\id\\
	\text{\scalebox{0.7}{$\interp{A}$}}  &\id\\
	\end{block}
	\end{blockarray} 
	\qquad \interp{\tikzfig{lang/null}} := 
	\begin{blockarray}{cc}
	&  \\
	\begin{block}{c(c)}
	\text{\scalebox{0.7}{$\interp{A}$}}  & \\
	\end{block}
	\end{blockarray} 
	\qquad \interp{\tikzfig{lang/null-inv}} := 
	\begin{blockarray}{cc}
	& \text{\scalebox{0.7}{$\interp{A}$}} \\
	\begin{block}{c(c)}
	& \\
	\end{block}
	\end{blockarray} 
	\]\[ \interp{\tikzfig{lang/adapt}} := 
	\begin{blockarray}{cc}
	& \text{\scalebox{0.7}{$\interp{A}$}} \\
	\begin{block}{c(c)}
	\text{\scalebox{0.7}{$\interp{A'}$}}  & \interp{A \isoML A'} \\
	\end{block}
	\end{blockarray}
	\qquad \interp{e \circ d} := 
	\interp{e} \circ \interp{d} 
	\qquad \interp{d \parallel e} := 
	m^| \circ (\interp{d} | \interp{e}) \circ {m^|} 
	\]
	
	where $\interp{A \isoML A'}$,  $m^{\tensor}$, $m^{\oplus}$, and $m^|$ always correspond to some permutation matrices.
	\caption{Matrix Semantics for the Functional \Langage}
	\label{fig:mat_sem_fun}
\end{figure*}

\begin{figure*}
	\[ \interp{f}^{\oplus \oone} := 
	\begin{blockarray}{ccc}
	& \text{\scalebox{0.7}{$\interp{\wireSetA}$}} & \text{\scalebox{0.7}{$\oone$}}  \\
	\begin{block}{c(cc)}
	\text{\scalebox{0.7}{$\interp{\wireSetB}$}}  & \interp{f} &\\
	\text{\scalebox{0.7}{$\oone$}}  &&\id\\
	\end{block}
	\end{blockarray}
	\text{ whenever } f \in \FCat(\wireSetA,\wireSetB) 
	\qquad \interp{\tikzfig{lang/unit}}^{\oplus \oone}  := 
	\begin{blockarray}{cc}
	& \text{\scalebox{0.7}{$\oone$}}  \\
	\begin{block}{c(c)}
	\text{\scalebox{0.7}{$\interp{\tone}$}}  & \id \\
	\text{\scalebox{0.7}{$\oone$}}  &\id\\
	\end{block}
	\end{blockarray} 
	\qquad  \interp{\tikzfig{lang/unit-inv}}^{\oplus \oone}  := 
	\begin{blockarray}{ccc}
	& \text{\scalebox{0.7}{$\interp{\tone}$}} & \text{\scalebox{0.7}{$\oone$}}  \\
	\begin{block}{c(cc)}
	\text{\scalebox{0.7}{$\oone$}}  & \id &\id\\
	\end{block}
	\end{blockarray}
	\]\[ \interp{e \circ d}^{\oplus \oone}  := 
	\interp{e}^{\oplus \oone}  \circ \interp{d}^{\oplus \oone}  
	\qquad \interp{d \parallel e}^{\oplus \oone}  := 
	(m^| \oplus \id_{\oone}) \circ \bfup{expand} \circ (\interp{d}^{\oplus \oone}  \tensor \interp{e}^{\oplus \oone}) \circ \bfup{expand}^{-1} \circ ({m^|}' \oplus \id_{\oone}) \]
	\caption{Matrix Semantics for the \Langage}
	\label{fig:mat_sem}
\end{figure*}

\subsection{Adding the Unit: the "$\oplus \oone$" Category}

In order to model the Unit of our graphical language, we need to add $\oplus \oone$ to all of our objects.

\begin{definition}
	We define $\bfup{H}^{\oplus \oone}$ as the category with the same objects as $\bfup{H}$ and for morphisms $f \in \bfup{H}^{\oplus \oone}(H,K)$ the morphisms of $\bfup{H}(H\oplus \oone,K \oplus \oone)$ of the form 	
	\[ f = g \bfup{+} (\iota_r \circ \pi_r)\]
	 for some $g \in \bfup{H}(H\oplus \oone,K \oplus \oone)$. In matrix terms, it is a matrix of $\bfup{H}(H\oplus \oone,K \oplus \oone)$ where the bottom-right coefficient is of the form $c + \id$ for $c \in \bfup{H}(\oone,\oone)$. The identity and composition are the same as in $\bfup{H}$.
\end{definition}
In particular, any morphism of $\bfup{H}(H\oplus \oone,K \oplus \oone)$ of the form $h \oplus \id$ is also of the form $(h \oplus \zero) \bfup{+} (\iota_r \circ \pi_r)$, so is in $\bfup{H}^{\oplus \oone}$. A morphism of $\bfup{H}^{\oplus \oone}$ of the form $h \oplus \id$ is said \textbf{functional}\footnote{This name reflects that we will interpret the $\oplus \oone$ on the domain as "the morphism spontaneously computes something even when no input is given" and the $\oplus \oone$ on the codomain as "the morphism does not output anything". As such, functional morphisms are morphisms that output something if and only if they had an input.}.

	We define the natural isomorphism $\bfup{expand} : (H \oplus \oone) \tensor (K \oplus \oone) \to (H|K) \oplus \oone$ as the following composition:
	\[ \begin{matrix} (H \oplus \oone) \tensor (K \oplus \oone)\hspace{-0.4cm} &\xrightarrow{\bfup{dist}_r}& H \tensor (K \oplus \oone) \oplus \oone \tensor (K \oplus \oone) \\ &\xrightarrow{\bfup{dist}_\ell \oplus \lambda^{\tensor}}& (H \tensor K \oplus H \tensor \oone) \oplus (K \oplus \oone) \\ &\xrightarrow{(\id \oplus \rho^{\tensor}) \oplus \id}& (H \tensor K \oplus H) \oplus (K \oplus \oone) \\ &\xrightarrow{m^{\oplus}}& ((H \tensor K) \oplus (H \oplus K)) \oplus \oone \\  \end{matrix}  \]
	
\begin{proposition}
	The category $\bfup{H}^{\oplus \oone}$ has a symmetric monoidal structure, defined as $H|K$ on objects and $\bfup{expand} \circ (f \tensor g) \circ \bfup{expand}^{-1}$. All its structural morphisms -- left-unitor, right-unitor, associator, swap -- are functional morphisms.
\end{proposition}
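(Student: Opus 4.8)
\emph{Strategy.} The plan is to reduce every claim to the already-established symmetric monoidal structure of $(\bfup{H},\tensor,\oone)$ by conjugating with the natural isomorphism $\bfup{expand}$. Write $E : \bfup{H}^{\oplus \oone}\to\bfup{H}$ for the map sending an object $H$ to $H\oplus\oone$ and a morphism $f$ to its underlying morphism $\widetilde f : H\oplus\oone\to K\oplus\oone$; since identities and composition in $\bfup{H}^{\oplus \oone}$ are those of $\bfup{H}$, the functor $E$ is faithful, so two morphisms of $\bfup{H}^{\oplus \oone}$ are equal as soon as their $E$-images agree. By definition the product of the statement satisfies $E(f\otimes g) = \bfup{expand}\circ(\widetilde f\tensor\widetilde g)\circ\bfup{expand}^{-1}$, so $\bfup{expand}$ is exactly the coherence datum that will make $E$ strong monoidal once the structure below is in place. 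I also record the faithful functor $J:\bfup{H}\to\bfup{H}^{\oplus \oone}$, $J(h)=h\oplus\id_{\oone}$, which lands in the functional morphisms.

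\emph{Well-definedness and bifunctoriality.} First I would check that $f\otimes g$ is again a morphism of $\bfup{H}^{\oplus \oone}$, i.e.\ that its bottom-right coefficient has the prescribed form $c''\bfup{+}\id$. Tracing the definition of $\bfup{expand}$ shows it identifies the $\oone\tensor\oone$ summand of $(H\oplus\oone)\tensor(H'\oplus\oone)$ with the rightmost $\oone$ of $(H|H')\oplus\oone$ through $\lambda^{\tensor}$, so that $\pi_r\circ\bfup{expand} = \lambda^{\tensor}\circ(\pi_r\tensor\pi_r)$ and $\bfup{expand}^{-1}\circ\iota_r = (\iota_r\tensor\iota_r)\circ(\lambda^{\tensor})^{-1}$ on that summand. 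Hence the bottom-right coefficient of $f\otimes g$ equals $\lambda^{\tensor}\circ\big((c\bfup{+}\id)\tensor(c'\bfup{+}\id)\big)\circ(\lambda^{\tensor})^{-1}$, where $c\bfup{+}\id$ and $c'\bfup{+}\id$ are the bottom-right coefficients of $f$ and $g$; expanding by bilinearity of $\tensor$ over $\bfup{+}$ (\Cref{prop:semimodule}) and using that conjugation by $\lambda^{\tensor}$ fixes $\id_{\oone}$, this is of the required form $c''\bfup{+}\id$. Bifunctoriality ($\id\otimes\id=\id$ and compatibility with $\circ$) is then immediate from bifunctoriality of $\tensor$ in $\bfup{H}$, the two copies of $\bfup{expand}^{\pm1}$ cancelling.

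\emph{Structural isomorphisms.} Next I would obtain the four structural transformations by transporting those of $\tensor$: set, for instance, $E(\alpha^{\oplus \oone}) := \bfup{expand}\circ(\id\tensor\bfup{expand})\circ\alpha^{\tensor}\circ(\bfup{expand}\tensor\id)^{-1}\circ\bfup{expand}^{-1}$, and analogously $\lambda^{\oplus \oone},\rho^{\oplus \oone}$ from $\lambda^{\tensor},\rho^{\tensor}$ (with unit $\ozero$, $E(\ozero)\cong\oone$) and $\sigma^{\oplus \oone}$ from $\sigma^{\tensor}$. With this definition naturality is automatic: after applying $E$, the conjugating composites of $\bfup{expand}$'s cancel on both sides of each naturality square, leaving exactly the naturality of $\alpha^{\tensor}$ (resp.\ $\lambda^{\tensor},\rho^{\tensor},\sigma^{\tensor}$) in $\bfup{H}$. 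It remains to see that each transported morphism is \emph{functional}; since it is a structural isomorphism of the distributive category, built from $\bfup{expand}$'s constituents and a single associator/unitor/swap of $\tensor$, computing its normalization via \Cref{lem:formal_normalization_monoidal} --- exactly as $\N(\alpha^{|})$ and $\N(\sigma^{|})$ were computed in \Cref{prop:parallel_is_monoidal}, keeping $\sigma^{\tensor}$ as a matrix coefficient --- shows that it equals $\alpha^{|}\oplus\id_{\oone}=J(\alpha^{|})$, and likewise $\lambda^{|}\oplus\id_{\oone}$, $\rho^{|}\oplus\id_{\oone}$, $\sigma^{|}\oplus\id_{\oone}$. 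In particular all four are functional, hence lie in $\bfup{H}^{\oplus \oone}$.

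\emph{Coherence and the main obstacle.} Finally, for the monoidal and symmetry coherence laws (pentagon, triangle, hexagons): each is an equation between composites of structural morphisms only, all of which are now $J$-images, and the naturality computation of the first paragraph specializes to $J(h)\otimes J(h') = J(h|h')$ for functional $h,h'$; hence each coherence diagram in $\bfup{H}^{\oplus \oone}$ is the $J$-image of the corresponding diagram for $(\bfup{H},|,\ozero)$, which already commutes by \Cref{prop:parallel_is_monoidal}, and $J$ being a functor transports the commutation. I expect the genuine difficulty to be concentrated not in naturality (which the transport definition trivializes) but in the bookkeeping that controls the distinguished $\oone$-summand: the well-definedness coefficient computation and, above all, the identification of the transported structural isomorphisms with the functional morphisms $J(\alpha^{|})$, etc. The subtlety there is that $\sigma^{\tensor}$ is deliberately excluded from the strict coherence theorem \Cref{prop:maclane_distributive}; the swap must therefore be discharged through the normal-form matrix calculus of \Cref{lem:formal_normalization_monoidal}, carrying $\sigma^{\tensor}$ as an explicit coefficient, rather than by a bare appeal to the uniqueness of structural morphisms.
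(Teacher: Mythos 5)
Your proposal is correct, and at its core it turns on the same pivot as the paper's proof: the identity $\bfup{expand}\circ((h\oplus\id)\tensor(h'\oplus\id))\circ\bfup{expand}^{-1}=(h|h')\oplus\id$ (your $J(h)\otimes J(h')=J(h|h')$), which lets the whole symmetric monoidal structure be inherited from $(\bfup{H},|,\ozero)$ as established in \Cref{prop:parallel_is_monoidal}. The difference is one of direction and thoroughness. The paper simply \emph{posits} $\lambda^{|}\oplus\id$, $\rho^{|}\oplus\id$, $\alpha^{|}\oplus\id$, $\sigma^{|}\oplus\id$ as the structural morphisms and records that identity in a one-line proof; you instead \emph{define} the structure by conjugating the structure of $(\bfup{H},\tensor,\oone)$ along $\bfup{expand}$ and then prove, via the normalization calculus of \Cref{lem:formal_normalization_monoidal}, that the transported morphisms coincide with those functional ones. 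This detour buys two things the paper leaves implicit: (i) naturality of the structural morphisms against \emph{arbitrary} (not merely functional) morphisms of $\bfup{H}^{\oplus\oone}$ --- with the paper's definition this requires exactly the identification you carry out, since the monoidal product on morphisms is itself defined by $\bfup{expand}$-conjugation, and non-functional morphisms do not decompose into functional pieces; and (ii) closure, i.e.\ that $f\otimes g$ again has bottom-right coefficient of the form $c\bfup{+}\id$ --- your computation $\pi_r\circ\bfup{expand}=\lambda^{\tensor}\circ(\pi_r\tensor\pi_r)$ followed by the bilinear expansion of $(c\bfup{+}\id)\tensor(c'\bfup{+}\id)$ is correct and appears nowhere in the paper's proof. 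Your caution about the swap is also well placed: since $\sigma^{\tensor}$ is excluded from \Cref{prop:maclane_distributive}, the identification $\bfup{expand}\circ\sigma^{\tensor}\circ\bfup{expand}^{-1}=\sigma^{|}\oplus\id$ must indeed be discharged through the matrix computation carrying $\sigma^{\tensor}$ as an explicit coefficient (using $\sigma^{\tensor}_{\oone,\oone}=\id$ on the unit summand), exactly as $\N(\sigma^{|})$ was handled in \Cref{prop:parallel_is_monoidal}, whereas the associator and unitor cases fall under the coherence proposition directly. In short: same key lemma, opposite order of construction, and your version fills in the well-definedness and naturality bookkeeping that the paper's terse proof takes for granted.
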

\begin{proof}
	The symmetric monoidal structure follows from the symmetric monoidal structure of $(\bfup{H},|,\ozero)$, by simply taking $\lambda^|\oplus\id, \rho^|\oplus\id, \alpha^|\oplus\id, \lambda^|\oplus\id, \sigma^|\oplus \id$ as structural morphisms, and by remarking that
	\[ \bfup{expand} \circ ((f \oplus \id) \tensor (g \oplus \id) ) \circ \bfup{expand}^{-1} = (f|g) \oplus \id\]
\end{proof}

	\subsection{Adding the Unit: the Double Kleisli Category}
	
This section explore an alternative way the Unit could have been interpreted semantically, relying on a Kleisli construction rather than the more ad hoc "$\oplus \oone$" construction. 

While Kleisli categories have an extensive literature \cite{Maranda66}, we will focus in this section on a specific category, hence will skip the background needed to define the general case. The object of our attention is the Double Kleisli \cite{doublekleisli93} -- which is a co-Kleisli category of a Klesil category -- of $\bfup{H}$ with respect to $\_ \mapsto (\_ \oplus \oone)$, which is both a monad and co-monad.

\begin{definition}
	The category $\bfup{KcK-H}$ is the category with the same objects as $\bfup{H}$ and with for morphisms $\bfup{KcK-H}(H,K)$ the morphisms of $\bfup{H}(H\oplus \oone,K \oplus \oone)$, but with a significantly different composition $\circ_{\textup{KcK}}$. For $f \in \bfup{KcK-H}(H,K)$ and $g \in \bfup{KcK-H}(K,L)$:
	\[  \begin{array}{rrl} g \circ_{\textup{KcK}} f : H \oplus \oone
	&\xrightarrow{\id \oplus \Delta} & H \oplus (\oone\oplus\oone)  \\
	& \xrightarrow{\alpha^{\oplus-1}} & (H\oplus\oone) \oplus\oone \\
	& \xrightarrow{f \oplus \id} & (K \oplus \oone ) \oplus\oone \\
	& \xrightarrow{\alpha^{\oplus}} & K \oplus (\oone \oplus \oone) \\
	& \xrightarrow{\id \oplus \sigma^{\oplus}} & K \oplus (\oone \oplus \oone) \\
	& \xrightarrow{\alpha^{\oplus-1}} & (K \oplus \oone ) \oplus\oone \\
	& \xrightarrow{g \oplus \id}	& (L \oplus \oone ) \oplus\oone \\
	& \xrightarrow{\alpha^{\oplus}}	& L \oplus (\oone \oplus \oone) \\
	& \xrightarrow{\id \oplus \nabla}	& L \oplus \oone
	\end{array}  \]
	Or, as a picture, with $\alpha^\oplus$ and $\alpha^{\oplus-1}$ implicit:
	\begin{center}
		\tikzfig{category-coherence/kleisli-composition}
	\end{center}
	The identity for this composition is $\id\oplus\zero : H\oplus\oone \to H\oplus\oone$, and more generally $\_ \mapsto \_\oplus\zero$ is a faithful functor from $\bfup{H}$ to $\bfup{KcK-H}$.
\end{definition}
\begin{proof}
	The main property to prove is the associativity of the composition.Using the properties of the biproduct we obtain:
	\[ \nabla \circ (\nabla \oplus \id) = \nabla \circ (\id \oplus \nabla) \circ \alpha^\oplus  \]
	\[  (\Delta \oplus \id) \circ \Delta = \alpha^{\oplus-1} \circ (\id \oplus \Delta) \circ \Delta  \]
	which directly entails the associativity of $\circ_{\textup{KcK}}$.
\end{proof}

\begin{proposition}
	If $\bfup{H}$ is cancellative, that is
	\[ f \bfup{+} h = g \bfup{+} h \implies f = g \]
	then the categories $\bfup{H}^{\oplus \oone}$ and $\bfup{KcK-H}$ are equivalent through the following identity-on-objects equivalence. The morphism $f \in \bfup{KcK-H}(H,K)$ corresponds to $f \bfup{+} (\iota_r \circ \pi_r) \in \bfup{H}^{\oplus \oone}(H,K)$.
\end{proposition}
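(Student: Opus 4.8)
The plan is to realize the stated assignment as an identity-on-objects functor $\Phi\colon \bfup{KcK-H}\to\bfup{H}^{\oplus\oone}$ and prove it is fully faithful; since it is a bijection on objects, being fully faithful makes it an equivalence (in fact an isomorphism) of categories. On morphisms $\Phi$ sends $f\in\bfup{KcK-H}(H,K)=\bfup{H}(H\oplus\oone,K\oplus\oone)$ to $f\bfup{+}(\iota_r\circ\pi_r)$. The three things to check are well-definedness/bijectivity on hom-sets, functoriality, and that these combine to give the equivalence; cancellativity will enter at exactly one point.

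First I would settle fullness and faithfulness. Fullness is immediate from the definition of $\bfup{H}^{\oplus\oone}$: every morphism of $\bfup{H}^{\oplus\oone}(H,K)$ is of the shape $g\bfup{+}(\iota_r\circ\pi_r)=\Phi(g)$, so $\Phi$ hits every hom. For faithfulness, suppose $\Phi(f)=\Phi(f')$, i.e.\ $f\bfup{+}(\iota_r\circ\pi_r)=f'\bfup{+}(\iota_r\circ\pi_r)$; applying the cancellativity hypothesis with $h=\iota_r\circ\pi_r$ gives $f=f'$. This is the sole use of cancellativity, and it is precisely what is needed to invert the shift $c\mapsto c\bfup{+}\id$ on the bottom-right coefficient; without it $\Phi$ would fail to be injective on homs, which is why the hypothesis is necessary.

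It remains to verify functoriality, which is the heart of the argument. Preservation of identities is short: the unit of $\bfup{KcK-H}$ is $\id\oplus\zero$, and in matrix form $\iota_r\circ\pi_r=\left(\begin{smallmatrix}\zero&\zero\\\zero&\id\end{smallmatrix}\right)$ fills in the missing corner, so $\Phi(\id\oplus\zero)=(\id\oplus\zero)\bfup{+}(\iota_r\circ\pi_r)=\id_{H\oplus\oone}$, the identity of $\bfup{H}^{\oplus\oone}$. The substantive step is $\Phi(g\circ_{\textup{KcK}}f)=\Phi(g)\circ\Phi(f)$, where the right-hand composition is ordinary composition in $\bfup{H}$. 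Here I would unwind the nine-step definition of $\circ_{\textup{KcK}}$ using the biproduct identities $\pi_\ell\circ\Delta=\id$, $\nabla\circ\iota_\ell=\id$, $\pi_r\circ\iota_r=\id$ and the idempotence $(\iota_r\circ\pi_r)\circ(\iota_r\circ\pi_r)=\iota_r\circ\pi_r$, together with the naturality of $\Delta$, $\nabla$ and $\sigma^\oplus$; then I would expand $\Phi(g)\circ\Phi(f)=(g\bfup{+}\iota_r\pi_r)\circ(f\bfup{+}\iota_r\pi_r)$ by the enrichment bilinearity of $\circ$ over $\bfup{+}$, and compare the two sides coefficient by coefficient through the matrix description of maps between biproducts (\Cref{lem:coef_by_coef}).

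I expect this compositional compatibility to be the main obstacle, and the delicate point deserves emphasis: the two compositions treat the auxiliary $\oone$ quite differently. The double-Kleisli composition $\circ_{\textup{KcK}}$ \emph{duplicates} the incoming $\oone$ (the initial $\id\oplus\Delta$) so that both $f$ and $g$ read it, and \emph{sums} the two outgoing copies (the final $\id\oplus\nabla$, after the $\sigma^\oplus$ reshuffle), whereas ordinary composition in $\bfup{H}^{\oplus\oone}$ instead \emph{threads} the $\oone$ from one factor into the next. Reconciling these is exactly where the extra $\bfup{+}(\iota_r\circ\pi_r)$ summands must do their work, and the bookkeeping of the cross terms is the crux of the whole proposition. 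Once the two matrix forms are shown to coincide, full faithfulness combined with the identity-on-objects property yields the claimed equivalence.
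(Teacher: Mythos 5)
Your scaffolding is exactly that of the paper's (very terse) proof: the paper checks precisely the two identities $(\id\oplus\zero)\bfup{+}(\iota_r\circ\pi_r)=\id$ and $(g\bfup{+}\iota_r\circ\pi_r)\circ(f\bfup{+}\iota_r\circ\pi_r)=(g\circ_{\textup{KcK}}f)\bfup{+}(\iota_r\circ\pi_r)$, leaving fullness and faithfulness implicit, and you correctly make those explicit --- fullness is indeed definitional, and cancellativity enters only for faithfulness, as you say. Your identity check is also correct. The genuine gap is that you never execute the composition check, and it does not come out the way you announce. Write $P=\iota_r\circ\pi_r$, so $P\circ P=P$. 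Bilinearity of $\circ$ over $\bfup{+}$ gives
\[ (g\bfup{+}P)\circ(f\bfup{+}P)\;=\;g\circ f\;\bfup{+}\;g\circ P\;\bfup{+}\;P\circ f\;\bfup{+}\;P, \]
so functoriality is equivalent to $g\circ_{\textup{KcK}}f=g\circ f\bfup{+}g\circ P\bfup{+}P\circ f$. But in the chain defining $\circ_{\textup{KcK}}$, the $\sigma^{\oplus}$ routes only the \emph{duplicated copy} of the incoming $\oone$ into $g$, while $f$'s $\oone$-output bypasses $g$ and is merged at the end. In matrix form, with $f=\left(\begin{smallmatrix}a&b\\c&d\end{smallmatrix}\right)$ and $g=\left(\begin{smallmatrix}a'&b'\\c'&d'\end{smallmatrix}\right)$, this wiring yields $g\circ_{\textup{KcK}}f=\left(\begin{smallmatrix}a'a & a'b+b'\\ c'a+c & c'b+d'+d\end{smallmatrix}\right)$, whereas the required right-hand side is $\left(\begin{smallmatrix}a'a+b'c & a'b+b'd+b'\\ c'a+d'c+c & c'b+d'd+d'+d\end{smallmatrix}\right)$; the two differ by the block $\left(\begin{smallmatrix}b'c & b'd\\ d'c & d'd\end{smallmatrix}\right)$, which your ``coefficient by coefficient'' comparison would expose rather than eliminate.

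This is not absorbable bookkeeping. Take $H=K=L=\ozero$, so that both hom-sets are scalars in $\bfup{H}(\oone,\oone)$ and $P=\id$; the chain gives $t\circ_{\textup{KcK}}s=s\bfup{+}t$ (the swap wiring \emph{adds} scalars), whose image is $s+t+1$, while the images compose to $(s+1)(t+1)=st+s+t+1$. Over a cancellative semiring such as $\mathbb{N}$ or $\mathbb{R}_{\geq 0}$, taking $s=t=1$ gives $3\neq 4$, so the assignment $f\mapsto f\bfup{+}P$ is not functorial for $\circ_{\textup{KcK}}$ as printed. The composition that the correspondence \emph{does} transport is $g\star f:=g\circ f\bfup{+}g\circ P\bfup{+}P\circ f$, i.e.\ the wiring in which $f$'s $\oone$-output is duplicated, one copy is merged (via $\nabla$) with the spare copy of the input $\oone$ and fed into $g$, and the other copy is merged with $g$'s $\oone$-output. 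Your instinct that this step is ``the crux'' and ``delicate'' was exactly right --- the paper dismisses it with ``we easily check,'' and the check fails for the displayed definition --- but by deferring it with ``once the two matrix forms are shown to coincide'' your proposal inherits precisely the step that cannot be discharged as stated. To complete the argument you must either replace the swap-based definition of $\circ_{\textup{KcK}}$ by the $\star$-wiring above, or give up the stated identity-on-objects correspondence.
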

\begin{proof}
	We easily check that:
	\[(\id\oplus \zero) \bfup{+}  (\iota_r \circ \pi_r) = \id\]
	\[ (g  \bfup{+}  (\iota_r \circ \pi_r)) \circ (f \bfup{+}  (\iota_r \circ \pi_r)) = (g \circ_{\textup{KcK}} f)  \bfup{+}  (\iota_r \circ \pi_r)  \]
\end{proof}

\subsection{Matrix Semantics}

Using the matrix notations previously defined, and keeping the convention of annotating rows and column, and leaving an empty cell whenever the morphism of that cell would be $\zero$, we can rewrite \Cref{fig:sem_fun,fig:sem} into \Cref{fig:mat_sem_fun,fig:mat_sem}. We note that the isomorphism $\bfup{scal}: R \to \bfup{H}(\oone,\oone)$ is kept implicit.

\clearpage
\section{Soundness of the Equational Theory}
\label{app:soudness}

In this appendix, we provide a proof of soundness of the equational theory.
And intuitive proof of soundness can be done relatively easily, by looking at how \emph{data} would flow across the diagram before and after the application of an equation. However, to formalize such an intuitive proof we would need to define a token-based semantics and to prove that this token-based semantics is equivalent to the categorical semantics. Instead, we provide a more direct but maybe less intuitive proof.
\begin{enumerate}
	\item We start by all the equations easy to prove sound in a direct way, including a couple of equations that are technically superfluous (the bracketed ones) but are useful as stepping stones, that is
	\begin{itemize}
		\item The equations on the left half of   \Cref{fig:eq_tensor_plus}, that is \eqT, \eqP, \eqbot, \eqbotC, \eqN{} and  \eqSigmaC.
		\item All the equations from  \Cref{fig:eq_null_nat}, that is \eqTN, \eqPN, \eqCN, \eqNN, \eqNM, \eqNS, \eqTNleft, \eqNPPleft, \eqRhoC{} and \eqLambdaC.
		\item All the equations of \Cref{fig:eq_maclane} except the first column, that is \eqTM, \eqPM, \eqLambdaRhoT, \eqRhoP, \eqLambdaP, \eqMM{} and \eqM.
		\item All the equations from  \Cref{fig:eq_scalar}, that is \eqSS, \eqS, \eqSzero, \eqSsum, \eqST{} and \eqSP.
	\end{itemize}
	\item We continue with a set of relatively difficult equations to check directly, and we brute-force them by computing the matrix semantics (see \Cref{fig:mat_sem_fun}) of both sides of the equations and remarking that they are identical. More precisely
	\begin{itemize}
		\item The equations \eqTT{} and \eqmix{} from   \Cref{fig:eq_tensor_plus}.
		\item The equations \eqAlphaT{} and \eqAlphaP{} from  \Cref{fig:eq_maclane}.
	\end{itemize}
	\item We then prove a couple of powerful lemmas, which allows us to derive the soundness of the remaining equations for the functional language, that is
	\begin{itemize}
		\item The equations \eqPP{} and \eqPtoC{} from  \Cref{fig:eq_tensor_plus}.
		\item All the equations from \Cref{fig:eq_contraction_nat}, that is \eqTC, \eqTCleft, \eqPC, \eqPPCleft, \eqCC, \eqAlphaC, \eqSC{} and \eqMC.
	\end{itemize}
	\item We finish by handling the additional equations of the equational theory for the full language, that is \eqRhoT, \eqUN{} and \eqUS{} from \Cref{fig:eq_unit}. Since \eqLambdaT, \eqUsum{} and \eqUpar{} can be deduced from the others, we do not need to prove their soundness.
\end{enumerate}
Implicitly in all those steps is that the mirrored up-down version of the equations can be proven sound in similar ways, as the all the properties we use of $\bfup{H}$ are also true in its dual category.

\subsection{The Easy Equations}
Now, let us start with the following equations.
\[\tikzfig{worldless_eq/sound_easy_0}\]

Those are sound because in a semiadditive category, $\zero$ satisfies for any $f$: $\zero \circ f = \zero$, $f \circ \zero = \zero$, $\zero \tensor f = \zero$, and $f \tensor \zero = \zero$. In the remaining of the proof, those properties of $\zero$ will be used implicitly.
\[\tikzfig{worldless_eq/sound_easy_1}\]

Those are sound because in a semiadditive category:\[\pi_\ell \circ \iota_\ell = \id = \pi_r \circ \iota_r\]

\[\tikzfig{worldless_eq/sound_easy_2}\]
Those are sound because in a semiadditive category:\[\pi_r \circ \iota_\ell = \zero\]

\[\tikzfig{worldless_eq/sound_easy_3}\]
Those are sound because  $\sigma^| = \sigma^{\tensor} \oplus \sigma^{\oplus}$, and in a semiadditive category all the morphisms on $\ozero \to \ozero$ are equal, $(f \oplus g) \circ \iota_r = \iota_r \circ g$ and $\sigma^\oplus \circ \Delta = \sigma^\oplus$.

\[\tikzfig{worldless_eq/sound_easy_4}\]
The left one is sound because $\id | \zero = \zero \oplus (\id \oplus \zero)$ and in a semiadditive category  $(f \oplus g) \circ \iota_r = \iota_r \circ g$ and $\Delta \circ (\id  \oplus \zero) = \id$. The right one is sound for similar reasons.

\[\tikzfig{worldless_eq/sound_easy_5}\]
The left one is sound because $\zero | \id = \zero \oplus (\zero \oplus \id)$ and in a semiadditive category $\pi_r \circ \iota_r = \zero \oplus \id$ and $\oplus$ is a bifunctor. . The right one is sound for similar reasons.
\[\tikzfig{worldless_eq/sound_easy_6}\]
The left one is sound because  $\zero | \id = \zero \oplus (\zero \oplus \id)$ and in a semiadditive category $(f \oplus g) \circ \iota_\ell = \iota_\ell \circ f$.
\[\tikzfig{worldless_eq/sound_easy_7}\]
Those equations all rely on the generalisation of Mac Lane's coherence theorem stated in  \Cref{prop:maclane_bimonoidal}, together with the following facts:
\begin{itemize}
	\item In a semiadditive category, $(f \oplus g) \circ \iota_\ell = \iota_\ell \circ f$ and $(f \oplus g) \circ \iota_r = \iota_r \circ g$.
	\item $\sigma^| = \sigma^{\tensor} \oplus \sigma^{\oplus}$.
	\item  $\id | \zero = \zero \oplus (\id \oplus \zero)$ and $\zero | \id = \zero \oplus (\zero \oplus \id)$.
\end{itemize}
\[\tikzfig{worldless_eq/sound_easy_8}\]
Those equations all rely on \bfup{scal} being an isomorphism of semirings, together with the fact that in a semiadditive category, $(f \oplus g) \circ \iota_\ell = \iota_\ell \circ f$,  $(f \oplus g) \circ \iota_r = \iota_r \circ g$ and $\nabla \circ (f \oplus g) \circ \Delta = f \bfup{+} g$, and additionally in \bfup{H} we have distributivity hence $s \cdot (f \oplus g) = (s \cdot f \oplus s \cdot g)$.

\subsection{The Hard Equations}
\begin{figure*}
$
\begin{blockarray}{c ccccc ccccc ccccc}
	&\text{\scalebox{0.7}{((A$\alpha$)B)$\beta$}}
	&\text{\scalebox{0.7}{(AB)$\beta$}}
	&\text{\scalebox{0.7}{($\alpha$B)$\beta$}}
	&\text{\scalebox{0.7}{(A$\alpha$)$\beta$}}
	&\text{\scalebox{0.7}{A$\beta$}}
	&\text{\scalebox{0.7}{$\alpha\beta$}}
	&\text{\scalebox{0.7}{B$\beta$}}
	&\text{\scalebox{0.7}{(A$\alpha$)B}}
	&\text{\scalebox{0.7}{AB}}
	&\text{\scalebox{0.7}{$\alpha$B}}
	&\text{\scalebox{0.7}{A$\alpha$}}
	&\text{\scalebox{0.7}{A}}
	&\text{\scalebox{0.7}{$\alpha$}}
	&\text{\scalebox{0.7}{B}}
	&\text{\scalebox{0.7}{$\beta$}}\\
	\begin{block}{c(ccccc ccccc ccccc)}
		\text{\scalebox{0.7}{AB}}  &&&&&\id &\id&&\id&\id& &&&&&\\
		\text{\scalebox{0.7}{A}}  &&&&& &&&&& &&\id&\id&&\\
		\text{\scalebox{0.7}{B}}  &&&&& &&&&& &&&&\id&\id\\
	\end{block}
\end{blockarray} $\\
$\circ~
\begin{blockarray}{c ccccc ccccc ccccc}
	&\text{\scalebox{0.7}{((AB)$\alpha$)$\beta$}}
	&\text{\scalebox{0.7}{(A$\alpha$)$\beta$}}
	&\text{\scalebox{0.7}{(B$\alpha$)$\beta$}}
	&\text{\scalebox{0.7}{(AB)$\beta$}}
	&\text{\scalebox{0.7}{A$\beta$}}
	&\text{\scalebox{0.7}{B$\beta$}}
	&\text{\scalebox{0.7}{$\alpha\beta$}}
	&\text{\scalebox{0.7}{(AB)$\alpha$}}
	&\text{\scalebox{0.7}{A$\alpha$}}
	&\text{\scalebox{0.7}{B$\alpha$}}
	&\text{\scalebox{0.7}{AB}}
	&\text{\scalebox{0.7}{A}}
	&\text{\scalebox{0.7}{B}}
	&\text{\scalebox{0.7}{$\alpha$}}
	\text{\scalebox{0.7}{$\beta$}}\\
	\begin{block}{c(ccccc ccccc ccccc)}
		\text{\scalebox{0.7}{((A$\alpha$)B)$\beta$}}  &f&&&& &&&&& &&&&&\\
		\text{\scalebox{0.7}{(AB)$\beta$}} &&&&\id& &&&&& &&&&&\\
		\text{\scalebox{0.7}{($\alpha$B)$\beta$}} &&&g&& &&&&& &&&&&\\
		\text{\scalebox{0.7}{(A$\alpha$)$\beta$}}  &&\id&&& &&&&& &&&&&\\
		\text{\scalebox{0.7}{A$\beta$}} &&&&&\id &&&&& &&&&&\\
		\text{\scalebox{0.7}{$\alpha\beta$}} &&&&& &&\id&&& &&&&&\\
		\text{\scalebox{0.7}{B$\beta$}} &&&&& &\id&&&& &&&&&\\
		\text{\scalebox{0.7}{(A$\alpha$)B}} &&&&& &&&h&& &&&&&\\
		\text{\scalebox{0.7}{AB}} &&&&& &&&&& &\id&&&&\\
		\text{\scalebox{0.7}{$\alpha$B}}&&&&& \sigma^{\tensor} &&&&& &&&&&\\
		\text{\scalebox{0.7}{A$\alpha$}} &&&&\id& &&&&& &&&&&\\
		\text{\scalebox{0.7}{A}}&&&&& &&&&& &&\id&&&\\
		\text{\scalebox{0.7}{$\alpha$}} &&&&& &&&&& &&&&\id&\\
		\text{\scalebox{0.7}{B}}&&&&& &&&&& &&&\id&&\\
		\text{\scalebox{0.7}{$\beta$}} &&&&& &&&&& &&&&&\id\\
	\end{block}
\end{blockarray}
~\circ~
\begin{blockarray}{c ccccc}
	&\text{\scalebox{0.7}{(AB)$\alpha$}}
	&\text{\scalebox{0.7}{(AB)$\beta$}}
	&\text{\scalebox{0.7}{AB}}
	&\text{\scalebox{0.7}{$\alpha$}}
	&\text{\scalebox{0.7}{$\beta$}}\\
	\begin{block}{c(ccccc)}
		\text{\scalebox{0.7}{((AB)$\alpha$)$\beta$}}  &&&&&\\
		\text{\scalebox{0.7}{(A$\alpha$)$\beta$}}    &&&&&\\
		\text{\scalebox{0.7}{(B$\alpha$)$\beta$}}  &&&&&\\
		\text{\scalebox{0.7}{(AB)$\beta$}}  &&\id&&&\\
		\text{\scalebox{0.7}{A$\beta$}}  &&&&&\\
		\text{\scalebox{0.7}{B$\beta$}}  &&&&&\\
		\text{\scalebox{0.7}{$\alpha\beta$}}  &&&&&\\
		\text{\scalebox{0.7}{(AB)$\alpha$}}  &\id&&&&\\
		\text{\scalebox{0.7}{A$\alpha$}}  &&&&&\\
		\text{\scalebox{0.7}{B$\alpha$}} &&&&&\\
		\text{\scalebox{0.7}{AB}} &&&\id&&\\
		\text{\scalebox{0.7}{A}}  &&&&&\\
		\text{\scalebox{0.7}{B}} &&&&&\\
		\text{\scalebox{0.7}{$\alpha$}}  &&&&\id&\\
		\text{\scalebox{0.7}{$\beta$}}   &&&&&\id\\
	\end{block}
\end{blockarray}$\\
$\circ~
\begin{blockarray}{c ccccc ccccc ccccc}
	&\text{\scalebox{0.7}{((AB)$\alpha$)$\beta$}}
	&\text{\scalebox{0.7}{(A$\alpha$)$\beta$}}
	&\text{\scalebox{0.7}{(B$\alpha$)$\beta$}}
	&\text{\scalebox{0.7}{(AB)$\beta$}}
	&\text{\scalebox{0.7}{A$\beta$}}
	&\text{\scalebox{0.7}{B$\beta$}}
	&\text{\scalebox{0.7}{$\alpha\beta$}}
	&\text{\scalebox{0.7}{(AB)$\alpha$}}
	&\text{\scalebox{0.7}{A$\alpha$}}
	&\text{\scalebox{0.7}{B$\alpha$}}
	&\text{\scalebox{0.7}{AB}}
	&\text{\scalebox{0.7}{A}}
	&\text{\scalebox{0.7}{B}}
	&\text{\scalebox{0.7}{$\alpha$}}
	&\text{\scalebox{0.7}{$\beta$}}\\
	\begin{block}{c(ccccc ccccc ccccc)}
		\text{\scalebox{0.7}{(AB)$\alpha$}} &&&&& &&&\id&& &&&&&\\
		\text{\scalebox{0.7}{(AB)$\beta$}} &&&&\id& &&&&& &&&&&\\
		\text{\scalebox{0.7}{AB}} &&&&& &&&&& &\id&&&&\\
		\text{\scalebox{0.7}{$\alpha$}} &&&&& &&&&& &&&&\id&\\
		\text{\scalebox{0.7}{$\beta$}} &&&&& &&&&& &&&&&\id\\
	\end{block}
\end{blockarray}$\\
	$\circ~
	\begin{blockarray}{c ccccc ccccc ccccc}
		&\text{\scalebox{0.7}{((A$\alpha$)B)$\beta$}}
		&\text{\scalebox{0.7}{(AB)$\beta$}}
		&\text{\scalebox{0.7}{($\alpha$B)$\beta$}}
		&\text{\scalebox{0.7}{(A$\alpha$)$\beta$}}
		&\text{\scalebox{0.7}{A$\beta$}}
		&\text{\scalebox{0.7}{$\alpha\beta$}}
		&\text{\scalebox{0.7}{B$\beta$}}
		&\text{\scalebox{0.7}{(A$\alpha$)B}}
		&\text{\scalebox{0.7}{AB}}
		&\text{\scalebox{0.7}{$\alpha$B}}
		&\text{\scalebox{0.7}{A$\alpha$}}
		&\text{\scalebox{0.7}{A}}
		&\text{\scalebox{0.7}{$\alpha$}}
		&\text{\scalebox{0.7}{B}}
		&\text{\scalebox{0.7}{$\beta$}}\\
		\begin{block}{c(ccccc ccccc ccccc)}
			\text{\scalebox{0.7}{((AB)$\alpha$)$\beta$}}  &f&&&& &&&&& &&&&&\\
			\text{\scalebox{0.7}{(A$\alpha$)$\beta$}}   &&&&\id& &&&&& &&&&&\\
			\text{\scalebox{0.7}{(B$\alpha$)$\beta$}}  &&&g&& &&&&& &&&&&\\
			\text{\scalebox{0.7}{(AB)$\beta$}}  &&\id&&& &&&&& &&&&&\\
			\text{\scalebox{0.7}{A$\beta$}}  &&&&&\id &&&&& &&&&&\\
			\text{\scalebox{0.7}{B$\beta$}}   &&&&& &&\id&&& &&&&&\\
			\text{\scalebox{0.7}{$\alpha\beta$}}   &&&&& &\id&&&& &&&&&\\
			\text{\scalebox{0.7}{(AB)$\alpha$}}   &&&&& &&&h&& &&&&&\\
			\text{\scalebox{0.7}{A$\alpha$}}  &&&&& &&&&& &\id&&&&\\
			\text{\scalebox{0.7}{B$\alpha$}}  &&&&& &&&&& \sigma^{\tensor}&&&&&\\
			\text{\scalebox{0.7}{AB}} &&&&& &&&&\id& &&&&&\\
			\text{\scalebox{0.7}{A}} &&&&& &&&&& &&\id&&&\\
			\text{\scalebox{0.7}{B}}  &&&&& &&&&& &&&&\id&\\
			\text{\scalebox{0.7}{$\alpha$}}  &&&&& &&&&& &&&\id&&\\
			\text{\scalebox{0.7}{$\beta$}}  &&&&& &&&&& &&&&&\id\\
		\end{block}
	\end{blockarray}
	~\circ~
	\begin{blockarray}{cccc}
		& \text{\scalebox{0.7}{AB}} & \text{\scalebox{0.7}{A}} & \text{\scalebox{0.7}{B}} \\
		\begin{block}{c(ccc)}
			\text{\scalebox{0.7}{((A$\alpha$)B)$\beta$}}  &&&\\
			\text{\scalebox{0.7}{(AB)$\beta$}}  &&&\\
			\text{\scalebox{0.7}{($\alpha$B)$\beta$}}  &&&\\
			\text{\scalebox{0.7}{(A$\alpha$)$\beta$}}  &&&\\
			\text{\scalebox{0.7}{A$\beta$}}  &\id&&\\
			\text{\scalebox{0.7}{$\alpha\beta$}}  &\id&&\\
			\text{\scalebox{0.7}{B$\beta$}}  &&&\\
			\text{\scalebox{0.7}{(A$\alpha$)B}}  &&&\\
			\text{\scalebox{0.7}{AB}}  &\id&&\\
			\text{\scalebox{0.7}{$\alpha$B}}  &\id&&\\
			\text{\scalebox{0.7}{A$\alpha$}}  &&&\\
			\text{\scalebox{0.7}{A}}  &&\id&\\
			\text{\scalebox{0.7}{$\alpha$}}  &&\id&\\
			\text{\scalebox{0.7}{B}}  &&&\id\\
			\text{\scalebox{0.7}{$\beta$}}  &&&\id\\
		\end{block}
	\end{blockarray}
	$\\
	where $f = m^{\tensor} \circ (\id \tensor \sigma^{\tensor} \tensor \id) \circ m^{\tensor}$, $g = \sigma^{\tensor} \tensor \id$ and $h = m^{\tensor} \circ (\id \tensor \sigma^{\tensor}) \circ m^{\tensor}$.

	\caption{Semantics of the right-hand-side of \eqmix.}
	\label{fig:proof_soundness_mix}
\end{figure*}

The equations proven in this subsection rely on the associativity of $|$ in \bfup{H}, which itself relies on the distributivity of $\tensor$ over $\oplus$. Leveraging those properties directly is not easy, hence we instead rely on the matrix representation of the semantics, given in \Cref{fig:mat_sem_fun}. When using the matrix representation, we usually annotate the rows and columns for readability, and while we will continue to do so, we will simplify those annotations by omitting the $\interp{-}$ and the $\tensor$, hence ``AB'' stands for ``$\interp{A} \tensor \interp{B}$'' and ``A(BC)'' stands for ``$\interp{A} \tensor (\interp{B} \tensor \interp{C})$''.
\[\tikzfig{worldless_eq/sound_hard_1}\]
On the left-hand-side, we have

\[
\begin{blockarray}{cccccccc}
	& \text{\scalebox{0.7}{(AB)C}} & \text{\scalebox{0.7}{AC}} & \text{\scalebox{0.7}{BC}} & \text{\scalebox{0.7}{AB}} & \text{\scalebox{0.7}{A}} & \text{\scalebox{0.7}{B}} & \text{\scalebox{0.7}{C}} \\
	\begin{block}{c(ccccccc)}
		\text{\scalebox{0.7}{A(BC)}}  & \alpha^{\tensor} &&&&&&\\
		\text{\scalebox{0.7}{A}}  &&&&&\id &&\\
		\text{\scalebox{0.7}{BC}}  &&&\id &&&&\\
	\end{block}
\end{blockarray}
\circ
\begin{blockarray}{cccc}
	& \text{\scalebox{0.7}{(AB)C}} & \text{\scalebox{0.7}{AB}} & \text{\scalebox{0.7}{C}} \\
	\begin{block}{c(ccc)}
		\text{\scalebox{0.7}{(AB)C}}  & \id &&\\
		\text{\scalebox{0.7}{AC}}  &&&\\
		\text{\scalebox{0.7}{BC}}  &&&\\
		\text{\scalebox{0.7}{AB}}  && \id&\\
		\text{\scalebox{0.7}{A}}  &&&\\
		\text{\scalebox{0.7}{B}}  &&&\\
		\text{\scalebox{0.7}{C}}  &&& \id\\
	\end{block}
\end{blockarray}
\]
On the right-hand-side we have

\[
\begin{blockarray}{cc}
	& \text{\scalebox{0.7}{A(BC)}} \\
	\begin{block}{c(c)}
		\text{\scalebox{0.7}{A(BC)}}  & \id \\
		\text{\scalebox{0.7}{A}}  &\\
		\text{\scalebox{0.7}{BC}}  &\\
	\end{block}
\end{blockarray}
\circ
\begin{blockarray}{cc}
	& \text{\scalebox{0.7}{(AB)C}}  \\
	\begin{block}{c(c)}
		\text{\scalebox{0.7}{A(BC)}}  & \alpha^{\tensor} \\
	\end{block}
\end{blockarray}
\circ
\begin{blockarray}{cccc}
	& \text{\scalebox{0.7}{(AB)C}} & \text{\scalebox{0.7}{AB}} & \text{\scalebox{0.7}{C}} \\
	\begin{block}{c(ccc)}
		\text{\scalebox{0.7}{(AB)C}}  & \id &&\\
	\end{block}
\end{blockarray}
\]
Both composition yield the same following matrix:
\[
\begin{blockarray}{cccc}
	& \text{\scalebox{0.7}{(AB)C}} & \text{\scalebox{0.7}{AB}} & \text{\scalebox{0.7}{C}} \\
	\begin{block}{c(ccc)}
		\text{\scalebox{0.7}{A(BC)}}  & \alpha^{\tensor} &&\\
		\text{\scalebox{0.7}{A}}  &&\\
		\text{\scalebox{0.7}{BC}}  &&\\
	\end{block}
\end{blockarray}  \]
Hence the equation \eqTT{} is sound.

\[\tikzfig{worldless_eq/sound_hard_2}\]

The matrices here are particularly large: up to 15 rows and 15 columns.
For clarity, the thick wires of the above diagram will be represented by Greek letters ($\alpha$ for A and $\beta$ for B) in the row and column annotations. On the right-hand-side we have  the composition described in \Cref{fig:proof_soundness_mix}, where the matrices correspond respectively to the bottom two contractions, the swap, the bottom tensor and plus, the top tensor and plus, and the top two contractions. When computing the composition, we obtain the identity matrix, hence \eqmix{} is sound.

\[\tikzfig{worldless_eq/sound_hard_3}\]

On the left-hand-side, we have

\[
\begin{blockarray}{cccc}
	& \text{\scalebox{0.7}{A(BC)}} & \text{\scalebox{0.7}{A}} & \text{\scalebox{0.7}{BC}} \\
	\begin{block}{c(ccc)}
		\text{\scalebox{0.7}{(AB)C}}  & \alpha^{\tensor-1}&&\\
		\text{\scalebox{0.7}{AC}}  &&&\\
		\text{\scalebox{0.7}{BC}}  &&&\id\\
		\text{\scalebox{0.7}{AB}}  &&&\\
		\text{\scalebox{0.7}{A}}  &&\id&\\
		\text{\scalebox{0.7}{B}}  &&&\\
		\text{\scalebox{0.7}{C}}  &&&\\
	\end{block}
\end{blockarray}
\circ
\begin{blockarray}{cc}
	& \text{\scalebox{0.7}{A(BC)}} \\
	\begin{block}{c(c)}
		\text{\scalebox{0.7}{A(BC)}}  & \id\\
		\text{\scalebox{0.7}{A}}  &\\
		\text{\scalebox{0.7}{BC}}  &\\
	\end{block}
\end{blockarray}
\]
On the right-hand-side we have

\[
\begin{blockarray}{cccc}
	& \text{\scalebox{0.7}{(AB)C}} & \text{\scalebox{0.7}{AB}} & \text{\scalebox{0.7}{C}} \\
	\begin{block}{c(ccc)}
		\text{\scalebox{0.7}{(AB)C}}  & \id&&\\
		\text{\scalebox{0.7}{AC}}  &&&\\
		\text{\scalebox{0.7}{BC}}  &&&\\
		\text{\scalebox{0.7}{AB}}  &&\id&\\
		\text{\scalebox{0.7}{A}}  &&&\\
		\text{\scalebox{0.7}{B}}  &&&\\
		\text{\scalebox{0.7}{C}}  &&&\id\\
	\end{block}
\end{blockarray}
\circ
\begin{blockarray}{cc}
	& \text{\scalebox{0.7}{(AB)C}} \\
	\begin{block}{c(c)}
		\text{\scalebox{0.7}{(AB)C}}  & \id\\
		\text{\scalebox{0.7}{AB}}  &\\
		\text{\scalebox{0.7}{C}}  &\\
	\end{block}
\end{blockarray}
\circ
\begin{blockarray}{cc}
	& \text{\scalebox{0.7}{A(BC)}}  \\
	\begin{block}{c(c)}
		\text{\scalebox{0.7}{(AB)C}}  & \alpha^{\tensor-1} \\
	\end{block}
\end{blockarray}
\]
Both composition yield the same following matrix:
\[
\begin{blockarray}{cc}
	& \text{\scalebox{0.7}{A(BC)}} \\
	\begin{block}{c(c)}
		\text{\scalebox{0.7}{A(BC)}}  & \alpha^{\tensor-1}\\
		\text{\scalebox{0.7}{AC}} &\\
		\text{\scalebox{0.7}{BC}}   &\\
		\text{\scalebox{0.7}{AB}}  &\\
		\text{\scalebox{0.7}{A}}  &\\
		\text{\scalebox{0.7}{B}}  &\\
		\text{\scalebox{0.7}{C}}  &\\
	\end{block}
\end{blockarray}  \]
Hence the equation \eqAlphaT{} is sound.

\[\tikzfig{worldless_eq/sound_hard_4}\]

On the left-hand-side, we have

\[
\begin{blockarray}{cccccc}
	& \text{\scalebox{0.7}{AB}} &\text{\scalebox{0.7}{AC}} & \text{\scalebox{0.7}{A}} & \text{\scalebox{0.7}{B}}  & \text{\scalebox{0.7}{C}}  \\
	\begin{block}{c(ccccc)}
		\text{\scalebox{0.7}{(AB)C}}  &&&&&\\
		\text{\scalebox{0.7}{AC}}  &&\id&&&\\
		\text{\scalebox{0.7}{BC}}  &&&&&\\
		\text{\scalebox{0.7}{AB}}  &\id&&&&\\
		\text{\scalebox{0.7}{A}}  &&&\id&&\\
		\text{\scalebox{0.7}{B}}  &&&&\id&\\
		\text{\scalebox{0.7}{C}}  &&&&&\id\\
	\end{block}
\end{blockarray}
\circ
\begin{blockarray}{cccc}
	& \text{\scalebox{0.7}{A}}& \text{\scalebox{0.7}{B}} & \text{\scalebox{0.7}{C}}  \\
	\begin{block}{c(ccc)}
		\text{\scalebox{0.7}{AB}}  &&&\\
		\text{\scalebox{0.7}{AC}}  &&&\\
		\text{\scalebox{0.7}{A}}  &\id&&\\
		\text{\scalebox{0.7}{B}}  &&\id&\\
		\text{\scalebox{0.7}{C}}  &&&\id\\
	\end{block}
\end{blockarray}
\]
On the right-hand-side we have

\[
\begin{blockarray}{cccccc}
	& \text{\scalebox{0.7}{AC}} &\text{\scalebox{0.7}{BC}} & \text{\scalebox{0.7}{A}} & \text{\scalebox{0.7}{B}}  & \text{\scalebox{0.7}{C}}  \\
	\begin{block}{c(ccccc)}
		\text{\scalebox{0.7}{(AB)C}}  &&&&&\\
		\text{\scalebox{0.7}{AC}}  &\id&&&&\\
		\text{\scalebox{0.7}{BC}}  &&\id&&&\\
		\text{\scalebox{0.7}{AB}}  &&&&&\\
		\text{\scalebox{0.7}{A}}  &&&\id&&\\
		\text{\scalebox{0.7}{B}}  &&&&\id&\\
		\text{\scalebox{0.7}{C}}  &&&&&\id\\
	\end{block}
\end{blockarray}
\circ
\begin{blockarray}{cccc}
	& \text{\scalebox{0.7}{A}}& \text{\scalebox{0.7}{B}} & \text{\scalebox{0.7}{C}}  \\
	\begin{block}{c(ccc)}
		\text{\scalebox{0.7}{AC}}  &&&\\
		\text{\scalebox{0.7}{BC}}  &&&\\
		\text{\scalebox{0.7}{A}}  &\id&&\\
		\text{\scalebox{0.7}{B}}  &&\id&\\
		\text{\scalebox{0.7}{C}}  &&&\id\\
	\end{block}
\end{blockarray}
\circ
\begin{blockarray}{cccc}
& \text{\scalebox{0.7}{A}} & \text{\scalebox{0.7}{B}} & \text{\scalebox{0.7}{C}}  \\
\begin{block}{c(ccc)}
	\text{\scalebox{0.7}{A}}  &\id&&\\
	\text{\scalebox{0.7}{B}}  &&\id&\\
	\text{\scalebox{0.7}{C}}  &&&\id\\
\end{block}
\end{blockarray}
\]
Both composition yield the same following matrix:
\[
\begin{blockarray}{cccc}
	& \text{\scalebox{0.7}{A}} & \text{\scalebox{0.7}{B}} & \text{\scalebox{0.7}{C}}  \\
\begin{block}{c(ccc)}
	\text{\scalebox{0.7}{A(BC)}}  &&&\\
	\text{\scalebox{0.7}{AC}} &&&\\
	\text{\scalebox{0.7}{BC}}  &&&\\
	\text{\scalebox{0.7}{AB}}  &&&\\
	\text{\scalebox{0.7}{A}}  &\id&&\\
	\text{\scalebox{0.7}{B}}  &&\id&\\
	\text{\scalebox{0.7}{C}}  &&&\id\\
\end{block}
\end{blockarray}  \]
Hence the equation \eqAlphaP{} is sound.

\subsection{Finishing the Functional Fragment}

We can now prove the soundness of \Cref{lem:spider}, that is:
\begin{lemma}\label{lem:spider_sound}
	Given two diagrams $d,e : \wireSetA \to \wireSetB$  satisfying the following:
	\begin{itemize}
		\item They are connected planar graphs, and in particular do not include any Swap.
		\item It is composed only of Identities $\id$, Tensors $\tensor$, upside-down Tensors, and Adapters $\maclane$.
	\end{itemize}
	Then each can be rewritten into the other using only \eqAlphaT, \eqT, \eqTT, \eqTM, \eqM and \eqMM. It follows that this rewriting is sound.
\end{lemma}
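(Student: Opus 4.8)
The plan is to prove the substantive (syntactic) claim---that any two such diagrams are inter-rewritable---and then observe that soundness is immediate. I would proceed by exhibiting a \emph{normal form} and showing (i) every connected diagram of the allowed shape rewrites to it, and (ii) the normal form is determined by the boundary $(\wireSetA,\wireSetB)$ alone. For a diagram with input colours $X_1,\dots,X_n$ and output colours $Y_1,\dots,Y_m$, the normal form $N_{\wireSetA,\wireSetB}$ is the ``merge everything, then split everything'' diagram: a left-associated cascade of Tensors collecting $X_1\parallel\cdots\parallel X_n$ into a single wire $L_{\wireSetA}=(\cdots(X_1\tensor X_2)\tensor\cdots)\tensor X_n$, followed by a single Adapter $\maclane:L_{\wireSetA}\to L_{\wireSetB}$, followed by a left-associated cascade of upside-down Tensors splitting $L_{\wireSetB}$ into $Y_1\parallel\cdots\parallel Y_m$.

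First I would check that $N_{\wireSetA,\wireSetB}$ is well typed, i.e.\ that $L_{\wireSetA}\isoML L_{\wireSetB}$. This rests on a diagrammatic invariant: for \emph{any} diagram built from $\id$, $\tensor$, $\mirror{\tensor}$ and $\maclane$, the ordered list of atomic (non-$\tone$) leaves of the domain equals that of the codomain. Each generator preserves this list ($\tensor$ and $\mirror{\tensor}$ only re-bracket, $\maclane$ realises $\isoML$ which is associativity and units, and neither sequential nor parallel composition disturbs it), so the invariant follows by a straightforward induction. Crucially, the no-Swap and planarity hypotheses mean the list is preserved as an \emph{ordered} sequence, so that $\maclane$---which does not include commutativity---indeed suffices. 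Since $L_{\wireSetA}$ and $L_{\wireSetB}$ have the same leaf sequence, $L_{\wireSetA}\isoML L_{\wireSetB}$ and the central Adapter exists.

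The core of the argument is the reduction of an arbitrary connected diagram $d$ to $N_{\wireSetA,\wireSetB}$, which I would carry out by induction on the number of generators. The moves are: use \eqMM{} and \eqM{} to fuse all Adapters and delete identity Adapters, \eqTM{} to slide Adapters through Tensors so that they may be gathered at a single waist, \eqTT{} and \eqAlphaT{} to re-associate the Tensors and upside-down Tensors into the canonical left-leaning bracketing, and \eqT{} (a split immediately followed by a compatible merge equals the identity) to contract an internal wire and thereby lower the generator count. \textbf{The main obstacle} is ensuring progress: I must show that a connected diagram not yet in ``all merges above all splits'' form always exposes a redex cancellable by \eqT{} after a bounded number of \eqTT{}/\eqAlphaT{} re-associations. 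Here connectedness is essential---it forbids the disconnected configurations (such as $\id_A\parallel\id_B$ versus $\mirror{\tensor}\circ\tensor$) that are \emph{not} semantically equal to their merge--split normalisation---so that every internal edge can be routed to a split--merge adjacency. I would control termination by a measure combining the number of generators with the total distance of the Adapters from the waist, verifying by a finite case analysis on the topmost generator that each step strictly decreases it.

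Finally, since the normal form depends only on $(\wireSetA,\wireSetB)$, two diagrams $d,e:\wireSetA\to\wireSetB$ of the stated form both rewrite to $N_{\wireSetA,\wireSetB}$, hence into one another, using only \eqAlphaT, \eqT, \eqTT, \eqTM, \eqM{} and \eqMM. Soundness is then a corollary requiring no further work: each of these six equations has already been shown sound in the preceding subsections, so the rewriting $d\rightsquigarrow e$ preserves the interpretation and $\interp{d}=\interp{e}$.
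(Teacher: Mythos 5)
Your proposal is correct and follows essentially the same route as the paper: your ``merge everything, then split everything'' normal form is exactly the $(n,m)$-spider of \Cref{def:spider}, the paper's \Cref{lem:spider} proves precisely the uniqueness-via-normal-form claim (graphically, by the same \eqTM{}/\eqMM{}/\eqM{} adapter-fusion and \eqTT{}/\eqAlphaT{}/\eqT{} reassociation-and-cancellation moves you list), and soundness is then the same corollary of the per-equation soundness established earlier in the appendix. Your additional observations---the $\isoML$-invariance of the tensored boundary ensuring the central Adapter is well typed, and the counterexample $\id_A\parallel\id_B\not\equiv\mirror{\tensor}\circ\tensor$ showing connectedness is indispensable---are accurate and, if anything, more explicit than the paper's pictorial argument.
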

\begin{proof}
	See \Cref{lem:spider} for the existence of the rewriting. All the listed equations have been proven to be sound in the previous subsections, so  this rewriting is sound.
\end{proof}
We will need two lemmas, which allows us to decompose an equation difficult to prove sound into three (or seven) equations easier to prove sound. They both derive from the fact that in a semiadditive category, if $f \circ \iota_\ell = g \circ \iota_\ell$ and $f \circ \iota_r = g \circ \iota_r$ then $f = g$.
\begin{lemma}\label{lem:decomp_para_bin}
	Given two diagrams $d,e : A \parallel B \to \wireSetA$, then $\interp{d} = \interp{e}$ if and only if each of the following hold:
	\[\tikzfig{worldless_eq/sound_lem_bin_TT}\]
	\[\tikzfig{worldless_eq/sound_lem_bin_IN}\]
	\[\tikzfig{worldless_eq/sound_lem_bin_NI}\]
\end{lemma}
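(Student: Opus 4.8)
The plan is to reduce the lemma to the universal property of the biproduct recalled just above its statement: in a semiadditive category a morphism out of a biproduct is determined by its precompositions with the injections. The forward direction is immediate, since if $\interp{d} = \interp{e}$ then precomposing both sides by any fixed morphism preserves the equality, so all the content is in the converse. I would first record that the common domain of $\interp{d}$ and $\interp{e}$ is the iterated biproduct $\interp{A}\mid\interp{B} = (\interp{A}\tensor\interp{B})\oplus(\interp{A}\oplus\interp{B})$, whose three canonical injections are $\iota_\ell : \interp{A}\tensor\interp{B}\to\interp{A}\mid\interp{B}$, then $\iota_r\circ\iota_\ell : \interp{A}\to\interp{A}\mid\interp{B}$, and $\iota_r\circ\iota_r : \interp{B}\to\interp{A}\mid\interp{B}$.

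Next I would identify the three diagrams appearing in the displayed conditions with precomposition by these three injections. The tensor condition is $\interp{d}$ and $\interp{e}$ precomposed with $\mirror{\tensor_{A,B}}$, whose semantics is $\iota_\ell$ by \Cref{fig:sem_fun}; the two remaining conditions are precompositions by $\id_A\parallel\zero_B$ and $\zero_A\parallel\id_B$, whose semantics are $\iota_r\circ\iota_\ell$ and $\iota_r\circ\iota_r$ respectively. These last two computations are exactly the ones already carried out among the easy equations, where one unfolds $\id\mid\zero = \zero\oplus(\id\oplus\zero)$ and $\zero\mid\id = \zero\oplus(\zero\oplus\id)$ and uses $(f\oplus g)\circ\iota_\ell = \iota_\ell\circ f$ and $(f\oplus g)\circ\iota_r = \iota_r\circ g$ together with the unitors hidden inside the coherence isomorphism $m^{|}$ that appears in the semantics of $\parallel$.

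With this dictionary in place the conclusion is a double application of the biproduct principle. Splitting the outer $\oplus$ of $\interp{A}\mid\interp{B}$ shows that $\interp{d}=\interp{e}$ iff $\interp{d}\circ\iota_\ell = \interp{e}\circ\iota_\ell$ (the tensor condition) and $\interp{d}\circ\iota_r = \interp{e}\circ\iota_r$; since $\interp{d}\circ\iota_r$ and $\interp{e}\circ\iota_r$ are morphisms out of $\interp{A}\oplus\interp{B}$, splitting the inner $\oplus$ shows this second equality holds iff it persists after further precomposition by $\iota_\ell$ and by $\iota_r$, which are precisely the two remaining conditions. I expect the only delicate point to be the bookkeeping of the middle step: one must check that the coherence isomorphisms $m^{|}$ and the unitors for $\ozero$ really collapse the composite diagrams $\id_A\parallel\zero_B$ and $\zero_A\parallel\id_B$ down to the bare composite injections rather than to some isomorphic variant. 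This is routine in the matrix presentation of \Cref{fig:mat_sem_fun}, where each injection is simply a column of identities, so I would verify it there instead of manipulating the coherence maps by hand.
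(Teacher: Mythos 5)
Your proposal is correct and follows essentially the same route as the paper's proof: the paper likewise unfolds $\interp{A\parallel B}=(\interp{A}\tensor\interp{B})\oplus(\interp{A}\oplus\interp{B})$ and reduces equality of $\interp{d}$ and $\interp{e}$ to their precompositions with the three composite injections $\iota_\ell$, $\iota_r\circ\iota_\ell$, and $\iota_r\circ\iota_r$, matching these to the three displayed equations in order. The only difference is presentational -- you spell out the identification of $\mirror{\tensor_{A,B}}$, $\id_A\parallel\zero_B$, and $\zero_A\parallel\id_B$ with those injections (and how the coherence maps collapse), which the paper leaves implicit.
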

\begin{proof}
	We start by noting that $\interp{A \parallel B} = \interp{A} | \interp{B} = (\interp{A} \tensor \interp{B}) \oplus (\interp{A} \oplus \interp{B})$. Since we are in a semiadditive category, $\interp{d} = \interp{e}$ if and only if:
	\begin{itemize}
		\item $\interp{d} \circ \iota_\ell = \interp{e} \circ \iota_\ell$
		\item $\interp{d} \circ \iota_r \circ \iota_\ell = \interp{e} \circ \iota_r \circ \iota_\ell$
		\item $\interp{d} \circ \iota_r \circ \iota_r = \interp{e} \circ \iota_r \circ \iota_r$
\end{itemize}
Each of those three items directly correspond to one of the equations, in order.
\end{proof}

\begin{lemma}\label{lem:decomp_para_ter}
	Given two diagrams $d,e : A \parallel B \parallel C \to \wireSetA$, then $\interp{d} = \interp{e}$ if and only if each of the following hold:
	\[\tikzfig{worldless_eq/sound_lem_ter_TTT}\]
	\[\tikzfig{worldless_eq/sound_lem_ter_TNT}\]
	\[\tikzfig{worldless_eq/sound_lem_ter_NTT}\]
	\[\tikzfig{worldless_eq/sound_lem_ter_TTN}\]
	\[\tikzfig{worldless_eq/sound_lem_ter_INN}\]
	\[\tikzfig{worldless_eq/sound_lem_ter_NIN}\]
	\[\tikzfig{worldless_eq/sound_lem_ter_NNI}\]
\end{lemma}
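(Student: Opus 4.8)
The plan is to imitate the proof of \Cref{lem:decomp_para_bin} verbatim, replacing its three-fold case analysis with a seven-fold one. First I would recall that $\interp{A \parallel B \parallel C} = \interp{A} \mid \interp{B} \mid \interp{C}$, and unfold this object. Writing $P = \interp{A}$, $Q = \interp{B}$, $R = \interp{C}$ and using $X \mid Y = (X \tensor Y) \oplus (X \oplus Y)$ together with the distributivity of $\tensor$ over $\oplus$, the object $(P \mid Q) \mid R$ becomes, up to the structural isomorphisms collected in $m^{\mid}$ (which are legitimate by \Cref{prop:parallel_is_monoidal}), the seven-fold biproduct
\[ (P \tensor Q \tensor R) \oplus (P \tensor R) \oplus (Q \tensor R) \oplus (P \tensor Q) \oplus P \oplus Q \oplus R. \]
Unfolding in this specific order (outer $\iota_\ell$ giving the three summands tensored with $R$ after two applications of $\dist_r$, outer $\iota_r$ giving $P\tensor Q$, $P$, $Q$, $R$) makes the seven summands line up, in order, with the targets of the diagrams TTT, TNT, NTT, TTN, INN, NIN, NNI.

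Next, since $\bfup{H}$ is semiadditive, a morphism out of a biproduct is determined by its precompositions with the canonical injections, so two such morphisms are equal if and only if they agree after precomposition with each injection; this is the coproduct half of the biproduct universal property, equivalently \Cref{lem:coef_by_coef}. I would therefore reduce $\interp{d} = \interp{e}$ to the seven identities $\interp{d} \circ \iota_i = \interp{e} \circ \iota_i$, where $\iota_i$ is the injection into the $i$-th summand above.

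Finally I would identify each $\iota_i$ with the semantics of the corresponding precomposed diagram. Four of these are direct copies of the binary argument: the summand $P \tensor Q$ is reached by $\iota_r \circ \iota_\ell \circ \iota_\ell$, matching TTN, and the three single-wire summands $P$, $Q$, $R$ are reached by $\iota_r \circ \iota_\ell \circ \iota_r \circ \iota_\ell$, $\iota_r \circ \iota_\ell \circ \iota_r \circ \iota_r$, and $\iota_r \circ \iota_r$, matching INN, NIN, NNI exactly as in the identity-with-nulls computation $\id \mid \zero = \zero \oplus (\id \oplus \zero)$ of the easy-equations subsection. The remaining three summands $P \tensor Q \tensor R$, $P \tensor R$, $Q \tensor R$ (giving TTT, TNT, NTT) live on the $\iota_\ell$ branch of the outer biproduct, i.e.\ inside $(P \mid Q) \tensor R$, so their injections additionally factor through the right-distributor $\dist_r$; here I would invoke the description of injections through $\dist_r$ recorded in \Cref{lem:proj_inj_matrix} to confirm that these coincide with the semantics of the three double-tensor diagrams.

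The main obstacle I expect is precisely this last distributor bookkeeping: unlike the binary case, the three ``tensor-with-$C$'' branches do not form a biproduct directly but only after applying $\dist_r$ twice, so matching their injections to TTT, TNT, NTT requires tracing the structural isomorphisms carefully rather than reading them off. Everything else is a verbatim extension of \Cref{lem:decomp_para_bin}, and the soundness of the rewriting follows because all the equations and structural morphisms used have already been shown sound.
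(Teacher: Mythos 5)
Your proposal is correct and takes essentially the same approach as the paper: the paper reduces $\interp{d} = \interp{e}$ to exactly the same seven precomposition conditions, in the same order, justified by semiadditivity together with distributivity of $\tensor$ over $\oplus$. The only cosmetic difference is that the paper never fully distributes into a seven-fold biproduct---it phrases the three tensor-branch tests directly as precompositions with $\iota_\ell \circ (\iota_\ell \tensor \id)$, $\iota_\ell \circ (\iota_r \tensor \id) \circ (\iota_\ell \tensor \id)$ and $\iota_\ell \circ (\iota_r \tensor \id) \circ (\iota_r \tensor \id)$, which under $\dist_r$ are exactly your injections, so your explicit distributor bookkeeping is a harmless, slightly heavier rendering of the same argument.
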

\begin{proof}
	We start by noting that $\interp{A \parallel B \parallel C} = (\interp{A} | \interp{B}) | \interp{C}$ which can be unfolded into
	\[\begin{matrix} (((\interp{A} \tensor \interp{B}) \oplus (\interp{A} \oplus \interp{B})) \tensor \interp{C}) \\ \oplus (((\interp{A} \tensor \interp{B}) \oplus (\interp{A} \oplus \interp{B})) \oplus \interp{C}) \end{matrix} \]
	Since we are in a semiadditive category and additionally $\tensor$ is distributive over $\oplus$, $\interp{d} = \interp{e}$ if and only if:
	\begin{itemize}
		\item $\interp{d}
		\circ \iota_\ell
		\circ (\iota_\ell \tensor \id)
		=\interp{e}
		\circ \iota_\ell
		\circ (\iota_\ell \tensor \id)
		$ % Tensor AB, Tensor ABC
		\item $\interp{d}
		\circ \iota_\ell
		\circ (\iota_r \tensor \id)
		\circ (\iota_\ell \tensor \id)
		= \interp{e}
		\circ \iota_\ell
		\circ (\iota_r \tensor \id)
		\circ (\iota_\ell \tensor \id)
		$ % Tensor AC, Null B
		\item $\interp{d}
		\circ \iota_\ell
		\circ (\iota_r \tensor \id)
		\circ (\iota_r \tensor \id)
		= \interp{e}
		\circ \iota_\ell
		\circ (\iota_r \tensor \id)
		\circ (\iota_r \tensor \id)
		$% Tensor BC, Null A
		\item $\interp{d}
		\circ  \iota_r
		\circ \iota_\ell
		\circ \iota_\ell
		=\interp{e}
		\circ  \iota_r
		\circ \iota_\ell
		\circ \iota_\ell
		$ % Tensor AB, Null C
		\item $\interp{d}
		\circ  \iota_r
		\circ \iota_\ell
		\circ \iota_r
		\circ \iota_\ell
		= \interp{e}
		\circ \iota_r
		\circ \iota_\ell
		\circ  \iota_r
		\circ \iota_\ell
		$ % Null BC
		\item $\interp{d}
		\circ \iota_r
		\circ \iota_\ell
		\circ  \iota_r
		\circ \iota_r
		= \interp{e}
		\circ \iota_r
		\circ \iota_\ell
		\circ \iota_r
		\circ  \iota_r
		$ % Null AC
		\item $\interp{d}
		\circ \iota_r
		\circ \iota_r
		= \interp{e}
		\circ \iota_r
		\circ \iota_r
		$ % Null AB
	\end{itemize}
	Each of those seven items directly correspond to one of the equations, in order.
\end{proof}

Now, we can tackle the remaining equations. For that, we note that we can freely rewrite diagrams using equations that we already proved to be sound.
\[\tikzfig{worldless_eq/sound_medium_1}\]
To prove the soundness of those, we use \Cref{lem:decomp_para_bin}, and then rely on the already proven soundness of \eqbot, \eqbotC, \eqNP, \eqNC, \eqLambdaC, \eqRhoC{} and \eqPPCleft.

\[\tikzfig{worldless_eq/sound_medium_2}\]
To prove the soundness of those, we use \Cref{lem:decomp_para_bin}, and then rely on the already proven soundness of \eqbotC, \eqLambdaC, \eqRhoC, \eqNS, \eqNM, \eqTS{} and \eqTM.

\[\tikzfig{worldless_eq/sound_medium_3}\]
To prove the soundness of those, we use \Cref{lem:decomp_para_bin}, and then rely \Cref{lem:spider_sound} and on the already proven soundness of  \eqbot, \eqbotC, \eqNT, \eqNP, \eqNC, \eqNM, \eqLambdaC, \eqRhoC, \eqTNleft{} and \eqPPCleft.

\[\tikzfig{worldless_eq/sound_medium_4}\]
To prove the soudness of those, we use \Cref{lem:decomp_para_ter}, and then rely \Cref{lem:spider_sound} and on the already proven soundness of  \eqbot, \eqbotC, \eqNT, \eqNP, \eqNC, \eqNM, \eqLambdaC, \eqRhoC, \eqTNleft{} and \eqPPCleft.

\subsection{The Full Language}

\begin{figure*}$
	\begin{blockarray}{c cccc}
		& \text{\scalebox{0.7}{$\interp{\tone} \tensor \interp{\tone}$}}   & \text{\scalebox{0.7}{$\interp{\tone}$}} & \text{\scalebox{0.7}{$\interp{\tone}$}}  & \text{\scalebox{0.7}{$\oone$}} \\
		\begin{block}{c(cccc)}
			\text{\scalebox{0.7}{$\interp{\tone}$}}  &\rho^{\tensor}&\id&&\\
			\text{\scalebox{0.7}{$\oone$}}  &&&\id&\id\\
		\end{block}
	\end{blockarray}~\circ~
	\begin{blockarray}{c cccc cccc}
		&\text{\scalebox{0.7}{$\interp{\tone} \tensor \interp{\tone} \tensor \interp{\tone}$}}
		&\text{\scalebox{0.7}{$\interp{\tone} \tensor \interp{\tone}$}}
		&\text{\scalebox{0.7}{$\interp{\tone} \tensor \interp{\tone}$}}
		&\text{\scalebox{0.7}{$\interp{\tone} \tensor \interp{\tone}$}}
		&\text{\scalebox{0.7}{$\interp{\tone}$}}
		&\text{\scalebox{0.7}{$\interp{\tone}$}}
		&\text{\scalebox{0.7}{$\interp{\tone}$}}
		&\text{\scalebox{0.7}{$\oone$}}\\
		\begin{block}{c(cccccccc)}
			\text{\scalebox{0.7}{$\interp{\tone} \tensor \interp{\tone}$}} &&\id&&\id&&&&\\
			\text{\scalebox{0.7}{$\interp{\tone}$}}&&&& &\id&&&\\
			\text{\scalebox{0.7}{$\interp{\tone}$}}&&&& &&\id&\id&\\
			\text{\scalebox{0.7}{$\oone$}} &&&& &&&&\id \\
		\end{block}
	\end{blockarray}$\\$
	\circ~\begin{blockarray}{c cccc cccc}
		&\text{\scalebox{0.7}{$\interp{\tone} \tensor \interp{\tone} \tensor \interp{\tone}$}}
		&\text{\scalebox{0.7}{$\interp{\tone} \tensor \interp{\tone}$}}
		&\text{\scalebox{0.7}{$\interp{\tone} \tensor \interp{\tone}$}}
		&\text{\scalebox{0.7}{$\interp{\tone} \tensor \interp{\tone}$}}
		&\text{\scalebox{0.7}{$\interp{\tone}$}}
		&\text{\scalebox{0.7}{$\interp{\tone}$}}
		&\text{\scalebox{0.7}{$\interp{\tone}$}}
		&\text{\scalebox{0.7}{$\oone$}}\\
		\begin{block}{c(cccccccc)}
			\text{\scalebox{0.7}{$\interp{\tone} \tensor \interp{\tone} \tensor \interp{\tone}$}}  &s \cdot \sigma^{\tensor} \tensor \id &&& &&&&\\
			\text{\scalebox{0.7}{$\interp{\tone} \tensor \interp{\tone}$}}  &&&s \cdot \id& &&&&\\
			\text{\scalebox{0.7}{$\interp{\tone} \tensor \interp{\tone}$}}  &&s \cdot \id&& &&&&\\
			\text{\scalebox{0.7}{$\interp{\tone} \tensor \interp{\tone}$}}  &&&&\sigma^{\tensor} &&&&\\
			\text{\scalebox{0.7}{$\interp{\tone}$}}  &&&& &&\id&&\\
			\text{\scalebox{0.7}{$\interp{\tone}$}}  &&&& &\id&&&\\
			\text{\scalebox{0.7}{$\interp{\tone}$}}  &&&& &&&s \cdot \id&\\
			\text{\scalebox{0.7}{$\oone$}}  &&&& &&&&\id\\
		\end{block}
	\end{blockarray}$\\$
	\circ~
	\begin{blockarray}{c cccc}
		& \text{\scalebox{0.7}{$\interp{\tone} \tensor \interp{\tone}$}}   & \text{\scalebox{0.7}{$\interp{\tone}$}} & \text{\scalebox{0.7}{$\interp{\tone}$}}  & \text{\scalebox{0.7}{$\oone$}} \\
		\begin{block}{c(cccc)}
			\text{\scalebox{0.7}{$\interp{\tone} \tensor \interp{\tone} \tensor \interp{\tone}$}}  &&&&\\
			\text{\scalebox{0.7}{$\interp{\tone} \tensor \interp{\tone}$}}  &\id&&&\\
			\text{\scalebox{0.7}{$\interp{\tone} \tensor \interp{\tone}$}}  &&&&\\
			\text{\scalebox{0.7}{$\interp{\tone} \tensor \interp{\tone}$}}  &\id&&&\\
			\text{\scalebox{0.7}{$\interp{\tone}$}}  &&\id&&\\
			\text{\scalebox{0.7}{$\interp{\tone}$}}  &&&\id&\\
			\text{\scalebox{0.7}{$\interp{\tone}$}}  &&&\id&\\
			\text{\scalebox{0.7}{$\oone$}}  &&&&\id\\
		\end{block}
	\end{blockarray}~
	\circ~
	\begin{blockarray}{ccc}
		& \text{\scalebox{0.7}{$\interp{\tone}$}} & \text{\scalebox{0.7}{$\oone$}}  \\
		\begin{block}{c(cc)}
			\text{\scalebox{0.7}{$\interp{\tone} \tensor \interp{\tone}$}}  &\rho^{\tensor-1} & \\
			\text{\scalebox{0.7}{$\interp{\tone}$}}  &\id& \\
			\text{\scalebox{0.7}{$\interp{\tone}$}}  && \id \\
			\text{\scalebox{0.7}{$\oone$}}  && \id \\
		\end{block}
	\end{blockarray} $
	\caption{Semantics of the left-hand-side of \eqUS.}
	\label{fig:proof_soundness_can}
\end{figure*}

In this section, was will use rely on the matrix representation given in \Cref{fig:mat_sem}.
\[\tikzfig{worldless_eq/sound_unit_1}\]
On the left-hand-side we have\\
$
\begin{blockarray}{c cccc}
	& \text{\scalebox{0.7}{$\interp{A} \tensor \interp{\tone}$}} & \text{\scalebox{0.7}{$\interp{A}$}}  & \text{\scalebox{0.7}{$\interp{\tone}$}} & \text{\scalebox{0.7}{$\oone$}} \\
	\begin{block}{c(cccc)}
		\text{\scalebox{0.7}{$\interp{A}$}}  &\rho^{\tensor}&\id&&\\
		\text{\scalebox{0.7}{$\oone$}}  & &&\id & \id\\
	\end{block}
\end{blockarray}$\\$
\circ~
\begin{blockarray}{ccc}
	& \text{\scalebox{0.7}{$\interp{A \tensor \oone}$}} & \text{\scalebox{0.7}{$\oone$}}\\
	\begin{block}{c(cc)}
		\text{\scalebox{0.7}{$\interp{A} \tensor \interp{\tone}$}}  & \id &\\
		\text{\scalebox{0.7}{$\interp{A}$}}  &&\\
		\text{\scalebox{0.7}{$\interp{\tone}$}}  &&\\
		\text{\scalebox{0.7}{$\oone$}}  && \id\\
	\end{block}
\end{blockarray}
\circ
\begin{blockarray}{ccc}
& \text{\scalebox{0.7}{$\interp{A}$}} & \text{\scalebox{0.7}{$\oone$}}  \\
\begin{block}{c(cc)}
	\text{\scalebox{0.7}{$\interp{A \tensor \oone}$}}  & \rho^{\tensor-1}& \\
	\text{\scalebox{0.7}{$\oone$}}  && \id \\
\end{block}
\end{blockarray} $\\
which simplifies to the identity matrix. Hence \eqRhoT{} is sound.
\[\tikzfig{worldless_eq/sound_unit_2}\]
On the left-hand-side we have
\[
\begin{blockarray}{c cc}
	  & \text{\scalebox{0.7}{$\interp{\tone}$}} & \text{\scalebox{0.7}{$\oone$}} \\
	\begin{block}{c(cc)}
		\text{\scalebox{0.7}{$\oone$}}  & \id & \id\\
	\end{block}
\end{blockarray}
\circ
\begin{blockarray}{cc}
	 & \text{\scalebox{0.7}{$\oone$}}\\
	\begin{block}{c(c)}
		\text{\scalebox{0.7}{$\interp{\tone}$}}  &\\
		\text{\scalebox{0.7}{$\oone$}}  & \id \\
	\end{block}
\end{blockarray} \]
which simplifies to the identity 1-by-1 matrix, hence \eqUN{} is sound. Last but not least, we consider:
\[\tikzfig{worldless_eq/sound_unit_3}\]
On the left-hand-side, we have the composition described in \Cref{fig:proof_soundness_can}, where the matrices correspond respectively to the bottom unit, the bottom contraction, the swap with a scalar, the top contraction, and the top unit. When computing the composition, we obtain the following (we note that on $\oone \tensor \one \to \oone \tensor \oone$, we actually have $\sigma^{\tensor} = \id$):
\[
\begin{blockarray}{ccc}
	& \text{\scalebox{0.7}{$\interp{\tone}$}} & \text{\scalebox{0.7}{$\oone$}}  \\
	\begin{block}{c(cc)}
		\text{\scalebox{0.7}{$\interp{\tone}$}}  & \id & \id \\
		\text{\scalebox{0.7}{$\oone$}}  &\id &(s+1) \cdot \id\\
	\end{block}
\end{blockarray}\]
Similarly, if we compute the matrix for the right-hand-side, we obtain the following:
\[
\begin{blockarray}{ccc}
	& \text{\scalebox{0.7}{$\interp{\tone}$}} & \text{\scalebox{0.7}{$\oone$}}  \\
	\begin{block}{c(cc)}
		\text{\scalebox{0.7}{$\interp{\tone}$}}  & \id & \id \\
		\text{\scalebox{0.7}{$\oone$}}  &\id &(t+1) \cdot \id\\
	\end{block}
\end{blockarray}\]
Both are equal whenever $s+1 = t+1$, hence \eqUS{} is sound.
\begin{figure*}\centering
	\tikzfig{induced/induced_eq}
	\caption{Additional Induced Equations for the \Langage}
	\label{fig:eq_induced}
\end{figure*}
\clearpage
\begin{figure*}\centering
	\tikzfig{induced/induced_core}
	\caption{Reminder of the Induced Equations for the \Langage}
	\label{fig:eq_induced_core}
\end{figure*}
\section{Notable Lemmas for the Equational Theory}
\label{app:lemma}

In this appendix, we work on a collection of useful lemmas and equations. In order to distinguish them from the equations of \Cref{fig:eq_tensor_plus,fig:eq_contraction_nat,fig:eq_null_nat,fig:eq_maclane,fig:eq_scalar,fig:eq_unit}, we name the new equations with squared brackets $[-]$ instead of parentheses $(-)$.

\subsection{Induced Equations}

	From the equations of \Cref{fig:eq_tensor_plus,fig:eq_null_nat,fig:eq_contraction_nat,fig:eq_maclane,fig:eq_scalar}, we can obtain a few additional equation, which we list in \Cref{fig:eq_induced_core,fig:eq_induced}, and which can be proved as follows. Except for \eqCCtoC, \eqUsum{} and \eqUpar{}, all the equations are proved in the Functional fragment of the language.\\
	
	For \eqLambdaC{}, we simply use \eqSigmaC{} then \eqRhoC{}.
	
	For \eqTTsimple{} we start from the right-hand-side:
		\begin{center}
			\tikzfig{induced/eqTTsimple}
		\end{center}
	For \eqSigmaT{}:
		\begin{center}
			\tikzfig{induced/eqSigmaT}
		\end{center}
	The equation \eqTCright{} here is the mirrored left-right of the equation \eqTCleft{} in \Cref{fig:eq_contraction_nat}. We keep the same name for convenience. However, deducing one from the other is surprisingly non-trivial:
		\begin{center}
			\tikzfig{induced/eqTCright_1}
			\end{center}
			\begin{center}
			\tikzfig{induced/eqTCright_2}
		\end{center}
	The equations \eqTNleft{} are both proved symmetrically:
	\begin{center}
		\tikzfig{induced/eqTNleft}
	\end{center}
	The equations \eqPPNright{} are both proved symmetrically:
\begin{center}
	\tikzfig{induced/eqPPNright}
\end{center}
	For \eqSP{}, we start from the right-hand-side:
\begin{center}
	\tikzfig{induced/eqSP}
\end{center}
	For \eqCtoP{}, we start from the right-hand-side:
	\begin{center}
		\tikzfig{induced/eqCtoP}
	\end{center}
For \eqbotC{}:
\begin{center}
	\tikzfig{induced/eqbotC}
\end{center}
For \eqCCtoT{}, we start from the right-hand-side:
	\begin{center}
		\tikzfig{induced/eqCCtoT}
	\end{center}
For \eqPCtoT{}:
	\begin{center}
		\tikzfig{induced/eqPCtoT}
	\end{center}
For \eqSS, we start from the right-hand-side:
	\begin{center}
		\tikzfig{induced/eqSS}
	\end{center}
The equations \eqPPtoCC{} and \eqPPtoC{} follow from \eqPCtoT{}, but we postpone their proof to \Cref{lem:X_oplus_to_contraction,lem:X_contraction_to_contraction_unit} as it is much more readable using "spider" notations.

Similarly, we postpone the equation \eqCCtoC{} to \Cref{lem:X_contraction_to_contraction_unit}.

For \eqSigmaP{}:
	\begin{center}
		\tikzfig{induced/eqSigmaP}
	\end{center}
The equation \eqPPCright{} here is the mirrored left-right of the equation \eqPPCleft{} in \Cref{fig:eq_contraction_nat}, we keep the same name for convenience. One implies the other through \eqSigmaP{} and \eqSigmaC{}.

Similarly, the equation \eqPPNright{} here is the mirrored left-right of the equation \eqPPNleft{} in \Cref{fig:eq_null_nat}, we keep the same name for convenience, and one implies the other through \eqSigmaP{}.

The equations \eqPPPleft{} are a variation over \eqPPCleft{}. We postpone the their proof to \Cref{lem:X_oplus_oplus} as they are much more readable using  "spider" notations. We postpone similarly the proof of \eqPPSleft{} to \Cref{lem:scalar_through_X_oplus}.

The equation \eqPCreverse{} is a variation over \eqPC{}, and we postpone its proof to \Cref{lem:oplus_contraction_bis} as it is much more readable using "spider" and "disjunction" notations.

The proof of equations \eqUsum{} and \eqUpar{} are also postponed to \Cref{lem:eq_unit_scalar}.

\subsection{The $n$-ary Nodes}

\begin{definition}\label{def:n_ary}
	We define the $n$-ary Tensor for $n \geq 1$ (with $n= 1$ being the identity):
	\begin{center}
		\tikzfig{n_ary/n-tensor}
	\end{center}
Then the $n$-ary Plus for $n \geq 1$ (with $n= 1$ being the identity):
\begin{center}
\tikzfig{n_ary/n-plus}
\end{center}
And lastly the $n$-ary Contraction for $n \geq 1$ (with $n = 1$ being the identity):
\begin{center}
\tikzfig{n_ary/n-contraction}
\end{center}
\end{definition}

Most equations generating $\equiv$ can be generalized naturally to the $n$-ary case by iterating the binary case. We formally prove some of the non-trivial ones.

\begin{lemma}~\label{lem:n_ary_oplus_to_contraction}
	\begin{center}
		\tikzfig{n_ary/proof_oplus_to_contraction_1}\\ [0.5cm]
		\tikzfig{n_ary/proof_oplus_to_contraction_2}\\ [0.5cm]
		\tikzfig{n_ary/proof_oplus_to_contraction_3}
	\end{center}
\end{lemma}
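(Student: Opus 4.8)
The plan is to establish all three displayed equations by a single induction on the arity $n$ of the nodes, peeling off one leg of each $n$-ary node via \Cref{def:n_ary} so that it becomes an $(n-1)$-ary node composed with a binary generator, and then closing the step with the corresponding binary axiom. Since the three equations plausibly feed into one another at a fixed arity, I would run them as a simultaneous (mutual) induction, proving the whole triple for arity $n+1$ from the triple for arity $n$, which sidesteps any circular dependency should the later equations at level $n+1$ invoke the earlier ones at the same level.

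First I would dispatch the base cases. For $n=1$ every $n$-ary node is by definition the identity, so all three equations collapse to reflexivity once any purely administrative Adapter is absorbed via \eqM. For $n=2$ the $n$-ary Plus and $n$-ary Contraction are exactly the binary generators, and the three equations become instances of the binary axioms --- principally \eqPtoC{} (the binary ``Plus behaves like Contraction'' equation), supplemented by \eqP, \eqbot{} and \eqSigmaC{} to normalise the individual branches.

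For the inductive step, assuming the triple at arity $n$, I would rewrite each $(n+1)$-ary node by splitting off its last leg, writing $\contraction^{(n+1)} = \contraction \circ (\contraction^{(n)} \parallel \id)$ up to an Adapter, and symmetrically for the Plus. The reassociation of the $n$-fold $\oplus$ and $\parallel$ products that this produces is handled by \eqAlphaP{} (and \eqSigmaP{} whenever a leg must be commuted past another), together with \eqMM{} and \eqM{} to fuse or cancel the Adapters introduced along the way. Once the diagram is reshaped so that the inner $(n)$-ary block matches the left-hand side of the induction hypothesis, applying that hypothesis rewrites the inner block, and a single application of the relevant binary axiom (\eqPtoC, \eqPP, or \eqCC) finishes the peeled leg and reassembles the $(n+1)$-ary node on the right-hand side. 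This is exactly what the three displayed diagram chains \texttt{proof\_oplus\_to\_contraction\_1/2/3} carry out.

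The main obstacle I expect is bookkeeping rather than anything conceptual: keeping the bracketing of the iterated $\oplus$/$\parallel$ products coherent across the peeling step so that the hypothesis genuinely applies to a sub-diagram of the right shape, and checking that the leg split off on the left is split off consistently on the right. The administrative isomorphisms responsible for this --- the Adapters and the structural reassociations --- are unique by the coherence result of \Cref{prop:maclane_distributive}, so no real choice is involved; the only work is writing down the rewrite sequence explicitly and verifying that each intermediate diagram is well typed after each reassociation.
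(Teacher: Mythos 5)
Your high-level instincts are right — induct on arity, and treat the three equations together because they feed into one another — and this matches the paper's overall strategy. But the inductive step you describe (``apply the hypothesis to the inner block, and a single application of the relevant binary axiom finishes'') fails at exactly the point where the lemma has real content. Take the first equation, the $n$-ary \eqPtoC{}. After you peel one leg, apply the $(n-1)$-ary case to create an $(n-1)$-ary Contraction, and apply the binary \eqPtoC{} to create the missing binary Contraction, you are left with mismatched Tensor structure: the Tensors produced by the hypothesis and those produced by the binary axiom sit on opposite sides of Contraction nodes and do not reassemble into the flat $n$-ary Tensors of the right-hand side. None of the moves you list (\eqAlphaP{}, \eqSigmaP{}, \eqMM{}, \eqM{}) nor the coherence result of \Cref{prop:maclane_distributive} can repair this, because commuting a Tensor past a Contraction is not a structural/coherence fact — it is precisely the content of \eqPCtoT{}. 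This is why the paper proves the $n$-ary \eqPCtoT{} \emph{first} (and its step is not a peel-and-close but a conjugation: apply the $(n-1)$-ary case, then the binary case, then \emph{undo} the $(n-1)$-ary case), and only then proves the $n$-ary \eqPtoC{}, using \eqTT{} to set up an application of the already-established $n$-ary \eqPCtoT{} to eliminate the leftover Tensors. Your mutual-induction framing could in principle absorb this ordering within each level, but as written your step omits the one genuinely non-administrative move the proof needs; your closing remark that the remaining work is ``bookkeeping rather than anything conceptual'' is exactly where the argument breaks.

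Two smaller inaccuracies point the same way. The binary axioms you propose for closing the second and third equations, \eqPP{} and \eqCC{}, are not the relevant ones: the second equation is the $n$-ary \eqPCtoT{}, whose binary instance is \eqPCtoT{} itself, and the third is the $n$-ary \eqPPtoCC{} (together with \eqCCtoC{}), which the paper does not prove by peeling at all — it observes that the binary derivations in \Cref{lem:X_oplus_to_contraction,lem:X_contraction_to_contraction_unit} are assembled from \eqPtoC{} and \eqPCtoT{}, and hence generalize once those two are available in $n$-ary form. Finally, the base case $n=1$ is not reflexivity modulo Adapter absorption: even at arity one, the Tensor side of these equations retains genuine structure, and the paper needs \eqT{} and \eqTTsimple{} (on top of \eqMM{}) to discharge it.
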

\begin{proof}
	We proceed inductively, starting with the second one  \eqPCtoT{}:
	\begin{itemize}
		\item $n = 1$ follows from \eqMM{}, \eqT{} and \eqTTsimple{}.
		\item $n = 2$ is the regular \eqPCtoT{}.
		\item For $n > 2$, we simply use the $(n-1)$-ary case, then the binary case, then we undo the case $(n-1)$-ary case.
	\end{itemize}
	We can now prove the first one \eqPtoC{}:
	\begin{itemize}
		\item $n = 1$ follows from \eqMM{} and \eqT{}.
		\item $n = 2$ is the regular \eqPtoC{}.
		\item For $n > 2$, we start by creating a $(n-1)$-Contraction with the $(n-1)$-ary case, then we create the missing binary Contraction using the regular, and lastly we have to eliminate the additional Tensors. For that we can use \eqTT{} to set up a use of \eqPCtoT{}.
	\end{itemize}
	Then, since \eqPPtoCC{} is proved using \eqPCtoT{} and \eqPtoC{} (see \Cref{lem:X_oplus_to_contraction}), we can generalize it to the $n$-ary case without issues. The proof for \eqCCtoC{} given in \Cref{lem:X_contraction_to_contraction_unit} also can also be generalised easily.
\end{proof}

\subsection{The Spider}

\begin{definition}\label{def:spider}
	We define the $(n,m)$-spider for $n \geq 1$, $m \geq 1$ and $\bigtensor_{i=1}^n A_i \isoML \bigtensor_{j=1}^m B_j$:
	\begin{center}
	\tikzfig{spider/def_spider}
	\end{center}
\end{definition}
We note that $\tensor$ can be seen as a $(2,1)$-spider by adding a trivial $\maclane$ with the equation \eqM{}, and $\maclane$ is directly a $(1,1)$-spider.

\begin{lemma}\label{lem:spider}
	There exists at most one morphism in $\FCat_\equiv(\wireSetA,\wireSetB)$ such that its diagram is a connected planar graph with for only nodes the $\tensor$ and $\maclane$. In other words, we have the following equation and its up/down mirrored version:
	\begin{center}
		\tikzfig{spider/lem_spider}
	\end{center}
\end{lemma}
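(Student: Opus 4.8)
The plan is to prove the stronger statement that every connected, planar, Swap-free diagram $d\colon\wireSetA\to\wireSetB$ built only from $\tensor$, $\mirror{\tensor}$ and $\maclane$ is equal in $\FCat_\equiv$ to the canonical $(n,m)$-spider of \Cref{def:spider}, where $n$ and $m$ are the numbers of input and output wires. Since any two such spiders sharing the same domain $\wireSetA$ and codomain $\wireSetB$ differ only in their single central adapter $\bigtensor_{i} A_i \isoML \bigtensor_{j} B_j$, and since this adapter is itself unique up to $\equiv$ by \eqM, \eqMM and the coherence of \Cref{prop:maclane_bimonoidal}, the existence of a common normal form immediately yields ``at most one morphism''. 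The first, purely administrative step is to collect all the $\maclane$ nodes into a single layer: \eqTM slides an adapter past a Tensor, \eqMM fuses two adjacent adapters, \eqM erases a trivial one, and \eqAlphaT reassociates; none of these change the number of $\tensor$/$\mirror{\tensor}$ nodes.

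The core is an induction on the number of $\tensor$/$\mirror{\tensor}$ nodes. I would first note that, lacking both Swaps and the Unit (hence no cups or caps), these diagrams are progressive: every wire runs monotonically downward, so the underlying graph is acyclic and, being connected, its only undirected cycles are the ``bigons'' formed by a split $\mirror{\tensor}$ immediately followed by a merge $\tensor$ of the very same pair of wires. Such a bigon is removed by \eqT. The reduction step is then: pick any split node; by planarity its two output wires are adjacent, with no intervening wire that a Swap would be needed to reorder. Using associativity of the merge (\eqTT, its up-down mirror, and \eqAlphaT) I reassociate the downstream merges so that those two adjacent wires are merged first, by a single $\tensor$; this exposes a split-then-merge bigon on that pair, which \eqT cancels, strictly decreasing the node count. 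Iterating eliminates every split, leaving a pure tree of merges which, by repeated \eqTT, equals the $n$-ary merge (the mirror of the $n$-ary Tensor of \Cref{def:n_ary}); dually for the splits. What remains is exactly ``all merges, one adapter, all splits'', i.e.\ the $(n,m)$-spider.

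The step I expect to be the main obstacle is the reassociation that exposes a cancellable bigon. Unlike the ZX setting there is no Swap, so it is not a priori clear that the two output wires of a chosen split can always be brought back together using associativity alone. This is precisely where planarity and connectedness are indispensable: planarity guarantees that the two outputs of a split remain adjacent and that their respective merge-partners can be gathered on one side without crossings, so that only Swap-free associators (\eqTT, \eqAlphaT) are required; connectedness guarantees that the whole diagram collapses onto a single trunk rather than into several independent spiders. I would therefore spend the bulk of the argument setting up a suitable complexity measure (for instance the number of $\mirror{\tensor}$ nodes, or the number of split-above-merge adjacencies) and checking that the reassociate-then-cancel move is always available and strictly decreasing in the planar connected case, so that the induction terminates at the canonical spider. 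The up-down mirrored version of the statement follows by running the same argument in the opposite category.
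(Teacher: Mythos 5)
Your plan---normalize every connected, planar, Swap-free $\tensor$/$\maclane$-diagram to the canonical $(n,m)$-spider and deduce ``at most one morphism''---is reasonable, but your single reduction move does not suffice, and the gap is exactly at the step you flagged as the main obstacle. Your move is: pick a split, reassociate the downstream merges so that its two output wires are merged first, and cancel the resulting bigon with \eqT{}. This only applies to splits whose two outputs \emph{reconverge} at a common merge. A connected planar diagram need not have this property: consider
$(\tensor_{C,A}\parallel\id_B)\circ(\id_C\parallel\mirror{\tensor}_{A,B}) : C\parallel(A\tensor B)\to(C\tensor A)\parallel B$,
where the split's left output is merged with $C$ and its right output runs straight to the boundary. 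No bigon exists and none can be exposed by reassociation, yet the diagram is not in spider form (a split sits above a merge); it must be rewritten to $\mirror{\tensor}_{(C\tensor A),B}\circ\maclane\circ\tensor_{C,A\tensor B}$. The move that achieves this is precisely the axiom \eqTT{}, which you have misread as an associativity law. In this calculus, associativity of splits is \eqAlphaT{} (its up-down mirror gives associativity of merges, and that---not \eqTT{}---is what assembles a merge tree into the $n$-ary merge), whereas \eqTT{} is the Frobenius-style \emph{interchange}
$(\id_A\parallel\tensor_{B,C})\circ(\mirror{\tensor}_{A,B}\parallel\id_C)\equiv\mirror{\tensor}_{A,B\tensor C}\circ\maclane\circ\tensor_{A\tensor B,C}$,
the only equation among your permitted moves that slides a merge past a split sharing one wire. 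It is an independent axiom of \Cref{fig:eq_tensor_plus} for exactly this reason: bigon cancellation and reassociation alone preserve the interleaved split-above-merge pattern, so your induction measure never decreases on such configurations and the normalization stalls. A secondary inaccuracy: it is not true that the only undirected cycles are bigons (split $X$ into $A_1\tensor A_2$ and $B$, split the first factor again, and recombine $A_2$ with $B$ before merging in $A_1$: this produces a longer cycle), though this claim is not load-bearing for your argument.

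Once \eqTT{} is deployed as an interchange---alongside \eqT{} for genuine bigons, \eqAlphaT{} and its mirror for reassociation, and \eqTM{}, \eqMM{}, \eqM{} for collecting adapters---your normalization strategy can in principle be repaired, but you would still owe the termination argument for a measure that also decreases under the interchange (e.g.\ counting pairs consisting of a merge lying below a split along a directed path). The paper avoids this global graph-theoretic induction altogether: it proves the lemma by a short, direct diagrammatic derivation of the fusion equation \eqSpider{} for two spiders joined along one wire, using exactly the six equations \eqAlphaT{}, \eqT{}, \eqTT{}, \eqTM{}, \eqM{}, \eqMM{}, with \eqTT{} doing the interleaving work your proposal is missing.
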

\begin{proof}~
	\begin{center}
		\tikzfig{spider/lem_spider_proof_1}\\ [0.5cm]
		\tikzfig{spider/lem_spider_proof_2}
	\end{center}
\end{proof}

\begin{lemma}\label{lem:spider_equations}
	The equation \eqSpider{} from \Cref{lem:spider} already generalizes the equations \eqTM{}, \eqMM{}, \eqAlphaT{}, \eqT{}, \eqTT{}, and \eqTTsimple{}. We continue this list by generalizing a few additional ones:
	\begin{center}
		\tikzfig{spider/lem_spider_contraction} \\ [0.5cm]
		\tikzfig{spider/lem_spider_null}  \\ [0.5cm]
		\tikzfig{spider/lem_spider_lambda_rho}\\ [0.5cm]
		\tikzfig{spider/lem_spider_scalar}
	\end{center}
	with $s_1 \times \dots \times s_n = t_1 \times \dots \times t_m$.
\end{lemma}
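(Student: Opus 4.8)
The plan is to prove each of the four displayed equations by a single common template: expand every spider into its defining planar tree of Tensors and Adapters (\Cref{def:spider}), push the extra generator appearing in the equation (a Contraction, a Null, a unitor, or a family of Scalars) through that tree using the already-established binary commutation axioms, and finally re-collapse the remaining tensor/adapter part into a single spider by invoking the uniqueness statement \eqSpider{} of \Cref{lem:spider}. Because \eqSpider{} guarantees that any two connected planar diagrams built solely from Tensors and Adapters with the same boundary are equal, the real work in each case is only to check that, once the extra generator has been moved to a canonical boundary position, the leftover tensor/adapter skeleton is still connected and planar.

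First, for the Contraction equation I would argue by induction on the arities $n,m$ of the spider. The base case is exactly the binary naturality axioms \eqTC{} and \eqTCleft{} of \Cref{fig:eq_contraction_nat}, together with \eqMC{} for the Adapter; the inductive step slides the Contraction past one further Tensor and reapplies the hypothesis. Each side then decomposes into a Contraction on one boundary and a connected planar tensor/adapter tree on the other, and the two trees are identified by \eqSpider{}. The Null equation is treated identically, the relevant absorption axioms now being \eqTNleft{} and \eqNM{}: a Null on any leg propagates through each Tensor, turning the entire connected region into a Null, so both sides reduce to the single-Null diagram. The unitor equation is the most direct, being simply \eqLambdaRhoT{} applied at each $\tone$-leg, with the surrounding structure again matched by \eqSpider{}.

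The Scalar equation is where the hypothesis $s_1 \times \dots \times s_n = t_1 \times \dots \times t_m$ is used. Here I would slide every Scalar along its wire through the Tensors and Adapters of the spider by means of \eqTS{}, \eqST{} and \eqSM{}, accumulating them at one chosen edge; since the diagram is connected, all scalars can indeed be brought together, and by \eqSS{} (the multiplicativity axiom $R_\times$) they fuse into the single scalar $s_1 \times \dots \times s_n$. Performing the same computation on the right-hand side yields $t_1 \times \dots \times t_m$, so, once the scalar-free skeletons are matched by \eqSpider{}, the two sides coincide precisely under the stated product condition.

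The main obstacle I anticipate is the bookkeeping needed to keep every intermediate diagram connected and planar so that \eqSpider{} stays applicable; in particular the inductive sliding steps in the Contraction case must be arranged so that no Swap is introduced, as \Cref{lem:spider} is stated only for Swap-free planar graphs. The Scalar case is the conceptually delicate one, since it relies essentially on the commutativity of $R$ encoded in \eqSS{} to ensure that the accumulated product does not depend on the order in which the scalars are collected.
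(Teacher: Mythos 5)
Your proposal is correct and follows essentially the same route as the paper's proof: unfold each spider into its defining tensor/adapter tree, push the extra generator through using the binary axioms (\eqTCleft{}, \eqCM{}, \eqTC{} for the Contraction; \eqTNleft{}, \eqNM{}, \eqTN{} for the Null; \eqAlphaT{}, \eqLambdaRhoT{}, \eqTM{}, \eqMM{} for the unitors; \eqTS{}, \eqSM{} for the Scalars), and re-collapse via \eqSpider{}. Your explicit induction on arities and the fusion of scalars via the derived equation \eqSS{} are only cosmetic refinements of the paper's ``iterate'' phrasing (note that \eqST{} is the same axiom as \eqTS{} in the paper's naming).
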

\begin{proof}
	For \eqSpiderC{}, we simply unfold the definition of the spider, and iterate \eqTCleft{}, then use \eqCM{} then iterate \eqTC{}.

	For \eqSpiderN{}, we also unfold the definition of the spider, iterate  \eqTNleft{}, then use \eqMN{} then iterate \eqTN{}.

	For \eqLambdaRhoSpider{}, after unfolding the definition of the tensor, we use \eqAlphaT{} and \eqLambdaRhoT{} to swap the wire of color $\tone$ with the other wires one by one, using \eqTM{} and \eqMM{} to eliminate the $\maclane$ that appeared at each use of \eqLambdaRhoT{}.

	For \eqSpiderS{}, after unfolding the definition of the spider, iterate  \eqTS{}, then use \eqSM{} then iterate \eqTS{}.
\end{proof}

\begin{lemma}\label{lem:contraction_to_X_oplus}
	In order to prove \cref{lem:X_oplus_to_contraction}, we need the following preliminary result:
	\begin{center}
		\tikzfig{induced/proof_contraction_to_X_oplus_0}
	\end{center}
\end{lemma}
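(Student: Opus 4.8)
The statement is a purely diagrammatic equation between two morphisms of the functional fragment, built only from the Tensor, Plus, Contraction, Null and Adapter generators, with (as its name indicates) a Contraction on one side and a pair of head-to-head Pluses, i.e. the $X\oplus$ pattern, on the other. Since this identity will be used through \Cref{lem:X_oplus_to_contraction} in the construction of the normal forms on which the completeness proof rests, I cannot close it by appealing to completeness — that would be circular — and must instead derive it directly from the axioms of \Cref{fig:eq_tensor_plus,fig:eq_null_nat,fig:eq_contraction_nat,fig:eq_maclane}. As a preliminary sanity check I would first compute the matrices of both diagrams using the functional matrix semantics of \Cref{fig:mat_sem_fun}: this confirms that no arithmetic obstruction stands in the way and pins down the exact target of the rewrite, but it is only a guide and not itself a proof of $\equiv$.

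For the derivation I would orient the rewrite from the Contraction side, because the only axiom genuinely linking the additive Contraction to the Plus is \eqPtoC{}, which exhibits their ``very similar behaviour''. The first move is therefore to apply \eqPtoC{} to trade the Contraction for a Plus (together with the administrative Adapter it introduces), after which the resulting Pluses can be brought head-to-head in the correct colouring. The remaining work is routing: I would slide the Plus through its Contraction-context using the commutation law \eqPC{} of \Cref{fig:eq_contraction_nat}, symmetrise with \eqSigmaC{} wherever the two branches must be exchanged, and collapse any Plus meeting its mirror with \eqP{} (yielding the identity) or \eqbot{}/\eqbotC{} (yielding the Null) according to which branch is selected. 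Adapters produced along the way are absorbed with \eqM{} and \eqMM{}, while \eqMC{} lets an Adapter pass through a Contraction; whenever a connected block of Tensors and Adapters needs rearranging, \Cref{lem:spider} supplies the required unique form in a single step.

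The main obstacle is that the Contraction and the Plus are \emph{not} interchangeable in isolation: semantically the Contraction carries the codiagonal $\nabla$ whereas the Plus is merely the projection $\pi_r$, so the exchange performed by \eqPtoC{} only balances out in the presence of the surrounding $X\oplus$ context, which must supply the missing injection. The delicate part of the proof is thus threading each step so that at every intermediate diagram the discarded tensor branch and the ``at most one input used'' constraint are genuinely accounted for rather than silently dropped; keeping the intermediate diagrams well-typed, and re-checking against \Cref{fig:mat_sem_fun} when a branch looks suspicious, is where the bookkeeping is subtle. Once the two head-to-head Pluses are exposed in the correct colouring, the identity reduces to the available $X\oplus$ behaviour and the derivation closes, giving exactly the preliminary equation needed for \Cref{lem:X_oplus_to_contraction}.
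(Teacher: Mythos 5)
Your instincts are partly right (derive the equation axiomatically, do not appeal to completeness, start from \eqPtoC{}), but the sketch has a genuine gap at exactly the two places where the work lies. First, your closing step --- ``the identity reduces to the available $X\oplus$ behaviour'' --- is where circularity threatens: the only head-to-head-Plus facts available at this point are the axioms \eqPP{} and \eqPPCleft{}. The stronger laws \eqPPtoCC{} and \eqPPtoC{} are precisely what \Cref{lem:X_oplus_to_contraction} and \Cref{lem:X_contraction_to_contraction_unit} derive \emph{from} the present lemma, so they cannot be what closes your derivation; if your final reduction needs anything beyond \eqPP{} and \eqPPCleft{}, the argument is circular. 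Second, the toolkit you list (\eqPtoC{}, \eqPC{}, \eqSigmaC{}, \eqP{}, \eqbot{}, \eqbotC{}, \eqM{}, \eqMM{}, \eqMC{}, plus \Cref{lem:spider} for rearranging connected $\tensor$/$\maclane$ blocks) contains nothing that can create or cancel the Tensor and $\tone$-unit structure separating the two sides of the equation. The paper states explicitly that these induced equations ``follow from \eqPCtoT{}'', and its proof of the present lemma is carried out in spider notation, concluding with \eqLambdaRhoSpider{} to absorb the $\tone$-wires; the $n$-ary analysis in \Cref{lem:n_ary_oplus_to_contraction} makes the mechanism explicit, using \eqTT{} to set up an application of \eqPCtoT{}, which is what eliminates the extra Tensors that the \eqPtoC{} exchange introduces. \Cref{lem:spider} only normalizes $\tensor$/$\maclane$ trees; it cannot substitute for \eqPCtoT{} at the step where a Plus--Contraction pair must be converted into a Tensor.

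Beyond this, what you have submitted is a plan rather than a derivation: the ``delicate bookkeeping'' you defer is the entire content of the lemma, and the paper's proof consists precisely of that explicit rewrite chain (two sequences of diagram equalities, finished off by \eqLambdaRhoSpider{}). Your preliminary matrix-semantics computation is a legitimate sanity check, and you are right that it proves nothing about $\equiv$ --- invoking \Cref{thm:complete_fun} here would indeed be circular --- but without \eqPCtoT{} and the spider equations the rewrite you outline stalls at the first point where a Tensor or a $\tone$-wire has to appear or disappear.
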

\begin{proof}~
	\begin{center}
		\tikzfig{induced/proof_contraction_to_X_oplus_1} \\ [0.5cm]
		\tikzfig{induced/proof_contraction_to_X_oplus_2}
	\end{center}
	And we conclude by using \eqLambdaRhoSpider.
\end{proof}

We can now prove the first, second and third induced equations that were postponed. Since \eqPPtoC{} is the composition of \eqPPtoCC{} and \eqCCtoC{}, it follows from the next two lemmas that prove both of them.

\begin{lemma}\label{lem:X_oplus_to_contraction}~
	\begin{center}
		\tikzfig{induced/proof_X_oplus_to_contraction_0}
	\end{center}
\end{lemma}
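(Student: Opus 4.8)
The plan is to prove \eqPPtoCC{} purely by diagrammatic rewriting, turning the pattern of two head-to-head Pluses on the left (the $X\oplus$ pattern) into the pattern of two head-to-head Contractions on the right, using only equations already available: \eqPtoC{}, \eqPCtoT{}, the spider machinery of \Cref{lem:spider}, and the preliminary \Cref{lem:contraction_to_X_oplus}. None of these involve the Unit, matching the fact that \eqPPtoCC{} lives in the functional fragment $\FCat{}$.

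First I would orient the preliminary result \Cref{lem:contraction_to_X_oplus} in the direction I need: it expresses a Contraction fragment in terms of an $X\oplus$ fragment (its own proof goes through the spider equations and is closed off with \eqLambdaRhoSpider{}), so reading it suitably tells me exactly which $X\oplus$ subdiagram can be recognized and absorbed into Contractions. This is the structural heart of the argument, and I would fix its statement before starting the main rewrite.

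Next, starting from the $X\oplus$ diagram, I would apply \eqPtoC{} to each Plus node. Since \eqPtoC{} encodes that the Contraction and the Plus have very similar behaviors, this trades every $\oplus$ for a $\contraction$ at the price of some auxiliary Tensor nodes and administrative Adapters, replacing the $\oplus$-skeleton by a $\contraction$-skeleton decorated with spurious $\tensor$'s. I would then discharge those Tensors with \eqPCtoT{}, which collapses an adjacent $\oplus$–$\contraction$ junction into a single Tensor; combined with \eqT{} this lets the Tensors introduced on one half of the figure cancel against the structure coming from the mirrored half. Finally I would tidy the leftover $\tone$-wires and re-associate with \Cref{lem:spider} and \eqLambdaRhoSpider{}, landing on the head-to-head Contraction diagram.

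The step I expect to be the main obstacle is the cancellation in the middle. As the paper itself flags for the closely related equation \eqTCright{}, where deducing one orientation from the other is surprisingly non-trivial, the Tensor nodes produced by \eqPtoC{} do not immediately match the \eqPCtoT{} pattern: one must first transport them past each other with \eqTT{} and the associator \eqAlphaT{} so that a genuine $\oplus$–$\contraction$ junction is exposed, and simultaneously keep the graph planar and connected so that the spider lemma remains applicable. Once that reshuffling is set up correctly, the remaining rewrites are routine and the two sides coincide.
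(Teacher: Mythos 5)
Your proposal follows essentially the same route as the paper: the paper also proves \eqPPtoCC{} by a spider-notation rewrite built on \eqPtoC{} and \eqPCtoT{}, closing the argument with the preliminary \Cref{lem:contraction_to_X_oplus} and a final application of \eqLambdaRhoSpider{}. The ingredients, the orientation of the preliminary lemma, and the concluding tidy-up all coincide with the paper's proof, so no further comparison is needed.
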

\begin{proof}~
	\begin{center}
		\tikzfig{induced/proof_X_oplus_to_contraction_1}
	\end{center}
	And we conclude by using \Cref{lem:contraction_to_X_oplus} and \eqLambdaRhoSpider.
\end{proof}

\begin{lemma}\label{lem:X_contraction_to_contraction_unit}~
	\begin{center}
		\tikzfig{induced/proof_X_contraction_to_contraction_unit_0}
	\end{center}
\end{lemma}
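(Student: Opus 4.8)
The statement to establish is the induced equation \eqCCtoC{}, which asserts that a head-to-head pair of Contractions (the ``$X$-Contraction'' gadget) rewrites to a single Contraction together with a Unit. Since the paper has already reduced \eqPPtoC{} to the composite of \eqPPtoCC{} (proved in \Cref{lem:X_oplus_to_contraction}) and the present statement, I must prove \eqCCtoC{} \emph{independently} of \eqPPtoC{}; the natural independent handle is \eqPCtoT{}, the Plus--Contraction--to--Tensor equation, which the text explicitly flags as the source of this whole family of equations. Note also that \eqCCtoC{} is one of the only two induced equations that genuinely rely on the Unit, so the proof must introduce a Unit generator at some controlled point.

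The plan is to argue inside $\Cat_\equiv$ by a short chain of diagrammatic rewrites, in the same style as the proof of \Cref{lem:X_oplus_to_contraction}. First I would put the $X$-Contraction into spider form using \Cref{def:spider} and the fusion equation \eqSpider{}, so that the two Contractions and the intervening Adapters are presented as a single connected planar block and the bookkeeping of associators disappears. Next I would use \eqPtoC{} to trade one Contraction for a Plus, producing a Plus--Contraction pattern to which \eqPCtoT{} applies; this collapses the pattern into a Tensor, whose second leg is forced to carry the color $\tone$. Closing off that $\tone$-leg is exactly where the Unit enters: using the unitor equation \eqRhoT{} together with the Unit axioms \eqUN{} of \Cref{fig:eq_unit}, the dangling $\tone$-wire is absorbed into a single Unit generator, which is precisely the extra factor appearing on the right-hand side of \eqCCtoC{}.

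Finally I would clean up the remaining structural clutter: the leftover Tensors and Adapters are refused into the target Contraction by another application of \eqSpider{}, and the unitors are normalised away with \eqLambdaRhoSpider{}, leaving exactly $\contraction$ with its adjoined Unit. The up--down mirrored version follows by running the identical argument in the dual, since every equation invoked (\eqPtoC{}, \eqPCtoT{}, \eqSpider{}, \eqLambdaRhoSpider{}, \eqRhoT{}, and \eqUN{}) is available in mirrored form.

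I expect the main obstacle to be the correct emergence and placement of the Unit. The collapse via \eqPCtoT{} produces a $\tone$-colored wire, but whether that wire closes into a \emph{Unit} rather than into the empty diagram or a Null depends on orienting the Plus introduced by \eqPtoC{} so that the surviving branch is the right injection; the wrong orientation yields a Null and hence the wrong semantics. A secondary subtlety is verifying, before each application of \eqSpider{}, that the relevant subdiagram really is a connected planar graph built only from Tensors and Adapters with no Swap, which is the precise hypothesis of \Cref{lem:spider}; if a Swap is unavoidable it must first be pushed past the contraction using the co-commutativity equation \eqSigmaC{}.
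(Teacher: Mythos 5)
Your toolkit is drawn from the right family --- the paper does prove this cluster of induced equations in spider notation, with \eqPtoC{} and \eqPCtoT{} as the functional workhorses (cf.\ the remark in \Cref{lem:n_ary_oplus_to_contraction} that \eqPPtoCC{} is obtained from \eqPCtoT{} and \eqPtoC{}) --- but two of your steps do not survive scrutiny. The first is your opening move: you cannot ``put the $X$-Contraction into spider form''. A spider (\Cref{def:spider}, \Cref{lem:spider}) is by definition a connected planar block built \emph{only} from Tensors $\tensor$ and Adapters $\maclane$, and \eqSpider{} fuses only such blocks; Contractions $\contraction$ are never absorbed into a spider, they merely commute \emph{through} one via \eqSpiderC{} (\Cref{lem:spider_equations}). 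Your own closing caveat --- checking that the subdiagram contains ``only Tensors and Adapters with no Swap'' before applying \Cref{lem:spider} --- contradicts the first step as you state it. This is repairable (keep the Contractions outside the spider and move them with \eqSpiderC{}), but as written the step is a misapplication of the fusion lemma.

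The decisive gap is the Unit. You claim that the collapse via \eqPCtoT{} produces a Tensor ``whose second leg is forced to carry the color $\tone$'', and that the Unit then enters by closing that leg with \eqRhoT{} and \eqUN{}. But \eqPCtoT{} is colour-generic: it yields a Tensor on the colours already present on the wires, which in \eqCCtoC{} is the arbitrary colour $A$ of the two Contractions; nothing in that rewrite manufactures a $\tone$-leg. Since every equation of the functional theory replaces Unit-free subdiagrams by Unit-free subdiagrams, the only way a Unit can ever appear in a derivation starting from the left-hand side is to apply one of the equations of \Cref{fig:eq_unit} at a spot where its Unit-free side matches --- concretely, one must first conjure a $\tone$-thread out of an Adapter of type $A \isoML A \tensor \tone$ (created by \eqM{}/\eqMM{}) together with a mirrored Tensor, and only then read \eqRhoT{} from right to left to trade that thread for a Unit; \eqUN{} cancels a Unit against a Null and cannot introduce one. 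Your sketch contains no such set-up step, so the very object that distinguishes \eqCCtoC{} from the purely functional equations --- the Unit on the right-hand side --- never legitimately arises. You yourself flag ``the correct emergence and placement of the Unit'' as the main obstacle, but the proposal routes around it rather than resolving it, so as it stands this is a plan with the crucial step missing rather than a proof.
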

\begin{proof}~
	\begin{center}
		\tikzfig{induced/proof_X_contraction_to_contraction_unit_1}\\ [0.5cm]
		\tikzfig{induced/proof_X_contraction_to_contraction_unit_2}
	\end{center}
\end{proof}

\begin{lemma}\label{lem:scalar_through_X_oplus}~
	\begin{center}
		\tikzfig{induced/proof_scalar_through_X_oplus_0}
	\end{center}
\end{lemma}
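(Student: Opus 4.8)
The statement \Cref{lem:scalar_through_X_oplus} is the equation \eqPPSleft{}, asserting that a Scalar $[s]$ commutes past the head-to-head Plus configuration, emerging on the parallel branches on the other side. The plan is to reuse the additive machinery already assembled in the two immediately preceding lemmas rather than arguing from scratch on the $\oplus$-nodes, since scalar-commutation is far easier to state and chain on Contractions than on Pluses.

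First I would invoke \Cref{lem:X_oplus_to_contraction}, which rewrites the head-to-head double-Plus into a head-to-head double-Contraction (up to the administrative Adapters it produces). This transports the problem out of the Plus-world, where scalars only interact through \eqSP{}, and into the Contraction-world, where the commutation axiom \eqSC{} is directly at hand.

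Next, with the diagram now phrased through Contractions, I would slide $[s]$ across using \eqSC{}, the axiom that a Scalar commutes with a Contraction, copying it onto each input branch. Because $R$ is commutative the Scalar is central and passes freely; if the slide yields several copies, their product is governed by the spider form \eqSpiderS{} of \Cref{lem:spider_equations}, whose side-condition $s_1 \times \cdots \times s_n = t_1 \times \cdots \times t_m$ is met trivially here, a single factor $s$ appearing on each side.

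Finally I would apply \Cref{lem:X_oplus_to_contraction} in reverse and absorb the residual Adapters with \eqLambdaRhoSpider{} and \eqMM{}, exactly as in the proofs of \Cref{lem:X_oplus_to_contraction,lem:X_contraction_to_contraction_unit}, recovering the original head-to-head Plus shape with $[s]$ now relocated to the far branches. The main obstacle I anticipate is the bookkeeping of these Adapters and of the auxiliary Unit introduced when passing through the Contraction form: one must verify that the Scalar lands on precisely the intended wires and that the concluding spider normalization closes the diagram back onto the target, leaving no stray structural isomorphism. As a sanity check, the identity is forced semantically, both sides evaluating to $s \cdot (\iota_r \circ \pi_r)$ by linearity of composition over the enrichment.
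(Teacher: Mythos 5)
Your proposal matches the paper's route: the paper likewise derives \eqPPSleft{} by converting the head-to-head Pluses into contraction/spider form via \eqPPtoCC{} (\Cref{lem:X_oplus_to_contraction}), sliding the Scalar through with \eqSpiderS{} and \eqSC{}, and folding the structure back, exactly the chain you describe. One small correction to your anticipated obstacle: \eqPPtoCC{} lives entirely in the functional fragment $\FCat$ (only \eqPPtoC{} and \eqCCtoC{} involve the Unit), so no auxiliary Unit appears and the bookkeeping you worry about reduces to the Adapter absorption you already handle with \eqLambdaRhoSpider{}.
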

\begin{proof}~
	\begin{center}
		\tikzfig{induced/proof_scalar_through_X_oplus_1}
	\end{center}
\end{proof}

\begin{lemma}\label{lem:X_oplus_oplus}~
	\begin{center}
		\tikzfig{induced/proof_X_oplus_oplus_0}
	\end{center}
	And its mirrored left/right.
\end{lemma}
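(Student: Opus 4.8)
The plan is to derive the induced equation \eqPPPleft{} from the axiom \eqPPCleft{} of \Cref{fig:eq_contraction_nat}, exploiting the close structural kinship between the Plus and the Contraction recorded by \eqPtoC{}. Since the statement is phrased in spider notation, I would work entirely within that notation so that the coherence data (the $m^\oplus$, the symmetries $\sigma^\oplus$, and the administrative Adapters) never has to be written out explicitly, exactly as in the proofs of \Cref{lem:X_oplus_to_contraction,lem:scalar_through_X_oplus}. First I would read off the left-hand side of \eqPPPleft{} as the head-to-head Plus block (the ``$X\oplus$'' pattern whose computational meaning is fixed by \eqPP{}) carrying one extra Plus in the precise position where \eqPPCleft{} carries a Contraction. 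The goal is then to massage that extra Plus into a Contraction so that \eqPPCleft{} becomes directly applicable.

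The core of the argument is a three-move rewrite. First, I would convert the extra Plus into a Contraction in the presence of the surrounding Tensor structure, using \eqPtoC{} together with the preliminary \Cref{lem:contraction_to_X_oplus}; this brings the diagram into the exact shape of the left side of \eqPPCleft{}, after realigning the branches with \eqSigmaP{} and \eqSigmaC{}. Next, I would apply \eqPPCleft{} to slide the Contraction through the head-to-head Plus. Finally, I would convert the Contraction back into a Plus and absorb the residual Tensor/Adapter fragments using \Cref{lem:contraction_to_X_oplus} once more and the generalized unitor equation \eqLambdaRhoSpider{}, landing on the right-hand side of \eqPPPleft{}. The uniqueness statement of \Cref{lem:spider} guarantees that the leftover planar Tensor--Adapter parts on the two sides necessarily agree, so no further calculation is needed to close the chain.

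The step I expect to be the main obstacle is the first conversion: \eqPtoC{} is only valid in a specific local context, so I must check that the extra Plus really sits in such a context and that the branch ordering produced by the conversion matches the input/output convention demanded by \eqPPCleft{}. This is where the swap equations \eqSigmaP{} and \eqSigmaC{} do the delicate bookkeeping, and where a mismatch in the wire order would silently break the pattern match against \eqPPCleft{}; the spider notation is precisely what makes this tractable, by letting the associators and symmetries be reshuffled freely rather than tracked by hand.

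Finally, the mirrored left/right version follows by replaying the identical derivation with the left--right mirror of each equation invoked (\eqPPCleft{}, \eqPtoC{}, \eqSigmaP{}, \eqSigmaC{}, and the spider equations), each of which is itself available by the symmetry of the theory, so no new work is required for it.
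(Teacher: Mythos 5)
Your proposal is correct and takes essentially the same route as the paper: the paper likewise derives \eqPPPleft{} as a variation of the axiom \eqPPCleft{}, working in spider notation and exploiting the Plus/Contraction interchange (\eqPtoC{} together with the spider lemmas surrounding \Cref{lem:contraction_to_X_oplus}) to convert, slide, and convert back. The paper also disposes of the mirrored version exactly as you do, by replaying the same derivation mirrored.
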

\begin{proof}~
	\begin{center}
		\tikzfig{induced/proof_X_oplus_oplus_1} \\ [0.5cm]
		\tikzfig{induced/proof_X_oplus_oplus_2} \\ [0.5cm]
		\tikzfig{induced/proof_X_oplus_oplus_3}
	\end{center}
	For the mirrored equation, we use the same proof but mirrored.
\end{proof}

\subsection{The Compact Closure}
	While we do not have a Cup and a Cap as generators in order to
	``bend'' wires, we can define the Cap inductively as follows, and the Cup using the up-down mirrored definitions:
	\begin{center}
		\begin{tabular}{cc}
			$\tikzfig{cupcap/tens} $ & $	\tikzfig{cupcap/plus}$ \\&\\
			$\tikzfig{cupcap/one}$ & $\tikzfig{cupcap/zero}$ \\
		\end{tabular}
	\end{center}
	Since they rely on Unit, they are not part of $\FCat$. We will now  the equational theory ensures that they satisfy the snake equations, in other words that $\Cat_{\equiv}$ is a compact closed category. Before that, we need to prove the following lemma:
	
	\begin{lemma}\label{lem:tensor_through_cap}
		\[\tikzfig{cupcap/tensor_through_cap}\]
	\end{lemma}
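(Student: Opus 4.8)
The plan is to prove the equation by induction on the structure of the color on which the Cap is built, matching the four-clause inductive definition of the Cap given just above (one clause each for the constructors $\tensor$, $\oplus$, $\tone$, $\tzero$). Since the statement concerns how a Tensor generator interacts with a Cap, it is natural to align the case analysis of the proof with this definition: handle the two base colors $\tone$ and $\tzero$ first, and then the two binary constructors $A\tensor B$ and $A\oplus B$ using the induction hypothesis. Throughout, the derivation lives in $\Cat$ rather than $\FCat$, since the Cap is built from the Unit, so the unit equations of \Cref{fig:eq_unit} are available.

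First I would treat the base cases. For the color $\tzero$, the Cap reduces to a Null (clause \tikzfig{cupcap/zero}), so the desired equation collapses to a statement about Tensors and Nulls; this follows directly from the null-commutation equations of \Cref{fig:eq_null_nat}, in particular \eqTN{}, \eqNM{} and the induced \eqTNleft{}, together with the planar Tensor/Adapter rewriting of \Cref{lem:spider}. For the color $\tone$, the Cap is built from the Unit (clause \tikzfig{cupcap/one}), and the equation becomes an identity among Units, Tensors and Adapters; here I would invoke the unit equations \eqRhoT{} and \eqLambdaT{} to absorb the trivial $\tone$-wires, again closing the case with \Cref{lem:spider} and \eqM{}.

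For the inductive step, I would unfold the Cap on $A\tensor B$ (resp.\ $A\oplus B$) according to its defining clause and rewrite so that the two sub-Caps on $A$ and on $B$ appear as isolated sub-diagrams to which the induction hypothesis applies. The multiplicative case $A\tensor B$ should be the more routine one: the sub-Caps are already joined by Tensors, so after applying the induction hypothesis the remaining work is to rearrange the connected tree of Tensors and Adapters into the required shape, which is exactly what \Cref{lem:spider} (generalizing \eqTT{}, \eqAlphaT{} and \eqTM{}) licenses. The additive case $A\oplus B$ carries the genuine content, since the Cap there is assembled from the additive generators Plus and Contraction while the generator being slid is the multiplicative Tensor; pushing the Tensor past this branching skeleton requires the tensor--contraction and tensor--plus commutations \eqTC{}, \eqTCleft{} and \eqPC{}, possibly in their $n$-ary form from \Cref{lem:n_ary_oplus_to_contraction}.

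The main obstacle I anticipate is precisely this mixed multiplicative/additive interaction in the $A\oplus B$ case: the Tensor must commute past the Plus/Contraction skeleton of the Cap without spawning spurious Null branches, and the bookkeeping of Adapters must be kept correct as associators and unitors are reshuffled. I would manage this by working throughout in the spider form of \Cref{lem:spider} for the multiplicative part, so that all Tensor--Adapter clusters are automatically identified, and by reusing the already-established induced equations of \Cref{fig:eq_induced,fig:eq_induced_core} rather than re-deriving each commutation from scratch, so that the inductive step amounts to a finite sequence of known rewrites plus one application of the induction hypothesis.
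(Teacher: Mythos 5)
Your proof follows the paper's own proof almost clause for clause: the paper also proceeds by induction on the color $A$, aligned with the four defining clauses of the Cap, disposing of $A = \tzero$ via \eqSpiderN{} (which is itself derived from exactly the null-commutation equations \eqTN{}, \eqMN{}, \eqTNleft{} you list), of $A = \tone$ via \eqSpider{} alone (your extra appeal to \eqRhoT{} and \eqLambdaT{} is sound in $\Cat$ but unnecessary, since the Unit occurrences sit untouched on both sides and only the Tensor/Adapter skeleton is rearranged), and of $A = B \tensor C$ via \eqSpider{} plus the induction hypothesis, exactly as you propose. Your remark that the whole derivation lives in $\Cat$ rather than $\FCat$ because the Cap is built from the Unit is also correct, and applying \Cref{lem:spider} there is legitimate since the equational theory of $\FCat$ embeds conservatively in that of $\Cat$.

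The one substantive correction concerns your rewrite set for the $A = B \oplus C$ case. None of \eqTC{}, \eqTCleft{}, \eqPC{} mentions both a Tensor and a Plus: the first two commute a Tensor past a Contraction, and \eqPC{} --- despite your describing it as a ``tensor--plus commutation'' --- commutes a Plus past a Contraction. The theory deliberately contains no direct Tensor/Plus naturality axiom, because that interaction is precisely the distributivity content; so the equations you name cannot, by themselves, slide the Tensor past the Plus skeleton of the Cap. The paper's proof instead first eliminates the Plus using \eqPtoC{}, rewriting it in terms of Contraction and Tensors, after which \eqSpider{} and the induction hypothesis on the sub-Caps finish the case. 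Your hedge towards \Cref{lem:n_ary_oplus_to_contraction} in fact smuggles in exactly this missing ingredient --- that lemma is the $n$-ary generalization of \eqPtoC{} and \eqPCtoT{}, not of the Tensor/Contraction commutations you attribute it to --- so your plan goes through once the role of that lemma is stated correctly, and it then coincides with the paper's argument.
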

	\begin{proof}
		We proceed inductively over $A$. For $A = \tzero$, it follows from \eqSpiderN. For $A = \tone$, it follows from \eqSpider. For the inductive case $A = B \tensor C$, it follows from \eqSpider{} and the induction hypothesis. For the inductive case $A = B \oplus C$, it follows from \eqPtoC{}, \eqSpider{} and the induction hypothesis.
	\end{proof}
	
	\begin{proposition}\label{prop:compact_close}
		The following equations can be deduced from the equational theory:
		\[\tikzfig{cupcap/snake}\]
	\end{proposition}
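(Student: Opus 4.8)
The plan is to establish both snake equations by induction on the color $A$, following exactly the case structure of the inductive definition of the Cap (the Cup being its up-down mirror), and leaning on \Cref{lem:tensor_through_cap} for the multiplicative inductive step.

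For the base cases: when $A = \tone$ the Cap is built from the Unit, so the zig-zag composite reduces to a short computation that straightens the two Units into an identity via \eqUN{}, with the administrative Adapters absorbed by \eqM{} and \eqMM{}. When $A = \tzero$ the Cap is the Null, and since no data can travel along a $\tzero$-wire both sides of the snake collapse: one pushes the Null through using \eqSpiderN{} and then identifies the result with the identity using \eqN{} and the null-commutation equations of \Cref{fig:eq_null_nat}.

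For the inductive step $A = B \tensor C$, I would unfold the Cap as the interleaving of $\text{cap}_B$ and $\text{cap}_C$ through Tensors and then apply \Cref{lem:tensor_through_cap} to slide the Tensor across the Cap. This separates the zig-zag on $B \tensor C$ into two independent zig-zags---one on $B$ and one on $C$---up to structural Adapters, which \Cref{lem:spider} lets me normalize away; the induction hypotheses on $B$ and $C$ then finish this case.

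The case $A = B \oplus C$ is where I expect the real difficulty. Here the Cap is assembled from a Plus, so I would first use \eqPtoC{} to trade the $\oplus$-structure for a Contraction, mirroring the $B \oplus C$ case of the proof of \Cref{lem:tensor_through_cap}, and only then reduce to the induction hypotheses on $B$ and $C$. The main obstacle is verifying that after this translation the two halves of the snake genuinely decouple into separate snakes on $B$ and $C$ with no surviving cross-terms between the branches; the key facts making this work are that the left (resp.\ right) branch of a Plus only ever connects to the left (resp.\ right) branch of its mirror, together with the spider equations \eqSpiderC{} and \eqPtoC{}, which guarantee the decomposition is clean.
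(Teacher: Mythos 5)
Your overall skeleton---induction on $A$ following the inductive definition of the Cap, with \Cref{lem:tensor_through_cap} powering the multiplicative step---is the paper's skeleton, and your $A = B \tensor C$ case matches the paper's proof (which cites \eqSpider{}, \Cref{lem:tensor_through_cap} and the induction hypothesis). But there is a genuine gap exactly where you flag ``the real difficulty'': you never close the $A = B \oplus C$ case. Asserting that the two halves of the snake ``genuinely decouple into separate snakes with no surviving cross-terms'' because the left branch of a Plus only ever connects to the left branch of its mirror is a semantic observation about the interpretation, not a derivation in the equational theory; it is precisely the content to be proved, and appealing to soundness is not available here, since the proposition claims derivability of the snake equations from the axioms. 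The paper discharges this case not with the bare axiom \eqPtoC{} but with the derived equation \eqPPtoCC{} (\Cref{lem:X_oplus_to_contraction}), which converts the head-to-head Plus pair created when the $\oplus$-Cup meets the $\oplus$-Cap into contractions and spiders, and then---this is the step you omit---it invokes \Cref{lem:tensor_through_cap} \emph{again in the additive case}, because the Tensors produced by that conversion must themselves be transposed through the Cap before the induction hypotheses on $B$ and $C$ can be applied. Without that lemma in the $\oplus$ case, your proposed ``decoupling'' has no equational mechanism behind it.

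Your base cases also drift from what the theory supports. For $A = \tone$ you invoke \eqUN{}, but that equation governs the interaction of the mirrored Unit with the Null, which does not occur in the snake composite; the paper instead resolves this case by spider fusion \eqSpider{}, merging the Tensors and Adapters sitting between the Unit of the Cup and the mirrored Unit of the Cap. For $A = \tzero$ the direct route is \eqNN{} (the paper's choice), though your detour through \eqSpiderN{} and the null-commutation equations of \Cref{fig:eq_null_nat} would plausibly also terminate. In summary: the $\tensor$ case is right, the $\tzero$ case is serviceable, the $\tone$ case cites the wrong equation, and the $\oplus$ case---the heart of the proposition---remains an announced obstacle rather than a proof.
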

	\begin{proof}
		We proceed inductively over $A$. For $A = \tzero$, it follows from \eqNN. For $A = \tone$, it follows from \eqSpider. For the inductive case $A = B \tensor C$, it follows from \eqSpider{}, \Cref{lem:tensor_through_cap} and the induction hypothesis. For the inductive case $A = B \oplus C$, it follows from \eqPPtoCC, \Cref{lem:tensor_through_cap} and the induction hypothesis.		
	\end{proof}

\subsection{The Disjunctive Collection of Wires}

\begin{definition}\label{def:disjunction}
	We define a disjunction of wires $A_1,\dots,A_n$ for $n \geq 1$:
	\begin{center}
		\tikzfig{disjunction/def_disjunction}
	\end{center}
\end{definition}
Note the absence of $\maclane$ in this definition.

\begin{lemma}\label{lem:idempotence_disj}
	We have a property of idempotence:
	\begin{center}
		\tikzfig{disjunction/lem_disjunction_idem}
	\end{center}
\end{lemma}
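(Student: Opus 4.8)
The plan is to establish the idempotence equation by induction on the number $n$ of wires appearing in the disjunction, unfolding the inductive definition from \Cref{def:disjunction}. Since that definition is built purely from the $\oplus$-side generators (Plus nodes and Contractions) and, as the remark stresses, contains no Adapter, every rewrite will stay inside the additive fragment of the calculus, where we already have a rich stock of equations at our disposal; in particular the induction need never leave the $\oplus$-fragment and never create a genuine $\maclane$ obligation.

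First I would dispatch the base case $n = 1$: there the disjunction of a single wire collapses to (essentially) the identity, and the idempotence statement degenerates to a trivial instance, provable directly from \eqP{} (with \eqM{} available to clear any vacuous Adapter). For the inductive step $n \geq 2$, I would peel off the outermost Plus that grafts the last wire $A_n$ onto the disjunction of $A_1, \dots, A_{n-1}$. The idempotence we want to prove identifies a diagram in which two copies of the full disjunction are merged by a Contraction with a single copy of it. The key move is to push that Contraction \emph{inward} through the Plus nodes using \eqPtoC{} (Plus and Contraction share the same behaviour) together with \eqPPPleft{} and \eqPPtoCC{}, so that it factors into a Contraction acting on the two $A_n$-strands and a Contraction acting on the two copies of the smaller disjunction. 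The former is the binary idempotence of the Contraction, which reduces to \eqCCtoC{} (equivalently, the Contraction being the codiagonal), while the latter is exactly the induction hypothesis.

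The main obstacle I anticipate is the wire bookkeeping: interleaving and re-merging the two copies of the disjunction forces a permutation of the strands, which introduces Swaps that must be commuted past the Plus and Contraction nodes (via \eqSigmaP{} and \eqSigmaC{}), and leaves small Tensor–Adapter fragments that must be re-normalised so that the two sides match syntactically after the inductive rewrite. Here I would lean on the spider \Cref{lem:spider} to absorb each connected Tensor–Adapter sub-diagram into a single canonical form. Ensuring that the intermediate diagrams remain planar and connected — precisely the hypotheses under which the spider lemma applies — is the delicate point of the argument, and it is where I expect most of the care to be needed.
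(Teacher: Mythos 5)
Your proposal attacks the wrong equation. The lemma states that the disjunction gadget $D$ on $A_1,\dots,A_n$ --- an $n$-ary Plus tree immediately followed by its up-down mirror, with no Adapter --- is idempotent as an \emph{endomorphism}: $D \circ D \equiv D$. There is no Contraction anywhere in the statement, so your central move (``two copies of the full disjunction merged by a Contraction with a single copy'') is not what has to be proven. Once the statement is read correctly, the proof is essentially immediate and needs none of your machinery: in $D \circ D$ the inner Plus tree meets the inner mirrored-Plus tree head-on, and since plugging a Plus into its mirrored version yields the identity by \eqP{}, the whole middle collapses; the only work is re-associating the nested trees and shuttling the resulting Adapters past the Plus nodes, which is exactly why the paper's proof reads ``use \eqAlphaP{}, \eqPM{}, and \eqP{} multiple times each.'' No induction apparatus, no Swaps, and no spider lemma are needed --- the diagrams involved contain no Tensor nodes at all, so \Cref{lem:spider} (which applies to Tensor/Adapter-only diagrams) has no role here, and your worry about planarity is moot.

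Beyond the misreading, the route you sketch would be unsound for the purpose this lemma serves even if your target equation were the right one. You lean on \eqCCtoC{} (and implicitly \eqPPtoC{}), but the paper is explicit that these are the two induced equations that rely on the Unit and hence live only in $\Cat$, not in $\FCat$. \Cref{lem:idempotence_disj} is used pervasively in the functional normal-form development (e.g.\ \Cref{lem:iso_contraction,thm:nf_fun}), so a proof routed through Unit-dependent equations would fail to establish the lemma in the fragment where it is actually needed. There is also a latent circularity risk: \eqPPtoCC{} and \eqPPPleft{} are themselves derived later in the appendix using the disjunction lemmas, so invoking them here would require checking the dependency order carefully --- another reason the paper's direct cancellation argument, which uses only axioms of the equational theory, is the right one.
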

\begin{proof}
	We simply use \eqAlphaP{}, \eqPM{}, and \eqP{} multiple times each.
\end{proof}

\begin{lemma}\label{lem:disj_reorder}
	As a direct application of the previous lemma, we can reorder two binary disjunctions:
	\begin{center}
		\tikzfig{disjunction/lem_disjunction_reorder}
	\end{center}
\end{lemma}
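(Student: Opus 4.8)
The plan is to derive the reordering entirely from the idempotence property of \Cref{lem:idempotence_disj}, as the statement itself announces. First I would unfold both sides of the claimed equation using \Cref{def:disjunction}, so that each binary disjunction is replaced by its defining arrangement of Plus nodes (recalling that there is deliberately no $\maclane$ inserted in this definition). This turns the goal into an equality between two connected configurations of Plus gates that differ only in the order in which the two disjunctions are applied, which is exactly the kind of rearrangement the idempotence lemma is designed to absorb.

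The key move is to use idempotence to ``split'' a wire into a disjunction of itself and then re-merge it along the other branch, creating enough room to slide one disjunction past the other. Concretely, I would insert a copy of the idempotence witness from \Cref{lem:idempotence_disj} at the junction of the two disjunction nodes; after this insertion, the two configurations can be matched by applying the associativity \eqAlphaP and the symmetry \eqSigmaP of the underlying Plus structure to permute the Plus nodes into the desired order, cleaning up the administrative adapters with \eqPM and \eqP exactly as in the proof of \Cref{lem:idempotence_disj}. A final application of idempotence then collapses the auxiliary split introduced at the start, yielding the right-hand side. Since every equation invoked here is already among those used to establish idempotence, this is genuinely ``a direct application'' rather than an independent argument.

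The step I expect to be the main obstacle is the diagrammatic bookkeeping: one must insert the idempotence witness at a position where only planar, crossing-free rewrites are needed, so that no Swap is silently required and every rearrangement is legitimately justified by \eqAlphaP and \eqSigmaP on the Plus nodes. Once the insertion point is chosen so that the two disjunctions share a common branch, the remainder is a routine chain of the same equations already proven, and the reordering falls out immediately.
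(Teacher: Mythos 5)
Your proposal takes the statement's phrase ``a direct application of the previous lemma'' at face value, but this is misleading even within the paper: the paper's own proof of \Cref{lem:disj_reorder} does not use \Cref{lem:idempotence_disj} at all. It instead uses \eqPPtoCC{} to convert each binary disjunction gadget into contraction/$\spider$ form, then invokes the spider-fusion result (\Cref{lem:spider}) to swap the two spiders along the shared wire, and finally applies \eqPPtoCC{} in reverse. The essential case of the lemma is two binary disjunctions that \emph{overlap} on a wire (if the pairs were disjoint, the PROP interchange laws of \Cref{fig:PROP} would already give commutation), and that case genuinely requires crossing over to the multiplicative side of the calculus.

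This is where your argument has a real gap. The equations you allow yourself --- \eqAlphaP{}, \eqSigmaP{}, \eqPM{}, \eqP{}, plus idempotence --- only rewrite within a single associative-commutative tree of $\oplus$ nodes and their Adapters. But a disjunction on $(A,B)$ followed by a disjunction on $(B,C)$ is not such a tree: between the two gadgets, the shared wire $B$ exits a mirrored Plus typed over $A \oplus B$ and enters a Plus typed over $B \oplus C$, and none of your equations lets a Plus of one typing pass through the merge--split gadget of an unrelated typing. That commutation is precisely the content of equations such as \eqPPPleft{}, which the paper has to prove separately (\Cref{lem:X_oplus_oplus}) by a nontrivial multi-step derivation, and which your proposal never invokes. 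Likewise, ``inserting the idempotence witness'' only yields a configuration of the form $D \circ D = D$ on the \emph{same} pair of wires; it cannot interleave a disjunction on $(A,B)$ with one on $(B,C)$, so it creates no room to slide anything. Without either \eqPPPleft{} or the paper's detour through \eqPPtoCC{} and \Cref{lem:spider}, the chain of rewrites you describe cannot be completed.
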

\begin{proof}
	We use \eqPPtoCC{} to create $\spider$ nodes, then \cref{lem:spider} swap the two and then \eqPPtoCC{} again to make the $\spider$ nodes disappear.
\end{proof}

\begin{lemma}\label{lem:disj_expansion}
	A $n$-ary disjunction of wires is equivalent to the collection of all the binary disjunctions of wires:
	\begin{center}
		\tikzfig{disjunction/lem_disjunction_expansion}
	\end{center}
	We recall that using \eqSigmaP{}, the left-hand-side and right-hand-side of a binary disjunction can be swapped, and using the \Cref{lem:disj_reorder} the order in which the binary disjunctions are made does not matters.
\end{lemma}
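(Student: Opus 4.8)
The plan is to prove \Cref{lem:disj_expansion} by induction on the number of wires $n$. The base cases are immediate: for $n=1$ the disjunction is the bare wire and there is nothing to expand, while for $n=2$ \Cref{def:disjunction} already \emph{is} a single binary disjunction, so the two sides of the claimed equation coincide definitionally. For the inductive step I would assume the statement for $n-1$ and unfold \Cref{def:disjunction} so as to separate the last wire $A_n$ from the block $A_1,\dots,A_{n-1}$, exposing an $(n-1)$-ary sub-disjunction sitting inside the $n$-ary one. Applying the induction hypothesis to that sub-disjunction rewrites it as the parallel collection of all $\binom{n-1}{2}$ binary disjunctions among $A_1,\dots,A_{n-1}$, which is precisely the part of the target right-hand side not involving $A_n$.

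It then remains to produce the $n-1$ missing binary disjunctions $\{A_i,A_n\}$ for $1\le i\le n-1$ and to check that no spurious structure survives. Here I would use idempotence (\Cref{lem:idempotence_disj}) to duplicate the participation of the wire $A_n$ as many times as needed, so that a separate binary disjunction of $A_n$ with each $A_i$ can be carved out; the absence of any $\maclane$ in \Cref{def:disjunction} keeps this purely a matter of rearranging Plus-nodes. Once all $\binom{n}{2}$ binary disjunctions are present, I would invoke \Cref{lem:disj_reorder} to commute them past one another into the canonical layout, and \eqSigmaP{} to flip the two sides of any binary disjunction whose orientation does not yet match, exactly as the statement's closing remark anticipates. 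Assembling these rewrites yields the right-hand side, completing the induction.

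The main obstacle I anticipate is the combinatorial bookkeeping of this duplication-and-interleaving step: one must show that the idempotent copies of $A_n$ get routed so that each pair $\{A_i,A_n\}$ is realised by a genuine binary disjunction (and not, say, by an accidental higher-arity cluster), and that after the reorderings the diagram is literally the parallel collection of all $\binom{n}{2}$ pairs with nothing left over. Making this precise diagrammatically — tracking which wire is shared where as $n$ grows — is where the real work lies; the underlying equational moves (\eqAlphaP{}, \eqPM{}, \eqP{} feeding \Cref{lem:idempotence_disj}, then \eqPPtoCC{} with \Cref{lem:spider} feeding \Cref{lem:disj_reorder}) are already established and only need to be applied in the right order.
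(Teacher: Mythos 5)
Your induction scheme (peel off $A_n$, treat the first $n-1$ wires by the induction hypothesis) is compatible with the paper's argument, but the inductive step as you describe it has a genuine gap, in two places. First, unfolding \Cref{def:disjunction} does \emph{not} expose an $(n-1)$-ary sub-disjunction: the head-to-head Plus pair (the ``X'') coupling $A_n$ to the rest sits \emph{between} the merging cascade and the splitting cascade on $A_1,\dots,A_{n-1}$, so the composite $\mirror{\oplus}\circ\oplus$ on the first $n-1$ wires that your IH needs is not literally present until that X has been moved out of the way. Second, and more importantly, the move you invoke to create the $n-1$ missing binary disjunctions $\{A_i,A_n\}$ --- \Cref{lem:idempotence_disj} --- cannot do it. Idempotence (proved from \eqAlphaP{}, \eqPM{} and \eqP{}) only collapses two stacked disjunctions \emph{on the same wires} into one, or lets a Plus-cascade absorb a disjunction sitting next to it; it never ``duplicates the participation'' of a single wire into several pairwise disjunctions. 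Likewise \Cref{lem:disj_reorder} only permutes binary disjunctions that already exist. Nothing in the toolkit you list can manufacture a new binary disjunction between $A_n$ and a wire buried inside the cascade, and that manufacturing step is precisely the content of the lemma.

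The missing ingredient is \eqPPPleft{} (\Cref{lem:X_oplus_oplus}), which your proposal never uses: it commutes the middle X upward past one $\oplus$/mirrored-$\oplus$ level of the cascade, and each such commutation \emph{simultaneously} emits one binary disjunction between $A_n$ and the wire at that level while regenerating an X one level higher. This is essentially the paper's entire proof: unfold the definition, iterate \eqPPPleft{} to push the middle X to the top --- producing the binary disjunctions $\{A_i,A_n\}$ for all $i<n$ and leaving a new X at the middle of the remaining $(n-1)$-wire cascade, i.e., the unfolded $(n-1)$-ary disjunction --- and then reiterate. So your outer recursion is the right shape, but the ``combinatorial bookkeeping'' you flagged as the real work is not something your listed equations can discharge; replacing the idempotence-based duplication step with iterated applications of \eqPPPleft{} is what makes the proof go through.
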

\begin{proof}
	We unfold the definition of a disjunction, and then iterate the use of \eqPPPleft{} to push the "X" at the middle upward, this create the binary disjunctions between $A_n$ and every other wire, and a new "X" at the middle, which allows you to reiterate the same procedure.
\end{proof}

\begin{lemma}\label{lem:contraction_to_disj_nary}
	We can generalize \eqCtoP:
	\begin{center}
		\tikzfig{disjunction/lem_contraction_to_disj_nary}
	\end{center}
\end{lemma}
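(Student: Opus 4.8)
The plan is to prove the statement by induction on $n$, following the same template as the other $n$-ary generalizations established in \Cref{lem:n_ary_oplus_to_contraction}. The case $n=1$ is immediate: by \Cref{def:n_ary} the unary Contraction is the identity, and by \Cref{def:disjunction} the unary disjunction is likewise the identity, so the two sides agree up to the purely administrative \eqM{}. The case $n=2$ is exactly the binary equation \eqCtoP{}, already derived among the induced equations, so nothing new is needed there.

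For the inductive step with $n>2$, I would first peel off the last copy of $A$. By \Cref{def:n_ary}, the $n$-ary Contraction decomposes as a single binary Contraction applied to the output of the $(n-1)$-ary Contraction on the first $n-1$ copies. Applying the induction hypothesis to the inner $(n-1)$-ary Contraction rewrites it into the $(n-1)$-ary disjunction of the first $n-1$ wires, leaving one binary Contraction that still has to absorb the $n$-th wire. I would then apply the binary \eqCtoP{} once more to convert that remaining binary Contraction into a binary disjunction. What results is the $(n-1)$-ary disjunction placed in binary disjunction with the last wire, and the goal becomes identifying this nested shape with the canonical $n$-ary disjunction on the right-hand side.

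The main obstacle will be precisely this last reorganization: the nested ``$(n-1)$-ary disjunction, then one more binary disjunction'' produced by the induction is not syntactically the $n$-ary disjunction of \Cref{def:disjunction}, and the associativity and ordering of disjunctions hold only up to the equational theory. I would discharge it using \Cref{lem:disj_expansion}, which equates any $n$-ary disjunction with the full collection of its binary disjunctions, together with \Cref{lem:disj_reorder}, which guarantees that the order in which these binary disjunctions are formed is irrelevant (and \eqSigmaP{} to swap the two sides of a binary disjunction when needed). Concretely, both the inductively obtained diagram and the target $n$-ary disjunction expand, via \Cref{lem:disj_expansion}, into the same unordered collection of binary disjunctions, so \Cref{lem:disj_reorder} closes the gap. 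As with the earlier $n$-ary lemmas, the mirrored up/down version follows by dualizing every equation used.
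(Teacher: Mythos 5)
There is a genuine gap in your closing step. In your inductive case, after applying the induction hypothesis to the inner $(n-1)$-ary Contraction and then the binary \eqCtoP{} to the outer binary Contraction, the binary disjunction you create does \emph{not} sit on two of the $n$ input wires: it sits on the \emph{internal} wire that is the output of the $(n-1)$-ary Contraction, paired with the $n$-th wire. The target $n$-ary disjunction, by contrast, lives entirely on the $n$ parallel input wires. So your claim that ``both the inductively obtained diagram and the target $n$-ary disjunction expand, via \Cref{lem:disj_expansion}, into the same unordered collection of binary disjunctions'' is false as stated: the intermediate diagram has $\binom{n-1}{2}$ binary disjunctions on the inputs plus one disjunction straddling the Contraction, while the target has $\binom{n}{2}$ binary disjunctions on the inputs. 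Moreover, \Cref{lem:disj_expansion} and \Cref{lem:disj_reorder} only reorganize disjunctions of wires \emph{in parallel}; neither can move a disjunction across a Contraction node, so your cited toolkit cannot close this gap. The missing ingredient is the commutation axiom \eqPPCleft{} of \Cref{fig:eq_contraction_nat} (equivalently, the Contraction case of \Cref{lem:nat_disj}, whose proof rests on that axiom): pushing the straddling binary disjunction upward through the $(n-1)$-ary Contraction duplicates it into the $n-1$ binary disjunctions pairing the $n$-th wire with each of the first $n-1$ wires, after which \Cref{lem:disj_expansion} and \Cref{lem:disj_reorder} do finish the argument as you intended.

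For comparison, the paper avoids this issue entirely by working in a single pass from the right-hand side: it expands the $n$-ary disjunction into the full collection of binary disjunctions via \Cref{lem:disj_expansion}, uses \eqAlphaC{} to reassociate the $n$-ary Contraction so that each binary disjunction faces a binary Contraction, and absorbs each one with the binary \eqCtoP{}. Starting from the side where all disjunctions already sit on the input wires means no naturality or commutation lemma is ever needed. Your inductive skeleton is salvageable (and the definitional nesting of \Cref{def:n_ary} indeed lets you dispense with \eqAlphaC{} in the peeling step), but as written the proof would stall at the reorganization step without \eqPPCleft{}.
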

\begin{proof}
	This is simply using \Cref{lem:disj_expansion}, \eqCtoP{} and \eqAlphaC{}.
\end{proof}

Those last two lemmas give us the tools that lacked to generalize \Cref{lem:contraction_to_X_oplus} to the $n$-ary case as follows:

\begin{lemma}\label{lem:contraction_to_X_oplus_nary}~
	\begin{center}
		\tikzfig{disjunction/lem_contraction_to_X_oplus_nary}
	\end{center}
It follows that \eqPPtoCC{}, \eqCCtoC{} and \eqPPtoC{} can be generalised to the $n$-ary case.
\end{lemma}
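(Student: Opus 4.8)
The plan is to mirror the structure of the binary proof of \Cref{lem:contraction_to_X_oplus}, but to replace each of its binary ingredients by the corresponding $n$-ary tool established just above. First I would take the $n$-ary $\contraction$ appearing on one side of the equation and rewrite it by means of \Cref{lem:contraction_to_disj_nary}, which is exactly the $n$-ary generalisation of \eqCtoP{}: this trades the $n$-ary contraction for an $n$-ary disjunction of wires (in the sense of \Cref{def:disjunction}), so that the pairwise interactions between the $n$ wires become the object of study.

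Next I would unfold that $n$-ary disjunction using \Cref{lem:disj_expansion}, which expresses it as the collection of all binary disjunctions of the wires $A_1,\dots,A_n$. At this point every pair of wires appears through a binary disjunction, and I can apply the base case \Cref{lem:contraction_to_X_oplus} to each of them in turn, converting each binary piece into its $X\oplus$ form. Because \Cref{lem:disj_reorder} (together with \eqSigmaP{}) guarantees that the order in which the binary disjunctions are performed is irrelevant, these local rewrites can be carried out independently without creating obstructions. Equivalently, the whole argument can be phrased as an induction on $n$: the case $n=2$ is \Cref{lem:contraction_to_X_oplus}, and the inductive step peels off the wire $A_n$ via \Cref{lem:disj_expansion}, applies the binary lemma to the pairs involving $A_n$, and invokes the induction hypothesis on the remaining $n-1$ wires.

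Finally I would reassemble the resulting diagram into the desired $n$-ary $X\oplus$ node, using \eqLambdaRhoSpider{} to absorb the administrative $\maclane$ nodes that the spider machinery introduces, exactly as in the binary proof. I expect the main obstacle to be precisely this reassembly: one must check that the binary rewrites, once performed on every pair, glue back into exactly the $n$-ary $X\oplus$ configuration rather than some reordered or differently bracketed variant. This is where the commutativity and associativity encoded by \Cref{lem:disj_reorder} and the uniqueness guaranteed by \Cref{lem:spider} are essential, and where the coherence afforded by \eqLambdaRhoSpider{} handles the bookkeeping of units.

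Once the $n$-ary form of \Cref{lem:contraction_to_X_oplus} is in hand, the final claim is immediate: the derivations of \eqPPtoCC{} and \eqCCtoC{} from it (given in \Cref{lem:X_oplus_to_contraction,lem:X_contraction_to_contraction_unit}) go through verbatim in the $n$-ary setting, since every equation they rely on has already been generalised to the $n$-ary case (notably \eqPtoC{} and \eqPCtoT{} in \Cref{lem:n_ary_oplus_to_contraction}), and \eqPPtoC{}, being their composite, generalises as well.
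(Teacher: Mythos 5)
Your proposal is correct and takes essentially the same route as the paper: the paper's proof simply reruns the binary proof of \Cref{lem:contraction_to_X_oplus} and its follow-up lemmas with the $n$-ary versions of the equations, the enabling tools being exactly \Cref{lem:contraction_to_disj_nary} (the $n$-ary \eqCtoP{}) and \Cref{lem:disj_expansion} that you invoke. Your pairwise/inductive elaboration of the reassembly step, and your closing observation that \eqPPtoCC{}, \eqCCtoC{} and \eqPPtoC{} then generalise because their ingredients were already made $n$-ary in \Cref{lem:n_ary_oplus_to_contraction}, are a more detailed rendering of the same argument.
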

\begin{proof}
	We follow the same proof as for \Cref{lem:contraction_to_X_oplus} and its following lemmas, but using the $n$-ary version of the equations when necessary.
\end{proof}
To tackle the last equation that was postponed, we first need this preliminary lemma:

\begin{lemma}\label{lem:nat_disj}
	The disjunction of wires is transparent with respect to most generators of $\FCat$.
	\begin{center}
		\tikzfig{disjunction/lem_disjunction_nat_contraction}\\ [0,5cm]
		\tikzfig{disjunction/lem_disjunction_nat_null}\\ [0,5cm]
		\tikzfig{disjunction/lem_disjunction_nat_plus}\\ [0,5cm]
		\tikzfig{disjunction/lem_disjunction_nat_swap}\\ [0,5cm]
		\tikzfig{disjunction/lem_disjunction_nat_maclane}\\ [0,5cm]
		\tikzfig{disjunction/lem_disjunction_nat_scalar}
	\end{center}
	The interaction of a disjunction of wires with Tensors is slightly different, as we have the following equation and its mirrored left/right:
\begin{center}
\tikzfig{disjunction/lem_disjunction_nat_tensor}
\end{center}
\end{lemma}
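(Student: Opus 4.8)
The plan is to prove each of the seven transparency equations separately, in every case by induction on the arity $n$ of the disjunction, reducing the $n$-ary statement to a binary commutation already established elsewhere. The crucial structural tool will be \Cref{lem:disj_expansion}, which decomposes an $n$-ary disjunction into an iterated composition of binary disjunctions: once I know how a given generator slides past a single binary Plus node, I can thread it through the whole disjunction one node at a time. Throughout I exploit that, by \Cref{def:disjunction}, a disjunction contains only Plus nodes (and, crucially, no $\maclane$), so the interaction with any generator $g$ reduces to understanding how $g$ commutes with $\oplus$, with \Cref{lem:disj_reorder} available to keep the disjuncts in whatever order is convenient.

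For the six \emph{easy} generators---Contraction, Null, Plus, Swap, Adapter and Scalar---the base ($n=2$) case is, in each instance, already among the axioms or the previously deduced equations. First I would dispatch Null and Scalar using the commutations \eqPN{} and \eqSP{} from \Cref{fig:eq_null_nat,fig:eq_scalar}; the Swap using \eqSigmaP{}; and the Adapter using the $\maclane$-equations of \Cref{fig:eq_maclane} (the absence of $\maclane$ inside the disjunction is exactly what makes this step clean). The Plus case is handled by associativity \eqAlphaP{} together with the idempotence of \Cref{lem:idempotence_disj}. The Contraction case is the most delicate of the six, and here I would route through \eqPtoC{} and the already-generalized \Cref{lem:contraction_to_disj_nary}, which let a Contraction be absorbed into the disjunctive structure. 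In every case the inductive step is a routine application of \Cref{lem:disj_expansion}: peel off the outermost binary disjunction, apply the binary commutation, and re-fold using the induction hypothesis.

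The genuinely different---and hardest---case is the Tensor, flagged in the statement as satisfying a \emph{modified} equation rather than a plain commutation. A Tensor cannot simply slide past a disjunction: tensoring with $B$ distributes across the disjuncts, so $(A_1 \vee \dots \vee A_n)\tensor B$ must be rewritten as $(A_1\tensor B) \vee \dots \vee (A_n \tensor B)$, and the combinatorics of the diagram change accordingly. My plan here is to convert the Plus nodes of the disjunction into $\spider$ nodes via \eqPPtoCC{}, so that distributing the Tensor over the whole structure becomes a statement about connected planar $\tensor/\maclane$-graphs governed by \Cref{lem:spider} and its extensions in \Cref{lem:spider_equations}. The Tensor can then be pushed through each disjunct using \eqTC{} and \eqTCleft{} together with the distributivity embodied in \eqPC{}, after which \eqPPtoCC{} restores the disjunction form. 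I expect the main obstacle to be exactly this bookkeeping: ensuring that the single Tensor on one side matches the replicated Tensors on the other without introducing stray scalars or spurious $\maclane$s, which is precisely where the uniqueness guaranteed by the spider lemma does the real work.
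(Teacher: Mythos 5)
Your overall architecture coincides with the paper's: the proof there also consists of a single application of \Cref{lem:disj_expansion} to replace the $n$-ary disjunction by the complete collection of binary disjunctions, followed by pushing the generator through each binary ``X'' of Pluses (your induction on $n$ is just a reorganization of the same bookkeeping). Your treatment of the Tensor is likewise the paper's route: convert the head-to-head Pluses via \eqPPtoCC{} (in its $n$-ary form, \Cref{lem:n_ary_oplus_to_contraction}) and let the spider machinery absorb the replicated Tensors. Your choices for Swap (\eqSigmaP) and Adapter (\eqPM) match exactly, and your detours for Plus and Contraction (via \eqAlphaP{} with \Cref{lem:idempotence_disj}, resp.\ \eqPtoC{} with \Cref{lem:contraction_to_disj_nary}) are harmless, if roundabout, since the direct binary commutations \eqPPPleft{} and \eqPPCleft{} are already available as an induced equation and an axiom respectively.

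There is, however, one genuine gap: the Scalar case cannot be dispatched by \eqSP{}. That equation is the diagrammatic form of distributivity, $s \cdot (f \oplus g) = (s \cdot f) \oplus (s \cdot g)$, so it moves a scalar across a Plus only when the \emph{same} scalar sits on \emph{every} branch. In the transparency statement the scalar $[s]$ sits on a single wire $A_i$ of the disjunction; once $A_i$ and $A_j$ are merged by the Plus, the morphism $(s \cdot \id) \oplus \id$ is simply not a Scalar generator on $A_i \oplus A_j$, and with no inverses in the semiring you cannot fake it by inserting and cancelling scalars on the other branch. This is precisely why the paper treats the Scalar like the Tensor rather than like the Null: it first rewrites each binary disjunction via \eqPPtoCC{} into its Tensor/Contraction form, where a scalar on one tensor leg travels freely by \eqSpiderS{} (bilinearity needs the scalar on only one factor), and then cleans up the residual $[1]$'s with \eqS{}. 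A smaller, reparable slip of the same flavor occurs in your Null case: \eqPN{} governs a single Plus node, while what you need is the X-shaped commutation \eqPPNleft{}; unlike the Scalar case this costs nothing, since \eqPPNleft{} is among the deduced equations of \Cref{fig:eq_induced_in_core}.
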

\begin{proof}
	For all of them, we start by using \Cref{lem:disj_expansion}, and then, we just need the various generator to go through the binary disjunction.
	\begin{itemize}
		\item For the Contraction, we use \eqPPCleft{}
		\item For the Null, we use \eqPPNleft{}.
		\item For the Plus, we use \eqPPPleft{}.
		\item For the Swap, we use \eqSigmaP{}.
		\item For the Adapter, we use \eqPM{}.
		\item For the Scalar, we use \eqPPtoCC{} from \Cref{lem:n_ary_oplus_to_contraction}, \eqSpiderS{} and \eqS{}.
		\item For the Tensor, we use \eqPPtoCC{} from \Cref{lem:n_ary_oplus_to_contraction} and \eqSpiderS{}.
	\end{itemize}
\end{proof}

We can now prove the last induced equation of the functional fragment that was postponed.
\begin{lemma}\label{lem:oplus_contraction_bis}~
	\begin{center}
		\tikzfig{induced/proof_plus_contraction_bis}
	\end{center}
\end{lemma}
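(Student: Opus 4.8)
The plan is to prove this lemma as the mirrored variant of the $\oplus\contraction$ axiom \eqPC{}, reducing it to that axiom by routing the derivation through the disjunction and spider notations developed in this subsection. As with the earlier mirror cases \eqTCright{} and \eqPPCright{}, the naive left--right mirror of a Plus/Contraction equation is not immediate, so rather than fighting the asymmetry of the two Plus branches directly I would package it inside a disjunction of wires (\Cref{def:disjunction}), for which the branch structure has already been tamed by the naturality results. Concretely, I expect both sides of \eqPCreverse{} to be massaged into a single interaction between a disjunction and one Contraction, at which point the base axiom applies.

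First I would use \eqPPtoCC{} (and its mirror) to turn the head-to-head Pluses into $\spider$/disjunction nodes, putting the diagram in a form where \Cref{lem:nat_disj} is applicable. Then I would invoke \Cref{lem:nat_disj}, which states that a disjunction of wires is transparent through the Contraction, the Plus, and the Adapter, to slide the Contraction into the canonical position relative to the disjunction; \Cref{lem:disj_expansion} can be used here to break an $n$-ary disjunction into the binary disjunctions that the naturality lemma consumes. With the Contraction in canonical position I would apply the base axiom \eqPC{}, possibly interconverting Plus and Contraction encodings along the way via \eqCtoP{} and \eqPtoC{} (or \Cref{lem:contraction_to_disj_nary}), to perform the actual rewrite. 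Any residual administrative Tensor/Adapter nodes would be absorbed using \Cref{lem:spider}, and finally I would fold the disjunction back and undo the $\spider$ conversions with the mirrored \eqPPtoCC{}.

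The hard part will be the bookkeeping of which branch of each Plus connects to which input of the Contraction once everything has been pushed through the disjunction. Because the disjunction interacts with the Tensor differently from the other generators --- the special case in the statement of \Cref{lem:nat_disj} --- and because the left and right branches of a Plus play genuinely asymmetric roles (left $=$ False, right $=$ True in the running examples), I expect the delicate point to be verifying that the naturality moves line the diagram up with the \emph{exact} left-hand side of \eqPC{} rather than a permuted version of it. Correcting such permutations by inserting \eqSigmaP{} and \eqSigmaC{}, exactly as was needed to deduce \eqPPCright{} from \eqPPCleft{}, will almost certainly be required, and getting these orientation corrections to cancel cleanly is where the proof is most likely to become fiddly.
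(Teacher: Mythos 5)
Your core idea matches the paper's proof exactly: the paper's entire argument is ``use \eqPC{} and then \Cref{lem:nat_disj}'', and those two moves are precisely the centerpiece of your plan. The extra scaffolding you anticipate --- the \eqPPtoCC{} spider conversions, \Cref{lem:disj_expansion}, and the \eqSigmaP{}/\eqSigmaC{} orientation corrections --- turns out to be unnecessary, since once \eqPC{} has been applied, the transparency of the disjunction through the Contraction and Plus given by \Cref{lem:nat_disj} already lines the diagram up with no permutation bookkeeping.
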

\begin{proof} We simply use \eqPC{} and then \Cref{lem:nat_disj}.
\end{proof}

We can also deduce the following equation, which we will use in \Cref{app:internal}.
\begin{lemma}\label{lem:PCC}~
	\begin{center}\tikzfig{internal/lem_PCC}\end{center}
\end{lemma}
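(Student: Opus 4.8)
The plan is to derive the equation by chaining together the Plus--Contraction interaction laws already established, staying within the functional fragment so that no new soundness obligation arises. Because the configuration in \Cref{lem:PCC} couples a Plus with two Contractions, my first move would be to expose its disjunctive skeleton: I would rewrite the Contractions as disjunctions of wires using \eqCtoP{} and its $n$-ary generalization \Cref{lem:contraction_to_disj_nary}, so that the entire left-hand side is re-expressed as Plus nodes acting on a disjunctive collection of wires.

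With the diagram in this normalized shape, the heart of the argument is to commute the Plus past the disjunction. Here \Cref{lem:nat_disj} is the key tool: it guarantees that a disjunction of wires passes transparently through a Plus (via \eqPPPleft{}) and through a Contraction (via \eqPPCleft{}). Applying these naturality rewrites in sequence brings the two Contractions into adjacency, at which point the idempotence of disjunction (\Cref{lem:idempotence_disj}) and, where needed, spider fusion (\Cref{lem:spider}) collapse the duplicated structure. A final application of \Cref{lem:disj_reorder}, together with \eqSigmaP{} to swap branches when necessary, realigns the binary disjunctions so that the grouping matches the right-hand side.

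The main obstacle will be purely combinatorial bookkeeping rather than any single hard equational step: each disjunction-based rewrite permutes and regroups the wires, so the real work is to certify that wire connectivity is preserved throughout and that the terminal grouping coincides --- up to the reordering freedom granted by \Cref{lem:disj_reorder} and \eqSigmaP{} --- with the intended form. Since \Cref{lem:oplus_contraction_bis} was proved by essentially the same recipe (an application of \eqPC{} followed by \Cref{lem:nat_disj}), I expect the present proof to be comparably short once the connectivity is tracked correctly, and to require no computation beyond repeated use of the cited equations.
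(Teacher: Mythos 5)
There is a genuine gap, and it sits in your very first move. You propose to ``expose the disjunctive skeleton'' by rewriting the Contractions into disjunctions of wires via \eqCtoP{} and \Cref{lem:contraction_to_disj_nary}, so that the whole left-hand side becomes ``Plus nodes acting on a disjunctive collection of wires.'' No such rewriting can exist: a disjunction of wires (\Cref{def:disjunction}) is built exclusively from Plus and mirrored Plus, and every diagram assembled from $\id$, $\sigma$, $\oplus$, $\mirror{\oplus}$, $\tensor$, $\mirror{\tensor}$, $\maclane$ and Null has a semantics whose matrix (in the sense of \Cref{fig:mat_sem_fun}) is \emph{monomial} --- at most one nonzero coefficient per row and per column, a property preserved by composition, $\oplus$, Kronecker product, and the structural permutations $m^{|}$. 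The Contraction, by contrast, has semantics $\nabla \circ \pi_r$, whose matrix has a row containing two $\id$ entries. So a Contraction can never be eliminated in favour of Plus-only disjunctive structure; \eqCtoP{} and \Cref{lem:contraction_to_disj_nary} are \emph{absorption} laws (a Contraction pre-composed with a disjunction equals the Contraction, up to the stated bookkeeping), not elimination laws, and your normalized shape is unreachable. Everything downstream --- commuting the Plus past the now-purely-disjunctive diagram, then collapsing by idempotence --- therefore never gets off the ground.

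The viable route, and the one consistent with how the paper treats this family of facts, keeps the Contractions in place throughout: apply the Plus--Contraction commutation \eqPC{} (or its variant from \Cref{lem:oplus_contraction_bis}) directly to slide the Plus through the Contractions, then use the transparency of disjunctions (\Cref{lem:nat_disj}) and idempotence (\Cref{lem:idempotence_disj}) to clean up the disjunctive residue those rewrites produce. The paper's own proof of \Cref{lem:PCC} is exactly such a short direct rewriting chain, in the spirit of the one-line proof of \Cref{lem:oplus_contraction_bis} (``use \eqPC{} and then \Cref{lem:nat_disj}''). A secondary weakness of your plan, worth noting even after repairing the first step, is that it relegates ``certifying that the terminal grouping coincides with the right-hand side'' to bookkeeping; for a lemma of this kind that connectivity check \emph{is} the content, so a correct writeup must exhibit the concrete sequence of equations rather than appeal to \Cref{lem:disj_reorder} and \eqSigmaP{} in the abstract.
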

\begin{proof}~\\
	\begin{center}
		\tikzfig{internal/proof_PCC}
	\end{center}
\end{proof}

Now, only remain equations relying on the unit. For that, we need one preliminary lemma:
\begin{lemma}\label{lem:unit_fusion}
	We have the following equation:
	\begin{center}
		\tikzfig{normal_form/lem_unit_fusion}
	\end{center}
\end{lemma}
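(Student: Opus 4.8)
The plan is to establish the equation by a direct chain of rewrites in the equational theory $\equiv$, reading one side of \tikzfig{normal_form/lem_unit_fusion} into the other. Both diagrams are built only from copies of the Unit together with Tensors, Contractions and Adapters, so the natural first move is to isolate the Unit occurrences and normalise the purely \emph{functional} skeleton separately: the Unit is the only non-functional generator involved, and the powerful spider lemma (\Cref{lem:spider}) applies only to diagrams of $\FCat$, so everything turns on keeping the functional and non-functional parts cleanly separated.

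Concretely, I would first unfold every $n$-ary node into its binary definition (\Cref{def:n_ary}) so that only the binary Tensor, Contraction and Adapter remain alongside the Units. Next I would use the unit-specific axioms of \Cref{fig:eq_unit} --- in particular \eqRhoT{} to absorb the trivial $\tone$-factor produced whenever a Unit meets a Tensor, and \eqUN{} to annihilate any Unit immediately followed by its mirror --- to migrate all Units to a single location and collapse any redundant ones. What then remains around the Units is a connected planar diagram made solely of Tensors and Adapters, to which \Cref{lem:spider} (equation \eqSpider{}) applies, letting me rewrite it into the canonical shape appearing on the target side; I would close the argument by re-folding the binary nodes back into $n$-ary ones.

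The main obstacle will be precisely the interface between the functional and non-functional parts: I cannot invoke \Cref{lem:spider} on any subdiagram that still contains a Unit, so the real bookkeeping lies in first pulling the Units out of the functional core using \eqRhoT{} together with the spider-with-unit identities of \Cref{lem:spider_equations} (notably \eqSpiderN{} and \eqLambdaRhoSpider{}), and only afterwards applying spider fusion to the unit-free remainder. As a consistency check --- not a substitute for the syntactic derivation, which is what the equational theory requires --- I would verify both sides in the matrix semantics of \Cref{fig:mat_sem}, where each Unit is realised as the column of $\bfup{H}^{\oplus\oone}(\ozero,\oone)$ whose two entries are both $\id$; equality of the resulting matrices guarantees, via soundness (\Cref{prop:soundness_fun}), that no error has crept into the planned sequence of rewrites before I commit to writing it out in full.
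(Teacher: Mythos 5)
Your plan never engages the machinery that actually does the work in this lemma. The equation to be proven merges two \emph{parallel} Units into a single Unit followed by functional structure that contains Contractions, and none of the rewrites you list can accomplish that. \eqUN{} only cancels a Unit against its mirrored version \emph{in sequence}, and \eqRhoT{} only absorbs a Unit through a tensor-unitor; neither changes the number of parallel Unit occurrences feeding the diagram, and that number cannot be changed freely: the inequations displayed in \Cref{sec:general-case} (valid whenever $R$ satisfies $1+1\neq 1$) show precisely that Units are not duplicable or collapsible at will, so any ``migrate all Units to a single location and collapse the redundant ones'' step would be unsound in general. The paper's proof is instead a direct two-stage diagrammatic rewrite whose decisive final step is the \emph{ternary} version of \eqPPtoC{} from \Cref{lem:contraction_to_X_oplus_nary}, i.e.\ the conversion between head-to-head Pluses (a disjunction pattern) and a Contraction --- notably one of the two induced equations that themselves rely on the Unit. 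This Plus/Contraction interchange is the missing idea in your proposal: unit fusion is obtained through the disjunction-to-contraction machinery, not through unit-cancellation or unitor axioms.

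A second, independent failure point is your appeal to \Cref{lem:spider}: that lemma applies only to connected planar diagrams built exclusively from Tensors (and their mirrors) and Adapters --- no Swaps, no Contractions, no Pluses --- whereas the unit-free remainder here necessarily contains Contractions, since the fused side sums the contributions of the two Units through a (ternary) Contraction; that is exactly why \eqPPtoC{} is needed. After pulling the Units out you would therefore be left with a diagram on which spider fusion is simply unavailable, and your chain of rewrites stalls. Your closing sanity check in the matrix semantics of \Cref{fig:mat_sem} via soundness is legitimate and non-circular (completeness is not yet available at this point of the development, and you rightly avoid it), but it cannot substitute for the syntactic derivation, which is the actual content of the lemma.
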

\begin{proof}~
	\begin{center}
		\tikzfig{normal_form/proof_unit_fusion_1}\\ [0.5cm]
		\tikzfig{normal_form/proof_unit_fusion_2}
	\end{center}
	where on the last step, we use the ternary version of \eqPPtoC{} from \Cref{lem:contraction_to_X_oplus_nary}.
\end{proof}

We can now prove \eqUsum{} and \eqUpar{}:

\begin{lemma}\label{lem:eq_unit_scalar}
	We have the following equation:
	\begin{center}
		\tikzfig{induced/proof_lem_unit_scalar}
	\end{center}
\end{lemma}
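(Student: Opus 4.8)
The plan is to prove \Cref{lem:eq_unit_scalar} purely diagrammatically, as an induced equation, by rewriting one side of the displayed diagram into the other using equations already established in this appendix. Since the statement concerns the interaction of the Unit generator with a Scalar, the natural toolkit is: the unit-fusion identity just proved in \Cref{lem:unit_fusion}, the scalar axioms of \Cref{fig:eq_scalar} (in particular \eqSsum{} for combining scalars, \eqS{} and \eqSzero{} for trivial scalars, and \eqST{}/\eqSP{} for sliding a scalar past a Tensor or a Plus), and the genuinely unit-specific equations \eqUN{} and \eqUS{} of \Cref{fig:eq_unit}. As a sanity check, and to fix the exact scalar that should appear on the right-hand side, I would first compute both sides in the matrix semantics of \Cref{fig:mat_sem}; the computation in \Cref{fig:proof_soundness_can} already shows that a closed Unit--Scalar--counit configuration evaluates to $(s+1)\cdot\id$, so the ``$+1$'' contributed by the Unit is intrinsic and must be tracked throughout the rewrite.

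First I would normalise the left-hand side by invoking \Cref{lem:unit_fusion} to merge (or, read backwards, to duplicate) the Unit instances, so that the Scalar $[s]$ sits on a single wire emanating from one Unit together with the implicit contraction/sum structure that \Cref{lem:unit_fusion} introduces. Next I would push the Scalar through that structure: \eqST{} carries it past any Tensor and \eqSP{} past any Plus, after which two scalars meeting at a Contraction can be fused with \eqSsum{}, collapsing the diagram to a single Unit carrying one combined Scalar. Wherever a Unit meets its mirrored counit I would cancel the pair with \eqUN{}, using \eqS{} or \eqSzero{} to discard the resulting trivial scalars. The final reconciliation step is the delicate one: the merged scalar appears in the form $c+\id$ (the intrinsic $+1$ from the Unit), so matching it to the target scalar requires \eqUS{}, the cancellation axiom (Can).

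The main obstacle will be this last \eqUS{} step and the bookkeeping of the intrinsic ``$+1$''. Because \eqUS{} only equates scalars up to the relation $s+1=t+1$, the rewrite is valid unconditionally, but one must verify that the two sides really differ only by this shift and not by a genuine difference of scalars --- this is precisely the $(s+1)$ versus $(t+1)$ distinction visible in \Cref{fig:proof_soundness_can}, and the matrix pre-computation above is what guarantees we apply \eqUS{} to a legitimate instance. Once \Cref{lem:eq_unit_scalar} is in hand, \eqUsum{} and \eqUpar{} follow as immediate corollaries: \eqUsum{} is obtained by reading the equation with the two Units joined through the additive (Plus/Contraction) structure and applying \eqPtoC{}, while \eqUpar{} follows by interpreting the juxtaposition through \eqmix{} and \eqPPtoCC{} to turn the $\parallel$ of two Units into the single fused Unit.
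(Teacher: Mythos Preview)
Your general direction—a diagrammatic rewrite anchored on \Cref{lem:unit_fusion} together with the scalar axioms (\eqSsum{}, \eqST{}, \eqSP{}, \eqSC{})—matches the paper's approach, which is indeed a short two-step rewrite. The problematic part of your proposal is the role you assign to \eqUS{} and the ``intrinsic $+1$''. You write that ``the merged scalar appears in the form $c+\id$'' and that ``matching it to the target scalar requires \eqUS{}'', but this conflates the semantics with the syntax. The $+1$ visible in \Cref{fig:proof_soundness_can} is the $\id$ summand in the bottom-right coefficient of a morphism of $\bfup{H}^{\oplus\oone}$, which is baked into the definition of that target category; it is not a Scalar node present in any diagram, and it never surfaces during a syntactic rewrite. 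The axiom \eqUS{} applies only when you already hold two diagrams differing in a single scalar $s$ versus $t$ at the specific position shown in \Cref{fig:eq_unit}, under the side condition $s+1=t+1$; rewriting toward \eqUsum{} or \eqUpar{} does not produce such a pair, and the paper does not invoke \eqUS{} here (its only other appearance is in \Cref{thm:nf}, where it is used to pick a canonical representative of $[s]_{+1}$). Your ``sanity check'' via the matrix semantics is fine as reassurance, but importing the $+1$ from that computation back into the diagrammatic argument is precisely the wrong move.

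A smaller structural point: the sentence immediately preceding the lemma announces that the lemma \emph{is} the proof of \eqUsum{} and \eqUpar{}, so your final paragraph's plan to extract them afterwards via \eqmix{} and \eqPPtoCC{} inverts the logical order—there is no separate corollary-extraction phase.
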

\begin{proof}~
	\begin{center}
		\tikzfig{induced/proof_lem_unit_scalar_1}\\ [0.5cm]
		\tikzfig{induced/proof_lem_unit_scalar_2}
	\end{center}
\end{proof}

\subsection{Conjunction of Disjunctions of Wires}

\begin{definition}\label{def:conjunction}
	We define the conjunction of disjunctions of wires $(A_1,\dots,A_a),(B_1,\dots,B_b),\dots, (Z_1,\dots,Z_z)$, for $a \geq 1$, $b \geq 1$, etc, $z \geq 1$ and any number $n \geq 2$ of such groups:
	\begin{center}
		\tikzfig{disjunction/def_conjunction}
	\end{center}
\end{definition}
Note the absence of $\maclane$ in this definition.

\begin{lemma}\label{lem:idempotence_conj}
	We have a property of idempotence:
	\begin{center}
		\tikzfig{disjunction/lem_conjunction_idem}\\ [0.5cm]
		\tikzfig{disjunction/lem_conjunction_idem_disj}
	\end{center}
\end{lemma}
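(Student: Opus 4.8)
The plan is to reduce the idempotence of a conjunction of disjunctions to the already-established idempotence of a single disjunction (\Cref{lem:idempotence_disj}), isolating and then discharging the extra conjunctive (Tensor) layer by means of the transparency results of \Cref{lem:nat_disj}. Concretely, I would first unfold \Cref{def:conjunction}: a conjunction of disjunctions is built by forming each group $(A_1,\dots,A_a),\dots,(Z_1,\dots,Z_z)$ as a Plus-based disjunction following \Cref{def:disjunction}, and then combining the resulting single-wire outputs with Tensors. Hence both sides of the desired equation decompose into a Tensor of disjunction-blocks, sitting underneath the duplicating/merging (mirrored-contraction) structure that implements idempotence. The goal is to show these two composites coincide in $\FCat_\equiv$.

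First I would push the duplicating structure through the Tensor layer. By \Cref{lem:nat_disj}, a disjunction of wires is transparent with respect to Contractions, Plus, Swap, Adapter and Scalar, and interacts with Tensors via the single modified equation recorded there. Iterating these commutations lets me slide the outer idempotence nodes inward until they act group-by-group on each individual disjunction, rather than on the whole conjunction at once. At that point each group is exactly in the setting of \Cref{lem:idempotence_disj} (whose own proof is just repeated use of \eqAlphaP{}, \eqPM{} and \eqP{}), so each duplicated disjunction collapses back to a single copy.

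Having collapsed every group, I would reassemble the conjunction: \Cref{lem:spider} absorbs the leftover Tensor/Adapter reorganisation that the sliding introduced, and \eqSigmaP{} together with \Cref{lem:disj_reorder} restores the intended ordering of the groups. This settles the first equation (\texttt{lem\_conjunction\_idem}). For the second equation (\texttt{lem\_conjunction\_idem\_disj}), which concerns the interaction of the conjunction with an outer disjunction, I would first apply \Cref{lem:disj_expansion} to rewrite the $n$-ary disjunction as a collection of binary disjunctions and \eqPPPleft{} to move the relevant $X$-node, thereby reducing the claim to the first equation together with a single binary idempotence handled by \eqP{}.

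The main obstacle will be the Tensor step: \Cref{lem:nat_disj} treats Tensors differently from all the other generators (the ``slightly different'' equation there), so the bookkeeping of which disjunction-blocks are duplicated jointly, and of the associativity reshuffling that the Tensor layer forces, has to be tracked carefully and then cleaned up uniformly by spider fusion. Once that commutation is organised correctly, the remainder is a routine iteration of the cited equations, with no genuinely new computation required.
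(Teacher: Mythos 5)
Your proposal rests on a misreading of the diagrams in the statement, and this misreading is what makes the central step fail. The conjunction-of-disjunctions gadget of \Cref{def:conjunction} contains no Contraction and no duplicating layer: it is a stack of Plus-trees (one per group) followed by a Tensor-tree merging the groups, composed with its up-down mirror. Consequently, when the gadget is composed with itself, the junction between the two copies consists, group by group, of a mirrored Plus-tree (a split) immediately followed by the same Plus-tree (a merge), with the whole thing sandwiched between a mirrored Tensor-tree and a Tensor-tree. These split-then-merge pairs are exactly the configurations that \eqP{} and \eqT{} declare equal to the identity, so the entire middle collapses node by node, using \eqAlphaP{}, \eqPM{} and their $\tensor$-counterparts \eqAlphaT{}, \eqTM{} to mediate tree shapes and Adapters. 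That is the paper's entire proof (it explicitly cites \eqAlphaP{}, \eqPM{}, \eqP{} ``and their equivalent for $\tensor$''), and it covers both pictures of the lemma; crucially, it requires \emph{no} commutation whatsoever, because the group-level cancellations are already adjacent in the composite.

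The concrete gap is your sliding step. \Cref{lem:nat_disj} says that a disjunction gadget is transparent to individual \emph{generators} placed against it, and its Tensor clause is precisely the case where transparency fails in the naive sense (the disjunction is modified as the Tensor passes); it does not license moving the Plus-trees of one gadget through the mirrored Tensor-tree of the other, which is what ``slide the outer idempotence nodes inward until they act group-by-group'' would need. Since the ``mirrored-contraction duplicating structure'' you plan to push does not occur in the diagram, there is nothing for \Cref{lem:idempotence_disj} to apply to after your commutations, and your reassembly via \eqSigmaP{} and \Cref{lem:disj_reorder} repairs a reordering of groups that never happens --- composing a gadget with itself permutes nothing. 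For the same reason, the detour through \Cref{lem:disj_expansion} and \eqPPPleft{} for the second equation replaces a direct \eqP{}/\eqT{}-fusion with machinery that is both heavier and unnecessary (also beware that head-to-head Pluses, merge-then-split, are governed by \eqPP{} and are \emph{not} the identity; only the split-then-merge orientation collapses by \eqP{}, so the bookkeeping you defer would have to get this orientation right everywhere). The skeleton of a correct argument hiding in your proposal --- cancel group Plus-trees, then cancel the Tensor layer by spider fusion, which indeed subsumes \eqT{}, \eqAlphaT{}, \eqTM{} --- is workable, but only once the sliding step and the contraction-based picture are deleted, at which point it coincides with the paper's proof.
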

\begin{proof}
		We simply use \eqAlphaP{}, \eqPM{}, \eqP{} and their equivalent for $\tensor$, multiple times each.
\end{proof}

In order to prove \cref{lem:conj_expansion}, we need the following preliminary result:
\begin{lemma}\label{lem:conj_column}~
	\begin{center}
		\tikzfig{disjunction/lem_conjunction_column}
	\end{center}
\end{lemma}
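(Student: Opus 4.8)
The plan is to reduce the claimed identity to the already-established interaction laws between Tensors and disjunctions, working entirely inside the functional fragment where no Unit is involved. First I would unfold both sides explicitly: the conjunction of disjunctions via \Cref{def:conjunction} and each of its disjunction groups via \Cref{def:disjunction}, so that the whole diagram is written purely in terms of Plus, Tensor, and their mirrors (recall that, by the remark following those definitions, no Adapter appears). This rewrites the equation into a form where the structural equations of \Cref{fig:eq_tensor_plus,fig:eq_maclane} apply directly, and where the ``column'' being isolated is simply the subdiagram obtained by fixing one group and distributing the Tensors of the conjunction across the Plus nodes of that group.

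The heart of the argument is the tensor-through-disjunction equation of \Cref{lem:nat_disj}, which is the diagrammatic incarnation of the distributivity of $\tensor$ over $\oplus$ and which lets me slide a Tensor past a disjunction of wires. I would apply it to push the Tensor nodes of the conjunction through the Plus nodes of the chosen column one at a time, turning a ``product over a sum'' into a ``sum over products.'' To keep the result in the canonical $n$-ary disjunction shape, I would first invoke \Cref{lem:disj_expansion} to express each $n$-ary disjunction as the collection of its binary disjunctions, so that the binary laws \eqPPPleft{} and \eqPPtoCC{} govern the merges, then collapse the duplicated Tensor and Plus nodes produced along the way using idempotence of the conjunction (\Cref{lem:idempotence_conj}) and of the disjunction (\Cref{lem:idempotence_disj}). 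I expect the induction to run on the number of wires in the group being distributed, with the base case handled by \eqP{} and \eqM{} and the inductive step peeling off one wire via the tensor--disjunction law and re-closing the diagram with the spider lemma \Cref{lem:spider} and its corollary \eqSpiderS{}.

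The hard part will be the bookkeeping of associativity and ordering rather than any single deep step. Because the definitions omit the Adapter, the distributed terms are generated in a bracketing and order that do not literally match the target diagram; reconciling them requires repeated appeals to \eqAlphaT{} and \eqAlphaP{} (together with their unit counterparts \eqTM{} and \eqPM{}), and the reordering lemma \Cref{lem:disj_reorder} to permute the binary disjunctions into the prescribed sequence. Care must be taken that every such reordering stays within the planar, Swap-free fragment to which \Cref{lem:spider} applies, since that is exactly the hypothesis needed to justify the normalization of the Tensor/Adapter skeleton. Once this administrative normalization is discharged, the two sides coincide syntactically and the lemma follows, providing the column-wise step that \Cref{lem:conj_expansion} will iterate.
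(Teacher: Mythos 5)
There is a genuine gap at what you call the heart of your argument. The tensor clause of \Cref{lem:nat_disj} is a \emph{transparency} (naturality) statement: it lets a disjunction of wires slide past Tensors to the left or right while keeping its shape as a disjunction. It does not distribute a Tensor over a Plus, and in particular it never introduces the Contraction-of-Tensors (``sum of products'') structure that the right-hand side of \Cref{lem:conj_column} requires — recall that the Contractions which \Cref{lem:conj_expansion_complete} later resolves with \eqTC{}, \eqTCleft{} and \eqCC{} originate precisely here. So pushing the conjunction's Tensors through the column with \Cref{lem:nat_disj} cannot, by itself, turn ``product over a sum'' into ``sum over products''; the lemma you lean on simply does not do the work you attribute to it. The distributive content lives elsewhere: the paper's proof first applies the $n$-ary version of \eqPtoC{} to convert the column's Plus into a Contraction, uses \eqSpider{} to reassociate the Tensor/Adapter skeleton, then applies the $n$-ary version of \eqPCtoT{} — the step where Plus--Contraction pairs actually collapse into Tensors — and concludes with \Cref{lem:contraction_to_X_oplus_nary} to recover the disjunction pattern. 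You never invoke \eqPtoC{} or \eqPCtoT{}; the closest ingredient you cite, \eqPPtoCC{}, appears in your plan only as bookkeeping for ``merges,'' which does not substitute for the $n$-ary \eqPCtoT{} step.

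A secondary point: you propose a fresh induction on the number of wires in the group, with base case \eqP{} and \eqM{}. In the paper that induction has already been discharged once and for all in \Cref{lem:n_ary_oplus_to_contraction} (and \Cref{lem:contraction_to_X_oplus_nary}), which is exactly why the proof of \Cref{lem:conj_column} is a short three-step rewrite rather than an inductive argument. Your bookkeeping apparatus (\Cref{lem:disj_expansion}, \Cref{lem:disj_reorder}, idempotence, \eqAlphaT{}/\eqAlphaP{}) is fine and non-circular, but without the \eqPtoC{}/\eqPCtoT{} engine the rewrite cannot reach the target diagram.
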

\begin{proof}
	We use the $n$-ary version of \eqPtoC{}, followed by \eqSpider{} to set up the use of the $n$-ary version of \eqPCtoT{}, and we conclude with \Cref{lem:contraction_to_X_oplus_nary}.
\end{proof}

\begin{lemma}\label{lem:conj_expansion}
	A conjunction of disjunctions of wires can be expanded as follows:
	\begin{center}
		\tikzfig{disjunction/lem_conjunction_expansion}
	\end{center}
\end{lemma}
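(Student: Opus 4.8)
The plan is to proceed by induction on the number $n$ of groups, using \Cref{lem:conj_column} as the workhorse that distributes a single disjunction across the tensor structure, in direct analogy with the way \Cref{lem:disj_expansion} used \eqPPPleft{} to peel off one wire at a time. Intuitively the statement is the graphical distributivity law: semantically a conjunction of disjunctions $(\bigoplus_i A_i)\tensor\cdots\tensor(\bigoplus_k Z_k)$ is isomorphic to the single disjunction $\bigoplus_{i,\dots,k}(A_i\tensor\cdots\tensor Z_k)$ ranging over the Cartesian product of choices, and the proof must realize this isomorphism syntactically using only the equations generating $\equiv$.

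First I would treat the two-group base case, a conjunction of $(A_1,\dots,A_a)$ and $(B_1,\dots,B_b)$. One application of \Cref{lem:conj_column} distributes the $A$-disjunction across the tensor, rewriting the conjunction into a disjunction of $a$ columns each of the form $A_i\tensor(\bigoplus_j B_j)$; a second round of \Cref{lem:conj_column} applied to each column distributes the $B$-disjunction and yields the full $a\times b$ grid of tensors $A_i\tensor B_j$ arranged as a single disjunction. The nested disjunctions are flattened and put into canonical order using \Cref{lem:disj_reorder}, and any repeated structure is collapsed by \Cref{lem:idempotence_conj}. Internally, \Cref{lem:conj_column} itself relies on \eqSpider{} together with the $n$-ary versions of \eqPtoC{} and \eqPCtoT{} from \Cref{lem:n_ary_oplus_to_contraction}, so these are implicitly the equations doing the real rewriting work.

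For the inductive step I would split the conjunction as (the first $n-1$ groups) conjoined with the last group $(Z_1,\dots,Z_z)$. Applying the induction hypothesis to the first $n-1$ groups rewrites them into a single disjunction indexed by all tuples $(i,\dots,y)$ of the corresponding tensors; a final invocation of \Cref{lem:conj_column} then distributes the last disjunction against this disjunction, producing a disjunction indexed by all full tuples $(i,\dots,y,k)$, which is exactly the right-hand side of the statement up to the reordering supplied by \Cref{lem:disj_reorder}. The associativity adapters $\maclane$ appearing when grouping the tensors are absorbed exactly as in the spider manipulations, so no new equations beyond those already listed are required.

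The main obstacle I expect is not any individual equation but the combinatorial bookkeeping: verifying that the iterated applications of \Cref{lem:conj_column} generate precisely the Cartesian-product family of tensors with neither missing nor duplicated terms, and that the resulting tower of nested disjunctions can be flattened and reordered into the single flat disjunction displayed in the statement. This is where \Cref{lem:idempotence_conj} and \Cref{lem:disj_reorder} carry the weight, guaranteeing that the order in which the columns are peeled off is irrelevant—mirroring exactly the order-independence that made the proof of \Cref{lem:disj_expansion} go through.
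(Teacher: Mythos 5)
Your core instinct---iterate \Cref{lem:conj_column} once per group---is the same engine the paper uses, and the paper's actual proof of this lemma is exactly that and nothing more: one use of \eqSpider{} to put the diagram in the right shape, then \Cref{lem:conj_column} applied to each group in turn. No induction on the number $n$ of groups is needed, because \Cref{lem:conj_column} is already stated for a whole group sitting inside an arbitrary conjunction. However, the statement you have set out to prove is not this lemma: the fully flattened Cartesian-product form, with one spider for each selection of exactly one wire per group (your $\bigoplus_{i,\dots,k}(A_i \tensor \dots \tensor Z_k)$), is the paper's \Cref{lem:conj_expansion_complete}, a separate and strictly later result. \Cref{lem:conj_expansion} only produces an intermediate per-group expansion in which the Contractions introduced by the successive uses of \Cref{lem:conj_column} are still present and unresolved.

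This mismatch is not just a labelling issue: your claim that \Cref{lem:disj_reorder} and \Cref{lem:idempotence_conj} suffice for the flattening, and that ``no new equations beyond those already listed are required'', is precisely where the argument breaks. Passing from the per-group expansion to the flat product form requires resolving the interactions between the Contractions coming from distinct groups, which the paper does with \eqTC{}, \eqTCleft{} and \eqCC{}; moreover \eqCC{} spawns binary disjunctions of wires that must then be pushed to the top with \eqPPtoCC{} and recombined into a larger disjunction via \Cref{lem:disj_expansion}. None of this appears in your plan. Similarly, your inductive step invokes \Cref{lem:conj_column} to ``distribute the last disjunction against this disjunction'', but that lemma distributes a group's disjunction across the tensor structure of a conjunction, not across another disjunction; rewriting inside each branch of the big disjunction supplied by the induction hypothesis would additionally need the transparency facts of \Cref{lem:nat_disj}. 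So as a proof of \Cref{lem:conj_expansion} your argument overshoots the target, and as a proof of \Cref{lem:conj_expansion_complete} it omits the Contraction bookkeeping that constitutes the genuinely new work there.
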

\begin{proof}
	We simply use \eqSpider{} to set up the use of \Cref{lem:conj_column}, which we use on each group of the conjunction of disjunctions.
\end{proof}

\begin{lemma}\label{lem:conj_expansion_complete}
	A conjunction of disjunctions of wires can be expanded as follows:
	\begin{center}
		\tikzfig{disjunction/lem_conjunction_expansion_complete}
	\end{center}
	Where on the right-hand side, there are $a \times \dots \times z$ spiders, each corresponding to the selection of exactly one wire from each group.
\end{lemma}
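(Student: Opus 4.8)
The plan is to prove the complete expansion by induction on the number $n$ of groups in the conjunction, using the partial expansion of \Cref{lem:conj_expansion} as the engine that peels off one group at a time, together with spider fusion (\Cref{lem:spider}, equation \eqSpider{}) to merge the selected wires of each summand into a single spider. Conceptually, the statement is just the diagrammatic form of the distributivity of $\tensor$ over the disjunction built from $\oplus$: a conjunction of disjunctions equals the disjunction, over all ways of choosing one wire from each group, of the tensor of the chosen wires, and each such tensor collapses into one spider.

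For the base case $n = 2$, I would start from \Cref{lem:conj_expansion} (itself obtained via \Cref{lem:conj_column}), which already separates the two groups, and then distribute. Using the $n$-ary forms of \eqPtoC{} and \eqPCtoT{} supplied by \Cref{lem:n_ary_oplus_to_contraction} (equivalently \Cref{lem:contraction_to_X_oplus_nary}), each of the $a$ selections from the first group is paired with each of the $b$ selections from the second group, producing $a \times b$ branches. Within a branch the two chosen wires are combined by a Tensor, which by \eqSpider{} fuses with the administrative $\maclane$s into a single $(2,m)$-spider; this yields the claimed $a \times b$ spiders.

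For the inductive step, I would regard the conjunction of $n$ groups as a conjunction of the first $n-1$ groups with the last, which is legitimate after reassociating and reordering via \Cref{lem:idempotence_conj} and \eqSigmaP{}. Applying the induction hypothesis to the first $n-1$ groups replaces them by $a \times \dots \times y$ spiders, one per partial selection; treating each such spider as a single compound wire turns the remaining structure into a two-group conjunction with the last disjunction $(Z_1,\dots,Z_z)$. The base case then multiplies the number of summands by $z$, and a final spider fusion (\eqSpider{}) absorbs the newly introduced Tensor into the existing spider, delivering $a \times \dots \times z$ spiders, each indexed by a complete selection of one wire per group.

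The main obstacle I anticipate is the bookkeeping of the indexing rather than any single hard equation: one must check that after the repeated distributivity the branches are in bijection with the full product set of selections, and that the order in which the groups are expanded is irrelevant. This is precisely what \Cref{lem:disj_reorder} together with the idempotence and reordering lemmas (\Cref{lem:idempotence_conj}, \eqSigmaP{}) provide, since they allow the nested disjunctions to be flattened into a single disjunction over the product without ambiguity, so the resulting collection of spiders is well defined independently of the chosen order of expansion.
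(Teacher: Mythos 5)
You have the right global picture (peel the groups apart, distribute, fuse each tensor of selected wires into a spider), and organizing the argument as an induction on the number of groups would in principle be legitimate; the paper instead does it in a single pass. But the step you call ``distribute'' in your base case has no engine: the equations you cite cannot perform it. After \Cref{lem:conj_expansion} the Pluses of each group have already been eliminated --- that lemma, via \Cref{lem:conj_column}, is precisely where the $n$-ary \eqPtoC{} and \eqPCtoT{} of \Cref{lem:n_ary_oplus_to_contraction} are spent --- and what remains is one Contraction per group feeding the conjunction's Tensors. Pairing each selection from one group with each selection from the next is a bilinearity step on Contractions, not on Pluses: you must push the Contractions through the Tensors, which is \eqTC{} and \eqTCleft{}, and then merge the resulting nested Contractions with \eqCC{}. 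Neither \eqPtoC{} nor \eqPCtoT{} applies at this point, since there is no Plus left for them to act on. The paper's proof is exactly this sequence: apply \Cref{lem:conj_expansion}, then resolve the interactions of the Contractions using \eqTC{}, \eqTCleft{} and \eqCC{}; your proposal never invokes any of these three equations.

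A second, related omission: \eqCC{} spawns binary disjunctions of wires in the middle of the diagram, and these must be transported to the top with \eqPPtoCC{} and recombined into the single big disjunction using \Cref{lem:disj_expansion}. Your appeals to \Cref{lem:disj_reorder}, \Cref{lem:idempotence_conj} and \eqSigmaP{} only reorder disjunctions that are already at the top; they do not account for the creation and transport of these new ones. Note also that the duplication of shared wires --- each input wire occurs in many of the $a \times \dots \times z$ spiders, so duplicating Contractions must appear on the right-hand side --- is produced exactly by these Contraction--Tensor moves; your branch-counting narrative yields the correct number of summands but never explains where those Contractions come from. With \eqTC{}, \eqTCleft{}, \eqCC{} and \eqPPtoCC{} in hand your induction could be made to work, but as written the proof fails at the distribution step.
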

\begin{proof}
	We use \Cref{lem:conj_expansion}, and then resolve interactions of the Contractions with everything, that is using \eqTC{}, \eqTCleft{}, and \eqCC{}. Since \eqCC{} generates a binary disjunction of wires, we also need \eqPPtoCC{} to push those disjunction to the top of the diagram, and \Cref{lem:disj_expansion} to recompose a bigger disjunction of wires.
\end{proof}

We can generalize \Cref{lem:nat_disj} to conjunctions of disjunctions.

\begin{lemma}\label{lem:nat_conj}
	The disjunction of wires is transparent with respect to most generators of $\FCat$:
	\begin{center}
		\tikzfig{disjunction/lem_conjunction_nat_contraction}\\ [0,5cm]
		\tikzfig{disjunction/lem_conjunction_nat_null}\\ [0,5cm]
		\tikzfig{disjunction/lem_conjunction_nat_plus}\\ [0,5cm]
		\tikzfig{disjunction/lem_conjunction_nat_swap}\\ [0,5cm]
		\tikzfig{disjunction/lem_conjunction_nat_maclane}\\ [0,5cm]
		\tikzfig{disjunction/lem_conjunction_nat_scalar}
	\end{center}
The interaction with Tensors is slightly different, as we have the following equation and its mirrored left/right:
\begin{center}
\tikzfig{disjunction/lem_conjunction_nat_tensor}
\end{center}
\end{lemma}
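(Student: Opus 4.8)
The plan is to mirror the proof of \Cref{lem:nat_disj}, replacing the appeal to \Cref{lem:disj_expansion} by an appeal to \Cref{lem:conj_expansion_complete}. The point is that, by \Cref{lem:conj_expansion_complete}, a conjunction of disjunctions of wires is equal to a single large disjunction whose branches are the $a \times \dots \times z$ selection-spiders, one for each way of choosing a wire from every group. This expresses the conjunction of disjunctions as a disjunction of spiders, so its transparency reduces to two ingredients already available: the transparency of a plain disjunction of wires (\Cref{lem:nat_disj}) and the transparency of a spider with respect to the generators (\Cref{lem:spider_equations}).

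Concretely, for each of the generators treated in the statement --- Contraction, Null, Plus, Swap, Adapter and Scalar --- I would first rewrite both sides with \Cref{lem:conj_expansion_complete}, turning each conjunction of disjunctions into the associated disjunction of selection-spiders. The generator must then commute past the outer disjunction, which is exactly \Cref{lem:nat_disj}, and past each inner spider, which is handled by the matching equation from \Cref{lem:spider_equations}: \eqSpiderC{} for the Contraction, \eqSpiderN{} for the Null, \eqSpiderS{} with \eqS{} for the Scalar, \eqLambdaRhoSpider{} and \eqSpider{} for the Adapter, and \eqSigmaP{} together with \Cref{lem:spider} for the Swap. The Plus case is immediate from \eqPPPleft{} once the expansion has been performed, just as in \Cref{lem:nat_disj}. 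In each case one finishes by re-folding the resulting disjunction of spiders back into a conjunction of disjunctions via \Cref{lem:conj_expansion_complete} read in reverse.

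The Tensor case is where the shape of the equation genuinely changes, and I expect it to be the main obstacle. Pairing a fresh wire with the conjunction does not leave it transparent: it alters the set of selection-spiders, which is why the statement records the Tensor interaction separately (with a mirrored left/right variant). For this case I would again expand with \Cref{lem:conj_expansion_complete}, then absorb the Tensor into every selection-spider using the $n$-ary \eqPPtoCC{} from \Cref{lem:contraction_to_X_oplus_nary} followed by \eqSpiderS{}, and finally re-group the result into the expanded form of the \emph{target} conjunction of disjunctions. The delicate part is matching the two expansions index by index so that the re-folding step applies; this combinatorial bookkeeping, rather than any single new equation, is the real content of this case.

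Since every equality has been reduced to instances of \Cref{lem:nat_disj} and of the spider lemmas, all of which live in $\FCat_\equiv$, the stated equations follow. The mirrored left/right version of the Tensor equation is obtained by the same argument after applying \eqSigmaP{} to swap the two sides of the disjunctions, and all up/down mirrored versions hold by running the identical argument in the dual category.
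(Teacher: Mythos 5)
Your overall template---expand the conjunction, commute the generator through the pieces, refold---is the right one, but the expansion you chose is not the one the paper uses, and that choice creates real gaps. The paper's proof goes through the \emph{partial} expansion \Cref{lem:conj_expansion}, after which each group of the conjunction is already a contraction of spiders; every generator then crosses with one or two equations (\eqSpiderC{} for the Contraction, \eqSpiderN{} and \eqCN{} for the Null, \eqPtoC{} and \eqSpiderC{} for the Plus, \eqSigmaC{} for the Swap, \eqSpider{} for the Adapter, \eqSpiderS{} and \eqS{} for the Scalar). In particular the Tensor case, which you defer as ``combinatorial bookkeeping'' that is ``the real content'' of the argument, is a one-liner in the paper: the incoming Tensor simply fuses into the existing spiders by \eqSpider{}. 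Declaring the hardest equation of the lemma to be unresolved bookkeeping means your proposal does not actually establish it; the difficulty is an artifact of the complete expansion, not of the lemma.

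Moreover, the reduction via \Cref{lem:conj_expansion_complete} is inaccurate as stated: the fully expanded diagram is not merely ``a single large disjunction whose branches are the selection-spiders''---each wire is duplicated by a contraction tree so as to feed every selection-spider in which it occurs, and your two ingredients (\Cref{lem:nat_disj} for the outer disjunction, \Cref{lem:spider_equations} for the spiders) do not let a generator cross that contraction layer; you also need the commutations of \Cref{fig:eq_contraction_nat} (\eqPC{}, \eqCN{}, \eqSC{}, \eqMC{}, \eqTCleft{}, \eqCC{}, \eqSigmaC{}). Two of your per-generator recipes fail for exactly this reason: the Plus is not ``immediate from \eqPPPleft{}'', since after the expansion it must also traverse the contractions and the selection-spiders (whence the paper's \eqPtoC{} followed by \eqSpiderC{}); and the Swap cannot be handled by \Cref{lem:spider}, whose hypotheses explicitly exclude Swaps (it concerns connected \emph{planar} diagrams built only from $\tensor$ and $\maclane$)---\eqSigmaC{} is what is needed. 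All the equations your route is missing do exist in the paper, so the approach could in principle be repaired, but as written it both overlooks the contraction layer and leaves the Tensor case unproven.
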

\begin{proof}
	For all of them, we start by using \Cref{lem:conj_expansion}, and then, we just need the various generator to go through the binary disjunction.
	\begin{itemize}
		\item For the Contraction, we use \eqSpiderC{}
		\item For the Null, we use \eqSpiderN{} and \eqCN{}.
		\item For the Plus, we use \eqPtoC{} and \eqSpiderC{}.
		\item For the Swap, we use \eqSigmaC{}.
		\item For the Adapter, we use \eqSpider{}.
		\item For the Scalar, we use \eqSpiderS{} and \eqS{}.
		\item For the Tensor, we use \eqSpider{}.
	\end{itemize}
\end{proof}

\begin{lemma}\label{lem:nat_disj_conj_tensor_sychro}
	In \Cref{lem:nat_disj,lem:nat_conj}, whenever a we had an interaction with Tensors, we had to choose between left and right, but this choice is in fact not always necessary:
	\begin{center}
		\tikzfig{disjunction/lem_disjunction_nat_tensor_synchro}\\ [0,5cm]
		\tikzfig{disjunction/lem_conjunction_nat_tensor_synchro}
	\end{center}
\end{lemma}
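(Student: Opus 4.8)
The plan is to prove each of the two synchronization equations by reducing the Tensor's interaction with a disjunction to its binary form and then showing that the two ways of pushing the Tensor through (the ``left'' rule and the ``right'' rule recorded in \Cref{lem:nat_disj} and \Cref{lem:nat_conj}) collapse to the same diagram once the branch choices of the two factors are tied together. I would begin with the disjunctive equation. By \Cref{lem:disj_expansion} both disjunctions feeding the Tensor rewrite into iterated binary disjunctions, so it suffices to treat the case where each disjunction is binary; \Cref{lem:disj_reorder} then lets me fix, without loss of generality, the order in which those binary disjunctions are formed. Applying the Tensor-transparency equation of \Cref{lem:nat_disj} to the first factor, and then to the second, pushes the Tensor past both disjunctions and produces the full collection of pairs, i.e.\ a disjunction ranging over $A_i \tensor B_j$ for all index pairs $(i,j)$.

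The heart of the argument is that the synchronization datum correlates the two branch selections, so that the cross-terms with $i \neq j$ feed mismatched injections into head-to-head Pluses and vanish by \eqbot{} (together with the Contraction/Plus commutations already established), leaving only the diagonal disjunction over $A_i \tensor B_i$. I would then run the \emph{same} computation starting from the opposite (``right'') choice of \Cref{lem:nat_disj}: this produces the pairs in the transposed order, which by \Cref{lem:disj_reorder} and the idempotence of \Cref{lem:idempotence_disj} is the same diagram as before. Since both the left-first and the right-first reductions yield the identical diagonal, the two sides of the first equation are equal, which is exactly the claim that the left/right choice is unnecessary in the synchronized case.

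For the conjunctive equation I would follow the identical template with the conjunctive tools substituted for the disjunctive ones: \Cref{lem:conj_expansion} to split the conjunction of disjunctions into columns, \Cref{lem:idempotence_conj} for idempotence, spider fusion \eqSpider{} from \Cref{lem:spider} to merge the Tensors, and the Tensor interaction of \Cref{lem:nat_conj} for the transparency step. Creating and removing the auxiliary $\spider$ nodes needed to move the correlating structure to the top of the diagram is handled by \eqPPtoCC{}, exactly as in the proofs of the surrounding lemmas.

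The hard part will be making the ``synchronization kills the off-diagonal'' step fully rigorous. The passage from the double disjunction over all pairs $A_i \tensor B_j$ to the diagonal over $A_i \tensor B_i$ rests on the mutual exclusivity of disjunction branches, which diagrammatically is the interplay of \eqbot{} with the shared branch selector and with spider fusion; and verifying that the left-first and right-first reductions genuinely land on the \emph{same} diagonal (not merely on reindexings of it) requires careful bookkeeping of the index orderings and repeated appeals to \Cref{lem:disj_reorder} and \Cref{lem:idempotence_disj} to reconcile them.
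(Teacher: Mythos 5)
There is a genuine gap at exactly the step you yourself flag as the hard part. Your central mechanism --- push the Tensor through each disjunction separately with the one-sided rule of \Cref{lem:nat_disj}, obtain a disjunction ranging over all pairs $A_i \tensor B_j$, then cancel the off-diagonal pairs with \eqbot{} --- does not correspond to available rewrite steps. The one-sided Tensor rule of \Cref{lem:nat_disj} generates no cross-terms: it merely reroutes the disjunction so that it reattaches through one chosen factor of the tensored wire. After applying it to the first factor and then to the second, you are left with the tensored wire hanging on two disjunctions through its two factors; no all-pairs structure ever appears, so there is nothing for \eqbot{} to kill, and no equation of the theory implements your informal ``synchronization datum correlates the two branch selections''. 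The all-pairs expansion is a property of a \emph{conjunction} of disjunctions (\Cref{lem:conj_expansion_complete}), i.e., of the structure on the other side of the equation being proved, so invoking it there presupposes the lemma. Your intuition is essentially semantic (distributivity with cancellation of mismatched injections), which would suffice for a soundness check of the equation but not for a derivation in $\equiv$, which is what the lemma requires.

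The paper supplies exactly the missing equational step. For the first equation it starts from the right-hand side, expands the conjunction of disjunctions with \Cref{lem:conj_expansion}, and then applies the $n$-ary version of \eqPPtoCC{} (\Cref{lem:n_ary_oplus_to_contraction,lem:contraction_to_X_oplus_nary}); this single conversion between the head-to-head-Plus pattern and the Contraction/Tensor pattern is what realizes the correlated branch selection you appeal to, and it lands directly on the left-hand side --- no binary reduction, no \Cref{lem:disj_reorder} bookkeeping, no cancellation, and no separate left-first/right-first comparison. For the second equation the paper likewise uses \Cref{lem:conj_expansion} --- with its left/right-mirrored variant when the Tensors arrive on the first group, a positional subtlety your plan does not address --- followed by a short explicit rewriting sequence. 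Your outline mentions \eqPPtoCC{} only as auxiliary spider bookkeeping in the conjunctive case; repairing your proof would mean promoting it to the central step, at which point you recover the paper's argument.
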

\begin{proof}
	For the first one, we start from the right-hand-side and use \Cref{lem:conj_expansion}. This puts us in a situation where we can use the $n$-ary version of \eqPPtoCC{} (which is proved in the same way as its binary version), and the result is exactly the left-hand-side of the first equation.

	For the second one, we have a conjunction of disjunctions and the $\tensor$s arrive on the $i$-th group. We use \Cref{lem:conj_expansion} unless $i = 1$ in which case we use a mirrored left/right version of that lemma (which is proved in the same way). We then use the following rewriting sequence:
	\begin{center}
		\tikzfig{disjunction/proof_conjunction_nat_tensor}
	\end{center}
\end{proof}

\begin{lemma}\label{lem:disj_and_conj}
	Lastly, we have a property that generalizes the equation \eqmix{} by stating that two disjunctions in parallel is that same as a conjunction of dijunctions "+" a greater disjunction:
	\begin{center}
		\tikzfig{disjunction/lem_disjunction_mix}
	\end{center}
\end{lemma}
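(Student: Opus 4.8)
The plan is to bootstrap the general statement from the binary mix equation \eqmix{} together with the expansion and naturality lemmas already established for disjunctions and conjunctions. The key observation is that a disjunction of wires (\Cref{def:disjunction}) produces a single output wire, so I may treat the two disjunctions as two single wires $X$ and $Y$ and apply \eqmix{} to the parallel composition $X \parallel Y$. This replaces $X \parallel Y$ by the superposition of a Tensor branch $X \tensor Y$ and a Plus branch $X \oplus Y$, glued together through the Plus/Contraction gates appearing in the \eqmix{} diagram, which is exactly the ``$\bfup{+}$'' of the right-hand side. So the whole argument reduces to identifying each branch separately and then checking that the gates line up.

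First I would treat the Plus branch. Here $X \oplus Y$ is a single Plus applied to the two disjunction-trees over $A_1,\dots,A_n$ and over $B_1,\dots,B_m$. Since a disjunction is by definition (\Cref{def:disjunction}) a tree of Pluses with no Adapter, adding one more Plus on top yields again a disjunction, now over $A_1,\dots,A_n,B_1,\dots,B_m$. The required reshaping is routine associativity of the Plus, i.e.\ repeated use of \eqAlphaP{}, \eqPM{} and \eqP{} as in \Cref{lem:idempotence_disj}, and the fact that the resulting object is independent of the bracketing and ordering follows from \Cref{lem:disj_expansion} and \Cref{lem:disj_reorder}. This identifies the Plus branch with the promised greater disjunction.

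Next I would treat the Tensor branch. Here $X \tensor Y$ is the Tensor of the two disjunction-trees, and I must recognise it as the conjunction of the two disjunctions (\Cref{def:conjunction}). I would push the Tensor through the two disjunctions using the Tensor-interaction clause of \Cref{lem:nat_disj}, rearranging the resulting $\tensor$-and-$\maclane$ tree into canonical shape with \eqSpider{}; comparing against \Cref{lem:conj_expansion} (or its complete form \Cref{lem:conj_expansion_complete}) then exhibits the diagram as a conjunction of disjunctions, as required.

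Finally I would reassemble the two branches under the gating structure inherited from \eqmix{}. The main obstacle I expect is precisely this bookkeeping step: one has to check that the Plus- and Contraction-nodes that \eqmix{} introduces commute past both the disjunction-trees and the conjunction-tree, so that they realign as the single ``$\bfup{+}$'' of the statement rather than remaining interleaved with the interior nodes. This is where \Cref{lem:nat_disj} and \Cref{lem:nat_conj} (transparency of disjunctions and conjunctions with respect to Contraction, Plus and Tensor) do the real work, while the underlying semantic content is already guaranteed by the soundness of \eqmix{}. With the two branches identified and the gates slid into place, the two sides coincide.
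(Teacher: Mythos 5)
Your proposal is correct and takes essentially the same route as the paper's proof: unfold the two disjunction gadgets, apply \eqmix{} to the two single wires in the middle, push the resulting Contractions outward through the Plus-trees (the paper does this directly with \eqPC{} where you invoke the transparency lemmas \Cref{lem:nat_disj} and \Cref{lem:nat_conj}), and rearrange the $\oplus$-structure with \eqPPPleft{}, \eqAlphaP{}, \eqP{} and \Cref{lem:disj_expansion} to recognize the greater disjunction and the conjunction of disjunctions. One small correction: a disjunction of wires is an endomorphism of the $n$ parallel wires, not a map to a single output wire, so \eqmix{} cannot be applied to the disjunctions themselves; it is only after unfolding the definition that the two single middle wires on which \eqmix{} acts appear, which is what your argument implicitly does.
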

\begin{proof}
	We simply unfold the definition of the disjunction, use \eqmix{} on the two wires in the middle, then use \eqPC{} to push the contractions toward the inputs/outputs, and then \eqPPPleft{}, \eqAlphaP{}, \eqP{}, and \Cref{lem:disj_expansion} to rearrange the $\oplus$ in order to be able to recreate a disjunction and a conjunction of disjunctions.
\end{proof}

\clearpage
\section{Existence of the Normal Form}

In this appendix, we prove that every morphism can be put in normal form. The result of completeness  (\Cref{thm:complete_fun}) is an immediate consequence of the result of this appendix combined with the result of the next appendix (that two normal forms with the same semantics are necessarily equal).

\subsection{Converting the Objects}

It is a well known result in real vector species that any vector space is isomorphic to $\mathbb{R}^n$ where $n$ is its dimension. We could generalize this result and build an isomorphism between any object $\wireSetA$ and $1 \oplus \dots \oplus 1$ with a number of $1$ equal to the "dimension" of $\wireSetA$, however it will be more practical to build an "almost"-isomorphism between $\wireSetA$ and $1 \parallel \dots \parallel 1$ instead.

\begin{definition}\label{def:iso}
	For any object $\wireSetA$, we define its dimension $\dim(\wireSetA)$ as follows:
	\[ \dim(1) := 1 \qquad  \dim(A \tensor B) := \dim(A) \times \dim(B)\]
	\[ \dim(0) = \dim(\varnothing) := 0 \qquad \dim(A\oplus B) := \dim(A)+\dim(B) \]\[ \dim(\wireSetA \parallel B) := \dim(\wireSetA) \times \dim(B) + \dim(\wireSetA) + \dim(B)\]
	We define the diagrams $\iso_{\wireSetA} : \wireSetA \to 1\parallel \dots \parallel 1$ (with $\dim(\wireSetA)$ copies of $1$) inductively on the object $\wireSetA$ as follows:
	\begin{center}
		\tikzfig{iso_nf/def_iso_tensor}\\ [0.5cm]
		\tikzfig{iso_nf/def_iso_plus}\\ [0.5cm]
		\tikzfig{iso_nf/def_iso_parallel}
	\end{center}
	The diagrams $\mirror{\iso_{\wireSetA}} :  1\parallel \dots \parallel 1 \to \wireSetA$ are defined symmetrically.
\end{definition}

Since this "almost"-isomorphism between $\wireSetA$ and $1 \parallel \dots \parallel 1$ comes from an isomorphism between $\wireSetA$ and $1 \oplus \dots \oplus 1$, the wires coming out of the "almost"-isomorphism are actually in disjunction.
The definition of the isomorphism on $\wireSetA\parallel B$ is asymmetric, however this asymmetry disappear when we quotient by $\equiv$, as shown in the following lemma.
\begin{lemma}\label{lem:iso_perm}
	For every object $\wireSetA$ and every color $B$, there exists a permutation of wires $\sigma_0$ such that:
	\begin{center}
		\tikzfig{iso_nf/lem_iso_swap}
	\end{center}
\end{lemma}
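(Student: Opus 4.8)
The plan is to prove the statement by induction on the object $\wireSetA$, viewed as a parallel composite $A_1 \parallel \dots \parallel A_n$ of colors, unfolding the inductive definition of $\iso$ from \Cref{def:iso} and exhibiting $\sigma_0$ as the output-wire permutation induced by the symmetry $\sigma^|$ of the parallel product. Since the parallel clause of \Cref{def:iso} peels off the rightmost color, it suffices to show that moving the color $B$ past a single color $C$ of $\wireSetA$ amounts to an output permutation; the general $\sigma_0$ is then the composite of these one-color transpositions, obtained by commuting $B$ leftwards through the colors of $\wireSetA$ one at a time.

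For the base case $\wireSetA = \varnothing$ we have $\wireSetA \parallel B \isoML B$, and $\iso_{\varnothing \parallel B}$ coincides with $\iso_B$ up to the administrative $\maclane$ allowed by \eqM{}, so $\sigma_0$ is the identity. For the one-color step, I would unfold $\iso_{C \parallel B}$ using the parallel clause of \Cref{def:iso}. This produces three blocks of output wires, tied together in a single disjunction as recorded in the remark following \Cref{def:iso}: a tensor block of $\dim(C)\times\dim(B)$ wires created by the $\tensor$-nodes that combine $\iso_C$ and $\iso_B$, and two plus blocks of $\dim(C)$ and $\dim(B)$ wires. The swap $\sigma^| = \sigma^{\tensor} \oplus \sigma^{\oplus}$ exchanging $C \parallel B$ with $B \parallel C$ acts on these blocks exactly by transposing the tensor block, viewed as a $\dim(C)\times\dim(B)$ grid, and swapping the two plus blocks; my goal is to realize this action diagrammatically as a genuine wire permutation modulo $\equiv$.

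To carry this out, I would use that the outputs are in disjunction, so that \Cref{lem:nat_disj} (disjunctions are transparent to the Swap) together with \Cref{lem:disj_reorder} and \Cref{lem:idempotence_disj} let me freely reorder wires within and across the plus blocks; the exchange of the two plus blocks is then immediate from \eqSigmaP{}. The hard part will be the tensor block: here the reordering is induced by $\sigma^{\tensor}$, and to turn it into a bare wire permutation I would slide the $\tensor$-nodes past the disjunction structure using \Cref{lem:spider} and the coherence of \Cref{prop:maclane_distributive}, so that what remains is the reindexing of the $\dim(C)\times\dim(B)$ grid by its transpose. Finally, \Cref{lem:disj_and_conj}, which generalises \eqmix{}, guarantees that the recombination of the two disjunctions coming from $\iso_C$ and $\iso_B$ is itself symmetric up to this permutation, closing the induction; the composite of the grid transposition and the plus-block swap is the desired $\sigma_0$.
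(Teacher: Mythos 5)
Your overall skeleton---induction on $\wireSetA$, with the real content concentrated in a single-color transposition that is realized by flattening the diagram through the disjunction/spider toolkit---is the same as the paper's, and your base case and most of your toolbox (\Cref{lem:nat_disj}, \Cref{lem:idempotence_disj}, \eqSigmaP{}, \Cref{lem:spider}) are on target. However, there is a genuine gap in the inductive bookkeeping. Your reduction ``the general $\sigma_0$ is the composite of these one-color transpositions'' does not follow from the contextless one-color step you actually propose to prove (relating $\iso_{C \parallel B}$ to $\iso_{B \parallel C}$ in isolation). All intermediate transpositions occur \emph{under} the nested scaffolding of \Cref{def:iso}: writing $\wireSetA = \wireSetC \parallel D$, once the induction hypothesis replaces $\iso_{B \parallel \wireSetC}$ by $\iso_{\wireSetC \parallel B}$ followed by a permutation, that permutation sits between $\iso_{\wireSetC \parallel B}$ and the outer layer of Contractions, Tensors and disjunctions that combines it with $\iso_D$, and the swap of $\iso_B$ with $\iso_D$ must likewise be performed inside this context. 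The paper's proof spends nearly all of its effort exactly there: pushing the inherited permutation to the bottom (\eqSigmaC{}, \eqLambdaRhoSpider{}, \Cref{lem:nat_disj}), flattening the disjunctions and Tensors (\eqSigmaP{}, \Cref{lem:idempotence_disj}, \eqSpiderC{}), performing the swap of the two color-isos, and then \emph{undoing} the flattening so that the definitions of $\iso_{\wireSetC \parallel D}$ and finally $\iso_{\wireSetA \parallel B}$ can be refolded. Your sketch omits both the transport of the IH's permutation through the scaffolding and the refolding step, and composing contextless transpositions cannot substitute for them, since $\iso$ of a longer object is a nested construction, not a composite of isos of adjacent pairs.

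A second, methodological problem: you invoke \Cref{prop:maclane_distributive} to turn the transposed tensor block into a bare wire permutation. That proposition is a coherence statement about formal morphisms in the \emph{semantic} category $\bfup{H}$; it says nothing about $\equiv$ in $\FCat$. Since \Cref{lem:iso_perm} sits upstream of completeness (it feeds the normal form that proves \Cref{thm:complete_fun}), you cannot launder semantic equalities back into $\equiv$: every step must be a derivation inside the equational theory. The same caveat applies to your use of $\sigma^{|} = \sigma^{\tensor} \oplus \sigma^{\oplus}$, which is the semantic decomposition of the Swap, not a diagram equation. The syntactic substitutes exist and you partially cite them---\Cref{lem:spider}, \eqSpider{}, and in particular \eqLambdaRhoSpider{} (with \eqM{}) is what actually absorbs the factor swap on $\tone \tensor \tone$ wires once the Adapters are in place---so this step is repairable, but as written it appeals to the wrong kind of result.
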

\begin{proof}
	We proceed by induction over $\wireSetA$. If $\wireSetA = \varnothing$, then the result is trivial. If $\wireSetA = \wireSetC\parallel D$, then we follow this procedure:
	\begin{enumerate}
		\item Since, $B \parallel \wireSetA = (B \parallel \wireSetC) \parallel D$ so we unfold the definition of $\iso_{(B \parallel \wireSetC) \parallel D}$.
		\item We can then use the induction hypothesis to transform $\iso_{B \parallel \wireSetC}$ into $\iso_{\wireSetC\parallel B}$.
		\item Unfolding the definition of $\iso_{\wireSetC\parallel B}$, we obtain:
		\begin{center}
			\tikzfig{iso_nf/proof_iso_swap}
		\end{center}
	\item
	We use \eqSigmaC{}, \eqLambdaRhoSpider{} and \Cref{lem:nat_disj} to push the permutation $\sigma_0$ to the very bottom. We call the new permutation at the bottom $\sigma_1$
	\item
	We use \eqSigmaP{} and \Cref{lem:nat_disj} to push the disjunction downward and \Cref{lem:idempotence_disj} to eliminate it.
	\item
	We use \eqSpiderC{} to push the Tensors downward.
	\item We can then swap $\iso_D$ and $\iso_B$ and push this swap downward using \eqSigmaC{}, \eqSigmaP{}, \eqLambdaRhoSpider{} (together with \eqM{}) and \eqSpider{}. This swap will merge with $\sigma_1$ to yield a new permutation $\sigma_2$ at the bottom.
	\item We would like to conclude to refold the definition of $\iso_{\wireSetC \parallel D}$, so we need to do the step 6, 5, and 4 in reverse beforehand.
	\item Lastly, we can refold the definition of $\iso_{\wireSetA \parallel B}$.
	\end{enumerate}
\end{proof}

\begin{lemma}\label{lem:iso_nat_disj}
	For every color $A$, we have:
	\begin{center}
		\tikzfig{iso_nf/lem_iso_nat_disj}
	\end{center}
\end{lemma}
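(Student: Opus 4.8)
The plan is to prove the equation by structural induction on the color $A$, using the inductive definition of $\iso_A$ from \Cref{def:iso} together with the disjunction and conjunction machinery developed just above. The statement records the fact, announced right after \Cref{def:iso}, that the $\dim(A)$ output wires of $\iso_A$ genuinely sit in disjunction, so the goal in each case is to show that the disjunction node placed on those output wires is absorbed by $\iso_A$.

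The base cases are immediate. For $A = \tone$ we have $\dim(\tone) = 1$, and a disjunction over a single wire is the identity by \Cref{def:disjunction}, so there is nothing to prove. For $A = \tzero$ we have $\dim(\tzero) = 0$ and $\iso_\tzero$ carries no output wire; the claim is then vacuous, and any residual wire of color $\tzero$ is erased using \eqN{}.

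For the inductive case $A = B \oplus C$, I would unfold $\iso_{B \oplus C}$, apply the induction hypotheses so that the sub-diagrams $\iso_B$ and $\iso_C$ present their outputs as disjunctions of $\dim(B)$ and $\dim(C)$ wires, and then fuse the two families into a single $\dim(B \oplus C)$-ary disjunction using \Cref{lem:disj_expansion} and \Cref{lem:idempotence_disj}. This case is routine because $\oplus$ interacts with disjunction exactly as the $\oplus$ entries of \Cref{lem:nat_disj} predict.

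The hard case is $A = B \tensor C$, where $\dim(B \tensor C) = \dim(B) \times \dim(C)$. After unfolding $\iso_{B \tensor C}$ and invoking the induction hypotheses, the two output disjunctions are joined through a Tensor rather than a Plus, so what one obtains directly is a \emph{conjunction of disjunctions} in the sense of \Cref{def:conjunction}, not a flat disjunction. Flattening it is the main obstacle: I would push the Tensors through using the tensor entries of \Cref{lem:nat_disj} and the synchronization result \Cref{lem:nat_disj_conj_tensor_sychro}, expand the conjunction of disjunctions into its full family of spiders via \Cref{lem:conj_expansion_complete}, and finally recompose a single $\dim(B) \times \dim(C)$-ary disjunction with \Cref{lem:disj_expansion}. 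This is precisely the step where the ``either $\tensor$ or $\oplus$'' ambiguity of $\parallel$ must be forced onto the additive side, and it is where the most involved combinatorial lemmas of the disjunction calculus are genuinely needed.
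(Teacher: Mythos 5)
Your overall skeleton matches the paper's proof exactly: induction over the color $A$, trivial base cases, an easy $B \oplus C$ case, and $B \tensor C$ as the only interesting case. Where you genuinely diverge is in how that tensor case is discharged. The paper never passes through the conjunction-of-disjunctions calculus here: it pushes the disjunction through the Tensors \emph{one-sidedly}, using the tensor entry of \Cref{lem:nat_disj} to route it along the left leg of every Tensor, so that the disjunction only ever meets the outputs of $\iso_B$, and the induction hypothesis is then applied to $\iso_B$ \emph{alone} --- the $C$-side is never visited, and no hypothesis on $\iso_C$ is used. Your route instead treats the unfolded diagram as a conjunction of disjunctions and flattens it with \Cref{lem:nat_disj_conj_tensor_sychro}, \Cref{lem:conj_expansion_complete} and \Cref{lem:disj_expansion}; this is precisely the heavy strategy the paper reserves for the $\tensor$ case of \Cref{prop:iso_is_iso}, and since all of those lemmas are established in the appendix on the equational theory before the normal-form development, your plan is not circular and should be completable. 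What each approach buys: yours is more symmetric and reuses general-purpose machinery uniformly, while the paper's exploits the deliberate left-bias of the tensor/disjunction interaction to get a two-line case --- so your closing claim that the ``most involved combinatorial lemmas of the disjunction calculus are genuinely needed'' here is the one substantive misjudgment, as the paper shows they are not. A smaller imprecision: you motivate the induction hypotheses as making $\iso_B$ and $\iso_C$ ``present their outputs as disjunctions'', but by \Cref{def:iso} (together with \Cref{lem:idempotence_disj}) the sub-isos already terminate in a disjunction; the induction hypothesis is needed not to create that disjunction but to let the \emph{ambient} disjunction cross the sub-diagrams, and in the tensor case only the left factor ever requires it.
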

\begin{proof}
	We proceed by induction over $A$.
	\begin{itemize}
		\item If $A = 1$ or $A = 0$, the result is trivial.
		\item If $A = B \oplus C$, the results follows from \Cref{lem:nat_disj} and the induction hypothesis on $\iso_B$ and $\iso_C$.
		\item If $A = B \tensor C$, the results follows from \Cref{lem:nat_disj} (going to the left of every Tensor) and the induction hypothesis on $\iso_B$ (we do not use need induction hypothesis on $\iso_C$).
	\end{itemize}
\end{proof}

\begin{lemma}\label{lem:iso_nat_conj}
	The previous lemma generalizes to conjunctions of disjunctions, so we have for every color $A$:
	\begin{center}
		\tikzfig{iso_nf/lem_iso_nat_conj}
	\end{center}
\end{lemma}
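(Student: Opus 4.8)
The plan is to proceed by induction over the color $A$, following exactly the skeleton of the proof of \Cref{lem:iso_nat_disj} but replacing each appeal to the disjunction-level naturality of \Cref{lem:nat_disj} with the corresponding conjunction-of-disjunctions naturality of \Cref{lem:nat_conj}. Since \Cref{lem:nat_conj} was established precisely as the conjunction analogue of \Cref{lem:nat_disj}, the inductive bookkeeping should transfer mechanically from the previous lemma.

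For the base cases $A = \tone$ and $A = \tzero$ the diagram $\iso_A$ is trivial (respectively a single wire and an Adapter onto the empty object), so the claimed equation holds immediately. For the inductive step $A = B \oplus C$, I would unfold the definition of $\iso_{B \oplus C}$ from \Cref{def:iso}, which exposes $\iso_B$ and $\iso_C$ joined by a Plus. Pushing the conjunction of disjunctions through that Plus is exactly the Plus case of \Cref{lem:nat_conj} (which itself reduces to \eqPtoC{} and \eqSpiderC{}); afterwards the induction hypotheses applied to $\iso_B$ and $\iso_C$ close the case.

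For the inductive step $A = B \tensor C$, I would unfold $\iso_{B \tensor C}$ and push the conjunction of disjunctions through the Tensors, invoking the Tensor case of \Cref{lem:nat_conj}. The expected obstacle is the same asymmetry that appeared in \Cref{lem:iso_nat_disj}: the Tensor interaction in \Cref{lem:nat_conj} forces a choice of whether the structure exits on the left or the right of each Tensor, and a naive inductive pass would make those choices inconsistently. To resolve this I would lean on the synchronization result \Cref{lem:nat_disj_conj_tensor_sychro}, which guarantees that the left-exit and right-exit variants coincide whenever both factors carry the conjunction structure, so that the conjunction can be routed uniformly to the left of every Tensor. As in \Cref{lem:iso_nat_disj}, only the induction hypothesis on $\iso_B$ is then required, the $C$-side structure being absorbed by the synchronized Tensor rule.

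Overall I expect the $\oplus$ case and the base cases to be routine, with essentially all the genuine content concentrated in the Tensor case and its reliance on \Cref{lem:nat_disj_conj_tensor_sychro}; making the left/right routing coherent across the nested Tensors of $\iso_{B \tensor C}$ is the step most likely to require care.
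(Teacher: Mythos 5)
Your proposal is correct and takes essentially the same approach as the paper, whose proof is literally ``the same as for \Cref{lem:iso_nat_disj}, using \Cref{lem:nat_conj} instead of \Cref{lem:nat_disj}'': induction on the color $A$, trivial base cases, \Cref{lem:nat_conj} plus both induction hypotheses in the $\oplus$ case, and left-routing through the Tensors with only the induction hypothesis on $\iso_B$ in the $\tensor$ case. The only quibble is that your appeal to \Cref{lem:nat_disj_conj_tensor_sychro} is superfluous: routing the conjunction to the left of every Tensor is simply a uniform choice under which the structure never touches the $C$-side at all, so there are no left/right exits to synchronize, and the paper's proof accordingly does not invoke that lemma here.
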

\begin{proof}
	The proof is the same as for \Cref{lem:iso_nat_disj}, using \Cref{lem:nat_conj} instead of \Cref{lem:nat_disj}.
\end{proof}

\begin{lemma}\label{lem:iso_contraction}
	For any color $A$, $\iso_A$ is duplicated by Contractions and erased by Null:
	\begin{center}
		\tikzfig{iso_nf/lem_iso_Contraction}
	\end{center}
\end{lemma}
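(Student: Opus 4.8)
The plan is to prove both statements — that $\iso_A$ is duplicated by the Contraction and erased by the Null — simultaneously by induction on the color $A$, at each step pushing the Contraction (resp.\ the Null) through one layer of $\iso_A$ and invoking the induction hypothesis on the smaller colors. The guiding principle is the remark made just after \Cref{def:iso}: the wires produced by $\iso_A$ come out \emph{in disjunction}, so the whole argument is really about commuting Contractions and Nulls past disjunctions and conjunctions of disjunctions, for which the naturality lemmas \Cref{lem:nat_disj} and \Cref{lem:nat_conj} were precisely designed.

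The two base cases are immediate. When $A = \tone$ we have $\dim(\tone)=1$ and $\iso_{\tone}$ is an Adapter, so the Contraction passes through by \eqCM{} and \eqM{}, and the Null statement is trivial. When $A = \tzero$ we have $\dim(\tzero)=0$, so $\iso_{\tzero}$ produces no output wires and both sides of each equation collapse using \eqN{}, which records that no data travels through a wire of color $\tzero$.

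For the inductive step $A = B \oplus C$, I would unfold the definition of $\iso_{B \oplus C}$ from \Cref{def:iso}, exposing the underlying Plus together with a disjunction of the wires produced by $\iso_B$ and $\iso_C$. The Contraction is then moved across the Plus layer using the interaction equations of \Cref{fig:eq_contraction_nat} (such as \eqPC{} and \eqPtoC{}) and across the disjunction using \Cref{lem:nat_disj}, after which the induction hypotheses on $B$ and $C$ close the case; the Null statement is handled identically, replacing the Contraction-through-disjunction clause by the Null-through-disjunction clause \eqPPNleft{} of \Cref{lem:nat_disj}.

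The inductive step $A = B \tensor C$ is the main obstacle. Here $\iso_{B \tensor C}$ combines the $\dim(B)$ and $\dim(C)$ output wires by tensoring, and since those wires are in disjunction the result is organised as a \emph{conjunction of disjunctions} in the sense of \Cref{def:conjunction} — this is exactly where the multiplicative law $\dim(B\tensor C)=\dim(B)\times\dim(C)$ enters. Consequently the plain disjunctive naturality of \Cref{lem:nat_disj} no longer suffices, and I would instead use its conjunctive refinement \Cref{lem:nat_conj}: the Contraction commutes through a conjunction of disjunctions by \eqSpiderC{}, and the Null by \eqSpiderN{} together with \eqCN{}. Applying \Cref{lem:nat_conj} moves the Contraction (resp.\ Null) inward until it sits directly on the tensored copies of $\iso_B$ and $\iso_C$, at which point the induction hypotheses apply. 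The delicate point — and the one requiring the most care — is matching the conjunction-of-disjunctions bookkeeping on the two sides of the equation, and reusing \Cref{lem:iso_nat_conj} so that the tensor layer of $\iso_{B\tensor C}$ and the Contraction layer commute in the expected order; once this is set up correctly, the remainder is routine rewriting with equations already shown sound.
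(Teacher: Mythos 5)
Your proposal is correct and shares the paper's inductive skeleton exactly: induction on the color $A$, proving duplication and erasure together, with base cases $\tone$, $\tzero$ and inductive cases $\oplus$ and $\tensor$. Your $\oplus$ case (commute the Contraction past the Plus via \eqPC{}, past the disjunction, then apply the induction hypothesis; Null via \eqPN{}/\eqPPNleft{}) is essentially the paper's, which cites \eqPC{}, \eqPN{} and \Cref{lem:iso_nat_disj}. Where you genuinely diverge is the $\tensor$ case: the paper stays elementary, commuting the Contraction past the (mirrored) Tensor with \eqTC{} and reassociating the resulting contractions with \eqAlphaC{} (and for erasure, \eqTN{} followed by \eqLambdaC{}), whereas you route through the conjunction-of-disjunctions machinery of \Cref{lem:nat_conj,lem:iso_nat_conj}. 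Your route is sound --- those lemmas are established before this one and do not depend on it, so there is no circularity --- and it buys a uniform ``push the generator through the macro'' narrative; the paper's choice buys a lighter, more local proof that avoids \Cref{lem:conj_expansion}-style expansions, reserving the conjunction lemmas for where they are indispensable (\Cref{lem:iso_is_iso_tensor}, \Cref{prop:iso_is_iso}, and the normal form of the Tensor generator). Your closing remark about ``bookkeeping'' correctly anticipates the paper's final step, which tidies the disjunctions of Contractions using \eqCC{}, \eqAlphaC{} and \Cref{lem:idempotence_disj,lem:nat_disj}.

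Two base-case details deserve patching. At $A=\tone$ the right-hand side of the duplication equation carries a binary disjunction of the output wires, so the Contraction does not merely slide through an Adapter: the paper invokes \eqCtoP{} precisely to convert the Contraction into this disjunction-carrying form, and your \eqCM{}+\eqM{} step should be replaced (or supplemented) by it. At $A=\tzero$ the paper uses \eqLambdaC{} for duplication and \eqNN{} for erasure; your collapse via \eqN{} reaches the same conclusion, so this is a difference of tooling rather than a gap.
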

\begin{proof}
	For both, we proceed inductively on $A$.
	\begin{itemize}
		\item For $A = 0$ we use \eqLambdaC{} for the duplication and \eqNN{} for the erasure.
		\item For $A = 1$ we use \eqCtoP{} for the duplication and the erasure is trivial.
		\item For $A = B \oplus C$, then for the duplication we use \eqPC{}, the induction hypothesis and \Cref{lem:iso_nat_disj}, while for the erasure we use \eqPN{} and the induction hypothesis.
		\item For $A = B \tensor C$, then for the duplication we use \eqTC{}, the induction hypothesis and \eqAlphaC{},  while for the erasure we use \eqTN{}, the induction hypothesis and \eqLambdaC{}.
	\end{itemize}

	For the duplication, we use \eqCC{}, \eqAlphaC{} and \Cref{lem:idempotence_disj,lem:nat_disj}.

	For the erasure we use \eqCN{}, \eqSN{}, \eqLambdaC{}, \eqRhoC{}  and \Cref{lem:idempotence_disj,lem:nat_disj}.
\end{proof}

Before proving \Cref{prop:iso_is_iso}, we need the following preliminary lemma:
\begin{lemma}
	\label{lem:iso_is_iso_tensor}
	We have the following equation:
	\begin{center}
		\tikzfig{iso_nf/lem_contraction_contradiction}
	\end{center}
	Which implies:
	\begin{center}
		\tikzfig{iso_nf/lem_iso_is_iso_tensor}
	\end{center}
	Which can be generalized as follows:
	\begin{center}
		\tikzfig{iso_nf/lem_iso_is_iso_tensor_nary}\\ [0.5cm]
		\tikzfig{iso_nf/lem_iso_is_iso_tensor_nary_bis}
	\end{center}
\end{lemma}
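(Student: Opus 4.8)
The plan is to establish the three displayed equations in order, treating the first one (the contraction--contradiction equation) as the combinatorial core and bootstrapping the other two from it.

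First I would prove the contraction--contradiction equation directly from the generators' axioms. Intuitively it expresses the incompatibility captured by \eqbot{}: data routed simultaneously through two distinct Tensor branches and then forced to merge through a Contraction cannot coexist, so the composite collapses to a Null. Concretely, I would unfold the relevant sub-diagram, commute the Contraction past the Tensors with \eqTC{} and \eqTCleft{}, and then invoke \eqbot{} together with the Null-commutation rules of \Cref{fig:eq_null_nat} (notably \eqTN{} and \eqCN{}) to annihilate the off-diagonal terms; the surviving on-diagonal contributions are cleaned up with \eqT{} and \eqM{}.

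Second, I would derive the middle equation from this one. I expand both sides into disjunctive shape using \Cref{lem:disj_expansion} and \Cref{lem:nat_disj}, so that the Tensor of two disjunctions becomes a conjunction of disjunctions (cf.\ \eqmix{} and \Cref{lem:disj_and_conj}). The contraction--contradiction equation then kills every pairing of a wire with a \emph{different} wire, leaving only the diagonal pairings; the surviving nodes are recombined into a single disjunction using \Cref{lem:spider} and the idempotence of \Cref{lem:idempotence_disj}. Third, for the $n$-ary generalizations I would argue by induction on the number of wires: the case $n=1$ is immediate since the Tensor degenerates to an Adapter by \eqM{}, and $n=2$ is exactly the binary equation just proven. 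For the inductive step I peel off one wire, apply the $(n-1)$-ary hypothesis, and re-synchronise the Tensors across the disjunction with \Cref{lem:nat_disj_conj_tensor_sychro} and \eqSpiderC{}, the associativity bookkeeping being absorbed uniformly into the spider calculus of \Cref{lem:spider_equations}.

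The main obstacle I anticipate is the second step: correctly tracking which of the pairings survive and which vanish once the two disjunctions sit in parallel. The difficulty is purely combinatorial---applying the contradiction equation to exactly the off-diagonal terms while preserving and re-bundling the diagonal terms into one disjunction---and it is precisely here that the disjunction and conjunction normal-form lemmas carry the weight.
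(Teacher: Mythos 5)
Your overall strategy is the paper's strategy: prove the contraction--contradiction equation directly from the axioms (with \eqbot{} doing the semantic work of killing the mismatched branch), then use it to eliminate the off-diagonal ``crossing'' terms in the binary equation with the disjunction bookkeeping of \Cref{lem:nat_disj,lem:idempotence_disj,lem:disj_expansion}, and finally obtain the $n$-ary versions by repeated application of the binary case. Two points of divergence are worth noting. First, in the binary step the paper's opening move is \eqCC{} applied to both head-to-head pairs of contractions: this is what materializes the four crossing wires (together with the binary disjunctions) in the first place. Your cited tools --- \Cref{lem:disj_expansion}, \Cref{lem:nat_disj}, \eqmix{} and \Cref{lem:disj_and_conj} --- act on disjunctions (Plus-gadgets) and parallel wires, not on a $\Delta$-then-$\nabla$ composite, so as literally written your expansion step has nothing to expand until \eqCC{} (or an equivalent conversion) has been applied; once that is inserted, your plan of isolating each off-diagonal pairing via \Cref{lem:disj_expansion} and annihilating it with the first equation is exactly the paper's iteration over the four red wires. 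Second, for the generalizations the paper does not induct: it uses \eqSpider{}, \eqAlphaC{} and \Cref{lem:idempotence_disj,lem:nat_disj} to set up a flat $nm$-fold application of the binary equation, whereas you peel off one wire at a time. Your induction is workable and arguably cleaner to verify, at the cost of re-synchronizing Tensors across the disjunction at every step (via \Cref{lem:nat_disj_conj_tensor_sychro}), which the flat iteration pays only once. Neither difference is substantive; with \eqCC{} restored, your proof goes through.
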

\begin{proof}
	For the first one:
	\begin{center}
		\tikzfig{iso_nf/proof_contraction_contradiction}
	\end{center}
	For the second one, we start by using \eqCC{} on both pairs of contraction, and then we focus on eliminating each of the four "crossing" wires represented in red:
	\begin{center}
		\tikzfig{iso_nf/proof_iso_is_iso_tensor_1}
	\end{center}
	The two binary disjunction can be eliminated by pushing them upward with \Cref{lem:nat_disj} and absorbing them with \Cref{lem:idempotence_disj}. Then, to eliminate the first red wire, we start by using \Cref{lem:nat_disj} and obtain:
	\begin{center}
		\tikzfig{iso_nf/proof_iso_is_iso_tensor_2}
	\end{center}
	Using \Cref{lem:disj_expansion}, we obtain the binary disjunction necessary to use the first part of this lemma, resulting in:
	\begin{center}
		\tikzfig{iso_nf/proof_iso_is_iso_tensor_3}
	\end{center}
	So we successfully eliminated the first red wire. By iterating the same process for every red wire, we end up with:
	\begin{center}
		\tikzfig{iso_nf/proof_iso_is_iso_tensor_4}
	\end{center}
	For the third and fourth part, we can use \eqSpider{}, \eqAlphaC{} and \Cref{lem:idempotence_disj,lem:nat_disj} to set up the use the second part of this lemma $nm$ times.
\end{proof}

\begin{proposition}\label{prop:iso_is_iso}
	For any object $\wireSetA$, we have:
	\begin{center}
		\tikzfig{iso_nf/lem_iso_is_iso}
	\end{center}
\end{proposition}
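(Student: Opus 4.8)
The plan is to prove both halves of the isomorphism simultaneously by structural induction on $\wireSetA$, following the recursive clauses of \Cref{def:iso}. Writing $D_{\wireSetA}$ for the $\dim(\wireSetA)$-ary disjunction idempotent on $1\parallel\dots\parallel 1$, the two statements to establish are the retraction $\mirror{\iso_{\wireSetA}}\circ\iso_{\wireSetA}\equiv\id_{\wireSetA}$ and the section $\iso_{\wireSetA}\circ\mirror{\iso_{\wireSetA}}\equiv D_{\wireSetA}$; since $D_{\wireSetA}$ is idempotent (\Cref{lem:idempotence_disj}) and absorbs the already-disjunctive output of $\iso$, these together exhibit $\iso_{\wireSetA}$ as a genuine isomorphism onto the disjunctive collection $1\parallel\dots\parallel 1$. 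The base cases $\wireSetA=\varnothing$ and $\wireSetA=1$ are immediate (the diagrams are empty or a single $\maclane$), and $\wireSetA=0$ reduces to $\eqNN$ since $\dim(0)=0$ leaves no wires to track. I rely on \Cref{lem:iso_nat_disj} throughout to slide the disjunctive wiring produced by $\iso$ past the constructors.

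First I would handle the additive case $\wireSetA = A\oplus B$. Unfolding \Cref{def:iso}, $\iso_{A\oplus B}$ is $(\iso_A\parallel\iso_B)$ post-composed with a Plus layer, and dually for $\mirror{\iso_{A\oplus B}}$. For the retraction, composing the two makes the inner Plus/mirrored-Plus pair cancel by $\eqP$, leaving $\mirror{\iso_A}\circ\iso_A$ in parallel with $\mirror{\iso_B}\circ\iso_B$; the induction hypotheses collapse each to an identity, giving $\id_{A\oplus B}$. For the section I push the outer disjunctive wiring through with \Cref{lem:iso_nat_disj} and recombine the two disjunctions of sizes $\dim(A)$ and $\dim(B)$ into a single disjunction of size $\dim(A\oplus B)=\dim(A)+\dim(B)$ using \Cref{lem:disj_expansion}, recovering $D_{A\oplus B}$.

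The multiplicative case $\wireSetA = A\tensor B$ is the crux, and here the preliminary \Cref{lem:iso_is_iso_tensor} does the heavy lifting: its binary form records exactly the simplification of $\iso_{A\tensor B}\circ\mirror{\iso_{A\tensor B}}$ obtained after the induction hypotheses are applied, and its $n$-ary variants absorb the $\dim(A)\times\dim(B)$ crossing wires that appear when the Tensor layer is expanded over the two disjunctive collections. Concretely I unfold $\iso_{A\tensor B}$, use \Cref{lem:iso_contraction} so that the Contractions of the Tensor layer duplicate the inner $\iso_A,\iso_B$, invoke the induction hypotheses, and then clear the resulting grid of crossings with the $n$-ary ``contradiction'' equations of \Cref{lem:iso_is_iso_tensor} together with \Cref{lem:idempotence_disj,lem:nat_disj}; the retraction direction follows from the same unfolding, now annihilating the matched Tensor/mirrored-Tensor pairs with $\eqT$.

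Finally, the parallel case $\wireSetA = \wireSetC\parallel B$ combines the above with \Cref{lem:iso_perm}, which normalises the asymmetric definition of $\iso_{\wireSetC\parallel B}$ up to a permutation of wires, and with \Cref{lem:disj_and_conj} to reconcile the ``conjunction of disjunctions'' produced by a parallel Tensor with the plain disjunction $D_{\wireSetA}$. I expect the main obstacle to be the tensor case: even with \Cref{lem:iso_is_iso_tensor} in hand, tracking the $\dim(A)\times\dim(B)$ crossing wires and verifying that every spurious wire is absorbed by an idempotent disjunction---rather than leaving a residual projection that would collapse the section into a mere one-sided retraction---is the delicate bookkeeping that actually distinguishes a genuine isomorphism onto the disjunctive collection from a partial one.
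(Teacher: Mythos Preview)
Your induction skeleton and your singling out of \Cref{lem:iso_is_iso_tensor} and \Cref{lem:disj_and_conj} match the paper's proof, and the $\oplus$ case is essentially as written there. The gap is in the $\tensor$ and $\parallel$ cases. In $\iso_{A\tensor B}$ the duplicating Contractions of the ``tensoring layer'' sit \emph{after} $\iso_A,\iso_B$, so they act on $\tone$-typed wires, not on $A$ or $B$; \Cref{lem:iso_contraction} (which pushes a Contraction on $A$ past $\iso_A$) therefore does not apply, and your stated mechanism for bringing $\iso_A$ and $\mirror{\iso_A}$ together to invoke the induction hypothesis does not fire. What the paper actually uses here is the \emph{conjunction of disjunctions} apparatus you never mention: \Cref{lem:conj_expansion_complete} to expand the tensoring layer, and \Cref{lem:iso_nat_conj} together with \Cref{lem:nat_disj_conj_tensor_sychro} to slide the resulting conjunction past $\iso_A,\iso_B$ (respectively past the Tensor), which is precisely what sets up both the use of (IH) with $\eqT$ for the retraction and the final appeal to \Cref{lem:iso_is_iso_tensor} for the section.

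The $\parallel$ case is the same story: the paper again routes through \Cref{lem:conj_expansion_complete} and \Cref{lem:disj_and_conj} before invoking (IH) and \Cref{lem:iso_is_iso_tensor}; \Cref{lem:iso_perm} is not used in this proof (it only appears later, in the normal-form argument for $d_1\parallel\id$). So the missing idea is not an extra trick but the whole conjunction layer of \Cref{def:conjunction}--\Cref{lem:nat_disj_conj_tensor_sychro}: without it you have no way to commute the tensoring layer past $\iso$, and your ``annihilate with $\eqT$'' step in the retraction remains a hope rather than a derivation.
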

\begin{proof}
	We proceed by induction on $\wireSetA$, writing (IH) for the induction hypothesis:
	\begin{itemize}
		\item For the base cases $\tone$, $\tzero$ and $\varnothing$, it is trivial.
		\item For the case $\wireSetA = B \oplus C$, see \Cref{fig:proof_iso_is_iso_plus}.
		\item For the case $\wireSetA = B \otimes C$, see \Cref{fig:proof_iso_is_iso_tensor}.
		\item For the case $\wireSetA = \wireSetB \parallel C$, see \Cref{fig:proof_iso_is_iso_parallel}.
	\end{itemize}
\end{proof}

\begin{corollary}\label{cor:iso_is_iso}:
	For any object $\wireSetA$, we have $\oplus_{\tone,\dots,\tone} \circ \iso_{\wireSetA}$ is an isomorphism between $\wireSetA$ and $\bigoplus_{i=1}^{\dim(\wireSetA)} \tone$
\end{corollary}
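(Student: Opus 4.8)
The plan is to exhibit an explicit inverse for $\oplus_{\tone,\dots,\tone} \circ \iso_{\wireSetA}$ and to verify the two round-trips purely equationally, so that the argument does not secretly depend on completeness (which is exactly what this appendix is building towards). Write $n = \dim(\wireSetA)$, let $P = \tone \parallel \dots \parallel \tone$ ($n$ copies), let $S = \bigoplus_{i=1}^{n} \tone$, and abbreviate $i = \iso_{\wireSetA} : \wireSetA \to P$ and $p = \oplus_{\tone,\dots,\tone} : P \to S$, the $n$-ary Plus of Definition~\ref{def:n_ary}. The candidate inverse is $\mirror{i} \circ \mirror{p} : S \to \wireSetA$, that is the mirrored $n$-ary Plus followed by $\mirror{\iso_{\wireSetA}}$.

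I would first isolate two ingredients. The first is that $\mirror{p} \circ p$ equals the $n$-ary disjunction $\mathrm{Disj}_n$ of Definition~\ref{def:disjunction} on $P$ (essentially its definition, together with the $n$-ary generalisation of \eqP), while dually $p \circ \mirror{p} \equiv \id_S$ by the same $n$-ary \eqP. The second is that the output wires of $\iso_{\wireSetA}$ are in disjunction, so that $\mathrm{Disj}_n \circ i \equiv i$; this is the content of Lemma~\ref{lem:iso_nat_disj} combined with the idempotence of disjunctions (Lemma~\ref{lem:idempotence_disj}). To these I add the almost-isomorphism established in Proposition~\ref{prop:iso_is_iso}, namely $\mirror{i} \circ i \equiv \id_{\wireSetA}$ together with the reverse composite $i \circ \mirror{i} \equiv \mathrm{Disj}_n$.

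With these in hand both verifications are short rewrites. On the domain side, $(\mirror{i} \circ \mirror{p}) \circ (p \circ i) = \mirror{i} \circ (\mirror{p} \circ p) \circ i \equiv \mirror{i} \circ \mathrm{Disj}_n \circ i \equiv \mirror{i} \circ i \equiv \id_{\wireSetA}$, using first that $\mirror{p}\circ p \equiv \mathrm{Disj}_n$, then the disjunctive output of $i$, and finally Proposition~\ref{prop:iso_is_iso}. On the codomain side, $(p \circ i) \circ (\mirror{i} \circ \mirror{p}) = p \circ (i \circ \mirror{i}) \circ \mirror{p} \equiv p \circ \mathrm{Disj}_n \circ \mirror{p} \equiv p \circ \mirror{p} \circ p \circ \mirror{p} \equiv \id_S$, using $i\circ\mirror{i}\equiv\mathrm{Disj}_n$ from Proposition~\ref{prop:iso_is_iso}, then $\mathrm{Disj}_n \equiv \mirror{p}\circ p$, and finally $p \circ \mirror{p} \equiv \id_S$ applied twice. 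Hence $\mirror{i} \circ \mirror{p}$ is a two-sided inverse, so $\oplus_{\tone,\dots,\tone}\circ\iso_{\wireSetA}$ is an isomorphism $\wireSetA \to \bigoplus_{i=1}^{n}\tone$.

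The only genuinely delicate point is the bookkeeping behind the second ingredient: matching the particular disjunction idempotent produced by $\mirror{p}\circ p$ with the disjunctive shape emitted by $\iso_{\wireSetA}$, including the associativity and unit adapters hidden both inside the $n$-ary Plus and inside the inductive definition of $\iso_{\wireSetA}$. This is precisely where the disjunction-naturality results (Lemmas~\ref{lem:iso_nat_disj} and~\ref{lem:idempotence_disj}) carry the load, and where one must check that the $n$-ary associators of Definition~\ref{def:n_ary} are threaded in the same order as the wires produced by $\iso_{\wireSetA}$; once that alignment is fixed, everything else is a routine chaining of equations already proved sound. I expect the statement itself to be a short corollary, with all the real difficulty having been discharged in Proposition~\ref{prop:iso_is_iso} and the disjunction lemmas.
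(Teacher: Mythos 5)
Your proof is correct and is essentially the paper's own argument: the paper derives the corollary directly from \Cref{prop:iso_is_iso} (which supplies exactly your two ingredients, $\mirror{\iso_{\wireSetA}} \circ \iso_{\wireSetA} \equiv \id_{\wireSetA}$ and $\iso_{\wireSetA} \circ \mirror{\iso_{\wireSetA}} \equiv$ the $n$-ary disjunction) together with \Cref{lem:idempotence_disj}, and the same two round-trip computations are spelled out in \Cref{lem:Iso_is_iso}. The only cosmetic difference is that for the codomain composite the paper invokes \eqPP{} to cancel $\oplus_{\tone,\dots,\tone}$ against the disjunction and $\mirror{\oplus_{\tone,\dots,\tone}}$, whereas you rewrite the disjunction as $\mirror{\oplus}\circ\oplus$ and apply the $n$-ary \eqP{} twice (with the adapter bookkeeping you already flag) --- an equivalent elementary step.
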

\begin{proof}
	This follows from \Cref{prop:iso_is_iso,lem:idempotence_disj}.
\end{proof}

\begin{figure*}
	\[\tikzfig{iso_nf/proof_iso_plus_1}
	:=
	\tikzfig{iso_nf/proof_iso_plus_2}
	\TextEquiv{\parbox{0.5cm}{\ref{lem:idempotence_disj}\\\ref{lem:iso_nat_disj}\\\ref{lem:nat_disj}\\[-0.3cm]}}
	\tikzfig{iso_nf/proof_iso_plus_3}
	\TextEquiv{\parbox{0.5cm}{(IH)\\\eqP{}\\[-0.3cm]}}
	\tikzfig{iso_nf/proof_iso_plus_4}
	\hfill
	\tikzfig{iso_nf/proof_iso_plus_R1}
	:=
	\tikzfig{iso_nf/proof_iso_plus_R2}
	\TextEquiv{\parbox{0.5cm}{\ref{lem:iso_nat_disj}\\(IH)\\\ref{lem:idempotence_disj}\\[-0.3cm]}}
	\tikzfig{iso_nf/proof_iso_plus_R4}\]
	\caption{Proof of \Cref{prop:iso_is_iso}, $\oplus$ case.}
	\label{fig:proof_iso_is_iso_plus}
\end{figure*}

\begin{figure*}~\vspace{2cm}~
	\[\tikzfig{iso_nf/proof_iso_tensor_1}
	:=
	\tikzfig{iso_nf/proof_iso_tensor_2}
	\TextEquiv{\ref{lem:iso_nat_disj}}
	\tikzfig{iso_nf/proof_iso_tensor_3}
	\TextEquiv{\ref{lem:conj_expansion_complete}}
	\tikzfig{iso_nf/proof_iso_tensor_4}
	\TextEquiv{\parbox{0.5cm}{\ref{lem:iso_nat_conj}\\\ref{lem:nat_disj_conj_tensor_sychro}\\[-0.3cm]}}
	\tikzfig{iso_nf/proof_iso_tensor_5}
	\TextEquiv{\parbox{0.5cm}{(IH)\\\eqT{}\\[-0.3cm]}}
	\tikzfig{iso_nf/proof_iso_tensor_6}\]\[
	\tikzfig{iso_nf/proof_iso_tensor_R1}
	:=
	\tikzfig{iso_nf/proof_iso_tensor_R2}
	\TextEquiv{\parbox{0.5cm}{\ref{lem:iso_nat_conj}\\\ref{lem:nat_conj}\\\ref{lem:nat_disj_conj_tensor_sychro}\\[-0.3cm]}}
	\tikzfig{iso_nf/proof_iso_tensor_R3}
	\TextEquiv{\parbox{0.5cm}{\ref{lem:idempotence_disj}\\(IH)\\[-0.3cm]}}
	\tikzfig{iso_nf/proof_iso_tensor_R4}
	\TextEquiv{\parbox{0.5cm}{\ref{lem:nat_disj}\\\ref{lem:idempotence_disj}\\[-0.3cm]}}
	\tikzfig{iso_nf/proof_iso_tensor_R5}
	\TextEquiv{\ref{lem:iso_is_iso_tensor}}
	\tikzfig{iso_nf/proof_iso_tensor_R6}\]
	\caption{Proof of \Cref{prop:iso_is_iso}, $\tensor$ case.}
	\label{fig:proof_iso_is_iso_tensor}
\end{figure*}

\begin{figure*}\centering
	\[\tikzfig{iso_nf/proof_iso_parallel_1}
	:=
	\tikzfig{iso_nf/proof_iso_parallel_2}
	\TextEquiv{\parbox{0.5cm}{\ref{lem:disj_expansion}\\\ref{lem:idempotence_disj}\\\ref{lem:iso_nat_disj}\\\eqAlphaC\\[-0.3cm]}}
	\tikzfig{iso_nf/proof_iso_parallel_3}
	\TextEquiv{\parbox{0.5cm}{\ref{lem:conj_expansion_complete}\\\ref{lem:disj_and_conj}\\[-0.3cm]}}
	\tikzfig{iso_nf/proof_iso_parallel_4}
	\TextEquiv{\parbox{0.5cm}{\ref{lem:idempotence_disj}\\(IH)\\[-0.3cm]}}
	\tikzfig{iso_nf/proof_iso_parallel_5}\]\[
	\tikzfig{iso_nf/proof_iso_parallel_R1}
	:=
	\tikzfig{iso_nf/proof_iso_parallel_R2}
	\TextEquiv{(IH)}
	\tikzfig{iso_nf/proof_iso_parallel_R3}
	\TextEquiv{\parbox{0.5cm}{\ref{lem:nat_disj}\\\ref{lem:idempotence_disj}\\[-0.3cm]}}
	\tikzfig{iso_nf/proof_iso_parallel_R4}
	\TextEquiv{\ref{lem:iso_is_iso_tensor}}
	\tikzfig{iso_nf/proof_iso_parallel_R5}
	\]
	\caption{Proof of \Cref{prop:iso_is_iso}, $\parallel$ case.}
	\label{fig:proof_iso_is_iso_parallel}
\end{figure*}

\begin{figure*}~\vspace{2cm}~
	\[\tikzfig{matrix_nf/proof_matrix_parallel_id_1}
	\TextEquiv{\parbox{0.5cm}{\ref{lem:idempotence_disj}\\\ref{lem:disj_and_conj}\\[-0.3cm]}}
	\tikzfig{matrix_nf/proof_matrix_parallel_id_2}
	\TextEquiv{\ref{lem:conj_expansion_complete}}
	\tikzfig{matrix_nf/proof_matrix_parallel_id_3}
	\TextEquiv{\parbox{0.5cm}{\ref{lem:mat_contraction}\\\ref{lem:mat_through_tensor}\\\ref{lem:nat_disj}\\[-0.3cm]}}
	\tikzfig{matrix_nf/proof_matrix_parallel_id_4}\]
	\caption{Proof of \Cref{lem:mat_parallel_id}}
	\label{fig:proof_mat_parallel_id}
\end{figure*}

\subsection{Representing Matrices}

\begin{definition}\label{def:mat}
	For any $m \times n$ matrix $M$ with coefficients from the commutative semiring $R$, we define the diagram $[M] : \tone \parallel \dots \parallel \tone \to \tone\parallel \dots \parallel \tone$ as follows:
	\begin{center}
		\tikzfig{matrix_nf/def_matrix}
	\end{center}
\end{definition}

Note that using \eqSzero{}, \eqAlphaC{}, \eqLambdaC{} and \eqRhoC{}, a coefficient equal to zero in this matrix can be instead represented by having a "missing" wire. For example, the identity matrix can be written as follows:

\begin{center}
	\tikzfig{matrix_nf/ex_matrix_id}
\end{center}

\begin{proposition}\label{prop:mat_product}
	For any two matrices $M$ and $N$ with dimensions such that the $M \times N$ is defined, we have
	\[ [M] \circ [N] = [M \times N] \]
\end{proposition}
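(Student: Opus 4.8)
The plan is to give a purely equational proof, since this proposition feeds into the construction of the matrix normal form (and hence into completeness), so invoking \Cref{thm:complete_fun} would be circular. Write $M$ for an $m\times n$ matrix and $N$ for an $n\times p$ matrix, so that $[N]:\tone\parallel\dots\parallel\tone\to\tone\parallel\dots\parallel\tone$ has $p$ inputs and $n$ outputs, $[M]$ has $n$ inputs and $m$ outputs, and the composite has $p$ inputs and $m$ outputs, matching $M\times N$. First I would unfold \Cref{def:mat} for both diagrams and stack $[N]$ on top of $[M]$. The composite then exhibits, on each of the $n$ intermediate wires $j$, a summing Contraction coming from $[N]$ (collecting the $p$ contributions weighted by $n_{jk}$, one per input $k$) sitting directly above a duplicating Contraction coming from $[M]$ (producing the $m$ copies feeding the scalars $m_{ij}$).

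The key step is to resolve each of these merge-then-split junctions. Using the Contraction laws---in particular the bialgebra-style interaction \eqCCtoC{} and \eqPtoC{} together with their $n$-ary generalisations from \Cref{lem:n_ary_oplus_to_contraction}, and the associativity/commutativity \eqAlphaC{}, \eqSigmaC{}---each junction $j$ can be turned into the complete bipartite connection between its $p$ input-side wires and its $m$ output-side wires. After this rearrangement, for every triple $(i,j,k)$ there is a single wire routing input $k$ through junction $j$ to output $i$, carrying the two scalars $n_{jk}$ and $m_{ij}$ in series.

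Next I would collapse the scalars. On each such wire the two scalars in series fuse into $m_{ij}\times n_{jk}$ by \eqSS{} (the $R_\times$ axiom), sliding scalars along wires with \eqSC{} wherever needed. For a fixed pair $(i,k)$ the $n$ wires passing through the different junctions $j$ are then summed by a Contraction, and iterating \eqSsum{} (the $R_+$ axiom) collapses their scalars into the single coefficient $\sum_{j=1}^{n} m_{ij}\times n_{jk}=(M\times N)_{ik}$; zero coefficients are absorbed as missing wires via \eqSzero{}, exactly as remarked after \Cref{def:mat}. A final reorganisation of the remaining Contractions---again by \eqAlphaC{}, \eqSigmaC{} and the $n$-ary Contraction equations---leaves each input $k$ duplicated into $m$ scaled copies and each output $i$ summing its $p$ contributions, which is precisely $[M\times N]$ by \Cref{def:mat}.

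The main obstacle is the simultaneous bipartite rearrangement of all $n$ junctions in the second step: the bookkeeping of which scalar sits on which of the many parallel wires, and ensuring that the two scalars destined to be multiplied remain in series while the $n$ families destined to be added remain in parallel, is where the spider and disjunction machinery (\Cref{lem:spider}, \Cref{lem:disj_expansion}, \Cref{lem:nat_disj}) does the real work of keeping the diagram manageable. Everything after that is routine scalar arithmetic governed by the $R$-axioms.
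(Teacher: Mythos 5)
Your proof follows essentially the same route as the paper's: unfold and stack the two matrix diagrams, resolve each intermediate merge-then-split junction into the complete bipartite pattern via the contraction laws (with the disjunction machinery of \Cref{lem:idempotence_disj,lem:nat_disj} doing the bookkeeping), fuse the serial scalars by \eqSS{} (sliding them with \eqSC{}), and sum the parallel contributions for each input/output pair by \eqSsum{}, which is exactly the paper's argument. One citation needs fixing, though: the bialgebra-style junction resolution is performed by the axiom \eqCC{} of \Cref{fig:eq_contraction_nat}, not by the induced equation \eqCCtoC{}, which the paper derives \emph{using the Unit} and which is therefore unavailable in $\FCat$ --- precisely the functional fragment you rightly insist the argument must stay in to avoid circularity with completeness.
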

\begin{proof}
	We assume $M$ is a $p \times m$ matrix and $N$ is a $m \times n$ matrix.

	We use \eqCC{}, \eqSC{} and \Cref{lem:idempotence_disj,lem:nat_disj} to push the disjunctions (except the one at the very bottom) upward, the upward Contractions upward, and the downward Contractions downward.

	What remains is a lot of scalars in the middle. Whenever two scalars are in sequence, we use \eqSS{} to multiply them, and we note that we are always multiplying a "$n_{i,j}$" with a "$m_{j,k}$" for some $i,j,k$. For $1 \leq i \leq n$ and $1 \leq k \leq p$, we consider the $i$-th contraction at the top and the $k$-th contraction at the bottom. They are linked with $m$ wires, with for scalars $\{n_{i,j}\times m_{j,k} ~|~ 1 \leq j \leq m\}$. Using \eqSsum{}, we sum all of those scalars into $\sum_j n_{i,j}\times m_{j,k}$.

	What remains is exactly the matrix $M \times N$.
\end{proof}

\begin{lemma}\label{lem:mat_contraction}
	For any matrix $M$, its diagram is duplicated by disjunctions of Contractions and erased by Null:
	\begin{center}
		\tikzfig{matrix_nf/lem_matrix_Contraction}
	\end{center}
\end{lemma}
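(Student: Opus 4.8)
The plan is to unfold the definition of $[M]$ from \Cref{def:mat} and then transport the duplicating Contractions (for the copying equation) or the Null (for the erasure equation) from the outputs of the diagram up to its inputs, using essentially the same toolkit as in the proof of \Cref{prop:mat_product} and mirroring the structure of the proof of \Cref{lem:iso_contraction}. Recall that $[M]$ is built in three layers wrapped in disjunctions: a top layer of copying Contractions fanning each of the $n$ input wires into as many branches as needed, a middle layer of Scalars $[m_{i,j}]$, and a bottom layer of summing Contractions collecting the contributions into each of the $m$ output wires.

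For the duplication equation, the goal is to show that postcomposing $[M]$ with a disjunction of Contractions that copies the outputs equals precomposing two copies of $[M]$ with a disjunction of copying Contractions on the inputs. First I would push the copying Contractions upward through the bottom summing layer using \eqCC{} and \eqAlphaC{}, which reorganize Contractions freely up to disjunction bookkeeping; then through the Scalar layer using \eqSC{}; and finally merge them into the top copying layer, again via \eqCC{}. Throughout, the disjunction wrappers are kept transparent by \Cref{lem:nat_disj}, and any spurious duplicated disjunctions are collapsed by \Cref{lem:idempotence_disj}.

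For the erasure equation the argument is dual: I would push the Null upward from the outputs. The bottom summing Contractions are absorbed one at a time by \eqCN{}, the Scalars are absorbed by \eqSN{}, and what remains is a Null on each input wire, which is reshaped into the desired right-hand side using \eqLambdaC{}, \eqRhoC{} and again \Cref{lem:idempotence_disj,lem:nat_disj}.

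The hard part will be the interaction between the summing Contractions, which realize the additive (matrix) structure, and the copying Contractions we are propagating: passing a copy operation through a sum is exactly the bialgebra-style rewriting packaged in \eqCC{}, and one must track carefully that each copied branch recombines with the correct coefficient $m_{i,j}$ rather than mixing coefficients across the two copies. Keeping the disjunction wrappers synchronized during this reorganization—so that \Cref{lem:nat_disj,lem:idempotence_disj} apply cleanly at each step—is the most delicate piece of the bookkeeping.
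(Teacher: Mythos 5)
Your proposal is correct and takes essentially the same route as the paper's (very terse) proof: unfold $[M]$ and push the duplicating Contractions, respectively the Null, through the three layers using \eqCC{} and \eqAlphaC{} for duplication, \eqCN{}, \eqSN{}, \eqLambdaC{} and \eqRhoC{} for erasure, with \Cref{lem:idempotence_disj} and \Cref{lem:nat_disj} handling the disjunction bookkeeping throughout. The only difference is that you explicitly invoke the mirrored \eqSC{} to carry the duplicators through the Scalar layer, a step the paper's one-line proof does not list but which is genuinely needed there, so your account is if anything the more complete one.
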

\begin{proof}
	For the duplication, we use \eqCC{}, \eqAlphaC{} and \Cref{lem:idempotence_disj,lem:nat_disj}.

	For the erasure we use \eqCN{}, \eqSN{}, \eqLambdaC{}, \eqRhoC{}  and \Cref{lem:idempotence_disj,lem:nat_disj}.
\end{proof}

\begin{lemma}\label{lem:mat_through_tensor}
	In order to prove \Cref{lem:mat_parallel_id}, we will need the following equation:
	\begin{center}
		\tikzfig{matrix_nf/lem_matrix_through_tensor}
	\end{center}
\end{lemma}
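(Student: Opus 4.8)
The plan is to prove the equation by unfolding the definition of the matrix diagram $[M]$ from \Cref{def:mat} and then commuting the Tensor node past each of its constituent layers in turn. Recall that $[M]$ is assembled entirely from disjunctions of $\tone$-wires, Contractions, and Scalars $[m_{i,j}]$; so the whole equation reduces to understanding how a Tensor interacts with each of these three ingredients separately, after which a purely local rewriting argument closes the gap. Since every wire involved carries the color $\tone$ and $\tone\tensor\tone\isoML\tone$, the Tensor is never an obstruction by itself — it only reshuffles and, at the scalar layer, multiplies coefficients.

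First I would expose the individual $\tone$-wires on which the Tensor acts by re-expanding the matrix's disjunctive structure with \Cref{lem:conj_expansion_complete} (and \Cref{lem:idempotence_disj} to keep the bookkeeping tidy). I would then push the Tensor downward through the three layers: across the Scalar layer it absorbs two coefficients $[s]$ and $[t]$ into a single $[s\times t]$ via \eqSpiderS{} (equivalently \eqSS{}); across the Contraction layer it passes by \eqSpiderC{} together with the Tensor clause of \Cref{lem:nat_disj}; and across the disjunctions of wires it travels using the same Tensor clause of \Cref{lem:nat_disj}. Once the Tensor has reached the far side, I would refold the definition of the matrix diagram, using \Cref{lem:mat_contraction} to re-fuse the contraction structure and \Cref{lem:idempotence_disj} to re-absorb any disjunctions duplicated along the way, recovering the claimed right-hand side.

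The delicate point — and the step I expect to be the main obstacle — is the left/right asymmetry introduced whenever a Tensor crosses a disjunction of wires: \Cref{lem:nat_disj} forces a choice of side at each crossing, and these choices will not match up naively on the two halves of the equation. I would resolve this exactly as in the $\tensor$ and $\parallel$ cases of \Cref{prop:iso_is_iso}, by invoking \Cref{lem:nat_disj_conj_tensor_sychro} to \emph{synchronize} the two sides so that the Tensor passes through both simultaneously without committing to a side. With the sides synchronized the two diagrams agree wire-by-wire, and the remaining manipulations are the routine spider and contraction rewrites already licensed by \eqSpider{} and \Cref{lem:mat_contraction}; the soundness of the resulting equation is then automatic since each rewrite used is semantics-preserving.
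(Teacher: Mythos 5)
Your proposal is essentially the paper's proof: the paper likewise unfolds the definition of $[M]$ and commutes the spiders layer by layer through the disjunctions, Contractions and Scalars using \Cref{lem:nat_disj} and \eqSpiderC{}, with the coefficients merging multiplicatively via \eqSpiderS{} (equivalently \eqSS{}) exactly as you describe. The one point of divergence is your treatment of the left/right choice forced by the Tensor clause of \Cref{lem:nat_disj}: you resolve it by synchronizing with \Cref{lem:nat_disj_conj_tensor_sychro}, whereas the paper sidesteps the issue by starting from \emph{both} sides of the equation, pushing the spiders toward the middle, and equating the two halves with \eqSpiderS{}, \eqS{}, \eqAlphaC{} and \eqSigmaC{} --- the meet-in-the-middle version is lighter, since associativity and commutativity of the Contraction absorb any mismatch between the choices made on the two halves, but your synchronization route is also sound. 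Two small citation slips worth fixing: the final ``refolding'' of the contraction structure is licensed by \eqAlphaC{} and \eqSigmaC{}, not by \Cref{lem:mat_contraction} (which concerns duplication of $[M]$ by Contractions and erasure by Null, and plays no role here), and \Cref{lem:conj_expansion_complete} is not needed either, since unfolding \Cref{def:mat} already exposes the individual $\tone$-wires on which the Tensor acts.
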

\begin{proof}
	We start from both sides, and unfold the definition of the matrix and then push the Spiders toward the middle using \Cref{lem:nat_disj} and \eqSpiderC{}. We can then equate both sides using \eqSpiderS{}, \eqS{} \eqAlphaC{} and \eqSigmaC{}.
\end{proof}

\begin{lemma}\label{lem:mat_parallel_id}
	For any matrix $M$ of size $m \times n$, we have:
	\begin{center}
		\tikzfig{matrix_nf/lem_matrix_parallel_id}
	\end{center}
where $M'$ defined as the following block matrix of size $(pm+m+p) \times (pn+n+p)$:
\[ M' = \begin{pmatrix}
M &  & 0 & 0 & 0 \\
 & \ddots &  & 0 & 0 \\
0 &  & M & 0 &0 \\
0 & 0 & 0& M & 0 \\
0 & 0& 0& 0& \id
\end{pmatrix} \]
\end{lemma}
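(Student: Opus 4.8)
The plan is to reduce the left-hand diagram, which is $[M]$ placed in parallel with the identity on $p$ wires, to the matrix diagram $[M']$, following the rewrite chain displayed in \Cref{fig:proof_mat_parallel_id}. The guiding intuition is purely semantic: since $\interp{\wireSetA \parallel \wireSetB} = (\interp{\wireSetA} \tensor \interp{\wireSetB}) \oplus (\interp{\wireSetA} \oplus \interp{\wireSetB})$, we expect $[M] \parallel \id$ to decompose as $([M] \tensor \id) \oplus ([M] \oplus \id)$. The Kronecker block $[M] \tensor \id$ yields $p$ diagonal copies of $M$, while the direct-sum block $[M] \oplus \id$ contributes one further copy of $M$ together with the identity block $\id$; these are exactly the $p+1$ copies of $M$ and the single $\id$ block appearing in $M'$. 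The whole task is therefore to realise this decomposition at the level of the equational theory, without appealing to completeness (which this lemma is meant to help establish).

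First I would unfold the parallel juxtaposition of the two collections of wires into a single disjunction whose two summands are a conjunction of disjunctions (the multiplicative, tensor part) and a larger disjunction (the additive part). This is precisely the diagrammatic generalisation of \eqmix{} provided by \Cref{lem:disj_and_conj}, and \Cref{lem:idempotence_disj} is used to absorb the redundant disjunction wires that this move introduces. The output of this step is the second diagram of \Cref{fig:proof_mat_parallel_id}, in which the two blocks of $M'$ are already visibly separated.

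Next I would expand the conjunction of disjunctions via \Cref{lem:conj_expansion_complete}, turning it into the family of spiders that each select one wire from every group; this materialises the $n\cdot p$ (respectively $m\cdot p$) wires of the tensor block and pinpoints where each of the $p$ selected copies of $[M]$ must sit. Finally I would push the single occurrence of $[M]$ through this structure: \Cref{lem:mat_contraction} duplicates $[M]$ across the disjunction of Contractions, producing the $p$ diagonal copies on the tensor side and the extra copy on the additive side, \Cref{lem:mat_through_tensor} commutes $[M]$ past the Tensors, and \Cref{lem:nat_disj} slides the disjunction wires past the remaining generators, leaving the identity wires untouched. Reading off the resulting block-diagonal diagram gives exactly $[M']$.

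The hard part will be the bookkeeping of wire orderings: one must check that the Kronecker structure produced by \Cref{lem:conj_expansion_complete} lines up, block for block, with the $p$ copies of $M$ on the diagonal of $M'$ rather than some permuted arrangement, and that the duplication of $[M]$ by \Cref{lem:mat_contraction} distributes correctly over both the tensor summand and the direct-sum summand. Once the correspondence between the three summands of the disjunction and the three kinds of blocks in $M'$ is fixed, each individual rewrite is routine.
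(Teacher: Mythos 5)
Your proposal follows the paper's proof essentially step for step: the same initial decomposition of $[M]\parallel\id$ via \Cref{lem:disj_and_conj} and \Cref{lem:idempotence_disj}, the same expansion of the conjunction of disjunctions via \Cref{lem:conj_expansion_complete}, and the same pushing of $[M]$ through the structure via \Cref{lem:mat_contraction,lem:mat_through_tensor,lem:nat_disj}, exactly as in \Cref{fig:proof_mat_parallel_id}. The only detail you gloss over with ``reading off the resulting diagram'' is the paper's closing step: the middle part is not literally $[M']$ until the two big disjunctions covering \emph{all} the wires are recomposed from the smaller ones, which the paper does using \Cref{lem:idempotence_disj,lem:nat_disj} to move and duplicate disjunctions and \Cref{lem:disj_expansion} to merge them.
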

\begin{proof}
	We start by doing rewriting as described in \Cref{fig:proof_mat_parallel_id}. The only thing missing for the middle part to be a matrix $M'$ are two big disjunctions covering all the wires. For that we use \Cref{lem:idempotence_disj,lem:nat_disj} to move and duplicate the disjunctions and \Cref{lem:disj_expansion} to recompose a bigger disjunction from smaller ones.
\end{proof}

\subsection{Normal Form for the Functional \Langage}

\begin{lemma}\label{lem:maclane_decomposition}
	Any $\maclane_{A,A'}$ can be decomposed into the sequential and parallel compositions of Tensor, Plus, Null, $\maclane_{\tone\tensor \tone,\tone}$ and their mirrored version.
\end{lemma}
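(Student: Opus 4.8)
The plan is to prove the statement by reducing an arbitrary adapter to the finitely many \emph{atomic} isomorphisms that generate $\isoML$, and then realising each of those from the listed generators. Recall from \Cref{def:maclane_blue} (and the underlying coherence result \Cref{prop:maclane_bimonoidal}) that $\maclane_{A,A'}$ does not depend on the chosen derivation of $A\isoML A'$; using \eqMM\ and \eqM\ I may therefore split $\maclane_{A,A'}$ along \emph{any} convenient derivation into a sequential composite of adapters, each implementing a single generating step of $\isoML$ inside some context $\mathcal{C}[\cdot]$. The congruence equations \eqTM\ and \eqPM\ (together with \eqM, which trivialises the identity adapter on the rest of the context) let me peel off the surrounding $\tensor$- and $\oplus$-layers of $\mathcal{C}$ one at a time, each peeling costing only a Tensor (resp.\ Plus) and its mirror. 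By induction on the depth of the context this reduces the whole problem to realising the adapter for each of the six generating isomorphisms at the root: the two associativities and the four unit laws.

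The associativity adapters are exactly the content of \eqAlphaT\ and \eqAlphaP, which express $\maclane$ for $(A\tensor B)\tensor C\isoML A\tensor(B\tensor C)$ (resp.\ its $\oplus$ analogue) by splitting both sides into three parallel wires with the mirrored Tensors (resp.\ Pluses) and re-merging them in the new bracketing; no Swap is needed, since associativity does not permute wires. The two $\oplus$-unit laws are equally direct: $A\oplus\tzero\isoML A$ is realised by $\mirror{\oplus}$ followed by erasing the $\tzero$-wire with the mirrored Null $\mirror{\zero_{\tzero}}$ (and symmetrically for $\tzero\oplus A$), which is what \eqLambdaP\ and \eqRhoP\ assert.

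The genuine work is the $\tensor$-unit law $A\tensor\tone\isoML A$, because a lone $\tone$-wire cannot be discarded outright in the functional fragment (there is no Unit among the generators); it must instead be \emph{absorbed} into the structure of $A$, with $\maclane_{\tone\tensor\tone,\tone}$ as the only primitive able to do so. I would proceed by induction on $A$: at a $\tone$-leaf the adapter is the generator $\maclane_{\tone\tensor\tone,\tone}$ itself; at a $\tzero$-leaf the source and target are both zero objects, so the adapter coincides with $\zero_{\tzero}\circ\mirror{\zero_{\tzero\tensor\tone}}$ by the Null equations \eqN, \eqNN\ and \eqNM; at a node $A=B\tensor C$ I first apply the already-treated associator to reach $B\tensor(C\tensor\tone)$ and then the induction hypothesis on $C$ wrapped by \eqTM. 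The delicate case is $A=B\oplus C$, where the $\tensor\tone$ has to be pushed \emph{through} the sum: here I would use the distributivity already available in the theory---namely \eqmix, \eqPtoC\ and the spider equations of \Cref{lem:spider_equations}---to move the $\tone$-wire into each summand, apply the induction hypothesis on $B$ and on $C$, and recombine with a Plus.

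This $\oplus$-headed subcase is the main obstacle: it is the only point at which a bare $\tone$ meets a connective it cannot simply slide past, and (because $\isoML$ has no distributivity) it cannot be dodged by choosing a different derivation, so it must be discharged genuinely. It is also the step that must be carried out \emph{purely equationally}, since completeness (\Cref{thm:complete_fun}) is unavailable here---its proof rests on the very normal form that this lemma is used to construct, so invoking it would be circular.
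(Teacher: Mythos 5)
Your overall skeleton coincides with the paper's proof: using \eqMM{} to split the adapter along a derivation of $\isoML$, peeling the congruence layers with \eqTM{} and \eqPM{}, discharging associativity and the $\oplus$-unit laws via \eqAlphaT{}, \eqAlphaP{}, \eqLambdaP{}, \eqRhoP{}, and handling $A\tensor\tone$ (and $\tone\tensor A$) by induction on $A$ --- this is exactly how the paper proceeds (the immediate cases in the first two lines of its figure, the inductive unit case in the third). Your $\tone$-leaf, $\tzero$-leaf and $B\tensor C$ cases are fine, and your remark that completeness cannot be invoked here is correct.

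However, the case you yourself single out as the crux, $A=B\oplus C$, is asserted rather than proved, and the tools you name cannot discharge it in the form the statement requires. The lemma demands a decomposition using \emph{only} Tensor, Plus, Null and $\maclane_{\tone\tensor\tone,\tone}$ (and mirrors), but \eqmix{} and \eqPtoC{} rewrite into diagrams containing Contractions (and Swaps), so even if your rewriting went through it would produce a decomposition outside the stated generator set, and you never explain how those Contractions would subsequently be eliminated. This is not a cosmetic point: routing the single $\tone$-wire into the two summands is precisely the ``rail-switch'' behaviour of the (mirrored) Contraction, i.e.\ a diagonal $\Delta$, and none of the generators you are allowed to keep has $\Delta$ or $\nabla$ in its semantics. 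Concretely, after $\mirror{\tensor}$ and $\mirror{\oplus}$ one holds wires $B\parallel C\parallel\tone$ in which the $\tone$-wire carries data in \emph{both} branches while $B$ and $C$ each carry data in exactly one; since the Plus is interpreted by $\pi_r$, which annihilates configurations where both merged wires are active, and the Tensor by $\pi_\ell$, which annihilates those where exactly one is, every Plus- or Tensor-merge of the unit wire with $B$, with $C$, or with anything derived from them kills one of the two branches, and summing the two separately computed branch morphisms is exactly what a Contraction provides. So ``move the $\tone$-wire into each summand and recombine with a Plus'' cannot be carried out as sketched: you need either the paper's explicit inductive diagram for this case, or an honest accounting that the decomposition passes through Contraction-style routing and an argument reconciling that with the generator list in the statement. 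As it stands, the central step of your proof is missing, and the step as gestured at would fail the restriction the lemma imposes.
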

\begin{proof}
	Using \eqMM{}, we can decompose $\maclane_{A,A'}$ into more elementary $\maclane$, following the definition of $\isoML$. For all the case except $A \tensor \tone$ and $\tone \tensor A$, the result is immediate, as shown in the first two lines of \Cref{fig:proof_maclane_dec}. For $A \tensor \tone$ and $\tone \tensor A$, we proceed inductively on $A$ as shown in the third line of that figure.
\end{proof}

\begin{theorem}\label{thm:nf_fun}
	For every diagram $d \in \FCat_\equiv(\wireSetA,\wireSetB)$, there exists a $\dim(\wireSetB) \times \dim(\wireSetA)$ matrix $M$ such that $d$ can be put in the following normal form:
	\begin{center}
		\tikzfig{normal_form/thm_nf_fun}
	\end{center}
\end{theorem}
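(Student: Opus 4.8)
The plan is to construct the normal form by processing the diagram $d$ through three successive transformations, relying almost entirely on machinery already established in the appendices. The target normal form is a matrix diagram $[M]$ sandwiched between the isomorphisms $\iso_{\wireSetA}$ and $\mirror{\iso_{\wireSetB}}$, so the strategy is: (1) convert the input wires of $\wireSetA$ into a canonical $\tone\parallel\dots\parallel\tone$ shape, (2) convert the output wires of $\wireSetB$ likewise, and (3) show that whatever diagram lives in between, once both sides are in the canonical $\tone^{\parallel}$ shape, is equivalent to some matrix diagram $[M]$.

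First I would pre- and post-compose $d$ with the isomorphisms of \Cref{def:iso}. By \Cref{prop:iso_is_iso} we have $\mirror{\iso_{\wireSetA}}\circ\oplus_{\tone,\dots,\tone}\circ\,\iso_{\wireSetA}\equiv\id_{\wireSetA}$ up to the disjunction structure, so that inserting $\iso_{\wireSetA}$ before $d$ and $\mirror{\iso_{\wireSetA}}$ after it (and symmetrically $\iso_{\wireSetB},\mirror{\iso_{\wireSetB}}$ at the output) leaves $d$ unchanged under $\equiv$. This reduces the problem to putting the central diagram $\iso_{\wireSetB}\circ d\circ\mirror{\iso_{\wireSetA}}:\tone\parallel\dots\parallel\tone\to\tone\parallel\dots\parallel\tone$ into the form $[M]$, since the outer $\iso$ and $\mirror{\iso}$ are then exactly the framing wires of the claimed normal form.

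The core step is then to prove that \emph{every} morphism between objects of the form $\tone\parallel\dots\parallel\tone$ is equivalent to a matrix diagram $[M]$ for some $M$ of the appropriate dimensions. The natural route is induction on the structure of the central diagram viewed as a composite of generators in sequence and parallel. For the base generators one checks each is a matrix diagram directly; for sequential composition one invokes \Cref{prop:mat_product}, which says $[M]\circ[N]\equiv[M\times N]$, so a sequence of matrix diagrams collapses to a single matrix; and for parallel composition one invokes \Cref{lem:mat_parallel_id}, whose block-matrix $M'$ records precisely how $[M]\parallel[N]$ (or $[M]$ in parallel with an identity) recombines into a single matrix over the product object $\tone^{\parallel}$. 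The commutation lemmas \Cref{lem:mat_contraction,lem:mat_through_tensor} let the framing disjunctions and Contractions slide past the matrix blocks so that the whole central region coalesces.

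The main obstacle I expect is the parallel-composition case, because $\parallel$ on $\tone^{\parallel}$-objects is \emph{not} a plain juxtaposition of matrices: by \eqmix{} it mixes the tensor and the sum, so the resulting matrix $M'$ acquires the nontrivial block structure of \Cref{lem:mat_parallel_id} with both the $pm$ tensored copies and the additional $m+p$ summed copies. Correctly tracking which wires of $\iso_{\wireSetA\parallel B}$ route into which block of $M'$, and verifying that the disjunction/conjunction bookkeeping (\Cref{lem:disj_and_conj,lem:conj_expansion_complete}) lines up with the dimension count $\dim(\wireSetA\parallel B)=\dim(\wireSetA)\times\dim(B)+\dim(\wireSetA)+\dim(B)$, is where the real work lies; everything else is an assembly of the already-proven lemmas. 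Once the central diagram is shown equivalent to a single $[M]$, the normal form follows immediately, and the $\maclane$ generators appearing along the way are eliminated via their decomposition in \Cref{lem:maclane_decomposition}.
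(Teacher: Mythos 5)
Your proposal is correct and follows essentially the same route as the paper's proof: eliminate general Adapters via \Cref{lem:maclane_decomposition}, then proceed by induction on $d$, putting each iso-sandwiched generator into matrix form, collapsing sequential composition via \Cref{prop:iso_is_iso} and \Cref{prop:mat_product}, and handling parallel composition through the block matrix of \Cref{lem:mat_parallel_id} (with \Cref{lem:iso_perm} and bifunctoriality reducing $d_1 \parallel d_2$ to the parallel-with-identity cases). The only cosmetic difference is that the paper phrases the induction directly on $d$ rather than first reducing to morphisms between $\tone\parallel\dots\parallel\tone$ objects, but the content — including your correctly identified crux, the disjunction/conjunction bookkeeping in the $\parallel$ case — is the same.
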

\begin{proof}
	We start by using \Cref{lem:maclane_decomposition} to rewrite all the $\maclane$ that are not $\maclane_{\tone\tensor \tone,\tone}$ or $\maclane_{\tone,\tone\tensor \tone}$.

	We then proceed inductively on $d$. We start with the generators as described in \Cref{fig:proof_nf_fun}. Then, for the inductive case:
	\begin{itemize}
		\item For $d = d_1 \circ d_2$, we simply use \Cref{prop:iso_is_iso,lem:idempotence_disj,prop:mat_product}.
		\item For $d = d_1\parallel \id_{C}$ with $C$ a color, we use \Cref{lem:mat_parallel_id} and remark that it directly builds the normal form (except for some disjunctions that can be obtained using \Cref{lem:idempotence_disj}).
		\item For $d = \id_{C} \parallel d_2$ with $C$ a color, we use \Cref{lem:iso_perm} and then reuse the previous case.
		\item For $d = d_1 \parallel d_2$, we use the bifunctoriality and associativity of $\parallel$ to reduce this case to the previous cases.
	\end{itemize}
\end{proof}

\begin{figure*}
	\centering
	\tikzfig{normal_form/maclane_0}
	\caption{Decomposing $\maclane$.}
	\label{fig:proof_maclane_dec}
\end{figure*}

\begin{figure*}
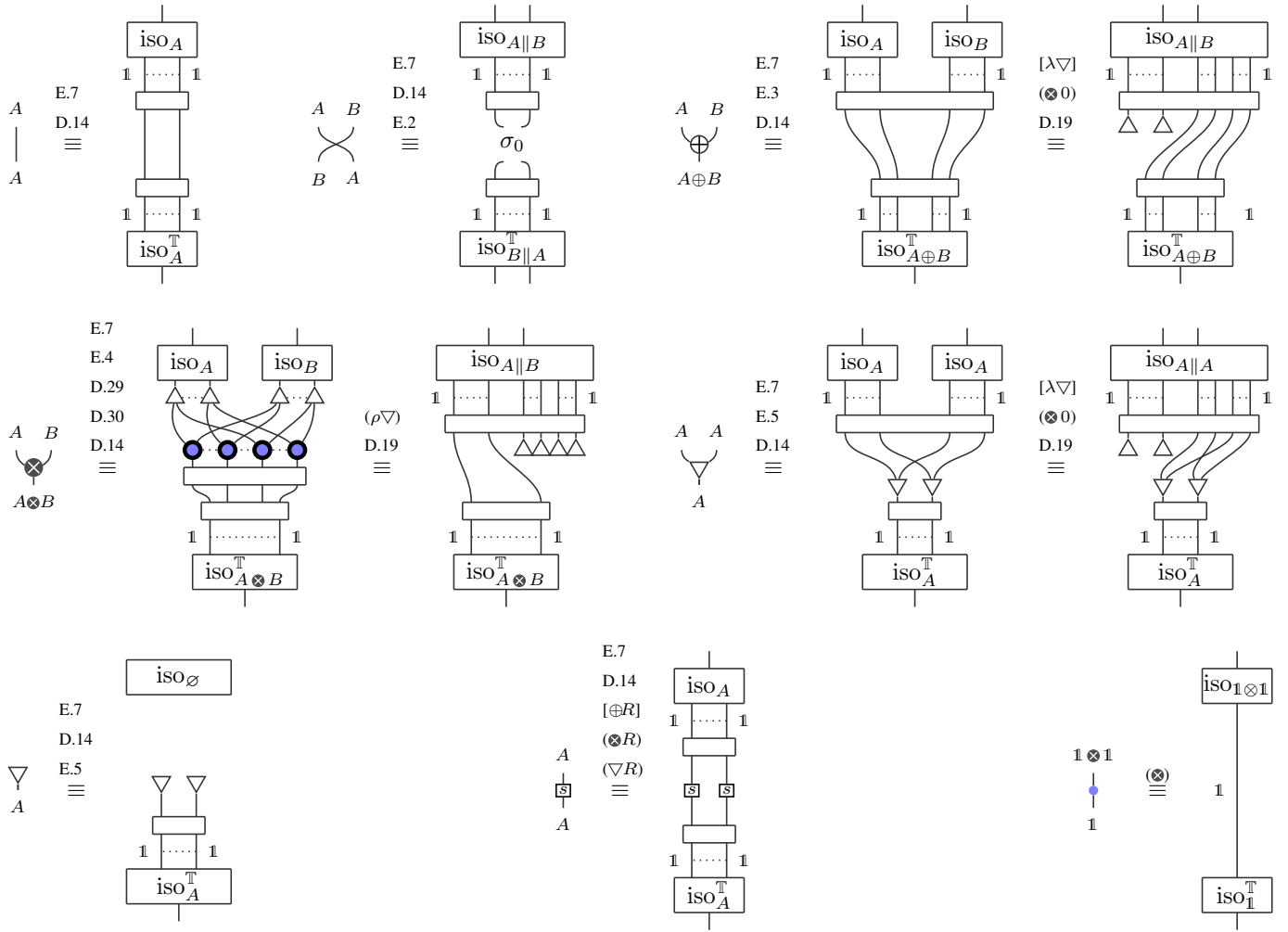

	\centering
	$
	\tikzfig{lang/id} \quad
	\TextEquiv{\parbox{0.5cm}{\ref{prop:iso_is_iso}\\\ref{lem:idempotence_disj}\\[-0.3cm]}}  \quad
	\tikzfig{normal_form/nf_id}
	\hfill
	\tikzfig{lang/swap} \quad
	\TextEquiv{\parbox{0.5cm}{\ref{prop:iso_is_iso}\\\ref{lem:idempotence_disj}\\\ref{lem:iso_perm}\\[-0.3cm]}} \quad
	\tikzfig{normal_form/nf_swap}
	\hfill
	\tikzfig{lang/plus} \quad
	\TextEquiv{\parbox{0.5cm}{\ref{prop:iso_is_iso}\\\ref{lem:iso_nat_disj}\\\ref{lem:idempotence_disj}\\[-0.3cm]}} \quad
	\tikzfig{normal_form/proof_nf_plus} \quad
	\TextEquiv{\parbox{0.5cm}{\eqLambdaC{}\\\eqTN{}\\\ref{lem:nat_disj}\\[-0.3cm]}} \quad
	\tikzfig{normal_form/nf_plus}
	$\\ [0.5cm]$
	\tikzfig{lang/tensor-PN} \quad
	\TextEquiv{\parbox{0.5cm}{\ref{prop:iso_is_iso}\\\ref{lem:iso_nat_conj}\\\ref{lem:nat_conj}\\\ref{lem:nat_disj_conj_tensor_sychro}\\\ref{lem:idempotence_disj}\\[-0.3cm]}} \quad
	\tikzfig{normal_form/proof_nf_tensor}\quad
	\TextEquiv{\parbox{0.5cm}{\eqRhoC{}\\\ref{lem:nat_disj}\\[-0.3cm]}} \quad
	\tikzfig{normal_form/nf_tensor}
	\hfill
	\tikzfig{lang/contraction} \quad
	\TextEquiv{\parbox{0.5cm}{\ref{prop:iso_is_iso}\\\ref{lem:iso_contraction}\\\ref{lem:idempotence_disj}\\[-0.3cm]}} \quad
	\tikzfig{normal_form/proof_nf_contraction} 	\quad
	\TextEquiv{\parbox{0.5cm}{\eqLambdaC{}\\\eqTN{}\\\ref{lem:nat_disj}\\[-0.3cm]}} \quad
	\tikzfig{normal_form/nf_contraction}
	$\\ [0.5cm]$
	\tikzfig{lang/null} \quad
	\TextEquiv{\parbox{0.5cm}{\ref{prop:iso_is_iso}\\\ref{lem:idempotence_disj}\\\ref{lem:iso_contraction}\\[-0.3cm]}}  \quad
	\tikzfig{normal_form/nf_null}
	\hfill
	\tikzfig{lang/scal} \quad
	\TextEquiv{\parbox{0.5cm}{\ref{prop:iso_is_iso}\\\ref{lem:idempotence_disj}\\\eqPS{}\\\eqTS{}\\\eqCS{}\\[-0.3cm]}}  \quad
	\tikzfig{normal_form/nf_scal}
	\hfill
	\tikzfig{lang/adapt_11_1} \quad
	\TextEquiv{\eqT} \quad
	\tikzfig{normal_form/nf_adapt}
	$
	\caption{Normal form for the generators.}
	\label{fig:proof_nf_fun}
\end{figure*}

\subsection{Normal Form for the \Langage}

In this section, we generalise the previous result in presence of the Unit generator. 

\begin{proposition}\label{prop:nf}
	For every diagram $d \in \Cat_\equiv(\wireSetA,\wireSetB)$, there exists a $(\dim(\wireSetB)+1) \times (\dim(\wireSetA)+1)$ matrix $M$ such that $d$ can be put in the following pseudo normal form:
	\begin{center}
		\tikzfig{normal_form/thm_nf}
	\end{center}
\end{proposition}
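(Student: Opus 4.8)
The plan is to mirror the proof of \Cref{thm:nf_fun} for the functional fragment, adapting each step to accommodate the single new generator, the Unit (and its mirror). The target pseudo normal form carries one extra wire of color $\tone$ alongside the $\dim(\wireSetA)$ (resp.\ $\dim(\wireSetB)$) wires produced by $\iso_{\wireSetA}$ (resp.\ consumed by $\mirror{\iso_{\wireSetB}}$): this surplus wire is created by a Unit on the input side and absorbed by a mirrored Unit on the output side, and it is exactly what accounts for the $+1$ in the matrix dimensions, matching the ``$\oplus \oone$'' of the semantics in \Cref{fig:sem}. First I would, as in \Cref{thm:nf_fun}, use \Cref{lem:maclane_decomposition} to eliminate every adapter except $\maclane_{\tone \tensor \tone, \tone}$ and its mirror, and then proceed by induction on the structure of $d$.

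For the generators of $\FCat$, the functional normal form of \Cref{thm:nf_fun} applies verbatim; I would simply thread an inert $\tone$ wire (created and immediately absorbed by a Unit/co-Unit pair) through the diagram to enlarge the matrix to the required $(\dim(\wireSetB)+1)\times(\dim(\wireSetA)+1)$ shape, that added dimension contributing the block $\begin{pmatrix} M & 0 \\ 0 & \id \end{pmatrix}$, in agreement with the matrix semantics of a functional morphism. For the Unit $\tensor_{\tone} : \varnothing \to \tone$ itself I would give its normal form directly: since $\dim(\varnothing)=0$, the only input is the extra $\tone$ wire, and the associated $2\times 1$ matrix routes it to both the genuine $\tone$ output and the extra output dimension, so its nontrivial structure lies entirely in the column of the surplus wire.

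The inductive cases are where the new work lies. For sequential composition $d = d_1 \circ d_2$, I would put both factors in normal form, cancel the adjacent $\mirror{\iso}$ and $\iso$ using \Cref{prop:iso_is_iso,lem:idempotence_disj}, and invoke \Cref{prop:mat_product} to multiply the two $(\dim+1)\times(\dim+1)$ matrices, checking that the bottom-right coefficient composes as the semantics predicts. For parallel composition the key difficulty is that each of $d_1$ and $d_2$ carries its \emph{own} extra $\tone$ wire, whereas the normal form of $d_1 \parallel d_2$ must carry only one. Here I would rely on \Cref{lem:mat_parallel_id} for the functional part and, crucially, on \Cref{lem:unit_fusion} to fuse the two surplus Units into a single one: this is the diagrammatic counterpart of the $\bfup{expand}$ isomorphism $(H \oplus \oone)\tensor(K\oplus\oone) \to (H\mid K)\oplus\oone$, under which the two copies of $\oone$ collapse into one through $\oone \tensor \oone \isoML \oone$. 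The remaining subcases ($\id_C \parallel d_2$ and the general $d_1\parallel d_2$) reduce to the $d_1 \parallel \id_C$ case exactly as in \Cref{thm:nf_fun}, using \Cref{lem:iso_perm} and the bifunctoriality and associativity of $\parallel$.

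The main obstacle I expect is precisely this fusion step in the parallel case: one must verify that merging the two extra wires produces a matrix of the prescribed block shape and, in particular, that the resulting bottom-right entry remains of the admissible form (an identity plus a scalar, as required by the definition of $\bfup{H}^{\oplus \oone}$) rather than an arbitrary coefficient. This admissibility constraint is what \Cref{lem:unit_fusion} is designed to control, and it is also what motivates calling the result a \emph{pseudo} normal form; everything else is a faithful transcription of the functional argument with an inert $\tone$ wire carried through the whole induction.
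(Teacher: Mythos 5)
Your plan is workable, but it takes a genuinely different route from the paper. You re-run the structural induction of \Cref{thm:nf_fun} with an extra $\tone$ wire threaded through every case; the paper instead treats the Unit globally. It first observes that if $d$ is functional, one just pads the normal form of \Cref{thm:nf_fun}; otherwise it factors \emph{all} Units and mirrored Units out of $d$ at once (possible since the Unit has no inputs, so it slides freely out of any context), fuses them into a single Unit and a single mirrored Unit by \Cref{lem:unit_fusion}, applies \Cref{thm:nf_fun} exactly once to the functional remainder $d_2 \in \FCat(\wireSetA \parallel \tone, \wireSetB \parallel \tone)$, and then collapses the resulting $3\times 3$ block matrix --- of size $(2\dim(\wireSetB)+1)\times(2\dim(\wireSetA)+1)$, because $\dim(\wireSetA\parallel\tone) = 2\dim(\wireSetA)+1$ --- into the required $(\dim(\wireSetB)+1)\times(\dim(\wireSetA)+1)$ matrix using the $n$-ary version of \eqPPtoC{}, summing the doubled blocks. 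The paper's route buys a short proof that uses \Cref{thm:nf_fun} as a black box and performs unit fusion once; yours distributes that work over the whole induction and makes the correspondence with $\bfup{expand}$ and the $\oplus\oone$ semantics visible at each step, at the cost of re-verifying every case with the surplus wire (in particular you would need a unit-augmented variant of \Cref{lem:mat_parallel_id}, which the paper never has to state).

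Two points need patching. First, your sequential case is under-specified: after cancelling $\iso_{\wireSetB}$ against $\mirror{\iso_{\wireSetB}}$ via \Cref{prop:iso_is_iso}, the middle of the composite still contains the mirrored Unit of one pseudo normal form followed by the Unit of the other on the surplus wires. This composite is \emph{not} the identity on $\tone$ (its semantics is the all-$\id$ $2\times 2$ matrix, which is precisely how composition in $\bfup{H}^{\oplus\oone}$ mixes the extra coordinate into the product), so \Cref{prop:mat_product} alone does not close the case; you need the Unit machinery (\Cref{lem:unit_fusion}, or the $n$-ary \eqPPtoC{} as the paper does) here as well, not only in the parallel case. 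Second, your admissibility worry is misplaced: in the pseudo normal form the ``$+1$'' is structural --- the diagram itself contributes the extra bottom-right $+1$ term of \Cref{prop:sem_block_matrix_unit} --- so the matrix $M$ of \Cref{prop:nf} is unconstrained (for a functional generator it is the functional matrix padded with a \emph{zero} row and column, the $\id$ being supplied by the structural $+1$). The qualifier ``pseudo'' does not refer to admissibility but to the fact that the bottom-right coefficient is not yet canonical; canonicity is only imposed in \Cref{thm:nf} via \eqUS{}, which matters when $R$ is not cancellative.
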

\begin{proof}
	We start by noting that if $d$ is functional, then we can simply use \Cref{thm:nf_fun} and then note that:
	\begin{center}
		\tikzfig{normal_form/proof_nf_from_fun}
	\end{center}
	If $d$ is not functional, we can factor all the Unit and obtain $d_1$ functional such that:
	\begin{center}
		\tikzfig{normal_form/proof_nf_factor_unit}
	\end{center}
	Using \Cref{lem:unit_fusion}, we can obtain $d_2$ functional such that:
	\begin{center}
		\tikzfig{normal_form/proof_nf_fuse_unit}
	\end{center}
	Then, we can use \Cref{thm:nf_fun} on $d_2$:
	\begin{center}
		\tikzfig{normal_form/proof_nf_1}\\ [0.5cm]
		\tikzfig{normal_form/proof_nf_2}
	\end{center}
	Using the $n$-ary version of \eqPPtoC{}, and where $N$ is the matrix given by:
	\[
	M = \begin{pmatrix}
	A & B & C \\
	D & E & F \\
	G & H & x
	\end{pmatrix} \qquad N := \begin{pmatrix}
	A+B+D+E & C+F \\
	G+H & x
	\end{pmatrix}
	\]
	with $A,B,D,E$ being $\dim(\wireSetB) \times \dim(\wireSetA)$ matrices, $C,F$ being $\dim(\wireSetB) \times 1$ matrices, $G,H$ being $1 \times \dim(\wireSetA)$ matrices, and $x$ being a scalar.
\end{proof}

However, when $R$ is not cancellative, this result is not enough.

\begin{theorem}\label{thm:nf}
	For every $s \in R$, we define $[s]_{+1}$ as the equivalence class of $\{t ~|~ s+1 = t+1\}$. We arbitrarily chose\footnote{The axiom of choice might be required for having a unique normal form, however it is unnecessary for completeness by itself. The method would then be to give up uniqueness of the normal form but show that all the various normal forms can be rewritten into one another.} canonical representatives in each of those equivalences class.
	
	For every diagram $d \in \Cat_\equiv(\wireSetA,\wireSetB)$, there exists a $(\dim(\wireSetB)+1) \times (\dim(\wireSetA)+1)$ matrix $M$, in which the bottom-right coefficient is canonical, such that $f$ can be put in the following normal form:
	\begin{center}
		\tikzfig{normal_form/thm_nf}
	\end{center}
\end{theorem}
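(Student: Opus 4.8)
The plan is to obtain the refined normal form from the pseudo normal form of \Cref{prop:nf} by a single canonicalisation of its one remaining degree of freedom, carried out with the equation \eqUS{}.

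First I would invoke \Cref{prop:nf} on $d \in \Cat_\equiv(\wireSetA,\wireSetB)$ to get a $(\dim(\wireSetB)+1)\times(\dim(\wireSetA)+1)$ matrix $M$ together with an equivalence between $d$ and the corresponding normal-form diagram. As already observed after \Cref{prop:nf}, this form fails to be unique only because of its bottom-right coefficient $x$: in the $\bfup{H}^{\oplus \oone}$ semantics this entry contributes $x+1$, so two pseudo normal forms whose matrices differ only in the bottom-right entry, by some $x'$ with $x+1 = x'+1$, denote the same morphism. The task is therefore to rewrite the bottom-right entry into the chosen canonical representative $x_0$ of $[x]_{+1}$, i.e.\ the one satisfying $x+1 = x_0+1$.

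Second, I would locate this coefficient inside the diagram. Unfolding \Cref{def:mat}, the last input and output wires of $[M]$ are the $\tone$ wires created and consumed by the Units that encode the $\oplus \oone$, and the coefficient $x$ is the scalar $[x]$ carried by the path linking them through the Contraction network of the matrix. Isolating that region exhibits precisely the sub-diagram appearing on the left-hand side of \eqUS{} — a Unit, Contractions, a scalar, and a mirrored Unit. A single application of \eqUS{} then replaces $[x]$ by $[x_0]$ without altering any other coefficient, yielding a matrix $M'$ equal to $M$ except in its now canonical bottom-right entry, and leaving the diagram in the shape of the normal form.

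The hard part will be the second step: confirming that the bottom-right coefficient is presented in exactly the syntactic configuration demanded by \eqUS{}. This requires tracing, through the construction in the proof of \Cref{prop:nf} and the definition of $[M]$, how the two $+1$-indexed wires attach to the Units and to the flanking Contractions, and separating that local pattern from the surrounding entries (the last row and column, whose coefficients must be left untouched) so that \eqUS{} can be applied in context. Once the pattern is matched — possibly after tidying the adjacent disjunctions with \Cref{lem:idempotence_disj} — the lone rewrite with \eqUS{} completes the argument.
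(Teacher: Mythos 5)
Your proposal is correct and follows essentially the same route as the paper's proof, which likewise obtains the pseudo normal form from \Cref{prop:nf} and then applies \eqUS{} once to replace the bottom-right coefficient by the canonical representative of its $[\,\cdot\,]_{+1}$-class. The pattern-matching concern you flag as the hard part is real but is exactly what \eqUS{} was designed for, and the paper treats it as immediate, so your additional care only fleshes out a step the paper leaves implicit.
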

\begin{proof}
	We start by using \Cref{prop:nf} to obtain a pseudo normal form, and check the bottom right coefficient, which we write $s$. We then use \eqUS{} to replace $s$ by the canonical representative of $[s]_{+1}$.
\end{proof}

\clearpage
\section{Uniqueness of the Normal Form}

In this appendix, we prove the uniqueness of the normal form, that is that two normal forms with the same semantics are necessarily equal. The result of completeness  (\Cref{thm:complete_fun}) is an immediate consequence of the result of this appendix combined with the result of the previous appendix (that every morphism can be put in normal form).

We recall that as defined in \Cref{def:matrix_R}, we have a linear, full and faithful functor $\Morphism{-}$ which sends any $n \times m$ matrix $M$ with coefficient in $R$ to a morphism  of $\bfup{H}\left(\bigoplus_{i=1}^n \tone,\bigoplus_{j=1}^m \tone\right)$.

\subsection{The Functional Fragment}

\begin{proposition}\label{prop:sem_block_matrix}
	For any $m \times n$ matrix $M$, we have
	\[ \interp{\tikzfig{matrix_nf/prop_sem_matrix}} = \Morphism{M} \in \bfup{H}\left(\bigoplus_{i=1}^n \tone,\bigoplus_{j=1}^m \tone\right)\]
\end{proposition}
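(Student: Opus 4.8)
The plan is to verify the identity coefficient by coefficient, using \Cref{lem:coef_by_coef}. Both $\interp{[M]}$ and $\Morphism{M}$ are morphisms of $\bfup{H}\!\left(\bigoplus_{i=1}^n \tone, \bigoplus_{j=1}^m \tone\right)$, and by \Cref{def:matrix_R} the $(i,j)$-coefficient of $\Morphism{M}$ is $\bfup{scal}(m_{i,j})$. Hence it suffices to show, for all $1 \le i \le n$ and $1 \le j \le m$,
\[ \bfup{proj}_j^m \circ \interp{[M]} \circ \bfup{inj}_i^n = \bfup{scal}(m_{i,j}), \]
where $\bfup{proj}_j^m$ and $\bfup{inj}_i^n$ are the $\oplus$-projection and injection of \Cref{lem:proj_inj_matrix}.

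First I would unfold the definition of $[M]$ (\Cref{def:mat}), together with the $n$-ary Plus that converts the parallel, in-disjunction wires into the genuine biproduct $\bigoplus_{i=1}^n \tone$ (as remarked after \Cref{cor:iso_is_iso}). The diagram then reads as a layer of mirrored Contractions duplicating each of the $n$ inputs, a middle layer of Scalars $[m_{i,j}]$, and a layer of Contractions summing into each of the $m$ outputs. Reading off \Cref{fig:sem_fun}, the mirrored Contractions become biproduct diagonals, the Scalars become $m_{i,j}\cdot\id$, and the Contractions become codiagonals. The core of the argument is that pre/post-composing by $\bfup{inj}_i^n$ and $\bfup{proj}_j^m$ selects the single path from the $i$-th input through the Scalar $[m_{i,j}]$ to the $j$-th output: every other branch is annihilated by the biproduct identities $\pi_\ell \circ \iota_r = \zero$ and $\pi_r \circ \iota_\ell = \zero$ (recorded in the soundness appendix), while the surviving branch collapses via $\pi_r \circ \Delta = \id$ and $\nabla \circ \iota_r = \id$, leaving exactly $\bfup{scal}(m_{i,j}) = m_{i,j}\cdot\id$.

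The main obstacle is the coherence bookkeeping. The ambient domain $\interp{\tone \parallel \dots \parallel \tone}$ is not literally $\bigoplus_{i=1}^n \tone$ but the nested $\mid$-structure $\interp{\tone}\mid\dots\mid\interp{\tone}$, so the path-selection step is mediated by the Mac Lane coherence morphisms $m^{|}$ and $m^{\oplus}$ (\Cref{prop:maclane_bimonoidal,prop:maclane_distributive}). To make the ``single path'' argument a genuine equality of morphisms rather than an informal data-flow picture, I would first record, in the spirit of \Cref{lem:proj_inj_matrix}, how $\bfup{inj}_i^n$ and $\bfup{proj}_j^m$ commute with these coherence morphisms, and only then carry out the biproduct algebra. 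This compatibility, rather than the biproduct identities themselves, is where the real care is needed.

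A cleaner way to package the same content is by induction on the dimensions, via \Cref{prop:mat_product} and soundness (\Cref{prop:soundness_fun}): the equivalence $[M]\circ[N] \equiv [M\times N]$ shows that $M \mapsto \interp{[M]}$ respects matrix products exactly as $\Morphism{-}$ is functorial (\Cref{prop:matrix_is_fff}), which reduces the claim to single-entry matrices, whose diagrams are one Scalar framed by an injection and a projection and whose semantics is immediately $\bfup{scal}(s)$. Either route ultimately rests on the same coefficient computation, and I expect the coherence-morphism bookkeeping, not the biproduct algebra, to be the delicate part.
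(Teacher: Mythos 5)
Your reduction to a coefficient-by-coefficient check via \Cref{lem:coef_by_coef} is exactly how the paper frames the problem, but the paper performs the coefficient extraction in the opposite order to your primary route: instead of computing $\bfup{proj}_j^m \circ \interp{[M]} \circ \bfup{inj}_i^n$ directly inside $\bfup{H}$, it first exhibits a small diagram whose semantics is the $i$-th projection $\bigoplus_{i=1}^n \tone \to \tone$, then proves \emph{within the equational theory} that sandwiching the normal-form diagram between such projection and injection diagrams rewrites to the single Scalar $[m_{i,j}]$, and only then invokes soundness (\Cref{prop:soundness_fun}) to transfer the equality to $\bfup{H}$. This order of operations is precisely engineered to avoid the $m^{|}$/$m^{\oplus}$ coherence bookkeeping that you correctly flag as the delicate point of your direct computation: in the calculus, the path selection is carried out by Plus-elimination and Null-propagation equations (\eqP{}, \eqPN{}, and the disjunction lemmas of the appendix) and no compatibility statement between $\bfup{proj}/\bfup{inj}$ and the coherence isomorphisms ever needs to be formulated. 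Your direct route would go through, but the ``recording'' step you describe amounts to re-proving an analogue of \Cref{lem:proj_inj_matrix} for the nested $\mid$-structure and for the disjunctions occurring inside $[M]$, which is substantially heavier than the paper's argument. Your closing alternative, by contrast, is essentially the paper's proof in different notation: the row and column selection matrices play the role of the paper's projection and injection diagrams, composing them away via \Cref{prop:mat_product} and soundness is exactly the paper's ``equational theory, then soundness'' step, and what remains is the base computation that the extraction diagrams have the right semantics together with $\interp{[s]} = \bfup{scal}(s)$ for $1\times 1$ matrices --- which is the (not entirely trivial, but short) first step of the paper's proof rather than something that comes for free.
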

\begin{proof}
	For that, we start by noting that the following diagram has for semantics the $i$-th (for $1 \leq i \leq n$) projection from $\bigoplus_{i=1}^n \tone$ to $\tone$:
	\begin{center}
		\tikzfig{matrix_nf/proof_sem_matrix_1}
	\end{center}
	Then, using the equational theory we can prove
	\begin{center}
		\tikzfig{matrix_nf/proof_sem_matrix_2}
	\end{center}
	Using the soundness (\Cref{prop:soundness_fun}), this allows us to check the result on every coefficient of the matrix.
\end{proof}

\begin{corollary}[Uniqueness of the Normal Form]\label{cor:nf_unique_fun}
	Given any diagrams $d,e \in \FCat(\wireSetA,\wireSetB)$ with $\interp{d} = \interp{e}$, if there exists two matrices $N$ and $M$ such that
	\begin{center}
		\tikzfig{normal_form/cor_nf_unique_fun}
	\end{center}
	then $N = M$ and $d \equiv e$.
\end{corollary}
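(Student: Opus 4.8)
The plan is to reduce the statement to the faithfulness of the matrix functor $\Morphism{-}$. By construction (\Cref{thm:nf_fun}) the two normal forms occurring in the hypothesis differ \emph{only} in their central matrix diagram: one is built from $M$ and the other from $N$, whereas the surrounding pieces — the almost-isomorphisms $\iso_{\wireSetA}$ and $\mirror{\iso_{\wireSetB}}$, together with the idempotent disjunctions of wires — are identical, since they depend only on the objects $\wireSetA$ and $\wireSetB$ and not on the chosen matrix. So I would first record that both normal forms factor as $R \circ (\text{central block}) \circ L$ with the \emph{same} outer diagrams $L$ (over $\wireSetA$) and $R$ (over $\wireSetB$), only the central block changing.

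Next I would apply soundness (\Cref{prop:soundness_fun}) to the two hypothesised equivalences, so that $\interp{d} = \interp{R} \circ \interp{\text{block}_M} \circ \interp{L}$ and likewise $\interp{e} = \interp{R} \circ \interp{\text{block}_N} \circ \interp{L}$. By \Cref{cor:iso_is_iso} the composite $\oplus_{\tone,\dots,\tone} \circ \iso_{\wireSetA}$ is a genuine isomorphism between $\wireSetA$ and $\bigoplus_{i} \tone$, hence its semantics $\interp{L}$ is an isomorphism in $\bfup{H}$; symmetrically $\interp{R}$ is an isomorphism. Because these outer morphisms coincide for $d$ and $e$, the assumption $\interp{d} = \interp{e}$ lets me cancel $\interp{L}$ on the right and $\interp{R}$ on the left, leaving $\interp{\text{block}_M} = \interp{\text{block}_N}$.

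I would then invoke \Cref{prop:sem_block_matrix}, which identifies the semantics of the central block associated to a matrix $M$ with $\Morphism{M} \in \bfup{H}\!\left(\bigoplus_{i} \tone, \bigoplus_{j} \tone\right)$. Hence $\Morphism{M} = \Morphism{N}$. Since $\Morphism{-}$ is full and faithful (\Cref{prop:matrix_is_fff}), it is in particular injective on matrices, so $M = N$. Once $M = N$ the two normal forms are literally the same diagram of $\FCat$, whence $d \equiv \text{NF}(M) = \text{NF}(N) \equiv e$, i.e.\ $d \equiv e$.

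The only genuinely delicate point is the first step: verifying that the semantics of the normal form really does decompose as isomorphism, central block, isomorphism, with the two outer isomorphisms independent of the matrix. This is precisely what \Cref{cor:iso_is_iso} (invertibility of the $\iso$-parts) and \Cref{prop:sem_block_matrix} (the central block realises $\Morphism{M}$) are designed to supply; with those in hand the argument is a one-line cancellation followed by faithfulness, so I do not expect any further obstacle.
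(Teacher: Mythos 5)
Your proposal is correct and takes essentially the same route as the paper: both reduce the statement, via soundness and \Cref{prop:sem_block_matrix}, to $\Morphism{M} = \Morphism{N}$ and conclude by faithfulness of $\Morphism{-}$ (\Cref{prop:matrix_is_fff}), with the identity of the two normal-form diagrams then giving $d \equiv e$. The only cosmetic difference is that the paper cancels the outer $\iso$-composites syntactically, rewriting inside $\FCat_\equiv$ with \Cref{prop:iso_is_iso} and \Cref{lem:idempotence_disj} before taking semantics, whereas you take semantics first and cancel the invertible morphisms $\interp{\oplus_{\tone,\dots,\tone} \circ \iso_{\wireSetA}}$ and its counterpart in $\bfup{H}$ — the delicate re-bracketing you flag (absorbing the disjunctions so that the central block is exactly the capped matrix diagram of \Cref{prop:sem_block_matrix}) is handled by precisely those same two lemmas.
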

\begin{proof}
	Using \Cref{prop:iso_is_iso,lem:idempotence_disj} we obtain:
	\begin{center}
		\tikzfig{normal_form/proof_nf_unique_fun}
	\end{center}
	Taking the semantics $\interp{-}$, and using the soundness (\Cref{prop:soundness_fun}) we obtain:
	\[ \interp{\oplus_{\tone,\dots,\tone} \circ \iso_{\wireSetA}}\circ \interp{d} \circ \interp{\mirror{\left(\oplus_{\tone,\dots,\tone} \circ \iso_{\wireSetA}\right)}} = \Morphism{N} \]
	\[ \interp{\oplus_{\tone,\dots,\tone} \circ \iso_{\wireSetA}}\circ \interp{e} \circ \interp{\mirror{\left(\oplus_{\tone,\dots,\tone} \circ \iso_{\wireSetA}\right)}} = \Morphism{M} \]
	Since we have $\interp{d} = \interp{e}$, it follows that $\Morphism{N} = \Morphism{M}$, hence $N=M$ by faithfullness (\Cref{prop:matrix_is_fff}), hence $d \equiv e$.
\end{proof}

\subsection{The Whole Calculus}
\begin{proposition}\label{prop:sem_block_matrix_unit}
	For any $m \times n$ matrix $M$, we have
	\[ \interp{\tikzfig{matrix_nf/prop_sem_matrix_unit}}^{\oplus \oone} = \Morphism{M} \bfup{~+~} \iota_\ell \circ \pi_r \in \bfup{H}^{\oplus \oone}\left(\bigoplus_{i=1}^n \tone,\bigoplus_{j=1}^m \tone\right)\]
	where the "$\bfup{+~}  \iota_\ell \circ \pi_r$" corresponds to adding +1 to the coefficient at the bottom right of the matrix.
\end{proposition}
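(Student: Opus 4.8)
The plan is to reduce the claim to its functional counterpart, \Cref{prop:sem_block_matrix}. The matrix diagram $[M]$ of \Cref{def:mat} lives in $\FCat$, so the definition of the $\oplus\oone$-semantics on functional diagrams (\Cref{fig:sem}) gives $\interp{[M]}^{\oplus\oone} = \interp{[M]} \oplus \id_{\oone}$. Combining this with \Cref{prop:sem_block_matrix}, which states $\interp{[M]} = \Morphism{M}$, yields $\interp{[M]}^{\oplus\oone} = \Morphism{M} \oplus \id_{\oone}$. It then remains to identify this block-diagonal morphism of $\bfup{H}(\bigoplus_{i=1}^n\tone \oplus \oone, \bigoplus_{j=1}^m\tone \oplus \oone)$ with $\Morphism{M} \bfup{+} \iota_\ell \circ \pi_r$: writing $\id_{\oone} = \zero \bfup{+} \id_{\oone}$ and using that the biproduct and the enrichment sum $\bfup{+}$ decompose coefficient-by-coefficient (\Cref{lem:coef_by_coef}), the morphism $\Morphism{M} \oplus \id_{\oone}$ is exactly $\Morphism{M}$ placed in the main block (with bottom-right coefficient $\zero$) plus the composite $\iota_\ell \circ \pi_r$, whose only nonzero coefficient is an $\id$ in the bottom-right corner. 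This matches the stated ``$+1$ to the bottom-right coefficient'' and hence the asserted right-hand side.

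If instead the diagram displayed in the statement carries an explicit Unit and co-Unit (as in the full normal form of \Cref{thm:nf}), I would first collapse these into the matrix. Using \Cref{lem:unit_fusion} to fuse the spurious units into a single one feeding the distinguished strand, the diagram rewrites up to $\equiv$ to a functional matrix diagram with one strand carrying the unit; soundness (\Cref{prop:soundness_fun}) then transports the semantic computation along $\equiv$. The $\oplus\oone$-semantics of the Unit and its mirror, namely $\Delta \circ m^{\oplus}$ and $m^{\oplus}\circ\nabla$ (\Cref{fig:sem}), combine through the composition rule $\interp{e\circ d}^{\oplus\oone} = \interp{e}^{\oplus\oone}\circ\interp{d}^{\oplus\oone}$ to contribute exactly an extra $\id$ on the bottom-right coefficient, leaving the main block equal to $\Morphism{M}$, so the same conclusion follows.

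As a safety net, the statement can also be checked directly through the matrix semantics of \Cref{fig:mat_sem}, exactly as \Cref{prop:sem_block_matrix} was established in the functional case: one reads off the block matrix of $\interp{-}^{\oplus\oone}$ and verifies each entry against $\Morphism{M}$, with the lone $\oone \to \oone$ coefficient equal to $\id$ together with the semiring unit contributed by the unit strand. The arithmetic of these coefficients is routine given the matrix-composition law (\Cref{lem:formal_normalization_monoidal}) and the fullness and faithfulness of $\Morphism{-}$ (\Cref{prop:matrix_is_fff}).

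The main obstacle I anticipate is the bookkeeping around parallel composition: the rule for $\interp{d\parallel e}^{\oplus\oone}$ routes everything through the natural isomorphism $\bfup{expand}$, so if the displayed diagram is built with $\parallel$ rather than presented as a single matrix, one must track how $\bfup{expand}$ and $\bfup{expand}^{-1}$ interact with the distinguished unit strand to confirm that no stray coefficient leaks outside the bottom-right corner. Once the diagram is recognized as (equivalent to) the single functional matrix $[M]$ with the canonical unit strand, the reduction to \Cref{prop:sem_block_matrix} is immediate and the coefficient-wise identification is mechanical.
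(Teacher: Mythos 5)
Your primary branch (first paragraph) proves a different statement from the one in the paper. The diagram in the figure is not the bare functional matrix diagram $[M]$: it is $[M]$ with a Unit feeding its last input strand and a mirrored Unit consuming its last output strand. This is the entire content of the proposition --- it is the semantic justification of the ``$M_{+1}$'' bookkeeping in \Cref{thm:nf}, \Cref{cor:nf_unique} and the internal-language functor, where the bottom-right coefficient of $M$ is an arbitrary (canonical) scalar $x$ and the claim is that the semantics has $x+1$ in that corner. Your identification of $\Morphism{M}\oplus\id_{\oone}$ with the right-hand side then fails: $\Morphism{M}\oplus\id_{\oone}$ is an $(m{+}1)\times(n{+}1)$-shaped morphism whose last row and column are $\zero$ apart from an $\id$ in the corner, whereas the proposition's right-hand side keeps $M$'s own (generally nonzero) last row, last column, and corner coefficient, merely adding $\id$ to the latter; the two coincide only when $M$'s last row and column vanish. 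Indeed, for any Unit-free diagram the $\oplus\oone$-semantics is of the functional form $g\oplus\id$, so no purely functional reduction can ever produce a corner coefficient $x+\id$ with $x\neq 0$ --- the Units are essential, not optional.

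Your hedged second branch, however, is essentially the paper's proof, and it is correct: decompose the diagram into three layers (identities in parallel with the Unit, then $[M]$, then identities in parallel with the mirrored Unit); the middle layer contains no Unit, so $\interp{-}^{\oplus\oone}=\interp{-}\oplus\id_{\oone}$ and \Cref{prop:sem_block_matrix} gives $\Morphism{M}\oplus\id_{\oone}$; then one computes the composite directly. Writing
\[
\Morphism{M}=\begin{pmatrix}A&b\\ c&d\end{pmatrix},
\qquad
\text{the conjugation by the Unit column }\begin{pmatrix}\id\\ \id\end{pmatrix}\text{ and mirrored row }\begin{pmatrix}\id&\id\end{pmatrix}
\text{ yields }
\begin{pmatrix}A&b\\ c&d\bfup{+}\id\end{pmatrix},
\]
because the component that bypasses $[M]$ through the adjoined $\oone$ can only re-enter at the mirrored Unit, so the extra $\id$ lands exactly in the corner --- your worry about $\bfup{expand}$ leaking coefficients is resolved by this finite matrix computation (via \Cref{lem:coef_by_coef}), which is precisely what the paper's ``we then simply compute the result'' does. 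Two cosmetic points: \Cref{lem:unit_fusion} is not needed here, since the figure already contains exactly one Unit/mirrored-Unit pair (that lemma is used earlier, in \Cref{prop:nf}, to reach this shape); and the $\iota_\ell$ in the statement is the paper's own slip for $\iota_r$, so your reading (``$+1$ at the bottom right'') is the intended one.
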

\begin{proof}
	We start by decomposing the diagram in three different parts:
	\begin{center}
		\tikzfig{matrix_nf/proof_sem_matrix_1}
	\end{center}
	Then, we take the semantics $\interp{-}^{\oplus \oone}$, and for the middle part, since there is no Unit, the semantics is simply $\interp{-}^{\oplus \oone} = \interp{-} \oplus \id_{\tone}$ so we can use \Cref{prop:sem_block_matrix}. We then simply compute the result.
\end{proof}

\begin{corollary}[Uniqueness of the Normal Form]\label{cor:nf_unique}
	Given any diagrams $d,e \in \Cat(\wireSetA,\wireSetB)$ with $\interp{d}^{\oplus \oone} = \interp{e}^{\oplus 1}$, if there exists two matrices $N$ and $M$ such that
	\begin{center}
		\tikzfig{normal_form/cor_nf_unique}
	\end{center}
	then $N = M$ and $d \equiv e$.
\end{corollary}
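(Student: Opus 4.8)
The plan is to run the exact analogue of the functional uniqueness argument (\Cref{cor:nf_unique_fun}), but now through the semantics $\interp{-}^{\oplus\oone}$ and the unit-aware block-matrix computation of \Cref{prop:sem_block_matrix_unit}; the one place where the two proofs genuinely diverge is the bottom-right coefficient, and that is where all the work goes.

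First I would isolate the matrices. As in the functional case, I sandwich the two normal forms between $\oplus_{\tone,\dots,\tone}\circ\iso_{\wireSetB}$ on the output and $\mirror{\left(\oplus_{\tone,\dots,\tone}\circ\iso_{\wireSetA}\right)}$ on the input, using \Cref{prop:iso_is_iso} together with \Cref{lem:idempotence_disj} to collapse the occurrences of $\iso\circ\mirror{\iso}$ to identities (up to disjunctions). Since this is an $\equiv$-rewriting and soundness extends to the full language (\Cref{prop:soundness_fun}), applying $\interp{-}^{\oplus\oone}$ preserves it; moreover the sandwich isomorphisms are functional, so their $\oplus\oone$-semantics is simply $\interp{-}\oplus\id_{\oone}$ and composes cleanly, leaving precisely the matrix diagram handled by \Cref{prop:sem_block_matrix_unit}. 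This gives
\[ \interp{\oplus_{\tone,\dots,\tone}\circ\iso_{\wireSetB}}^{\oplus\oone}\circ\interp{d}^{\oplus\oone}\circ\interp{\mirror{\left(\oplus_{\tone,\dots,\tone}\circ\iso_{\wireSetA}\right)}}^{\oplus\oone}=\Morphism{N}\bfup{~+~}\iota_\ell\circ\pi_r, \]
and the same identity for $e$ with $M$. As the sandwich morphisms coincide on both sides and $\interp{d}^{\oplus\oone}=\interp{e}^{\oplus\oone}$ by hypothesis, I obtain the single equation $\Morphism{N}\bfup{~+~}\iota_\ell\circ\pi_r=\Morphism{M}\bfup{~+~}\iota_\ell\circ\pi_r$ in $\bfup{H}$.

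The final step is a coefficient-by-coefficient comparison via \Cref{lem:coef_by_coef}. Because $\iota_\ell\circ\pi_r$ contributes a fixed $\id$ only at the bottom-right position, faithfulness of $\Morphism{-}$ (\Cref{prop:matrix_is_fff}) gives $N_{i,j}=M_{i,j}$ for every entry except the bottom-right one, while for that entry it yields only $N_{\mathrm{br}}+1=M_{\mathrm{br}}+1$ in $R$. Here is the main obstacle: in a non-cancellative semiring this does \emph{not} permit cancelling the $+1$, so faithfulness alone is insufficient — this is exactly the gap the functional proof never meets. The resolution is the canonicality baked into the normal form: by \Cref{thm:nf} both $N_{\mathrm{br}}$ and $M_{\mathrm{br}}$ are the chosen canonical representative of their class $[\,\cdot\,]_{+1}$, and $N_{\mathrm{br}}+1=M_{\mathrm{br}}+1$ says precisely that they lie in the same class, whence they share the representative and $N_{\mathrm{br}}=M_{\mathrm{br}}$. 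Therefore $N=M$, and $d\equiv e$ follows since both diagrams are the normal form attached to the same matrix (indeed, $N_{\mathrm{br}}+1=M_{\mathrm{br}}+1$ already lets \eqUS{} identify the two bottom-right scalars, so the two normal forms are $\equiv$-equal even independently of the canonical choice).
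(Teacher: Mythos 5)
Your proposal is correct and follows essentially the same route as the paper's proof: sandwich with the $\iso$ isomorphisms as in \Cref{cor:nf_unique_fun}, replace \Cref{prop:sem_block_matrix} by \Cref{prop:sem_block_matrix_unit}, use faithfulness and linearity of $\Morphism{-}$ to get equality of the matrices with $+1$ added at the bottom-right, and resolve the non-cancellative ambiguity there via the canonical representatives imposed by \Cref{thm:nf}. Your closing remark that \eqUS{} would identify the two normal forms even without fixing canonical representatives is accurate, and is precisely the mechanism the paper bakes into \Cref{thm:nf} (see its footnote on giving up uniqueness but keeping completeness).
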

\begin{proof}
	The proof starts by the same reasonning as \Cref{cor:nf_unique_fun}, but using \Cref{prop:sem_block_matrix_unit} instead of \Cref{prop:sem_block_matrix}. We obtain $ \Morphism{M} \bfup{~+~} \iota_\ell \circ \pi_\ell =  \Morphism{N} \bfup{~+~} \iota_\ell$. Using faitfullness and linearity (\Cref{prop:matrix_is_fff}), it means that "$M$ with +1 added to the bottom right corner's coefficient" is equal to "$N$ with a +1 added to the bottom right corner's coefficient". Using the fact that the bottom right corners of $M$ and $N$ are canonical, it follows that $M = N$.
\end{proof}
\clearpage
\section{Universality}
\label{app:universality}
In this appendix, we prove the universality result of \Cref{thm:universality}, both for the functional fragment and for the whole calculus.

\subsection{The Functional Fragment}
For $\wireSetA$ an object of $\Cat$, we define
\[ \tikzfig{normal_form/Iso} := \tikzfig{normal_form/iso_then_plus} \]
and similarly for $\mirror{\Iso_{\wireSetA}}$.

\begin{lemma}\label{lem:Iso_is_iso}
	The diagram $\Iso_{\wireSetA}$ is an isomorphism (up to $\equiv$) with for inverse $\mirror{\Iso_{\wireSetA}}$.
\end{lemma}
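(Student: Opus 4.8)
The plan is to unfold both diagrams and reduce the two snake-style composites to identities, using \Cref{prop:iso_is_iso} for one direction and \Cref{lem:idempotence_disj} together with \eqP{} for the other. First I would record that, by the definition of $\Iso_{\wireSetA}$ and of up-down mirroring, $\Iso_{\wireSetA} \equiv \oplus_{\tone,\dots,\tone} \circ \iso_{\wireSetA}$ and $\mirror{\Iso_{\wireSetA}} \equiv \mirror{\iso_{\wireSetA}} \circ \mirror{\oplus_{\tone,\dots,\tone}}$, since mirroring a sequential composite reverses the order of the factors and mirrors each of them.

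For the first composition I would compute
\[ \mirror{\Iso_{\wireSetA}} \circ \Iso_{\wireSetA} \equiv \mirror{\iso_{\wireSetA}} \circ \mirror{\oplus_{\tone,\dots,\tone}} \circ \oplus_{\tone,\dots,\tone} \circ \iso_{\wireSetA}. \]
The key observation is that the inner block $\mirror{\oplus_{\tone,\dots,\tone}} \circ \oplus_{\tone,\dots,\tone}$ is, by \Cref{def:disjunction}, exactly the $n$-ary disjunction of wires on $\tone,\dots,\tone$. The whole composite is then $\mirror{\iso_{\wireSetA}}$ pre-composed with $\iso_{\wireSetA}$ across a disjunction, which is precisely the configuration treated in \Cref{prop:iso_is_iso}; hence it reduces to $\id_{\wireSetA}$. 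This is also where the subtlety of the ``almost''-isomorphism is resolved: $\mirror{\oplus_{\tone,\dots,\tone}} \circ \oplus_{\tone,\dots,\tone}$ is \emph{not} the identity on $\tone \parallel \dots \parallel \tone$ but the idempotent projecting onto the disjunctive part, and it is absorbed only because the outputs of $\iso_{\wireSetA}$ already sit in disjunction.

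For the second composition I would compute
\[ \Iso_{\wireSetA} \circ \mirror{\Iso_{\wireSetA}} \equiv \oplus_{\tone,\dots,\tone} \circ \bigl(\iso_{\wireSetA} \circ \mirror{\iso_{\wireSetA}}\bigr) \circ \mirror{\oplus_{\tone,\dots,\tone}}. \]
Using the up-down mirrored form of \Cref{prop:iso_is_iso} to rewrite $\iso_{\wireSetA} \circ \mirror{\iso_{\wireSetA}}$ as a disjunction, then \Cref{lem:idempotence_disj} to fuse it with the surrounding $\oplus_{\tone,\dots,\tone}$ and $\mirror{\oplus_{\tone,\dots,\tone}}$, the genuine Plus-meets-mirror cancellation \eqP{} on the sum side collapses everything to $\id_{\bigoplus_{i=1}^{\dim(\wireSetA)} \tone}$. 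The two equations together exhibit $\mirror{\Iso_{\wireSetA}}$ as a two-sided inverse. Alternatively, since \Cref{cor:iso_is_iso} already guarantees that $\Iso_{\wireSetA}$ is an isomorphism, the first equation alone suffices by uniqueness of inverses, and the lemma merely names that inverse.

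The main obstacle will be the bookkeeping of the disjunction structure: one must resist cancelling $\mirror{\oplus_{\tone,\dots,\tone}} \circ \oplus_{\tone,\dots,\tone}$ directly through \eqP{}, since that cancellation is valid only on the sum object and not on the parallel object, and instead route every simplification through the disjunction identities of \Cref{prop:iso_is_iso} and the idempotence of \Cref{lem:idempotence_disj}.
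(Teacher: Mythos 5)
Your proposal is correct and takes essentially the same route as the paper: both composites are handled through \Cref{prop:iso_is_iso}, with \Cref{lem:idempotence_disj} absorbing the middle $\mirror{\oplus_{\tone,\dots,\tone}}\circ\oplus_{\tone,\dots,\tone}$ block (the $n$-ary disjunction) against the disjunction already terminating $\iso_{\wireSetA}$, and a Plus-cancellation step closing the other direction. The only cosmetic differences are that the paper invokes \eqPP{} where you route through \eqP{} and idempotence, and your closing observation that one composite plus \Cref{cor:iso_is_iso} would suffice by uniqueness of inverses is a harmless shortcut the paper does not need.
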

\begin{proof}
	To prove $\Iso_{\wireSetA} \circ \mirror{\Iso_{\wireSetA}} \equiv \id$, we simply use  \Cref{prop:iso_is_iso} then \eqPP.
	To prove $\mirror{\Iso_{\wireSetA}} \circ \Iso_{\wireSetA} \equiv \id$, we simply use start by noting that in \Cref{def:iso}, the diagram $\iso_{\wireSetA}$ always has a disjunction of wires at the end, so using \Cref{lem:idempotence_disj} we can eliminate the Plus and obtain $\mirror{\Iso_{\wireSetA}} \circ \Iso_{\wireSetA} \equiv \mirror{\iso_{\wireSetA}} \circ \iso_{\wireSetA}$, and conclude with \Cref{prop:iso_is_iso}.
\end{proof}

\begin{theorem}
	For any $\wireSetA,\wireSetB$ objects of $\FCat$, for every morphism $f \in \bfup{H}(\interp{\wireSetA},\interp{\wireSetB})$, there exists a diagram $d \in \FCat(\wireSetA,\wireSetB)$ such that $\interp{d} = f$.
\end{theorem}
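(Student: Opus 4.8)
The plan is to reduce the statement to the already-established fullness of the matrix functor $\Morphism{-}$ (\Cref{prop:matrix_is_fff}) by transporting $f$ along the diagonalizing isomorphisms $\Iso_{\wireSetA}$ and $\Iso_{\wireSetB}$. Recall that $\Iso_{\wireSetA}\colon\wireSetA\to\bigoplus_{i=1}^{\dim(\wireSetA)}\tone$ post-composes $\iso_{\wireSetA}$ with the $n$-ary Plus $\oplus_{\tone,\dots,\tone}$ (\Cref{cor:iso_is_iso}) and is, by \Cref{lem:Iso_is_iso}, an isomorphism up to $\equiv$ with inverse $\mirror{\Iso_{\wireSetA}}$. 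Since $\Morphism{-}$ is full on the homsets between diagonal objects $\bigoplus\tone$, it suffices to conjugate $f$ into such a homset, realize the resulting morphism by a matrix diagram, and conjugate back.

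First I would set $g := \interp{\Iso_{\wireSetB}}\circ f\circ\interp{\mirror{\Iso_{\wireSetA}}}$, a morphism of $\bfup{H}$ from $\bigoplus_{i=1}^{\dim(\wireSetA)}\tone$ to $\bigoplus_{j=1}^{\dim(\wireSetB)}\tone$. By fullness (\Cref{prop:matrix_is_fff}) there is a $\dim(\wireSetB)\times\dim(\wireSetA)$ matrix $M$ with coefficients in $R$ such that $\Morphism{M}=g$. Writing $D_M$ for the diagram of \Cref{prop:sem_block_matrix}, which satisfies $\interp{D_M}=\Morphism{M}$, I would define
\[ d := \mirror{\Iso_{\wireSetB}}\circ D_M\circ\Iso_{\wireSetA}\ \in\ \FCat(\wireSetA,\wireSetB). \]
Using functoriality of $\interp{-}$ and the substitution $\Morphism{M}=\interp{\Iso_{\wireSetB}}\circ f\circ\interp{\mirror{\Iso_{\wireSetA}}}$ gives
\[ \interp{d} = \interp{\mirror{\Iso_{\wireSetB}}}\circ\interp{\Iso_{\wireSetB}}\circ f\circ\interp{\mirror{\Iso_{\wireSetA}}}\circ\interp{\Iso_{\wireSetA}}. \]
Finally, since $\mirror{\Iso_{\wireSetB}}\circ\Iso_{\wireSetB}\equiv\id_{\wireSetB}$ and $\mirror{\Iso_{\wireSetA}}\circ\Iso_{\wireSetA}\equiv\id_{\wireSetA}$ by \Cref{lem:Iso_is_iso}, soundness (\Cref{prop:soundness_fun}) combined with functoriality yields $\interp{\mirror{\Iso_{\wireSetB}}}\circ\interp{\Iso_{\wireSetB}}=\id$ and $\interp{\mirror{\Iso_{\wireSetA}}}\circ\interp{\Iso_{\wireSetA}}=\id$, so the outer factors collapse and $\interp{d}=f$, as required.

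I do not expect a serious obstacle in the universality proof itself, since it is essentially a bookkeeping assembly of earlier results: the genuine difficulty has already been discharged upstream, in constructing $\iso_{\wireSetA}$ and proving it is an isomorphism (\Cref{prop:iso_is_iso}) and in exhibiting a diagram whose semantics is an arbitrary matrix (\Cref{prop:sem_block_matrix}), both of which rely on the full apparatus of disjunctions and conjunctions of wires. The only point requiring care is keeping the diagonal objects consistent: the codomain of $\Iso_{\wireSetA}$ and the domain/codomain of $D_M$ must both be the single-color object $\bigoplus_i\tone$ rather than $\tone\parallel\dots\parallel\tone$, which is precisely why $\Iso_{\wireSetA}$ is defined with the extra $n$-ary Plus; aligning the domains and codomains correctly when chaining the compositions is the one thing to watch.
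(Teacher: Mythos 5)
Your proposal is correct and follows essentially the same route as the paper's own proof: fullness of $\Morphism{-}$ (\Cref{prop:matrix_is_fff}) applied to $\interp{\Iso_{\wireSetB}}\circ f\circ\interp{\mirror{\Iso_{\wireSetA}}}$, realization of the resulting matrix via \Cref{prop:sem_block_matrix}, and conjugation back using \Cref{lem:Iso_is_iso} together with soundness and functoriality of $\interp{-}$. The only difference is cosmetic: the paper additionally invokes \Cref{lem:idempotence_disj} to absorb the redundant $\oplus$ so that the resulting diagram lands exactly in the normal form of \Cref{thm:nf_fun}, a step your construction does not need for the statement itself.
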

\begin{proof}
	Using the fullness of $\Morphism{-}$ (\Cref{prop:matrix_is_fff}), there exists a matrix $M$ such that
	\[ \Morphism{M} = \interp{\Iso_{\wireSetB}} \circ f \circ \interp{\mirror{\Iso_{\wireSetA}}} \]
	Using \Cref{prop:sem_block_matrix}, we then obtain that	
	\[ \interp{\tikzfig{matrix_nf/prop_sem_matrix}} = \interp{\Iso_{\wireSetB}}  \circ f \circ \interp{\mirror{\Iso_{\wireSetA}}}\]
	Using \Cref{lem:Iso_is_iso}, it follows that
	\[ \interp{\tikzfig{normal_form/iso_then_plus_inv}} \circ \interp{\tikzfig{matrix_nf/prop_sem_matrix}} \circ \interp{\tikzfig{normal_form/iso_then_plus}} = f  \]
	We then combine the three diagrams into one using functoriality of $\interp{-}$. Given that the diagram $\iso_{\wireSetA}$ always has a disjunction of wires at the end, we can use \Cref{lem:idempotence_disj} to remove the unnecessary $\oplus$, so we then have
	\[ \interp{\tikzfig{normal_form/thm_nf_fun}} = f \]
\end{proof}

\subsection{The Whole Calculus}

\begin{theorem}
	For any $\wireSetA,\wireSetB$ objects of $\Cat$, for every morphism $f \in \bfup{H}^{\oplus \oone}(\interp{\wireSetA},\interp{\wireSetB})$, there exists a diagram $d \in \Cat(\wireSetA,\wireSetB)$ such that $\interp{d}^{\oplus \oone} = f$.
\end{theorem}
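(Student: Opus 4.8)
The plan is to mirror, step for step, the proof of the functional universality theorem just established, but to carry it out inside the category $\bfup{H}^{\oplus \oone}$ and to invoke \Cref{prop:sem_block_matrix_unit} wherever the functional argument used the matrix-semantics lemma for $\FCat$. First I would recycle the isomorphisms $\Iso_{\wireSetA}$ and $\Iso_{\wireSetB}$. Because these are \emph{functional} diagrams, \Cref{lem:Iso_is_iso} together with soundness (\Cref{prop:soundness_fun}) and the functoriality of $\interp{-}^{\oplus \oone}$ shows that $\interp{\Iso_{\wireSetA}}^{\oplus \oone}$ and $\interp{\Iso_{\wireSetB}}^{\oplus \oone}$ are isomorphisms of $\bfup{H}^{\oplus \oone}$, with inverses $\interp{\mirror{\Iso_{\wireSetA}}}^{\oplus \oone}$ and $\interp{\mirror{\Iso_{\wireSetB}}}^{\oplus \oone}$. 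Conjugating the given $f$ by these isomorphisms produces
\[ h := \interp{\Iso_{\wireSetB}}^{\oplus \oone} \circ f \circ \interp{\mirror{\Iso_{\wireSetA}}}^{\oplus \oone} \in \bfup{H}^{\oplus \oone}\!\left(\bigoplus_{i=1}^{\dim(\wireSetA)} \tone, \bigoplus_{j=1}^{\dim(\wireSetB)} \tone\right), \]
so it suffices to realise $h$ by a diagram and then pre- and post-compose with $\Iso_{\wireSetA}$ and $\mirror{\Iso_{\wireSetB}}$.

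Next, since $h$ is a morphism of $\bfup{H}^{\oplus \oone}$, by the very definition of that category it has the shape $h = g \bfup{~+~} (\iota_r \circ \pi_r)$ for some $g \in \bfup{H}\!\left((\bigoplus^{\dim(\wireSetA)}\tone)\oplus\oone,\,(\bigoplus^{\dim(\wireSetB)}\tone)\oplus\oone\right)$. Identifying $(\bigoplus^{k}\tone)\oplus\oone$ with $\bigoplus^{k+1}\tone$ via the canonical structural isomorphism, $g$ becomes a morphism between sums of copies of $\tone$, so fullness of the matrix functor (\Cref{prop:matrix_is_fff}) hands us a $(\dim(\wireSetB)+1)\times(\dim(\wireSetA)+1)$ matrix $N$ with $\Morphism{N} = g$. \Cref{prop:sem_block_matrix_unit} then states that the unit-matrix diagram associated to $N$ has semantics $\Morphism{N} \bfup{~+~}(\iota_r \circ \pi_r)$, which is exactly $h$. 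Setting $d$ to be the composite of $\Iso_{\wireSetA}$, this unit-matrix diagram, and $\mirror{\Iso_{\wireSetB}}$ (discarding the redundant $\oplus$ left by the disjunction at the end of $\iso$ through \Cref{lem:idempotence_disj}, as in the functional case), functoriality of $\interp{-}^{\oplus \oone}$ and the cancellation of the isomorphisms give $\interp{d}^{\oplus \oone} = f$.

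The step I expect to be delicate is the bookkeeping at the interface between the two uses of ``$+1$''. One must verify that the $\iota_r \circ \pi_r$ summand \emph{forced} on $h$ by its membership in $\bfup{H}^{\oplus \oone}$ coincides with the ``$+1$ in the bottom-right corner'' \emph{contributed} by the Unit in \Cref{prop:sem_block_matrix_unit}, and in particular that it is $g$ rather than $h$ itself that gets realised as a genuine $R$-matrix $\Morphism{N}$. This matching is exactly what the built-in restriction in the definition of $\bfup{H}^{\oplus \oone}$ — which discards morphisms such as $\zero$ when $R$ has no negatives — guarantees: every object of $\bfup{H}^{\oplus \oone}(\interp{\wireSetA},\interp{\wireSetB})$ is of the realisable shape, so no morphism outside the image of our diagram construction can occur as a target.
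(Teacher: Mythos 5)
Your proof is correct and takes essentially the same route as the paper's: the same ingredients in the same roles (\Cref{lem:Iso_is_iso} with soundness, fullness of $\Morphism{-}$ from \Cref{prop:matrix_is_fff}, \Cref{prop:sem_block_matrix_unit}, and \Cref{lem:idempotence_disj} for the final cleanup), the only difference being that you conjugate $f$ by the isomorphisms \emph{before} extracting the decomposition $g \bfup{~+~} (\iota_r \circ \pi_r)$, whereas the paper decomposes $f$ first and then conjugates $g$ by $\interp{\Iso_{\wireSetA \oplus \tone}} = \interp{\Iso_{\wireSetA}} \oplus \id$. The two orders are interchangeable precisely because $(a \oplus \id) \circ \iota_r \circ \pi_r \circ (b \oplus \id) = \iota_r \circ \pi_r$ --- the very identity the paper invokes in its step ``$\iota_r = (n_{\interp{\wireSetB}} \oplus \id) \circ \iota_r$ and $\pi_r = \pi_r \circ (\interp{\mirror{\Iso_{\wireSetA}}} \oplus \id)$'' --- so the ``delicate bookkeeping'' you flag is resolved exactly as you suspect, and your closure argument for $\bfup{H}^{\oplus \oone}$ under composition with functional morphisms is all that is needed.
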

\begin{proof}
	By definition of $\bfup{H}^{\oplus \oone}$, there exists $g$ such that $f = g \bfup{~+~} \iota_r \circ \pi_r$.
	
	Using the fullness of $\Morphism{-}$ (\Cref{prop:matrix_is_fff}), there exists a matrix $M$ such that
	\[ \Morphism{M} = \interp{\Iso_{\wireSetB \oplus \tone}} \circ g \circ \interp{\mirror{\Iso_{\wireSetA \oplus \tone}}}  \]
	Unfolding a little bit the definition of $\Iso_{\wireSetB \oplus \tone}$ and computing its semantics, we end up with $\Iso_{\wireSetB \oplus \tone} = \interp{\Iso_{\wireSetB}} \oplus \id$, hence
	\[ \Morphism{M} = (\interp{\Iso_{\wireSetB}} \oplus \id) \circ g \circ (\interp{\mirror{\Iso_{\wireSetA}}} \oplus \id) \]
	Using \Cref{prop:sem_block_matrix_unit}, we then obtain that	
	\[ \interp{\tikzfig{matrix_nf/prop_sem_matrix_unit}} = (\interp{\Iso_{\wireSetB}} \oplus \id) \circ g \circ (\interp{\mirror{\Iso_{\wireSetA}}} \oplus \id) \bfup{~+~} \iota_r \circ \pi_r\]
	Since $\iota_r = (n_{\interp{\wireSetB}} \oplus \id)  \circ \iota_r$ and $\pi_r = \pi_r \circ (\interp{\mirror{\Iso_{\wireSetA}}}\oplus \id)$, and using linearity of the composition, we have
	\[ \interp{\tikzfig{matrix_nf/prop_sem_matrix_unit}} = (\interp{\Iso_{\wireSetB}}\oplus \id) \circ (g\bfup{~+~} \iota_r \circ \pi_r) \circ (\interp{\mirror{\Iso_{\wireSetA}}}\oplus \id) \]
	Hence 
	\[ \interp{\tikzfig{matrix_nf/prop_sem_matrix_unit}} = (\interp{\Iso_{\wireSetB}} \oplus \id) \circ f \circ (\interp{\mirror{\Iso_{\wireSetA}}} \oplus \id) \]
	Using \Cref{lem:Iso_is_iso}, it follows that
	\[ \interp{\tikzfig{normal_form/iso_then_plus_inv}} \circ \interp{\tikzfig{matrix_nf/prop_sem_matrix_unit}} \circ \interp{\tikzfig{normal_form/iso_then_plus}} = f  \]
	We then combine the three diagrams into one using functoriality of $\interp{-}$. Given that the diagram $\iso_{\wireSetA}$ always has a disjunction of wires at the end, we can use \Cref{lem:idempotence_disj} to remove the unnecessary $\oplus$, so we then have:
	\[ \interp{\tikzfig{normal_form/thm_nf}} = f \]
\end{proof}

\clearpage
\section{\Langage{} as an Internal Language}
In this appendix, we prove the claims made in \Cref{sec:internal} that our language is an internal language for semiadditive categories, with a symmetric monoidal structure distributive over it, and such that the homset of automorphisms over the latter's unit are isomorphic to $R$.

We start by defining the subcategory with a single input and a single output, then prove our property for the functional fragment, and then a weaker property for the full language.

\subsection{Restricting to Single Input and Output}
\label{app:singleIO}

\begin{figure*}
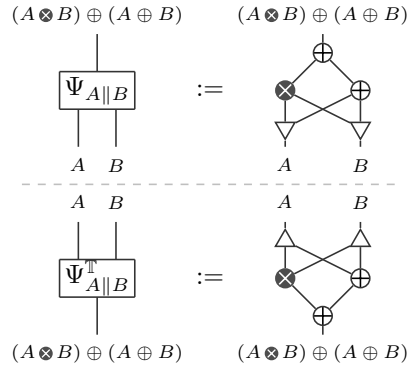

	\centering
	\tikzfig{internal/bar_to_parallel}
	\caption{Isomorphism between $A \parallel B$ and $(A|B) := (A \tensor B) \oplus (A \oplus B)$.}
	\label{fig:bar_to_parallel_app}
\end{figure*}

We define $\sFCat$ as the full subcategory of $\FCat$ where objects are colors, hence morphisms are all the morphisms of $\FCat$ with a single input and a single output. We define $\sCat$ similarly.

We want to show that $\sFCat_\equiv$ and $\FCat_\equiv$ are equivalent, and so are $\sCat_\equiv$ and $\Cat_\equiv$. Since the inclusion functors is full and faithful, we simply need to show that they are essentially surjective, in other words that:
\begin{itemize}
	\item for every object $\wireSetA$, we can define a color $\bfup{color}(\wireSetA)$
	\item such that we have a natural isomorphism $\Psi_{\wireSetA} : \bfup{color}(\wireSetA) \to \wireSetA$ for $\FCat_\equiv$
	\item that is also a natural ismorphism for $\Cat_\equiv$
\end{itemize}

We remark that $\Psi_{A \parallel B}$ as defined in \Cref{fig:bar_to_parallel_app} is an isomorphism of $\FCat_\equiv$ between the two-colors object $A \parallel B$ one-color object $(A \tensor B) \oplus (A \oplus B)$. This can be proven either by showing that its semantics is the identity and then using \Cref{thm:complete_fun}, or by hand in a very similar way to the proof that $\iso_{A\parallel B}$ is an isomorphism in the appendices. Iterating the use of this isomorphism, we can build an isomorphism
\[  \Psi_{A_1 \parallel \dots \parallel A_n} : \bfup{color}(A_1 \parallel \dots \parallel A_n) \to A_1 \parallel \dots \parallel A_n \]
\[ \text{where }\begin{cases} \bfup{color}(\varnothing) := \tzero \\ \bfup{color}(A) := A \\ \bfup{color}(A\parallel B) := (A \tensor B) \oplus (A \oplus B) \\ \bfup{color}(\wireSetA \parallel B) := \bfup{color}(\bfup{color}(\wireSetA) \parallel B)\end{cases}\]
The operation $\bfup{color}$ can be extended as a functor for $d \in \Cat_\equiv(\wireSetA,\wireSetB)$ with $\bfup{color}(d) = \Psi^{-1}_{\wireSetB} \circ d \circ \Psi_{\wireSetA}$. It follows that $\Psi$ is a natural isomorphism, both for $\Cat_\equiv$ and for $\FCat_\equiv$.

\subsection{The Functional Fragment}
\label{app:internal}

\begin{figure*}
	\centering
	\tikzfig{internal/bifunctor_def}
	\caption{The bifunctors $\oplus$ and $\tensor$, and induced enrichement $\bfup{+}$, for $d,d',e$ in $\sFCat[R]$.}
	\label{fig:bifunctor_def}
\end{figure*}

\begin{figure*}
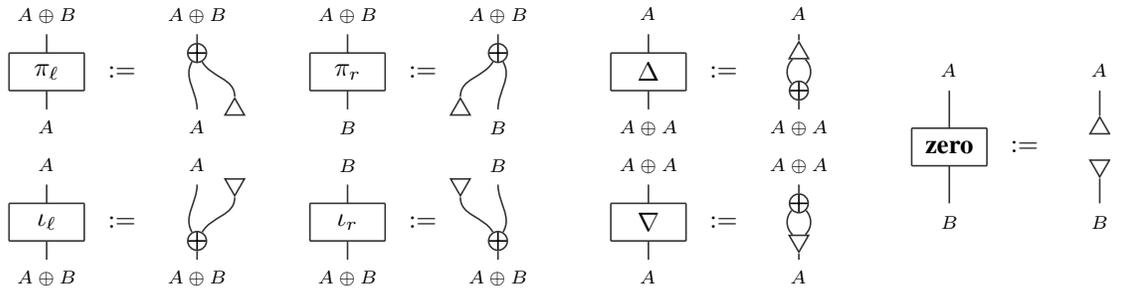

	\centering
	\tikzfig{internal/pairing_def}
	\caption{The projections $\pi$, injections $\iota$, null $\zero$, diagonal $\Delta$, co-diagonal $\nabla$, and the morphism $\zero$.}	\label{fig:pairing_def}
\end{figure*}

\begin{figure*}
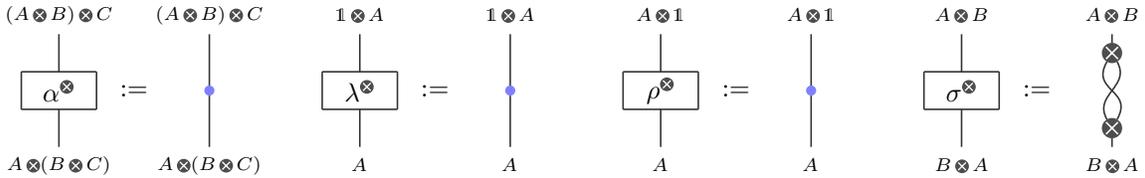

	\centering
	\tikzfig{internal/smc_tensor_def}
	\caption{The associator $\alpha^{\tensor}$, left-unitor $\lambda^{\tensor}$, right-unitor $\rho^{\tensor}$ and swap $\sigma^{\tensor}$.}	\label{fig:smc_def}
\end{figure*}

In \Cref{thm:internal_language_fun}, we claim that $\sFCat[R]_\equiv$ is an \emph{internal language} for "semiadditive categories with an symmetric monoidal structure which is distributive, together with a semiring isomorphism from $R$ to the homset of the multiplicative unit" or $R$-distributive. We provide here a proof of this claim, in two steps:
\begin{enumerate}
	\item We prove that $\sFCat[R]_\equiv$ is a $R$-distributive category.
	\item Since in \Cref{sec:cat_sem}, we provided a categorical semantics $\interp{-}$ from $\sFCat[R]_\equiv$ to any such category, so we can see $\interp{-}$ as going from $\FCat[R]_\equiv$ to itself. We prove that $\interp{-} : \sFCat[R]_\equiv \to \sFCat[R]_\equiv$ is the identity.
\end{enumerate}

In order to avoid confusion, we keep $\interp{-}_{\bfup{H}}$ for the semantics toward an arbitrary category $\bfup{H}$ that we know to be a $R$-distributive category, for example matrices with coefficients in $R$, and we write $\interp{-}_{\circlearrowleft}$ for the semantics toward $\sFCat$.

	And for any two diagrams $d \in \sFCat(A,B)$ and $e \in \sFCat(C,D)$ we define $d \oplus e$ and $d \tensor e$ as in \Cref{fig:bifunctor_def}. We note that $\interp{d \oplus e}_{\bfup{H}} \equiv \interp{d}_{\bfup{H}} \oplus \interp{e}_{\bfup{H}}$ and $\interp{d \tensor e}_{\bfup{H}} \equiv \interp{d}_{\bfup{H}} \tensor \interp{e}_{\bfup{H}}$, so using the completeness, we obtain that they are bifunctors in $\sFCat_\equiv$. Together with the definitions of \Cref{fig:pairing_def}, and using the completeness to prove the universal diagrams, we have that $(\sFCat_\equiv,\oplus,\tzero)$ is a semiadditive category. Then, adding the definitions of \Cref{fig:smc_def} and using again the completeness to prove the coherence diagrams, we have that $(\sFCat_\equiv,\tensor,\tone)$ is a symmetric monoidal category and satisfies distributivity. Lastly, we have a semiring isomorphism $s \in R \mapsto [s]_1 : \tone \to \tone$, the inversibility coming from the uniqueness of the normal form (\Cref{cor:nf_unique_fun}).
	This shows that $\sFCat[R]$ is indeed a $R$-distributive category.

	We now prove that $\interp{-}_{\circlearrowleft}$ is the identity. It is immediate on objects, so we need to prove that for all $d \in \sFCat[R](A,B)$, $\interp{d}_{\circlearrowleft} = d$.
	
	In order to proceed inductively, we generalize the statement as follows: for all $f \in \FCat[R]_\equiv(\wireSetA,\wireSetB)$, $\interp{d}_{\circlearrowleft} = \Psi^{-1}_{\wireSetB} \circ d \circ \Psi_{\wireSetA}$, with $\Psi$ the natural isomorphism defined in \Cref{app:singleIO}. The proof is an inductive proof that relies on completeness, that is \Cref{thm:complete_fun}. For every generator of $G \in \FCat(\wireSetA,\wireSetB)$, we prove
	\[ \interp{\interp{G}_{\circlearrowleft}}_{\bfup{H}} = \interp{\Psi^{-1}_{\wireSetB} \circ G \circ \Psi_{\wireSetA}}_{\bfup{H}}\]
	using the matrix notation as in \Cref{fig:mat_sem_fun}, this is some simple matrix computation, for example with $G$ being the Plus $\oplus_{A,B} : A \parallel B \to (A \oplus B)$, on the left hand side we have:
	\[ \interp{\interp{G}_{\circlearrowleft}}_{\bfup{H}} = \interp{\pi_r}_{\bfup{H}} = \interp{\tikzfig{internal/case_Plus}}_{\bfup{H}} \]
	\[\begin{array}{rl} =&  
	\begin{blockarray}{cccccc}
	& \text{\scalebox{0.7}{$(\interp{A}\tensor \interp{B}) \tensor \interp{A}$}} & \text{\scalebox{0.7}{$(\interp{A}\tensor \interp{B}) \tensor \interp{B}$}} & \text{\scalebox{0.7}{$\interp{A}\tensor \interp{B}$}} & \text{\scalebox{0.7}{$\interp{A}$}} & \text{\scalebox{0.7}{$\interp{B}$}} \\
	\begin{block}{c(ccccc)}
	\text{\scalebox{0.7}{$\interp{A}$}}  &&&&\id&\\
	\text{\scalebox{0.7}{$\interp{B}$}}  &&&&&\id\\
	\end{block}
	\end{blockarray} \\
	\circ&
	\begin{blockarray}{cccc}
	& \text{\scalebox{0.7}{$\interp{A}\tensor \interp{B}$}} & \text{\scalebox{0.7}{$\interp{A}$}} & \text{\scalebox{0.7}{$\interp{B}$}} \\
	\begin{block}{c(ccc)}
	\text{\scalebox{0.7}{$(\interp{A}\tensor \interp{B}) \tensor \interp{A}$}}&&&\\ 
	\text{\scalebox{0.7}{$(\interp{A}\tensor \interp{B}) \tensor \interp{B}$}}&&&\\
	\text{\scalebox{0.7}{$\interp{A}\tensor \interp{B}$}} &\id&&\\
	\text{\scalebox{0.7}{$\interp{A}$}}  &&\id&\\
	\text{\scalebox{0.7}{$\interp{B}$}}  &&&\id\\
	\end{block}
	\end{blockarray} \\ =& 
	\begin{blockarray}{cccc}
	& \text{\scalebox{0.7}{$\interp{A}\tensor \interp{B}$}} & \text{\scalebox{0.7}{$\interp{A}$}} & \text{\scalebox{0.7}{$\interp{B}$}} \\
	\begin{block}{c(ccc)}
	\text{\scalebox{0.7}{$\interp{A}$}}  &&\id&\\
	\text{\scalebox{0.7}{$\interp{B}$}}  &&&\id\\
	\end{block}
	\end{blockarray} \end{array}\]
	while on the right hand side we have:
	\[ \interp{\Psi_{A \parallel B}}_{\bfup{H}} = 
	\begin{blockarray}{cccc}
	& \text{\scalebox{0.7}{$\interp{A}\tensor \interp{B}$}} & \text{\scalebox{0.7}{$\interp{A}$}} & \text{\scalebox{0.7}{$\interp{B}$}} \\
	\begin{block}{c(ccc)}
	 \text{\scalebox{0.7}{$\interp{A}\tensor \interp{B}$}} &\id&&\\
	\text{\scalebox{0.7}{$\interp{A}$}}  &&\id&\\
	\text{\scalebox{0.7}{$\interp{B}$}}  &&&\id\\
	\end{block}
	\end{blockarray}\]
	\[ \interp{\Psi^{-1}_{A \oplus B}}_{\bfup{H}} = \interp{\id}_{\bfup{H}} = \id\]
	\[ \interp{G}_{\bfup{H}} =  
	\begin{blockarray}{cccc}
	& \text{\scalebox{0.7}{$\interp{A}\tensor \interp{B}$}} & \text{\scalebox{0.7}{$\interp{A}$}} & \text{\scalebox{0.7}{$\interp{B}$}} \\
	\begin{block}{c(ccc)}
	\text{\scalebox{0.7}{$\interp{A}$}}  &&\id&\\
	\text{\scalebox{0.7}{$\interp{B}$}}  &&&\id\\
	\end{block}
	\end{blockarray}  \]
	hence:
	\[ \interp{\Psi^{-1}_{A \oplus B} \circ G \circ \Psi_{A \parallel B}}_{\bfup{H}} = \begin{blockarray}{cccc}
	& \text{\scalebox{0.7}{$\interp{A}\tensor \interp{B}$}} & \text{\scalebox{0.7}{$\interp{A}$}} & \text{\scalebox{0.7}{$\interp{B}$}} \\
	\begin{block}{c(ccc)}
	\text{\scalebox{0.7}{$\interp{A}$}}  &&\id&\\
	\text{\scalebox{0.7}{$\interp{B}$}}  &&&\id\\
	\end{block}
	\end{blockarray} = \interp{\interp{G}_{\circlearrowleft}}_{\bfup{H}} \]
	Then using completeness for $\interp{-}_{\bfup{H}}$ we have $\interp{G}_{\circlearrowleft} = \Psi^{-1}_{A \oplus B} \circ G \circ \Psi_{A \parallel B}$.
	
	It remains the two inductive cases. For the sequential composition, we have 
	\[ \begin{array}{rcl} \interp{e \circ d}_{\circlearrowleft} &=& \interp{e}_{\circlearrowleft} \circ \interp{d}_{\circlearrowleft} \\ &=&  \Psi^{-1}_{\wireSetC} \circ e\circ \Psi_{\wireSetB} \circ \Psi^{-1}_{\wireSetB} \circ d \circ \Psi_{\wireSetA} \\&=& \Psi^{-1}_{\wireSetC} \circ e \circ d \circ \Psi_{\wireSetA} \end{array} \]
	and for the parallel composition we have
	\[ \begin{array}{rcl} \interp{d \parallel e}_{\circlearrowleft} &=& m^| \circ (\interp{d}_{\circlearrowleft} | \interp{e}_{\circlearrowleft}) \circ m^| \\ &=&   m^| \circ (( \Psi^{-1}_{\wireSetB} \circ d \circ \Psi_{\wireSetA}) | ( \Psi^{-1}_{\wireSetD} \circ e \circ \Psi_{\wireSetC})) \circ m^| \\&=&  \Psi^{-1}_{\wireSetB||\wireSetD} \circ (d \mid e) \circ \Psi_{\wireSetA||\wireSetC} \end{array} \]
	where $\Psi_{\wireSetA||\wireSetC} = (\Psi_{\wireSetA}|\Psi_{\wireSetC})\circ m^|$ is obtained by computing their semantics by $\interp{-}_{\bfup{H}}$, remarking that they are equal, and using  the completeness result of \Cref{thm:complete_fun}.

\subsection{The Whole Calculus}

In \Cref{thm:internal_language}, we extended the result to $\Cat$. To prove that, we match every diagram $d \in \sCat(A,B)$ to a diagram $\IsoPlusUn{d} \in \sFCat(A\oplus \tone, B \oplus \tone)$. 

For that, we remind that $\Iso_A : A \to \bigoplus \tone$ is defined as
 \[\tikzfig{internal/def_iso}\]
and is an isomorphism as proven in \Cref{lem:Iso_is_iso}.

Rewriting the normal form of \Cref{thm:nf} with it, there exists a matrix $M$ such that $d$ can be written as below. Let $M_{+1}$ be "$M$ with +1 added to the bottom right corner's coefficient". We define $\IsoPlusUn{d}$ from that $M$ as:

\[\tikzfig{internal/normal_form}\]

We note that $\IsoPlusUn{d}$ is already in the normal form given by \Cref{thm:nf_fun}. To prove an equivalence of category between $\sCat_\equiv$ and $(\sFCat_\equiv)^{\oplus \tone}$, we check that $\IsoPlusUn{-}$ is a well defined full and faithful functor that is essentially surjective.

\paragraph{Well Defined} We need to check the soundness of $\IsoPlusUn{-}$ with respect to the equational theory. This follows from the uniqueness of the normal form, that is \Cref{cor:nf_unique}.
\paragraph{Functor} We consider $d \in \sCat(A,B)$ and $e \in \sCat(B,C)$. We write $M$ and $N$ their respective matrices when put in normal form, and $P$ the matrix of the normal form of $e \circ d$. From \Cref{prop:sem_block_matrix_unit} it follows that $P_{+1}$ is the result of the matrix product of $M_{+1}$ by $N_{+1}$. So looking at $\IsoPlusUn{e} \circ \IsoPlusUn{d}$, we can use \Cref{lem:Iso_is_iso} to eliminate the $\Iso$ at the middle, then \Cref{prop:mat_product} to merge $M_{+1}$ and $N_{+1}$ into their product $P_{+1}$, then it follows that $\IsoPlusUn{e} \circ \IsoPlusUn{d} = \IsoPlusUn{e \circ d}$.
\paragraph{Full} We consider $f \in(\sFCat_\equiv)^{\oplus \tone}(A,B)$. In particular, $f = g \bfup{~+~} \iota_r \circ \pi_r$ for some $g \in \sFCat_\equiv(A \oplus \tone, B \oplus \tone)$. Putting $g$ in normal form, there exists a matrix $M$ such that we can rewrite $f$ as:
\[ f = \tikzfig{internal/M_plus_iota_pi_1} \]
Then, using the following rewriting:
\[ \tikzfig{internal/proof_plus_iota_pi}\]
We obtain 
\[ f = \tikzfig{internal/M_plus_iota_pi_2} \]
Unfolding the definition of $M$, and relying on \Cref{lem:nat_disj,lem:idempotence_disj} to move around the disjunctions of wire, and \eqS{} and \eqSsum{} to "add" the rightmost wire to the leftmost coefficient of $M$, we obtain:
\[ f = \tikzfig{internal/M_plus_iota_pi_3} \]
Hence $f$ is of the form $\IsoPlusUn{d}$ for some $d$.
\paragraph{Faithful} In \Cref{thm:nf}, the bottom-right coefficient of $M$ is expected to be canonical. As such, the operation $M \mapsto M_{+1}$ is injective. So two diagrams $d \not\equiv e$ leads to two distinct matrices $M,N$, hence two distinct matrices $M_{+1},N_{+1}$, then using uniqueness of the normal form \Cref{cor:nf_unique_fun} it leads to two distinct diagrams $\IsoPlusUn{d}\not\equiv\IsoPlusUn{e}$.
\paragraph{Essentially Surjective} The functor $\IsoPlusUn{-}$ is surjective on the objects, hence essentially surjective.


\begin{thebibliography}{10}
\providecommand{\url}[1]{#1}
\csname url@samestyle\endcsname
\providecommand{\newblock}{\relax}
\providecommand{\bibinfo}[2]{#2}
\providecommand{\BIBentrySTDinterwordspacing}{\spaceskip=0pt\relax}
\providecommand{\BIBentryALTinterwordstretchfactor}{4}
\providecommand{\BIBentryALTinterwordspacing}{\spaceskip=\fontdimen2\font plus
\BIBentryALTinterwordstretchfactor\fontdimen3\font minus
  \fontdimen4\font\relax}
\providecommand{\BIBforeignlanguage}[2]{{%
\expandafter\ifx\csname l@#1\endcsname\relax
\typeout{** WARNING: IEEEtran.bst: No hyphenation pattern has been}%
\typeout{** loaded for the language `#1'. Using the pattern for}%
\typeout{** the default language instead.}%
\else
\language=\csname l@#1\endcsname
\fi
#2}}
\providecommand{\BIBdecl}{\relax}
\BIBdecl

\bibitem{linearlogic}
J.-Y. Girard, ``{Linear logic},'' \emph{Theoretical computer science}, vol.~50,
  no.~1, pp. 1--101, 1987.

\bibitem{proofnets}
------, ``{Proof-nets: the parallel syntax for proof-theory},'' \emph{Lecture
  Notes in Pure and Applied Mathematics}, pp. 97--124, 1996.

\bibitem{lamarche1992quantitative}
F.~Lamarche, ``Quantitative domains and infinitary algebras,''
  \emph{Theoretical computer science}, vol.~94, no.~1, pp. 37--62, 1992.

\bibitem{laird2013weighted}
J.~Laird, G.~Manzonetto, G.~McCusker, and M.~Pagani, ``Weighted relational
  models of typed lambda-calculi,'' in \emph{2013 28th Annual ACM/IEEE
  Symposium on Logic in Computer Science}.\hskip 1em plus 0.5em minus
  0.4em\relax IEEE, 2013, pp. 301--310.

\bibitem{pagani2014applying}
\BIBentryALTinterwordspacing
M.~Pagani, P.~Selinger, and B.~Valiron, ``Applying quantitative semantics to
  higher-order quantum computing,'' in \emph{Proceedings of the 41st ACM
  SIGPLAN-SIGACT Symposium on Principles of Programming Languages}, ser. POPL
  '14.\hskip 1em plus 0.5em minus 0.4em\relax New York, NY, USA: Association
  for Computing Machinery, 2014, p. 647–658. [Online]. Available:
  \url{https://doi.org/10.1145/2535838.2535879}
\BIBentrySTDinterwordspacing

\bibitem{finiteness}
T.~EHRHARD, ``Finiteness spaces,'' \emph{Mathematical Structures in Computer
  Science}, vol.~15, no.~4, p. 615–646, 2005.

\bibitem{ehrhard2014probabilistic}
\BIBentryALTinterwordspacing
T.~Ehrhard, C.~Tasson, and M.~Pagani, ``Probabilistic coherence spaces are
  fully abstract for probabilistic pcf,'' \emph{SIGPLAN Not.}, vol.~49, no.~1,
  p. 309–320, Jan. 2014. [Online]. Available:
  \url{https://doi.org/10.1145/2578855.2535865}
\BIBentrySTDinterwordspacing

\bibitem{Asada-ll}
\BIBentryALTinterwordspacing
T.~Tsukada and K.~Asada, ``Linear-algebraic models of linear logic as
  categories of modules over sigma-semirings,'' in \emph{Proceedings of the
  37th Annual ACM/IEEE Symposium on Logic in Computer Science}, ser. LICS
  '22.\hskip 1em plus 0.5em minus 0.4em\relax New York, NY, USA: Association
  for Computing Machinery, 2022. [Online]. Available:
  \url{https://doi.org/10.1145/3531130.3533373}
\BIBentrySTDinterwordspacing

\bibitem{diazcaroMalherbe}
A.~Díaz-Caro and O.~Malherbe, ``A concrete model for a typed linear algebraic
  lambda calculus,'' \emph{Mathematical Structures in Computer Science},
  vol.~34, no.~1, pp. 1--44, 2024.

\bibitem{manyworlds}
K.~Chardonnet, M.~de~Visme, B.~Valiron, and R.~Vilmart, ``The many-worlds
  calculus,'' 2023.

\bibitem{comfort2020sheet}
\BIBentryALTinterwordspacing
C.~Comfort, A.~Delpeuch, and J.~Hedges, ``Sheet diagrams for bimonoidal
  categories,'' 2020. [Online]. Available:
  \url{https://arxiv.org/abs/2010.13361}
\BIBentrySTDinterwordspacing

\bibitem{Bonchi2022deconstructing}
\BIBentryALTinterwordspacing
F.~Bonchi, A.~Di~Giorgio, and A.~Santamaria, ``Deconstructing the calculus of
  relations with tape diagrams,'' \emph{Proc. ACM Program. Lang.}, vol.~7, no.
  POPL, Jan. 2023. [Online]. Available: \url{https://doi.org/10.1145/3571257}
\BIBentrySTDinterwordspacing

\bibitem{relNonDet}
A.~Bucciarelli, T.~Ehrhard, and G.~Manzonetto, ``A relational model of a
  parallel and non-deterministic $\lambda$-calculus,'' in \emph{Logical
  Foundations of Computer Science}, S.~Artemov and A.~Nerode, Eds.\hskip 1em
  plus 0.5em minus 0.4em\relax Berlin, Heidelberg: Springer Berlin Heidelberg,
  2009, pp. 107--121.

\bibitem{Carette21}
\BIBentryALTinterwordspacing
T.~Carette, ``Wielding the zx-calculus, flexsymmetry, mixed states, and
  scalable notations. (manier le zx-calcul, flexsym{\'{e}}trie, syst{\`{e}}mes
  ouverts et limandes),'' Ph.D. dissertation, University of Lorraine, Nancy,
  France, 2021. [Online]. Available:
  \url{https://tel.archives-ouvertes.fr/tel-03468027}
\BIBentrySTDinterwordspacing

\bibitem{HackneyR15}
P.~Hackney and M.~Robertson, ``On the category of props,'' \emph{Appl.
  Categorical Struct.}, vol.~23, no.~4, pp. 543--573, 2015.

\bibitem{zxorigin}
B.~Coecke and R.~Duncan, ``{Interacting quantum observables: categorical
  algebra and diagrammatics},'' \emph{New Journal of Physics}, vol.~13, no.~4,
  p. 043016, 2011.

\bibitem{Chiribella_2013}
\BIBentryALTinterwordspacing
G.~Chiribella, G.~M. D’Ariano, P.~Perinotti, and B.~Valiron, ``Quantum
  computations without definite causal structure,'' \emph{Physical Review A},
  vol.~88, no.~2, Aug. 2013. [Online]. Available:
  \url{http://dx.doi.org/10.1103/PhysRevA.88.022318}
\BIBentrySTDinterwordspacing

\bibitem{PhysRevLett.113.250402}
\BIBentryALTinterwordspacing
M.~Ara\'ujo, F.~Costa, and i.~c.~v. Brukner, ``Computational advantage from
  quantum-controlled ordering of gates,'' \emph{Phys. Rev. Lett.}, vol. 113, p.
  250402, Dec 2014. [Online]. Available:
  \url{https://link.aps.org/doi/10.1103/PhysRevLett.113.250402}
\BIBentrySTDinterwordspacing

\bibitem{Taddei_2021}
\BIBentryALTinterwordspacing
M.~M. Taddei, J.~Cariñe, D.~Martínez, T.~García, N.~Guerrero, A.~A. Abbott,
  M.~Araújo, C.~Branciard, E.~S. Gómez, S.~P. Walborn, L.~Aolita, and
  G.~Lima, ``Computational advantage from the quantum superposition of multiple
  temporal orders of photonic gates,'' \emph{PRX Quantum}, vol.~2, no.~1, Feb.
  2021. [Online]. Available:
  \url{http://dx.doi.org/10.1103/PRXQuantum.2.010320}
\BIBentrySTDinterwordspacing

\bibitem{Abbott_2020}
\BIBentryALTinterwordspacing
A.~A. Abbott, J.~Wechs, D.~Horsman, M.~Mhalla, and C.~Branciard,
  ``Communication through coherent control of quantum channels,''
  \emph{Quantum}, vol.~4, p. 333, Sep. 2020. [Online]. Available:
  \url{http://dx.doi.org/10.22331/q-2020-09-24-333}
\BIBentrySTDinterwordspacing

\bibitem{Kelly64}
\BIBentryALTinterwordspacing
G.~Kelly, ``On maclane's conditions for coherence of natural associativities,
  commutativities, etc.'' \emph{Journal of Algebra}, vol.~1, no.~4, pp.
  397--402, 1964. [Online]. Available:
  \url{https://www.sciencedirect.com/science/article/pii/0021869364900183}
\BIBentrySTDinterwordspacing

\bibitem{Joyal93}
\BIBentryALTinterwordspacing
A.~Joyal and R.~Street, ``Braided tensor categories,'' \emph{Advances in
  Mathematics}, vol. 102, no.~1, pp. 20--78, 1993. [Online]. Available:
  \url{https://www.sciencedirect.com/science/article/pii/S0001870883710558}
\BIBentrySTDinterwordspacing

\bibitem{carette2019completeness}
\BIBentryALTinterwordspacing
T.~Carette, E.~Jeandel, S.~Perdrix, and R.~Vilmart, ``{Completeness of
  Graphical Languages for Mixed States Quantum Mechanics},'' in \emph{46th
  International Colloquium on Automata, Languages, and Programming (ICALP
  2019)}, ser. Leibniz International Proceedings in Informatics (LIPIcs),
  C.~Baier, I.~Chatzigiannakis, P.~Flocchini, and S.~Leonardi, Eds., vol.
  132.\hskip 1em plus 0.5em minus 0.4em\relax Dagstuhl, Germany: Schloss
  Dagstuhl--Leibniz-Zentrum fuer Informatik, 2019, pp. 108:1--108:15. [Online].
  Available: \url{http://drops.dagstuhl.de/opus/volltexte/2019/10684}
\BIBentrySTDinterwordspacing

\bibitem{hughes-glabbeek}
\BIBentryALTinterwordspacing
D.~J.~D. Hughes and R.~J. Van~Glabbeek, ``Proof nets for unit-free
  multiplicative-additive linear logic,'' \emph{ACM Trans. Comput. Logic},
  vol.~6, no.~4, p. 784–842, Oct. 2005. [Online]. Available:
  \url{https://doi.org/10.1145/1094622.1094629}
\BIBentrySTDinterwordspacing

\bibitem{laplaza72}
M.~L. Laplaza, ``Coherence for distributivity,'' in \emph{Coherence in
  Categories}, G.~M. Kelly, M.~Laplaza, G.~Lewis, and S.~Mac~Lane, Eds.\hskip
  1em plus 0.5em minus 0.4em\relax Berlin, Heidelberg: Springer Berlin
  Heidelberg, 1972, pp. 29--65.

\bibitem{Maranda66}
J.~M. Maranda, ``On fundamental constructions and adjoint functors,''
  \emph{Canadian Mathematical Bulletin}, vol.~9, no.~5, p. 581–591, 1966.

\bibitem{doublekleisli93}
\BIBentryALTinterwordspacing
S.~Brooke and K.~V. Stone, ``Monads and comonads in intensional semantics,''
  1993. [Online]. Available:
  \url{https://apps.dtic.mil/sti/citations/ADA266522}
\BIBentrySTDinterwordspacing

\end{thebibliography}
\end{document}